\newcommand{\blue}{\color{black}}
\newtheorem{thm}{Theorem}
\newcommand{\argmin}{\text{argmin}}
\begin{document}
\title{Sparse regression: Scalable algorithms and empirical performance}
\author{Dimitris Bertsimas, Jean Pauphilet, Bart Van Parys \\ 
\texttt{\{dbertsim,jpauph,vanparys\}@mit.edu} \\
Operations Research Center, MIT, Cambridge, MA, USA}
\date{February, 2019}
\maketitle

\begin{abstract}
In this paper, we review state-of-the-art methods for feature selection in statistics with an application-oriented eye. Indeed, sparsity is a valuable property and the profusion of research on the topic might have provided little guidance to practitioners. We demonstrate empirically how noise and correlation impact both the accuracy - the number of correct features selected - and the false detection - the number of incorrect features selected - for five methods: the  cardinality-constrained formulation, its Boolean relaxation, $\ell_1$ regularization and two methods with non-convex penalties. A cogent feature selection method is expected to exhibit a two-fold convergence, namely the accuracy and false detection rate should converge to $1$ and $0$ respectively, as the sample size increases. As a result, proper method should recover all and nothing but true features. Empirically, the integer optimization formulation and its Boolean relaxation { are the closest to} exhibit this two properties consistently in various regimes of noise and correlation. In addition, apart from the discrete optimization approach which requires a substantial, yet often affordable, computational time, all methods terminate in times comparable with the \verb|glmnet| package for Lasso. We released code for methods that were not publicly implemented. Jointly considered, accuracy, false detection and computational time provide a comprehensive assessment of each feature selection method and shed light on alternatives to the Lasso-{ regularization} which are not as popular in practice yet. 
\end{abstract}

\section{Introduction}
The identification of important variables in regression is valuable to practitioners and decision makers in settings with large data sets of high dimensionality. Correspondingly, the notion of sparsity, i.e., the ability to make predictions based on a limited number of covariates, has become cardinal in statistics. The so-called { cardinality-penalized} estimators for instance minimize the trade-off between prediction accuracy and number of input variables. Though computationally expensive, they have been considered as a relevant benchmark in high-dimensional statistics. Indeed, { these} estimators are characterized as the solution of the NP-hard problem
\begin{equation}
\label{eqn:bic}
\min_{w \in \mathbb{R}^p} ~\sum_{i=1}^n \ell(y_i, w^\top   x_i) + \lambda \Vert w \Vert_0,
\end{equation}
where $\ell$ is an appropriate convex loss function, such as the ones reported in Table \ref{tab:loss_functions} (p. \pageref{tab:loss_functions}). The covariates are denoted by the matrix $X \in \mathbb{R}^{n \times p}$, whose rows are the $x_i^\top  $'s, and the response data by $Y = (y_1,...,y_n) \in \mathbb{R}^{n}$. Here,  $\|w\|_0 := |\{ j: w_j \neq 0 \}|$ denotes the 0-pseudo norm, i.e., the number of non-zero coefficients of $w$. Alternatively, one can explicitly constrain the number of features used for prediction and solve
\begin{equation}
\label{eqn:sparse}
\min_{w \in \mathbb{R}^p} ~\sum_{i=1}^n \ell(y_i, w^\top   x_i) \mbox{ s.t. } \Vert w \Vert_0 \leqslant k,
\end{equation}
which is likewise an NP-hard optimization problem \citep{natarajan1995sparse}. For decades, such problems have thus been solved using greedy heuristics, such as step-wise regression, matching pursuits \citep{mallat1993matching}, or recursive feature elimination (RFE) \citep{guyon2002gene}.

\begin{table}[h]
\centering
\caption{Relevant loss functions $\ell$ and their corresponding Fenchel conjugates $\hat\ell$ as defined in Theorem \ref{cio_formulation}. The observed data is continuous, $y\in \mathbb{R}$, for regression and categorical, $y \in \{ -1,1\}$, for classification. By convention, $\hat \ell$ equals $+ \infty$ outside of its domain. The binary entropy function is denoted $H(x) := - x \log x - (1-x) \log(1-x)$.}
\label{tab:loss_functions}
\begin{tabular}{lll}
\toprule
Method &Loss $\ell(y, u)$ &Fenchel conjugate $\hat \ell(y,\alpha)$ \\
\midrule
Ordinary Least Square & $\tfrac{1}{2}(y-u)^2$ & $\tfrac{1}{2} \alpha^2 + y \alpha $\\
Logistic loss & $\log \left( 1 + e^{-y u}\right)$ & $-H(-y \alpha) \mbox{ for } y \alpha \in [-1,0]$\\
1-norm SVM - Hinge loss & $\max(0, 1-y u)$ & $y \alpha \mbox{ for } y \alpha \in [-1,0]$  \\
\bottomrule
\end{tabular}
\end{table}

Consequently, much attention has been directed to convex surrogate estimators which tend to be sparse, while {\blue requiring less computational effort}. The Lasso estimator, commonly defined as the solution of  
\begin{equation*}
\min_{w \in \mathbb{R}^p}~\sum_{i=1}^n \ell(y_i, w^\top   x_i) + \lambda \Vert w \Vert_1,
\end{equation*}
and initially proposed by \citet{tibshirani} is widely known and used. Its practical success can be explained by three concurrent ingredients: Efficient numerical algorithms exist \citep{efron2004least,friedman2010regularization,beck2009fast}, off-the-shelf implementations are publicly available \citep{friedman2013glmnet} and recovery of the true sparsity is theoretically guaranteed under admittedly strong assumptions on the data \citep{wainwright2009sharp}. However, recent works \citep{su2015false,fan2010sure} have pointed out several key deficiencies of the Lasso regressor in its ability to select the true features without including many irrelevant ones as well. In a parallel direction, theoretical work in statistics \citep{wainwright2009information,wang2010information,gamarnik17} has identified regimes where Lasso fails to recover the true support even though support recovery is possible from an information theoretic point of view. 

Therefore, new research in numerical algorithms for solving the exact formulation \eqref{eqn:sparse} directly has flourished. Leveraging recent advances in mixed-integer solvers \citep{bertsimas2014statistics,bertsimas2015logistic}, Lagrangian relaxation \citep{pilanci2015sparse}, cyclic coordinate descent \citep{hazimeh2018fast}, or cutting-plane methods \citep{bertsimas2016cio,2017sparseclass}, these works have demonstrated significant improvement over existing Lasso-based heuristics. To the best of our knowledge, the exact algorithm proposed by \citet{bertsimas2016cio,2017sparseclass} is the most scalable method providing provably optimal solutions to the optimization problem \eqref{eqn:sparse}, at the expense of potentially significant computational time and the use of a commercial integer optimization solver.

Another line of research has focused on replacing the $\ell_1$ norm in the Lasso formulation by other sparsity-inducing penalties which are less sensitive to noise or correlation between features. In particular, non-convex penalties such as smoothly clipped absolute deviation (SCAD) \citep{fan2001variable} and minimax concave penalty (MCP) \citep{zhang2010nearly} have been proposed. Both SCAD and MCP have the so-called oracle property, meaning that they do not require a priori knowledge of the sparsity pattern to achieve an optimal asymptotic convergence rate, which is theoretically appealing. From a computational point of view, coordinate descent algorithms \citep{ncvreg} have shown very effective, even though lack of convexity in the objective function hindered their wide adoption in practice.  

Convinced that sparsity is an extremely valuable property in high-impact applications where interpretability matters, and conscious that the profusion of research on the matter might have caused confusion and provided little guidance to practitioners, we propose with the present paper a comprehensive treatment of state-of-the-art methods for feature selection in ordinary least square and logistic regression. Our goal is not to provide a theoretical analysis. On the contrary, we selected and evaluated the methods with an eye towards practicality, taking into account both scalability to large data sets and availability of the implementations. In some cases where open-source implementation was not available, we released code on our website, in an attempt to bridge the gap between theoretical advances and practical adoption. Statistical performance of the methods is assessed in terms of Accuracy ($A$),
\begin{align*}
A(w) := \dfrac{|\{j: w_j \neq 0, \; w_{true,j} \neq 0\}|}{|\{j: w_{true,j} \neq 0\}|},
\end{align*} i.e., the proportion of true features which are selected, and False Discovery Rate ($FDR$), 
\begin{align*}
FDR(w) := \dfrac{|\{j: w_j \neq 0, \; w_{true,j} = 0\}|}{|\{j: w_{j} \neq 0\}|},
\end{align*} i.e., the proportion of selected features which are not in the true support. 

\subsection{Outline and contribution}
Our key contributions can be summarized as follows: 
\begin{itemize}
\item We provide a unified treatment of state-of-the-art methods for feature selection in statistics. More precisely, we cover the cardinality-constrained formulation \eqref{eqn:sparse}, its Boolean relaxation, the Lasso formulation and its derivatives, and the MCP and SCAD penalty. { We did not include step-wise regression methods, for they may require a high number of iterations in high dimension and exist in many variants. }
\item Encouraged by theoretical results obtained for the Boolean relaxation of \eqref{eqn:sparse} by \citet{pilanci2015sparse}, we propose an efficient sub-gradient algorithm to solve it and provide theoretical rate of convergence of our method.
\item We make our code freely available as a \verb|Julia|  package named \verb|SubsetSelection|. Our algorithm scales to problems with $n,p=100,000$ or $n=10,000$ and $p=1,000,000$ within minutes, while providing high-quality estimators.
\item We compare the performance of all methods on three metrics of crucial interest in practice: accuracy, false detection rate and computational tractability, and in various regimes of noise and correlation. 
\item More precisely, { under the mutual incoherence condition}, all methods exhibit a convergence in accuracy, that is the proportion of correct features selected converges to $1$ as the sample size $n$ increases, in all regimes of noise and correlation. Yet, on this matter, cardinality-constrained and MCP formulations are the most accurate. { As soon as mutual incoherence condition fails to hold, $\ell_1$-based estimators are inconsistent with $A < 1$, while non-convex penalties eventually perfectly recover the support.}
\item In addition, we also observe a convergence in false detection rate, namely the proportion of irrelevant features selected converging to $0$ as the sample size $n$ increases, for some but not all methods: The convex integer formulation and its Boolean relaxation are the only methods which demonstrate this behavior, { in low noise settings and make the fewest false discoveries in other regimes}. In our experiments, Lasso-based estimators return at least $80\%$ of non-significant features. MCP and SCAD have a low but strictly positive false detection rate (around $15-30\%$ in our experiments) as $n$ increases and in all regimes. 
\item In terms of computational time, the integer optimization approach is unsurprisingly the most expensive option. Nonetheless, the computational cost is only one or two orders of magnitude higher than other alternatives and remains affordable in many real-world problems, even high-dimensional ones. Otherwise, the four remaining codes terminate in time comparable with the \verb|glmnet| implementation of the Lasso, that is within seconds for $n=1,000$ and $p=20,000$. 
\end{itemize} 

In Section \ref{sec:formulations}, we present each method, its formulation, its theoretical underpinnings and the numerical algorithms proposed to compute it. In each case, we point the reader to appropriate references and open-source implementations. We propose and describe our sub-gradient algorithm for the Boolean relaxation of \eqref{eqn:sparse} also in Section \ref{sec:formulations}. Appendix \ref{sec:subsetselection} provides further details on our implementation of the algorithm, its scalability and its applicability to cardinality-penalized estimators \eqref{eqn:bic} as well. In Section \ref{sec:regression} (and Appendix \ref{sec:regression.supp}), we compare the methods on synthetic data sets for linear regression. In particular, we observe and discuss the behavior of each method in terms of accuracy, false detection rate and computational time for three families of design matrices and at least three levels noise. In Section \ref{sec:classification} (and Appendix \ref{sec:classification.supp}), we apply the methods to classification problems on similar synthetic problems. We also analyze the implications of the feature selection methods in terms of induced sparsity and prediction accuracy on a real data set from genomics.

\subsection{Notations}
In the rest of the paper, we denote with $\textbf{e}$ the vector whose components are equal to one. For { $q\geqslant 1$}, $\| \cdot \|_q$ denotes the $\ell_q$ norm defined as $$\| x \|_q = \left(\sum_i |x_i|^q\right)^{1/q}.$$ For any $d$-dimensional vector $x$, we denote with $x_{[j]}$ the $j$th largest component of $x$. Hence, we have $$x_{[1]} \geqslant \cdots \geqslant x_{[d]}.$$

\section{Sparse regression formulations}
\label{sec:formulations}
In this section, we introduce the different formulations and algorithms that have been proposed to solve the sparse regression problem. We focus on the cardinality-constrained formulation, its Boolean relaxation, the Lasso and Elastic-Net estimators, the MCP and SCAD penalty. 

\subsection{Integer optimization formulation}
As mentioned in introduction, a natural way to compute sparse regressors is to explicitly constrain the number of non-zero coefficients, i.e., solve
\begin{equation*}
\tag{$2$}
\min_{w \in \mathbb{R}^p} ~\sum_{i=1}^n \ell(y_i, w^\top   x_i) \mbox{ s.t. } \Vert w \Vert_0 \leqslant k,
\end{equation*}
where $\ell$ is an appropriate loss function, appropriate in the sense that $\ell(y, \cdot)$ is convex for any $y$ . In this paper, we focus on Ordinary Least Square (OLS), logistic regression and Hinge loss, as presented in Table \ref{tab:loss_functions} on page \pageref{tab:loss_functions}. {\blue Unfortunately}, such a problem is NP-hard \citep{natarajan1995sparse} and believed to be intractable in practice. The original attempt by \citet{furnival1974regressions} using "Leaps and Bounds" scaled to problems with $n,p$ in the $10s$. Thanks to both hardware improvement and advances in mixed-integer optimization solvers, \citet{bertsimas2014statistics,bertsimas2015logistic} successfully used discrete optimization techniques to solve instances with $n,p$ in the $1,000$s within minutes. More recently, \citet{bertsimas2016cio,2017sparseclass} proposed a cutting plane approach which scales to data sizes of with $n,p$ in the $100,000$s for ordinary least square and $n,p$ in the $10,000$s for logistic regression. To the best of our knowledge, our approach is the only method which scales to instances of such sizes, while provably solving such an NP-hard problem.  

\subsubsection{Convex integer formulation}
\citet{bertsimas2016cio,2017sparseclass} consider an $\ell_2$-regularized version of the initial formulation \eqref{eqn:sparse},
\begin{equation}
\label{eqn:reg_sparse}
\min_{w \in \mathbb{R}^p}~\sum_{i=1}^n \ell(y_i, w^\top   x_i) + \dfrac{1}{2 \gamma} \Vert w \Vert_2^2 \mbox{ s.t. } \Vert w \Vert_0 \leqslant k,
\end{equation} 
where $\gamma>0$ is a regularization coefficient. From a statistical point of view, this extra regularization, referred to as ridge or Tikhonov regularization, is needed to account for correlation between features \citep{hoerl1970ridge} and {\blue mitigate} the effect of noise. Indeed, regularization and robustness are two intimately connected properties, as illustrated by \citet{bertsimas2009equivalence,xu2009robustness}. In addition, \citet{breiman1996heuristics} proved that subset selection is a very unstable problem and highlighted the stabilizing effect of Ridge regularization. {\blue Introducing} a binary variable $s \in \{0, 1\}^p$ to encode the support of $w$ and using convex duality, problem \eqref{eqn:reg_sparse} can be shown equivalent to a convex integer optimization problem as stated in the following theorem.

\begin{thm}{\citep[Theorem 1]{2017sparseclass}} \label{cio_formulation}
For any convex loss function $\ell$, problem \eqref{eqn:reg_sparse} is equivalent to 
\begin{equation}
\label{eqn:minmax}
\min_{s \in \{0,1\}^p: s^\top  \textbf{e} \leqslant k} \: \max_{\alpha \in \mathbb{R}^n} \: f(\alpha,s) := \left( -\sum_{i=1}^n \hat \ell(y_i, \alpha_i) - \dfrac{\gamma}{2} \sum_{j=1}^p s_j \alpha^\top   X_j X_j^\top   \alpha \right),
\end{equation}
where $ \hat \ell(y, \alpha) := \max_{u \in \mathbb{R}} u \alpha - \ell(y,u) $ is the \emph{Fenchel conjugate} of the loss function $\ell$ \citep[see][chap. ~6.4]{bertsekas2016nonlinear}, as reported in Table \ref{tab:loss_functions}.
In particular, the function $f$ is continuous, linear in $s$ and concave in $\alpha$. 
\end{thm}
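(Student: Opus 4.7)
The plan is to reformulate the cardinality-constrained problem \eqref{eqn:reg_sparse} by lifting the support into a binary variable, then apply convex duality to the inner minimization over $w$.

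\textbf{Step 1: Binary encoding of the support.} I would first observe that $\|w\|_0 \leqslant k$ holds if and only if there exists $s \in \{0,1\}^p$ with $s^\top \textbf{e} \leqslant k$ such that $w_j = 0$ whenever $s_j = 0$. This allows rewriting \eqref{eqn:reg_sparse} as the nested problem
\begin{equation*}
\min_{s \in \{0,1\}^p:\, s^\top \textbf{e} \leqslant k}\; \min_{w \in \mathbb{R}^p:\, w_j(1-s_j)=0}\; \sum_{i=1}^n \ell(y_i, w^\top x_i) + \dfrac{1}{2\gamma}\|w\|_2^2.
\end{equation*}

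\textbf{Step 2: Fenchel biconjugation of the loss.} Since $\ell(y,\cdot)$ is closed, proper, and convex, the Fenchel–Moreau theorem gives $\ell(y_i, u) = \max_{\alpha_i \in \mathbb{R}} \bigl( \alpha_i u - \hat\ell(y_i,\alpha_i)\bigr)$. Substituting $u = w^\top x_i$ and aggregating over $i$ produces the saddle form
\begin{equation*}
\sum_{i=1}^n \ell(y_i, w^\top x_i) = \max_{\alpha \in \mathbb{R}^n} \Bigl( \alpha^\top X w - \sum_{i=1}^n \hat\ell(y_i,\alpha_i)\Bigr).
\end{equation*}

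\textbf{Step 3: Min--max swap.} For fixed $s$, the objective inside the two innermost operations is convex (in fact strongly convex, thanks to the $\tfrac{1}{2\gamma}\|w\|_2^2$ term) in $w$ and concave in $\alpha$. Applying Sion's minimax theorem (using the ridge term to guarantee coercivity in $w$; one can restrict $\alpha$ to a compact sublevel set of $\hat\ell$ to meet the compactness hypothesis, or invoke strong duality for convex quadratic programs directly) exchanges $\min_w$ and $\max_\alpha$. This step will be the main technical hurdle, since it requires verifying that strong duality holds uniformly in $s$ and that the inner Fenchel representation can be moved outside.

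\textbf{Step 4: Closed form for the inner minimization in $w$.} Once the order is swapped, for fixed $(\alpha,s)$ I would minimize $\alpha^\top X w + \tfrac{1}{2\gamma}\|w\|_2^2$ over $w$ supported on $s$. Coordinatewise first-order conditions yield $w_j^\star = -\gamma\, s_j X_j^\top \alpha$, with minimum value $-\tfrac{\gamma}{2}\sum_{j=1}^p s_j (X_j^\top\alpha)^2 = -\tfrac{\gamma}{2}\sum_{j=1}^p s_j\, \alpha^\top X_j X_j^\top \alpha$. Combining with the $-\sum_i \hat\ell(y_i,\alpha_i)$ term produces exactly $f(\alpha,s)$ and establishes \eqref{eqn:minmax}.

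\textbf{Step 5: Structural properties of $f$.} Linearity in $s$ is immediate from the expression. Concavity in $\alpha$ follows because $\hat\ell(y,\cdot)$ is convex (as a Legendre transform of a convex function), so $-\sum_i \hat\ell(y_i,\alpha_i)$ is concave, and each quadratic form $\alpha^\top X_j X_j^\top \alpha$ is convex (since $X_j X_j^\top \succeq 0$ and $s_j \geqslant 0$), so its negation is concave. Continuity follows from continuity of $\hat\ell$ on the interior of its domain together with the polynomial nature of the quadratic term; edge cases where $\hat\ell = +\infty$ simply carry the $+\infty$ value of the saddle function without disrupting the argument.
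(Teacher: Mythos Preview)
The paper does not give its own proof of this theorem; it is stated with a citation to \citep[Theorem 1]{2017sparseclass} and used as a black box. Your five-step argument is correct and is exactly the standard derivation one finds in that reference: encode the support by a binary $s$, write each $\ell(y_i,\cdot)$ via its biconjugate, swap $\min_w$ and $\max_\alpha$ using strong convexity of the ridge-regularized inner problem, and then eliminate $w$ in closed form to obtain $f(\alpha,s)$.

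Two minor comments. First, in Step~3 you do not really need Sion's theorem with an ad hoc compactification of $\alpha$; for fixed $s$ the inner problem is an unconstrained strongly convex program in the restricted coordinates of $w$, so Fenchel (or Lagrangian) duality applies with zero duality gap directly, which is the cleaner route you allude to at the end of that step. Second, the ``continuity'' claim in the theorem is slightly informal since $\hat\ell$ can equal $+\infty$ off its domain (cf.\ Table~\ref{tab:loss_functions}); what is actually used downstream is that $f(\cdot,s)$ is concave and upper semicontinuous on the effective domain of $\alpha$, and $f(\alpha,\cdot)$ is affine on $[0,1]^p$. Your Step~5 captures this correctly.
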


In the special case of OLS, the function $f$ is a quadratic function in $\alpha$
\begin{align*}
f(\alpha,s) = -\dfrac{1}{2} \| \alpha \|^2 - Y^\top   \alpha - \dfrac{\gamma}{2} \alpha X_s X_s^\top   \alpha,
\end{align*}
where $X_s X_s^\top  := \sum_{j=1}^p s_j X_j X_j^\top  $. As a result, the inner maximization problem can be solved in closed form: The maximum is attained at $\alpha^\star(s)= - (I_n+\gamma X_s X_s^\top  )^{-1} Y$ and the objective value is 
\begin{align*}
\max_\alpha f(\alpha, s) = \dfrac{1}{2} Y^\top   (I_n+\gamma X_s X_s^\top  )^{-1} Y.
\end{align*}
\subsubsection{Cutting-plane algorithm}
Denoting
\begin{align*}
c(s) := \max_{\alpha \in \mathbb{R}^n} f(\alpha,s), 
\end{align*}
which is a convex function in $s$, the cutting-plane algorithm solves the convex integer optimization problem 
\begin{align*}
\min_{s \in \{0,1\}^p} ~c(s) \mbox{  s.t.  } s^\top  \textbf{e} \leqslant k,
\end{align*} by iteratively tightening a piece-wise linear lower approximation of $c$. Pseudo-code is given in Algoritm \ref{OA} (p. \pageref{OA}). Proof of termination and details on implementation can be found in \citet{bertsimas2016cio} for regression and \citet{2017sparseclass} for classification. This outer-approximation scheme was originally proposed by \citet{duran1986outer} for general nonlinear mixed-integer optimization problems.
\begin{algorithm*}
\caption{Outer-approximation algorithm}
\label{OA}
\begin{algorithmic}
\REQUIRE $X \in \mathbb{R}^{n \times p}$, $Y \in \mathbb{R}^{n}$, $k \in \lbrace 1,...,p \rbrace$ 
\STATE $t \leftarrow 1 $
\REPEAT
\STATE $s^{t+1},\eta^{t+1} \leftarrow \text{argmin}_{s \in \{0, 1\}^p,\eta} \left\lbrace \eta \: : \sum_{j=1}^p s_j \leqslant k, \; \eta \geqslant c(s^i) + \nabla c(s^i)^\top  (s-s^i),~  \forall i =1,\dots , t \right\rbrace $ 
\STATE $t \leftarrow t+1 $
\UNTIL{$\eta^t < c(s^t) - \varepsilon$}
\RETURN $s^t$ 
\end{algorithmic}
\end{algorithm*}
\subsubsection{Implementation and publicly available code}
{ A naive implementation of Algorithm \ref{OA} would solve a mixed-integer linear optimization problem at each iteration, which can be as expensive as explicit enumeration of all feasible supports $s$. Fortunately, with modern solvers such as Gurobi \citep{gurobi2016gurobi} or CPLEX \citep{cplex2009v12}, this outer-approximation scheme  can be implemented using \textit{lazy constraints}, enabling the use of a single Branch-and-Bound tree for all subproblems.

The algorithm terminates when the incumbent solution is $\varepsilon$-optimal for some fixed tolerance level $\varepsilon$ (we chose $\varepsilon = 10^{-4}$ in our simulations). We also need to impose a time limit on the algorithm. Indeed, as often in discrete optimization, the algorithm can quickly find the optimal solution, but spends a lot of the time proving its optimality. In our experiment, we fixed a time limit of $60$ seconds for regression and $180$ seconds for classification. Such choices were guided by confidence in the quality of the initial solution $s_1$ we provide to the algorithm (which we will describe in the next section) as well as time needed to compute $c(s)$ and $\nabla c(s)$ for a given support $s$. 

The formulation \eqref{eqn:reg_sparse} contains two hyper-parameters, $k$ and $\gamma$, to control for the amount of sparsity and regularization respectively. In practice, those parameters need to be tuned using a cross-validation procedure. Since the function to minimize does not depend on $k$, any piece-wise linear lower approximation of $c(s)$ computed to solve \eqref{eqn:reg_sparse} for some value of $k$ can be reused to solve the problem at another sparsity level. {\blue In recent work, \citet{kenney2018efficient} proposed a combination of implementation recipes to optimize such search procedures.} As for $\gamma$, we apply the procedure described in \citet{chu2015warm}, starting with a low value $\gamma_0$ (typically scaling as $1 / {\max_i \|x_i\|^2}$) and inflate it by a factor $2$ at each iteration. }

To bridge the gap between theory and practice, we present a \verb|Julia| code which implements the described cutting-plane algorithm publicly available on GitHub\footnote{\url{https://github.com/jeanpauphilet/SubsetSelectionCIO.jl}}. The code requires a commercial solver like Gurobi { or CPLEX} and our open-source package \verb|SubsetSelection| for the Lagrangian relaxation, which we introduce in the next section. { We also call the open-source library \verb|LIBLINEAR| \citep{fan2008liblinear} to efficiently compute $c(s)$ in the case of Hinge and logistic loss. }

\subsection{Lagrangian relaxation}
As often in discrete optimization, it is natural to consider the Boolean relaxation of problem \eqref{eqn:minmax}
\begin{equation}
\label{eqn:relax}
\min_{s\in [ 0, 1 ]^p \: : \: \textbf{e}^\top   s \leqslant k} \max_{\alpha \in \mathbb{R}^n} f(\alpha, s),
\end{equation}
and study its tightness, as done by \citet{pilanci2015sparse}. 

\subsubsection{Tightness result}
The above problem is recognized as a convex/concave saddle point problem. According to Sion's minimax theorem \citep{sion1958general}, the minimization and maximization in \eqref{eqn:relax} can be interchanged. Hence, saddle point solutions $(\bar \alpha, \bar s)$ of \eqref{eqn:relax} should satisfy 
%\begin{align*}
$$\bar \alpha \in \arg \max_{\alpha \in \mathbb{R}^n} f(\alpha, \bar s), ~~~
\bar s  \in \arg \min_{s \in [0,1]^p} \,f(\bar \alpha, s) \mbox{ s.t. } ~s^\top   \textbf{e} \leqslant k .$$
%\end{align*}
Since $f$ is a linear function of $s$, a minimizer of $f(\bar \alpha, s)$ can be constructed easily by selecting the $k$ smallest components of the vector $(-\tfrac{\gamma}{2} \bar \alpha^\top   X_j X_j^\top   \bar \alpha)_{j=1,...,p}$. If those $k$ smallest components are unique, the so constructed binary vector must be equal to $\bar s$ and hence the relaxation \eqref{eqn:relax} is tight. In fact, the previous condition is necessary and sufficient as proven by \citet{pilanci2015sparse}:
\begin{thm}{\citep[Proposition 1]{pilanci2015sparse}} \label{tightness}
The Boolean relaxation \eqref{eqn:relax} is tight if and only if there exists a saddle point $(\bar \alpha, \bar s)$ such that the vector $\bar \beta := (\bar \alpha^\top   X_j X_j^\top   \bar \alpha)_{j=1,...,p}$ has unambiguously defined $k$ largest components, i.e., there exists $\lambda \in \mathbb{R}$ such that $\bar \beta_{[1]} \geqslant \cdots \geqslant \bar \beta_{[k]} > \lambda > \bar \beta_{[k+1]} \geqslant \cdots \geqslant \bar \beta_{[p]}$.
\end{thm}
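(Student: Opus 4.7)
The plan is to lean on the convex/concave saddle-point structure emphasized in the paragraph preceding the theorem, and then reduce the outer minimization in $s$ to a transparent linear program whose structure controls when an integer optimum is forced. Throughout, I would write $v_{\mathrm{rel}}$ and $v_{\mathrm{int}}$ for the optimal values of \eqref{eqn:relax} and \eqref{eqn:minmax}, so that tightness is the equality $v_{\mathrm{rel}} = v_{\mathrm{int}}$; the inequality $v_{\mathrm{rel}} \leqslant v_{\mathrm{int}}$ is automatic because the integer problem adds a constraint. Sion's theorem applies here since the feasible $s$-set is convex and compact, $f$ is linear in $s$ and concave in $\alpha$, and the losses of Table \ref{tab:loss_functions} guarantee that $\max_\alpha f(\alpha,s)$ is attained, so saddle points of $f$ exist and their common value equals $v_{\mathrm{rel}}$.

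For the ``if'' direction, I would fix a saddle point $(\bar\alpha, \bar s)$ with $\bar\beta_{[k]} > \lambda > \bar\beta_{[k+1]}$. At this $\bar\alpha$, the outer minimization becomes the linear program $\min_s -\tfrac{\gamma}{2}\sum_j s_j \bar\beta_j$ over $\{s \in [0,1]^p : e^\top s \leqslant k\}$. The strict separation of the top-$k$ entries of $\bar\beta$ by $\lambda$ makes this LP's optimum unique and equal to the indicator of the top-$k$ indices, which is automatically integer. Therefore $\bar s \in \{0,1\}^p$ is feasible for \eqref{eqn:minmax}, and the saddle-point identity gives
\begin{equation*}
v_{\mathrm{int}} \leqslant \max_{\alpha} f(\alpha, \bar s) = f(\bar\alpha, \bar s) = v_{\mathrm{rel}},
\end{equation*}
which combined with $v_{\mathrm{rel}} \leqslant v_{\mathrm{int}}$ yields tightness.

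For the ``only if'' direction, I would assume tightness and pick an integer optimum $s^\star$ common to both problems; then any maximizer $\bar\alpha \in \arg\max_\alpha f(\alpha, s^\star)$ turns $(\bar\alpha, s^\star)$ into a saddle point. Writing out the KKT conditions of the inner LP at $s^\star$, with multiplier $\lambda$ for the cardinality constraint and $\mu_j, \nu_j$ for the box constraints, complementary slackness delivers the weak separation $\tfrac{\gamma}{2}\bar\beta_j \geqslant \lambda$ for $j \in \mathrm{supp}(s^\star)$ and $\tfrac{\gamma}{2}\bar\beta_j \leqslant \lambda$ otherwise. To upgrade these to the strict inequalities in the theorem, I would argue that any tie $\bar\beta_j = \bar\beta_{j'}$ across the boundary of $\mathrm{supp}(s^\star)$ would produce a continuum of non-integer LP optima; combining this with tightness allows one to swap tied indices and land on another integer optimum whose associated dual variable exhibits the required strict separation.

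The main obstacle is precisely the last step of the ``only if'' direction: promoting the weak complementary-slackness inequalities given by the LP to the strict separation $\bar\beta_{[k]} > \lambda > \bar\beta_{[k+1]}$ demanded by the theorem. The ``if'' direction essentially collapses to a one-line LP uniqueness argument, but the converse requires careful bookkeeping over the potentially multiple saddle points of \eqref{eqn:relax} to exhibit at least one for which strict separation actually holds.
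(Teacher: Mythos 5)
The first thing to flag is that the paper itself does not prove Theorem \ref{tightness}: it is quoted from \citet{pilanci2015sparse} (their Proposition 1), and the only argument the paper supplies is the paragraph preceding the statement, which is precisely your ``if'' direction --- a saddle point exists by Sion's theorem, $f$ is linear in $s$, and strict separation of the top-$k$ entries of $\bar\beta$ forces the partial minimizer in $s$ to be the unique binary top-$k$ indicator, hence tightness. (Appendix \ref{sec:subsetselection} proves the same sufficiency-only statement for the penalized variant.) So your sufficiency argument matches the paper; the necessity direction is something the paper deliberately leaves to the cited reference, and it is exactly there that your proposal has a genuine gap.

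The tie-swapping idea in your ``only if'' direction does not work as described. The vector $\bar\beta$ is a function of $\bar\alpha$ alone; for the OLS loss the dual marginal $\alpha \mapsto \min_s f(\alpha,s)$ is strictly concave (because of the $-\tfrac{1}{2}\|\alpha\|^2$ term), so the $\alpha$-component of the saddle-point set is a singleton and every saddle point yields the \emph{same} $\bar\beta$. Swapping tied indices produces a different integer minimizer $\bar s$ but cannot change $\bar\beta$, so a tie at the $k$-th position persists and no amount of re-selection manufactures $\bar\beta_{[k]} > \lambda > \bar\beta_{[k+1]}$. Moreover, with tightness read as equality of optimal values (your $v_{\mathrm{rel}} = v_{\mathrm{int}}$), the converse as you set it up is problematic: take two identical columns $X_j = X_{j'}$ and $k=1$; the relaxed and integer values coincide (the objective depends only on $s_j + s_{j'}$), yet the unique dual optimum gives $\bar\beta_j = \bar\beta_{j'}$, so no saddle point has unambiguous $k$ largest components. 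Hence the necessity direction hinges on the precise notion of tightness and the argument in \citet{pilanci2015sparse}, and cannot be recovered from the LP complementary-slackness bookkeeping plus index swaps you sketch. A secondary issue in the same direction: you take an arbitrary $\bar\alpha \in \arg\max_\alpha f(\alpha, s^\star)$ and assert $(\bar\alpha, s^\star)$ is a saddle point; this holds when the inner maximizer is unique (OLS), but for the hinge loss $f(\cdot,s^\star)$ need not be strictly concave and an arbitrary maximizer need not be a dual optimum, so that step also needs justification.
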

This uniqueness condition in Theorem \ref{tightness} seems often fulfilled in real-world applications. {\blue It is satisfied with high probability, for instance,  when the covariates $X_j$ are independent} \citep[see][Theorem~2]{pilanci2015sparse}. In other words, randomness breaks the complexity of the problem. Similar behavior has already been observed for semi-definite relaxations \citep{iguchi2015tightness,ye2016data}. Such results have had  impact in practice and propelled the advancement of convex proxy based heuristics such as Lasso. Efficient algorithms can be designed to solve the saddle point problem \eqref{eqn:relax} without involving sophisticated discrete optimization tools and provide a high-quality heuristic for approximately solving \eqref{eqn:minmax} that could be used as a good warm-start in exact approaches.

\subsubsection{Dual sub-gradient algorithm}
In this section, we propose and describe an algorithm for solving problem \eqref{eqn:relax} efficiently and make our code available as a \verb|Julia|  package. Our algorithm is fast and scales to data sets with $n,p$ in the $100,000$s, which is two orders of magnitude larger than the implementation proposed by \citet{pilanci2015sparse}.

For a given $s$, maximizing $f$ over $\alpha$ cannot be done analytically, with the noteworthy exception of ordinary least squares, whereas minimizing over $s$ for a fixed $\alpha$ reduces to sorting the components of $(-\alpha^\top   X_j X_j^\top   \alpha)_{j=1,...,p}$ and selecting the $k$ smallest. We take advantage of this asymmetry by proposing a dual projected sub-gradient algorithm with constant step-size, as described in pseudo-code in Algorithm \ref{subgradient}. $\delta$ denotes the step size in the gradient update and $\mathcal{P}$ the projection operator over the domain of $f$. { At each iteration, the algorithm updates the support $s$ by minimizing $f(\alpha, s)$ with respect to $s$, $\alpha$ being fixed. Then, the variable $\alpha$ is updated by performing one step of projected sub-gradient ascent with constant step size $\delta$.} The denomination "sub-gradient" comes from the fact that at each iteration $\nabla_\alpha f(\alpha^T, s^T)$ is a sub-gradient to the function $\alpha \mapsto \min_s f(\alpha, s)$ at $\alpha = \alpha^T$. 
 
{ In terms of computational cost, updating $\alpha$ requires $O\left(n \|s \|_0 \right)$ operations for computing the sub-gradient plus at most $O\left(n \right)$ operations for the projection on the feasible domain. The most time-consuming step in Algorithm \ref{subgradient} is updating $s$ which requires on average $O\left(n p + p \log p \right)$ operations.}
 
\begin{algorithm*}
\caption{Dual sub-gradient algorithm}
\label{subgradient}
\begin{algorithmic}
\STATE $s^0, \alpha^0 \leftarrow$ Initial solution
\STATE $T = 0 $
\REPEAT
\STATE $s^{T+1} \in \text{argmin}_{s} f(\alpha^{T}, s)$
\STATE $\alpha^{T+1} = \mathcal{P} \left( \alpha^T + \delta \nabla_\alpha f(\alpha^T, s^T) \right)$
\STATE $T = T+1 $
\UNTIL{Stop criterion}
\STATE $\hat \alpha_T = \tfrac{1}{T} \sum_t \alpha^t$
\RETURN $\hat s = \text{argmin}_{s} f(\hat \alpha_T, s)$ 
\end{algorithmic}
\end{algorithm*}

The final averaging step $\hat \alpha_T = \tfrac{1}{T} \sum_t \alpha_t$ is critical in sub-gradient methods to ensure convergence of the algorithm in terms of optimal value \citep[see][chap. ~7.5]{bertsekas2016nonlinear}.

\begin{thm}{\citep[][chap. 7.5]{bertsekas2016nonlinear}} \label{convergence}
Assume the sequence of sub-gradients $\{ \nabla_\alpha f(\alpha_T, s_T) \}$ is uniformly bounded by some constant $L>0$, and that the set of saddle point solutions $\bar A$ in \eqref{eqn:relax} is non-empty. Then,
\begin{align*}
f\left( \hat \alpha_T, \hat s\right) &\geqslant f(\bar \alpha, \bar s) - \dfrac{\delta L^2}{2} - \dfrac{\text{dist}^2(\alpha^0, \bar A)}{2 \delta T},
\end{align*}
where $\text{dist}(\alpha^0, \bar A)$ denotes the distance of the initial point $\alpha^0$ to the set of saddle point solutions $\bar A$.
\end{thm}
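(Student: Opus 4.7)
The plan is to follow the standard projected sub-gradient ascent analysis for the concave envelope $g(\alpha) := \min_{s \in [0,1]^p,\, e^\top s \leqslant k} f(\alpha, s)$, then pull the bound back to the saddle-point formulation. Concretely, since $f(\cdot,s)$ is concave for every $s$ (by Theorem \ref{cio_formulation}), $g$ is concave as a pointwise minimum of concave functions. By Sion's minimax theorem, $\max_\alpha g(\alpha) = f(\bar\alpha, \bar s)$ for every saddle point $(\bar\alpha,\bar s) \in \bar A$, and moreover $g(\hat\alpha_T) = f(\hat\alpha_T, \hat s)$ by the very definition of $\hat s$. So proving the theorem reduces to bounding $g(\bar\alpha) - g(\hat\alpha_T)$.

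The first key step is to recognize the sub-gradient identity: because $s^T$ is a minimizer of $s \mapsto f(\alpha^T, s)$, the vector $\nabla_\alpha f(\alpha^T, s^T)$ is a super-gradient of $g$ at $\alpha^T$ (a standard envelope/Danskin-style argument). Denoting $g^T := \nabla_\alpha f(\alpha^T, s^T)$, concavity of $g$ yields the linearization inequality $g(\bar\alpha) \leqslant g(\alpha^T) + \langle g^T,\, \bar\alpha - \alpha^T \rangle$ for any $\bar\alpha \in \bar A$.

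Next I would run the textbook one-step contraction. Fix any $\bar\alpha \in \bar A$; since $\bar\alpha$ lies in the feasible set and $\mathcal{P}$ is a projection onto that convex set, nonexpansiveness gives
\begin{align*}
\|\alpha^{T+1} - \bar\alpha\|^2 &\leqslant \|\alpha^T + \delta g^T - \bar\alpha\|^2 \\
&= \|\alpha^T - \bar\alpha\|^2 + 2\delta \langle g^T, \alpha^T - \bar\alpha\rangle + \delta^2 \|g^T\|^2.
\end{align*}
Combining with the linearization inequality above and the uniform bound $\|g^T\| \leqslant L$, I obtain
\begin{align*}
g(\bar\alpha) - g(\alpha^T) \leqslant \frac{\|\alpha^T - \bar\alpha\|^2 - \|\alpha^{T+1} - \bar\alpha\|^2}{2\delta} + \frac{\delta L^2}{2}.
\end{align*}
Telescoping $t = 0, \ldots, T-1$, dropping the nonnegative term $\|\alpha^T - \bar\alpha\|^2$, and dividing by $T$ yields
\begin{align*}
g(\bar\alpha) - \frac{1}{T}\sum_{t=0}^{T-1} g(\alpha^t) \leqslant \frac{\|\alpha^0 - \bar\alpha\|^2}{2\delta T} + \frac{\delta L^2}{2}.
\end{align*}

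The final step uses Jensen's inequality: concavity of $g$ gives $g(\hat\alpha_T) \geqslant \tfrac{1}{T}\sum_t g(\alpha^t)$, which replaces the average on the left by $g(\hat\alpha_T)$ while preserving the inequality. Because the resulting bound holds for every $\bar\alpha \in \bar A$, I can take the infimum over $\bar A$ on the right to replace $\|\alpha^0 - \bar\alpha\|^2$ by $\mathrm{dist}^2(\alpha^0, \bar A)$. Rewriting $g(\bar\alpha) = f(\bar\alpha,\bar s)$ and $g(\hat\alpha_T) = f(\hat\alpha_T,\hat s)$ as noted above delivers exactly the claimed inequality. The only genuinely delicate point in the whole argument is the envelope identity that $\nabla_\alpha f(\alpha^T, s^T)$ is a super-gradient of $g$—everything else is mechanical—so I would state and verify it explicitly before invoking it.
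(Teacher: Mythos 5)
Your proof is correct and follows exactly the standard projected sub-gradient analysis that the paper itself invokes by citation to \citet[chap.~7.5]{bertsekas2016nonlinear} without reproducing it: envelope super-gradient of $g(\alpha)=\min_s f(\alpha,s)$, one-step nonexpansiveness bound, telescoping, Jensen for the averaged iterate, and partial minimization to identify $g(\hat\alpha_T)=f(\hat\alpha_T,\hat s)$. The only point worth flagging is the index convention: your argument needs $s^T\in\argmin_s f(\alpha^T,s)$, which matches the paper's stated sub-gradient interpretation even though the pseudo-code writes $s^{T+1}$ for that minimizer.
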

As for any sub-gradient strategy, with an optimal choice of step size $\delta$\footnote{$\delta = \dfrac{\text{dist}(\alpha^0, \bar A)}{L \sqrt{T}}$.}, Theorem \ref{convergence} proves a $O(1/\sqrt{T})$ convergence rate in terms of objective value, which is disappointingly slow. However, in practice, convergence towards the optimal primal solution $\bar s$ is more relevant. In that metric, our algorithm performs particularly well as numerical experiments in Sections \ref{sec:regression} and \ref{sec:classification} demonstrate. The key to our success is that the optimal primal solution is estimated using partial minimization $$\hat s = \arg \min_s f(\hat \alpha_T,s),$$ as opposed to averaging $$\hat s= \tfrac{1}{T} \sum_t s^t, $$ as studied by \citet{nedic2009approximate} and commonly implemented for sub-gradient methods. In addition, even though we are solving a relaxation, we are interested in binary vectors $s$, which can be interpreted as a set of features. To that extend, averaging would not have been a suitable option since the averaged solution is neither binary, nor $k$-sparse. { With the extra cost of computing $c(s^t)$ for all past iterates $s^t$ as well as $c(\hat s)$, one can also decide to return the support vector with the lowest value. This can only produce a better approximation of $\argmin_{s \in \{ 0,1\}^p : s^\top   \textbf{e} \leqslant k} \: c(s)$.  }

\subsubsection{Implementation and open-source package} 
{ The algorithm terminates after a fixed number of iterations $T_{max}$ which is standard for sub-gradient methods. In our case, however, the quality of the primal variable $s$ should be the key concern. By computing $c(s^t)$ at each iteration and keeping track of the best upper-bound $\min_{t=1,\dots,T} c(s^t)$, one can use the duality gap or the number of consecutive iterations without any improvement on the upper-bound as alternative stopping criteria. Computing  $c(s^t)$ increases the cost per iteration, with the hope of terminating the algorithm earlier. By default, our implementation stops after $T_{max}=200$ iterations or when the duality gap is $10^{-4}$. 

The constant step size rule is difficult to implement in practice. Indeed, as seen in Theorem \ref{convergence}, an optimal step size should depend on quantities that are hard do estimate \textit{a priori}, namely $L$ and $\text{dist}^2(\alpha^0, \bar A)$. In particular for logistic loss, $L$ can be arbitrarily large. Instead, one can use an adaptive stepsize rule such as
\begin{align*}
\delta^T = \dfrac{\min_{t=1,\dots,T} c(s^t) - \max_{t=1,\dots,T} f(\alpha^t, s^t)}{\| \nabla_\alpha f(\alpha^T, s^T) \|^2}.
\end{align*}
We implemented such a rule and refer to \citet[][chap. ~7.5]{bertsekas2016nonlinear} for proofs of convergence and alternative choice.

We apply the same grid-search procedures as for the cutting-plane algorithm in order to  cross-validate the hyperparameters $k$ and $\gamma$.}

We make our code publicly available as a \verb|Julia|  package named \verb|SubsetSelection| and source repository can be found on GitHub\footnote{\url{https://github.com/jeanpauphilet/SubsetSelection.jl}}. Our code implements Algorithm \ref{subgradient} for six loss functions including those presented in Table \ref{tab:loss_functions}. The package consists of one main function, \verb|subsetSelection|, which solves problem \eqref{eqn:relax} for a given value of $k$.  { The algorithm can be extended to more loss functions, and cardinality-penalized estimators as well, as described in Appendix \ref{sec:subsetselection}.}

\subsection{Lasso - $\ell_1$ relaxation}
Instead of solving the NP-hard problem \eqref{eqn:bic}, \citet{tibshirani} proposed replacing the non-convex $\ell_0$-pseudo norm by the convex $\ell_1$-norm which is sparsity-inducing. Indeed, extreme points of the unit $\ell_1$ ball $\{x: \|x\|_1 \leqslant 1\}$ are $1$-sparse vectors. The resulting formulation 
\begin{equation}
\label{eqn:lasso}
\min_{w \in \mathbb{R}^p}~\sum_{i=1}^n \ell(y_i, w^\top   x_i) + \lambda \Vert w \Vert_1,
\end{equation}
is referred to as the Lasso estimator. More broadly, $\ell_1$-regularization is now a widely used technique to induce sparsity in a variety of statistical settings \citep[see][for an overview]{hastie2015statistical}. Its popularity has been supported by an extensive corpus of theoretical results from signal processing and high-dimensional statistics. Since the seminal work of \citet{donoho2001uncertainty}, assumptions needed for the Lasso estimator to accurately approximate the true sparse signal are pretty well understood. We refer to reader to \citet{candes2006stable,meinshausen2006high,zhao2006model,wainwright2009sharp} for some of these results. In practice however, those assumptions, namely mutual incoherence and restricted eigenvalues conditions, are quite stringent and hard to verify. In addition, even { when} the Lasso regressor provably converges to the true sparse regressor in terms of $\ell_2$ distance and identifies all the correct features, it also systematically incorporates irrelevant ones, a behavior observed and partially explained by \citet{su2015false} and of crucial practical impact.
\subsubsection{Elastic-Net formulation}
The Lasso formulation in its original form \eqref{eqn:lasso} involves one hyper-parameter $\lambda$ only, which controls regularization, i.e., robustness of the estimator against uncertainty in the data \citep[see][]{xu2009robustness,bertsimas2009equivalence}. At the same time, the $\ell_1$ norm in the Lasso formulation \eqref{eqn:lasso} is also used for its fortunate but {\blue collateral} sparsity-inducing property. Robustness and sparsity, though related, are two very distinct properties demanding a separate hyper parameter each. The ElasticNet (ENet) formulation proposed by \citet{zou2005regularization}
\begin{equation}
\label{eqn:enet}
\min_{w \in \mathbb{R}^p}~\sum_{i=1}^n \ell(y_i, w^\top   x_i) + \lambda \left[ \alpha \Vert w \Vert_1 + \tfrac{1-\alpha}{2} \|w\|_2^2 \right],
\end{equation}
addresses the issue by adding an $\ell_2$ regularization to the objective. For $\alpha=1$, problem \eqref{eqn:enet} is equivalent to the Lasso formulation \eqref{eqn:lasso}, while $\alpha=0$ corresponds to Ridge regression. {\blue Although this extra regularization reduces bias and improves prediction error, it} does not significantly improve feature selection, as we will see on numerical experiments in Section \ref{sec:regression}. { In our view, this is due to the fact that $\ell_1$-regularization primarily induces robustness of the estimator, through shrinkage of the coefficients \citep{bertsimas2009equivalence,xu2009robustness}. In that perspective, it leads to first-rate out-of-sample predictive performance, even in high-noise regimes \citep[see][for extensive experiments]{hastie2017extended}. Nonetheless, the feature selection ability of $\ell_1$-regularization ought to be challenged.}

\subsubsection{Algorithms and implementation}
For $\ell_1$-regularized regression, Least Angle Regression (LARS) \citep{efron2004least} is { an efficient} method for computing an entire path of solutions for various values of the $\lambda$ parameter, exploiting the fact that the regularization path is piecewise linear. More recently, coordinate descent methods \citep{friedman2007pathwise,wu2008coordinate,friedman2010regularization} have successfully competed with { and surpassed } the LARS algorithm, especially in high dimension. Their implementation through the \verb|glmnet| package \citep{friedman2013glmnet} is publicly available in \verb|R| and many other programming languages. {\blue In a different direction, proximal gradient descent methods have also been proposed}, and especially the Fast Iterative Shrinkage Thresholding Algorithm (FISTA) proposed by \citet{beck2009fast}.
\subsection{Non-convex penalties}
Recently, other formulations have been proposed, of the form
\begin{equation}
\label{eqn:noconv}
\min_{w \in \mathbb{R}^p}~\sum_{i=1}^n \ell(y_i, w^\top   x_i) +\sum_{j=1}^p p_{\lambda,\gamma}(|w_j|),
\end{equation}
where $p_{\lambda,\gamma}(\cdot)$ is a function parametrized by $\lambda$ and $\gamma$, which control respectively the tradeoff between empirical loss and regularization, and the shape of the function. { We will consider two popular choice of penalty functions $p_{\lambda,\gamma}(\cdot)$, which are non-convex and are proved to recover the true support even when mutual incoherence condition fails to hold \citep{loh2017support}.}

\subsubsection{Minimax Concave Penalty (MCP)}
The minimax concave penalty of \citet{zhang2010nearly} is  defined on $[0,\infty)$ by
\begin{align*}
p_{\lambda,\gamma}(u) = \lambda \int_0^u \left(1-\dfrac{t}{\gamma \lambda}\right)_+ \text{d}t = \begin{cases} \lambda u - \dfrac{u^2}{2 \gamma} & \mbox{ if } u \leq \gamma \lambda, \\
\dfrac{\gamma \lambda^2}{2} & \mbox{ if } u > \gamma \lambda,
\end{cases}
\end{align*}
for some $\lambda \geqslant 0$ and $\gamma > 1$. The rationale behind the MCP {\blue can be explained } in the univariate OLS case: In the univariate case, MCP and $\ell_1$-regularization lead to the same solution as $\gamma \rightarrow \infty$, while the MCP is indeed equivalent to hard-thresholding when $\gamma=1$. In other words, in one dimension or under the orthogonal design assumption, the MCP produces the so-called firm-shrinkage estimator introduced by \citet{gao1997waveshrink}, which {\blue should} be understood as a continuous tradeoff between hard- and soft-thresholding. 

\subsubsection{Smoothly Clipped Absolute Deviation (SCAD)}
\citet{fan2001variable} orginally proposed the smoothly clipped absolute deviation penalty, defined on $[0,\infty)$ by
\begin{align*}
p_{\lambda,\gamma}(u) = \begin{cases} \lambda u &  \mbox{ if } u \leq \lambda \\
\dfrac{\gamma \lambda u - (u^2+\lambda^2)/2}{\gamma-1} & \mbox{ if } \lambda < u \leq \gamma \lambda, \\
\dfrac{\lambda^2(\gamma^2-1)}{2(\gamma-1)} & \mbox{ if } u > \gamma \lambda,
\end{cases}
\end{align*}
for some $\lambda \geqslant 0$ and $\gamma >2$. The rationale behing the SCAD penalty is similar to the MCP but less straightforward. We refer to \citet{fan2001variable} for a comparison of SCAD penalty with hard-thresholding and $\ell_1$-penalty and to \citet{ncvreg} for a comparison of SCAD and MCP.

\subsubsection{Algorithms and implementation}
For such non-convex penalties, \citet{zou2008one} designed a local linear approximation (LLA) approach where, at each iteration, the penalty function is linearized around the current iterate and the next iterate is obtained by solving the resulting convex optimization problem with linear penalty. Another, more computationally efficient, approach has been proposed by \citet{ncvreg} and implemented in the open-source \verb|R| package, \verb|ncvreg|. Their algorithm relies on coordinate descent and the fact that the objective function in \eqref{eqn:noconv} with OLS loss is convex in any $w_j$, the other $w_{j'}, j'\neq j$ being fixed. For logistic loss, they locally approximate the loss function by a second-order Taylor expansion at each iteration and use coordinate descent to compute the next iterate.   

\section{Linear regression on synthetic data}
\label{sec:regression}
In this section, we compare the aforementioned methods on synthetic linear regression data where the ground truth is known to be sparse. The convex integer optimization algorithm of \citet{bertsimas2016cio,2017sparseclass} (CIO in short) was implemented in \verb|Julia| \citep{LubinDunningIJOC} using the commercial solver Gurobi 7.5.0 \citep{gurobi2016gurobi}, interfaced using the optimization package \verb|JuMP| \citep{dunning2015jump}. The Lagrangian relaxation is also implemented in \verb|Julia| and openly available as the \verb|SubsetSelection| package (SS in short). We used the implementation of Lasso/Enet provided by the \verb|glmnet| package \citep{friedman2013glmnet}, available both in \verb|R| and \verb|Julia|. We also compared to MCP and SCAD penalty formulations implemented in the \verb|R| package \verb|ncvreg| \citep{ncvreg}. The computational tests were performed on a computer with Xeon @2.3GhZ processors, { 1} CPUs, { 16}GB RAM per CPU. 

\subsection{Data generation methodology}
The synthetic data was generated according to the following methodology: We draw $x_i \sim \mathcal{N}(0_p, \Sigma), i=1,..,n$ independent realizations from a $p$-dimensional normal distribution with mean $0_p$ and covariance matrix $\Sigma$. We randomly sample a weight vector $w_{true} \in \lbrace -1, 0, 1 \rbrace$ with exactly $k_{true}$ non-zero coefficient. We draw $\varepsilon_i, i=1,...,n,$ i.i.d. noise components from a normal distribution scaled according to a chosen signal-to-noise ratio $\sqrt{SNR} = \| X w_{true} \|_2 / \| \varepsilon \|_2$. Finally, we compute $Y = X w_{true} + \varepsilon$. With this methodology, we are able to generate data sets of arbitrary size $(n,p)$, sparsity $k_{true}$, correlation structure $\Sigma$ and level of noise $\sqrt{SNR}$. { The signal-to-noise ratio relates to the percentage of variance explained ($PVE$). Indeed, \citet{hastie2017extended} showed that 
\begin{align*}
PVE = \dfrac{SNR}{1+SNR}. 
\end{align*} 
Accordingly, we will consider $SNR$ values ranging from $6$ ($PVE = 85.7\%$) to $0.05$ ($PVE = 4.8\%$). }

%In particular, we
% consider 4 different noise/correlation regimes as described in Table \ref{tab:regimes}
%\begin{table}[h]
%\centering
%\caption{Regimes of noise ($\sqrt{SNR}$) and correlation ($\rho$) considered in our experiments on regression}
%\label{tab:regimes}
%\begin{tabular}{lc|c}
%\toprule
% &Low correlation &High correlation \\
%\midrule
%Low noise &  {$\begin{aligned} &\rho = 0.2 \\ &\sqrt{SNR}=100 \end{aligned} $}& {$\begin{aligned} &\rho = 0.9 \\ &\sqrt{SNR}=100 \end{aligned} $}\\
%\midrule
%High noise & {$\begin{aligned} &\rho = 0.2 \\ &\sqrt{SNR}=5 \end{aligned} $} & {$\begin{aligned} &\rho = 0.9 \\ &\sqrt{SNR}=5 \end{aligned} $}\\
%\bottomrule
%\end{tabular}
%\end{table}

\subsection{Metrics and benchmarks}
Statistical performance of the methods is assessed in terms of Accuracy ($A$),
\begin{align*}
A(w) := \dfrac{|\{j: w_j \neq 0, \; w_{true,j} \neq 0\}|}{|\{j: w_{true,j} \neq 0\}|},
\end{align*} i.e., the proportion of true features which are selected, and False Discovery Rate ($FDR$), 
\begin{align*}
FDR(w) := \dfrac{|\{j: w_j \neq 0, \; w_{true,j} = 0\}|}{|\{j: w_{j} \neq 0\}|},
\end{align*} i.e., the proportion of selected features which are not in the true support. We refer to the quantities in the numerators as the number of true features ($TF$) and false features ($FF$) respectively. One might argue that accuracy as we defined it here is a purely theoretical metric, since on real-world data, the ground truth is unknown and there is no such thing as \emph{true} features. Still, accuracy is the only metric which assesses feature selection only, while derivative measures such as predictive power depend on more factors than the features selected alone. Moreover, accuracy has some practical implications in terms of interpretability and also in terms of predictive power: Common sense and empirical results suggest that better selected features should yield diminished prediction error. To that end, we also compare the performance of the methods in terms of out-of-sample Mean Square Error 
{\blue \begin{align*}
MSE(w) := \dfrac{1}{n} \sum_{i=1}^n (y_i - x_i^\top   w)^2,
\end{align*} which will be the metric of interest on real data. Note that the sum can be taken over the observations in the training (in-sample) or test set (out-of-sample).}

Practical scalability of the algorithms is assessed in terms of computational time. In order to provide a fair comparison between methods that are not implemented in the same programming language, we report computational time for each algorithm relative to the time needed to compute a Lasso estimator with \verb|glmnet| in the same language and on the same data. For these experiments, we fixed a time limit of $60$ seconds for the cutting-plane algorithm and considered $150$ iterations of the sub-gradient algorithm for the Boolean relaxation.

\subsection{Synthetic data satisfying mutual incoherence condition}
{ We first consider Toeplitz covariance matrix $\Sigma = \left( \rho^{|i-j|} \right)_{i,j}$. Such matrices satisfy the mutual incoherence condition, required by $\ell_1$-regularized estimators to be statistically consistent. We compare the performance of the methods in six different regimes of noise and correlation described in Table \ref{tab:reg.mic.regimes} (p. \pageref{tab:reg.mic.regimes}).
\begin{table}[h]
\centering
\caption{Regimes of noise ($SNR$) and correlation ($\rho$) considered in our experiments on regression with Toeplitz covariance matrix}
\label{tab:reg.mic.regimes}
\begin{tabular}{lc|c}
\toprule
 &Low correlation &High correlation \\
\midrule
Low noise &  {$\begin{aligned} &\rho = 0.2 \\ &{SNR}=6  \\ & p =20,000 \\ &k=100\end{aligned} $}& {$\begin{aligned} &\rho = 0.7 \\ &{SNR}=6 \\ & p =20,000 \\ &k=100 \end{aligned} $}\\
\midrule
Medium noise & {$\begin{aligned} &\rho = 0.2 \\ &{SNR}=1 \\ & p =10,000 \\ &k=50 \end{aligned} $} & {$\begin{aligned} &\rho = 0.7 \\ &{SNR}=1 \\ & p =10,000 \\ &k=50 \end{aligned} $}\\
\midrule
High noise & {$\begin{aligned} &\rho = 0.2 \\ &{SNR}=0.05 \\ & p =2,000 \\ &k=10 \end{aligned} $} & {$\begin{aligned} &\rho = 0.7 \\ &{SNR}=0.05 \\ & p =2,000 \\ &k=10 \end{aligned} $}\\
\bottomrule
\end{tabular}
\end{table}
}

\subsubsection{Feature selection with a given support size} 
We first consider the case when the cardinality $k$ of the support to be returned is given and equal to the true sparsity $k_{true}$ for all methods, {\blue while all other hyper-parameters are cross-validated on a separate validation set.} In this case, accuracy and false discovery rate are complementary. Indeed, in this case 
\begin{align*}
 |\{j: w_{true,j} \neq 0\}| &= |\{j: w_{j} \neq 0\}| = k_{true}, 
\end{align*} which leads to $A = 1-FDR$ so that we may consider accuracy by itself. 

As shown on Figure \ref{fig:RegFixTF} (p. \pageref{fig:RegFixTF}), all methods converge in terms of accuracy. That is their ability to select correct features as measured by $A$ smoothly converges to $1$ with an increasing number of observations $n\rightarrow \infty$. Noise in the data has an equalizing effect on all methods, meaning that noise reduces the gap in performance. Indeed, in high-noise regimes, all methods are comparable. On the contrary, correlation is discriminating: High correlation strongly hinders the performance of Lasso/ENet, moderately those of SCAD and very slightly CIO, SS and MCP methods. Among all methods, $\ell_1$-regularization is the less accurate, selects fewer correct features than the four other methods and is sensitive to correlation between features. { SCAD provides modest improvement over ENet in terms of accuracy, in comparison with CIO, SS and MCP. Unsurprisingly, we observe a gap between the solutions returned by the cutting-plane method and its Boolean relaxation, gap which decreases as noise increases. All things considered, CIO and the MCP penalization are the best performing method in all six regimes, with a fine advantage for CIO.} 

\begin{figure*}
\centering
\begin{subfigure}[t]{.45\linewidth}
	\centering
	\includegraphics[width=\linewidth]{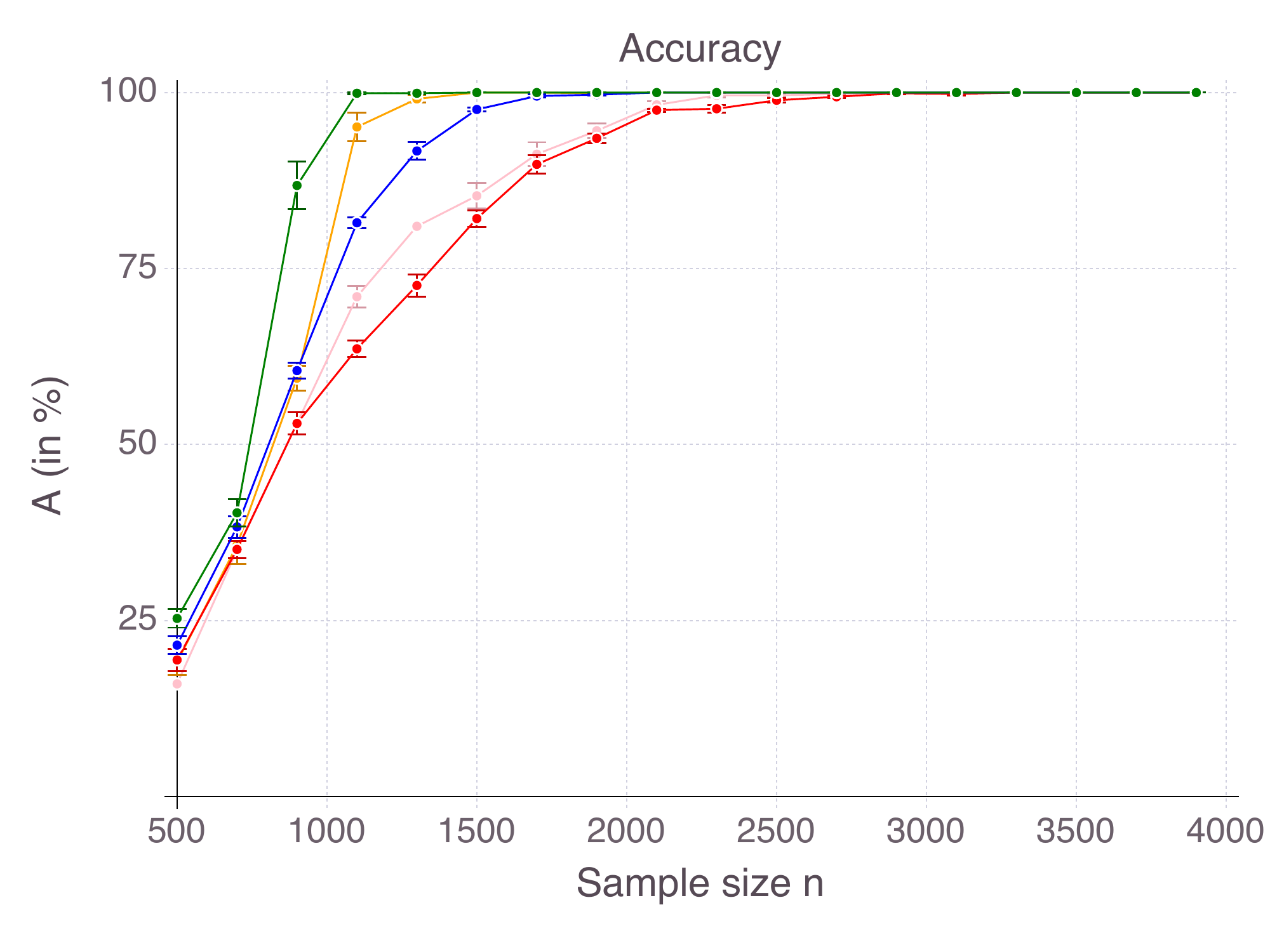}
	\caption{Low noise, low correlation}
\end{subfigure} %
~
\begin{subfigure}[t]{.45\linewidth}
	\centering
	\includegraphics[width=\linewidth]{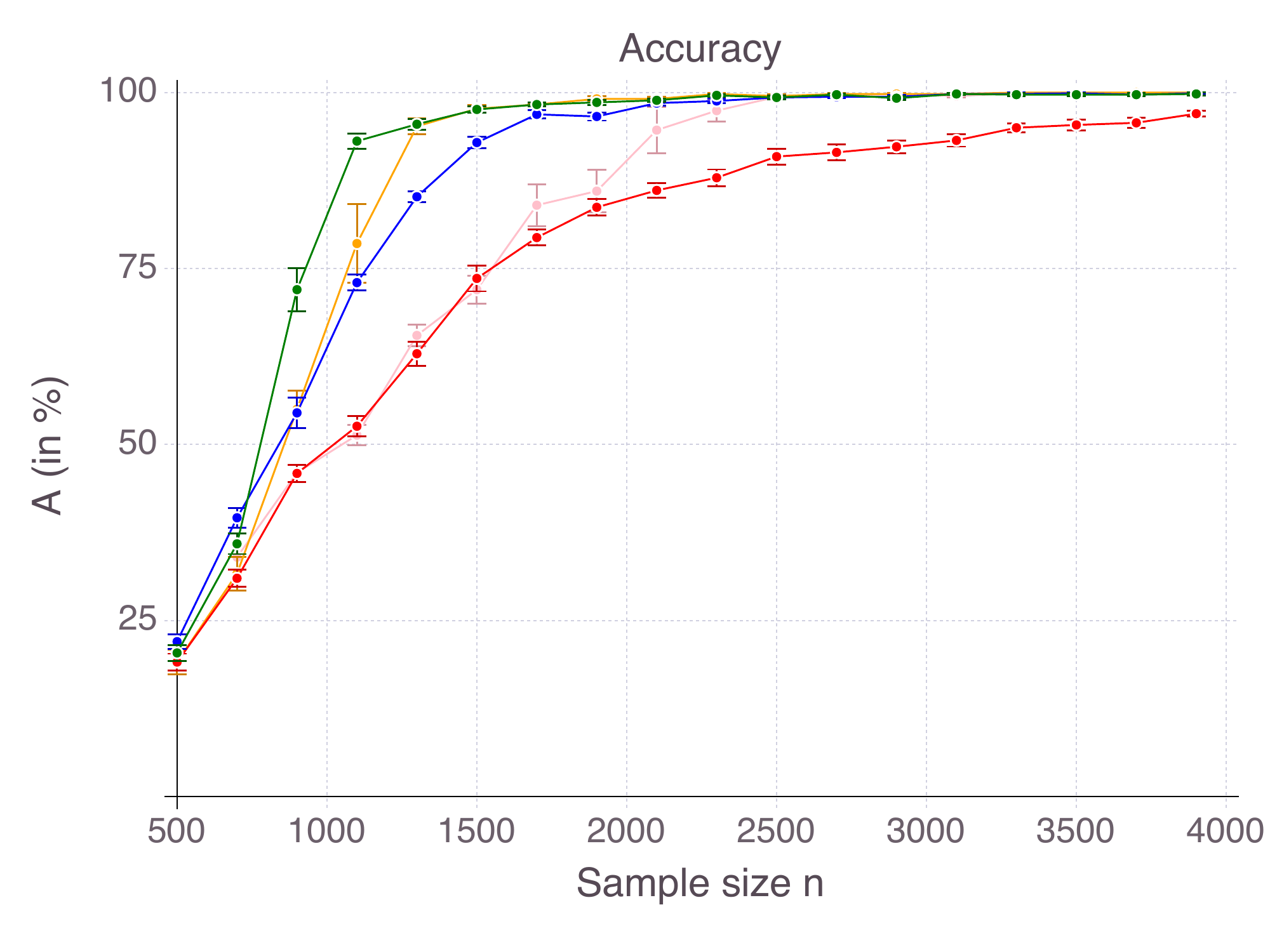}
	\caption{Low noise, high correlation}
\end{subfigure}

\begin{subfigure}[t]{.45\linewidth}
	\centering
	\includegraphics[width=\linewidth]{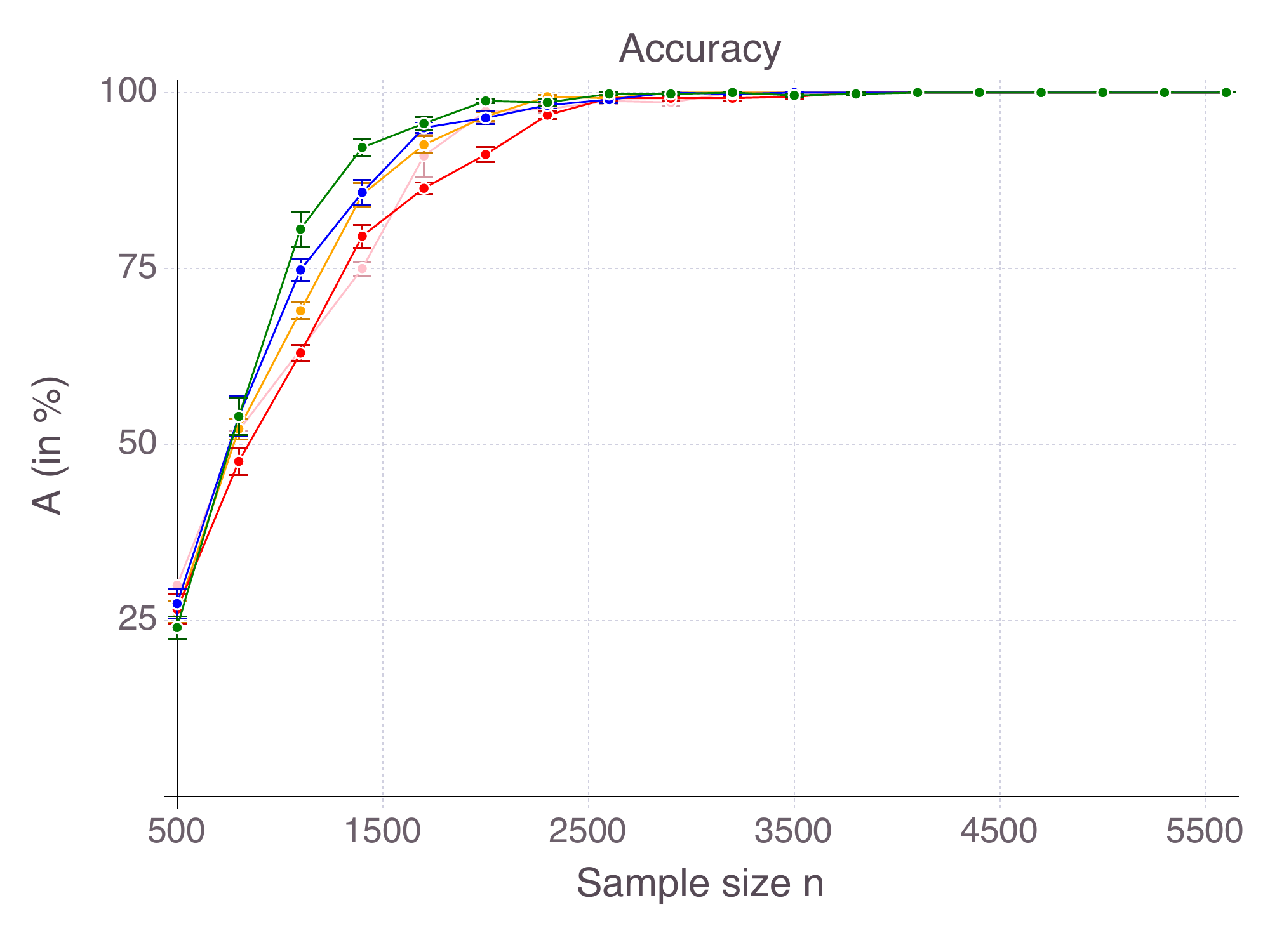}
	\caption{Medium noise, low correlation}
\end{subfigure} %
~
\begin{subfigure}[t]{.45\linewidth}
	\centering
	\includegraphics[width=\linewidth]{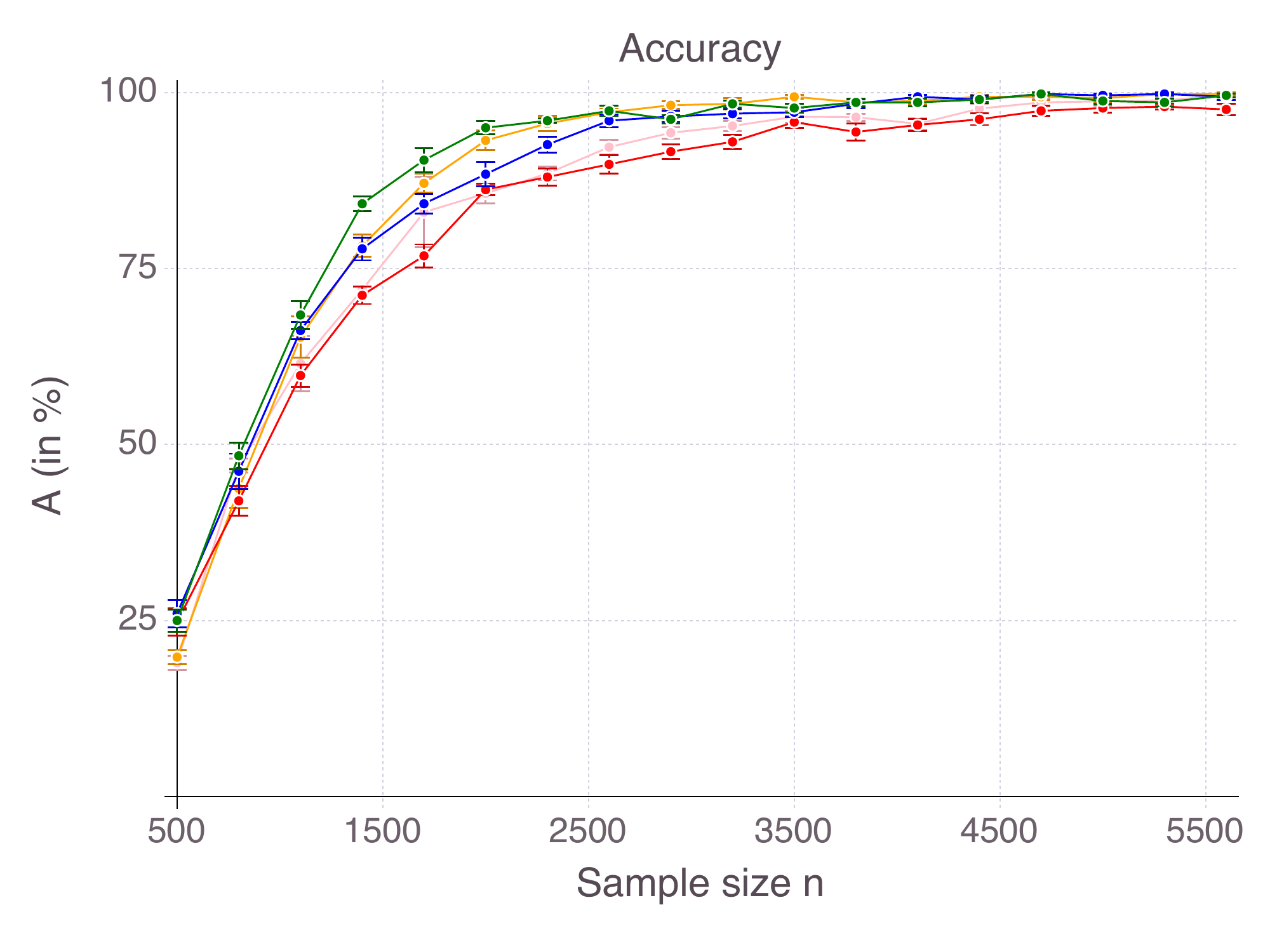}
	\caption{Medium noise, high correlation}
\end{subfigure}

\begin{subfigure}[t]{.45\linewidth}
	\centering
	\includegraphics[width=\linewidth]{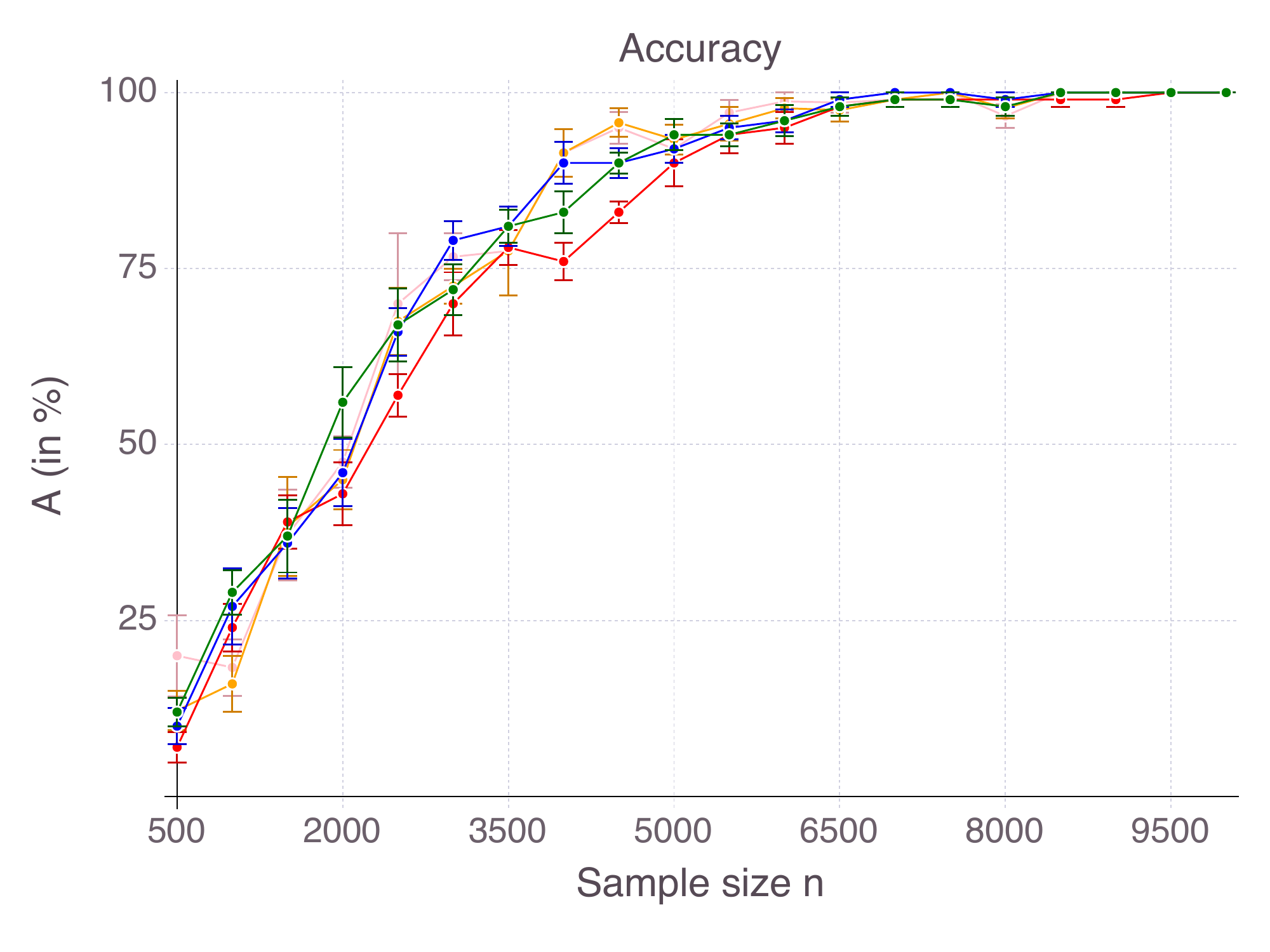}
	\caption{High noise, low correlation}
\end{subfigure} %
~
\begin{subfigure}[t]{.45\linewidth}
	\centering
	\includegraphics[width=\linewidth]{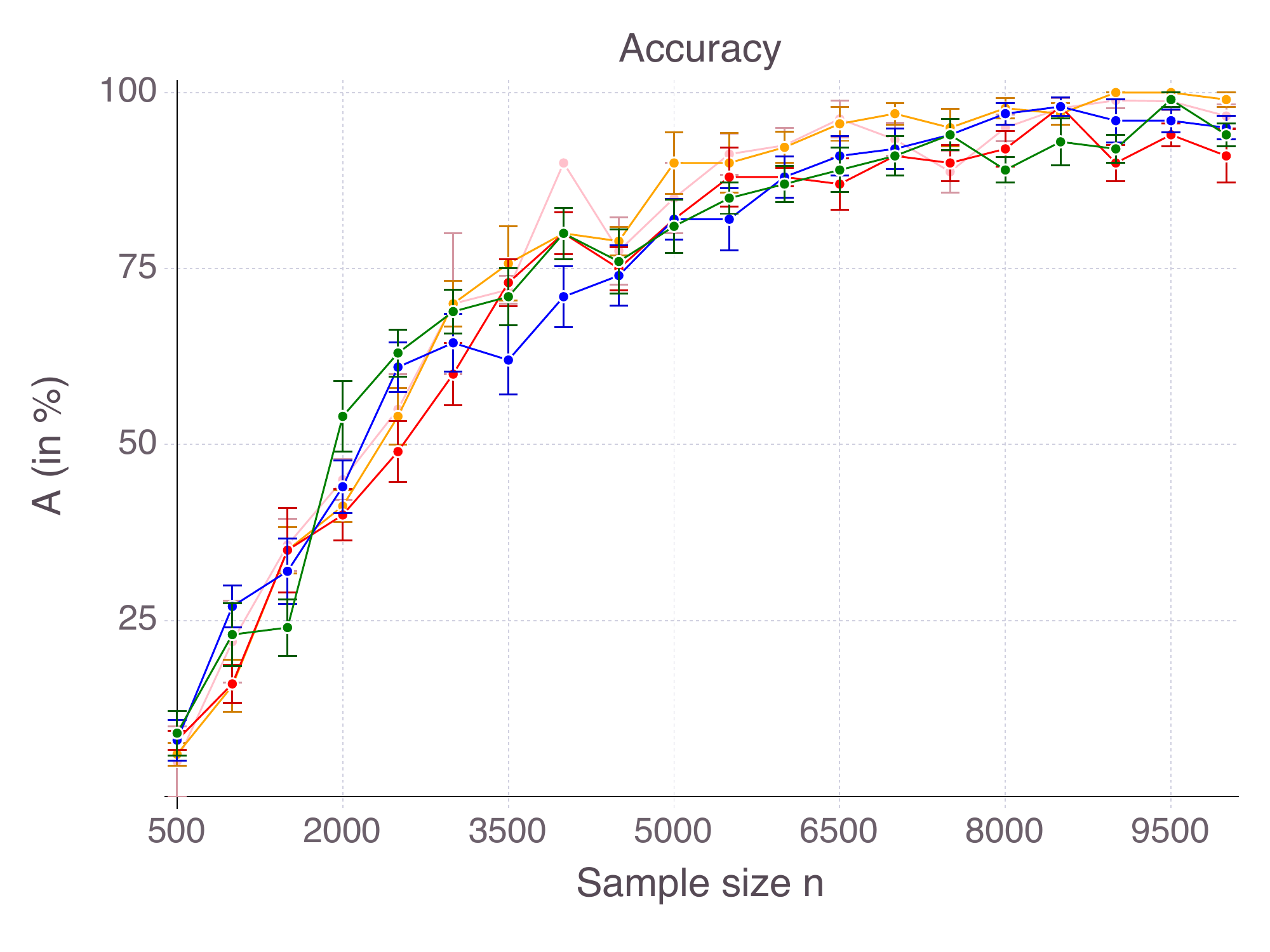}
	\caption{High noise, high correlation}
\end{subfigure}
\caption{Accuracy as $n$ increases, for the CIO (in green), SS (in blue with $T_{max}=200$), ENet (in red), MCP (in orange), SCAD (in pink) with OLS loss. We average results over $10$ data sets.}
\label{fig:RegFixTF}
\end{figure*}

Figure \ref{fig:RegFixTime} on page \pageref{fig:RegFixTime} reports relative computational time compared to \verb|glmnet| { in log scale}. It should be kept in mind that we restricted the cutting-plane algorithm to a $60$-second time limit and the sub-gradient algorithm to $T_{max}=200$ iterations. { All methods terminate in times of one to two orders of magnitude larger than \verb|glmnet| (seconds for the problem size at hand), contradicting the common belief that $\ell_1$-penalization is the \emph{only} tractable alternative to exact subset selection. Computational time for the discrete optimization algorithm CIO and sub-gradient algorithm SS highly depends on the regularization parameter $\gamma$. For low $\gamma$, which are suited in high noise regimes, the algorithm is extremely fast, while it can take as long as a minute in low noise regimes. This phenomenon explains the relative comparison of SS with \verb|glmnet| in Figure \ref{fig:RegFixTime}. For this is an important practical aspect, we provide detailed experiments regarding computational time in Appendix \ref{sec:regression.supp.time}. As previously mentioned, stopping the algorithm SS after a consecutive number of non-improvements can drastically reduce computational time. Empirically, this strategy did not hinder the quality of the solution in regression settings, but was not as successful in classification setting, so we did not reported its performance.} As CIO has a fixed time limit independent of $n$, the relative gap in terms of computational time with \verb|glmnet| narrows as sample size increases.

\begin{figure*}
\centering
\begin{subfigure}[t]{.45\linewidth}
	\centering
	\includegraphics[width=\linewidth]{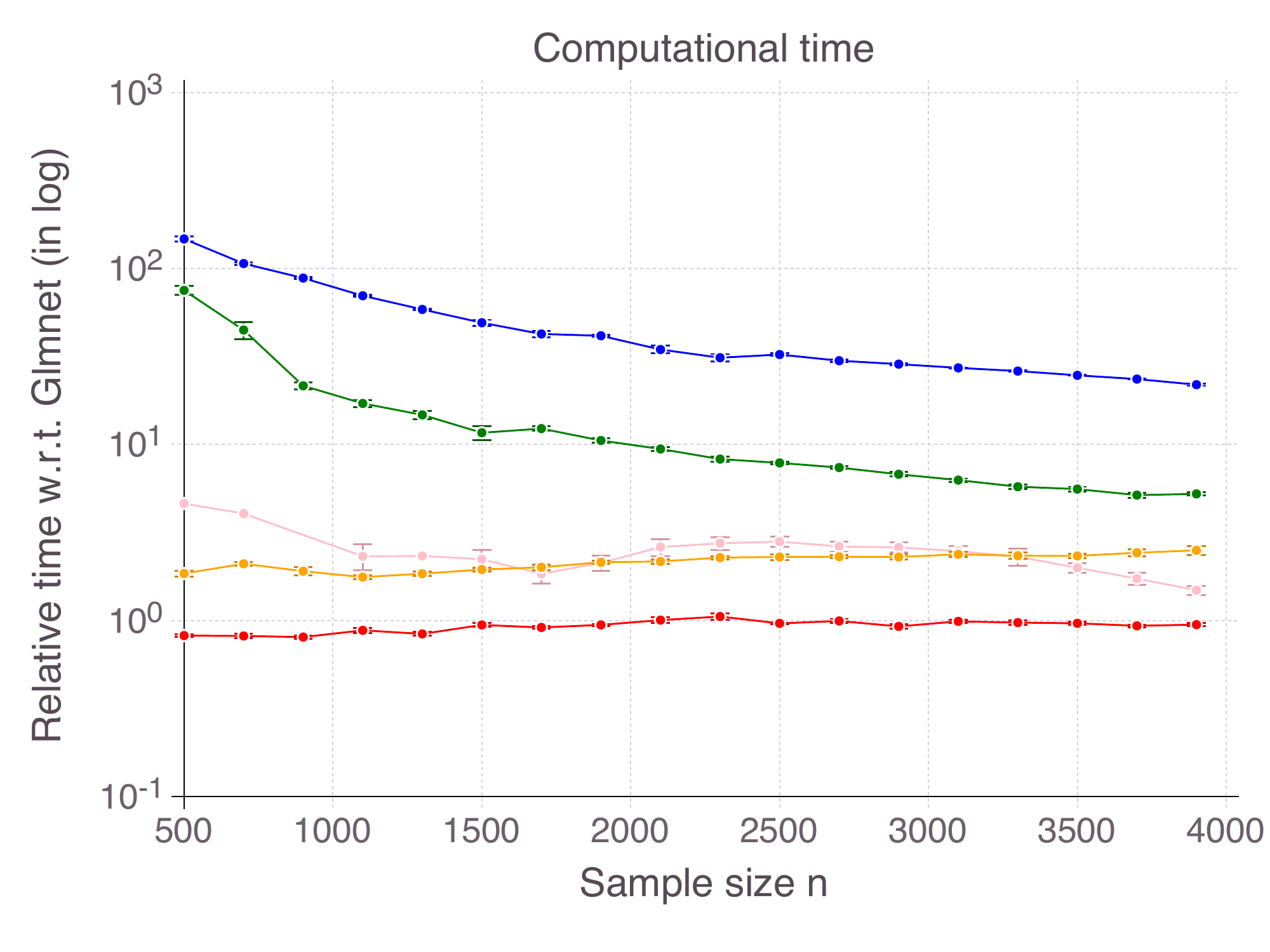}
	\caption{Low noise, low correlation}
\end{subfigure} %
~
\begin{subfigure}[t]{.45\linewidth}
	\centering
	\includegraphics[width=\linewidth]{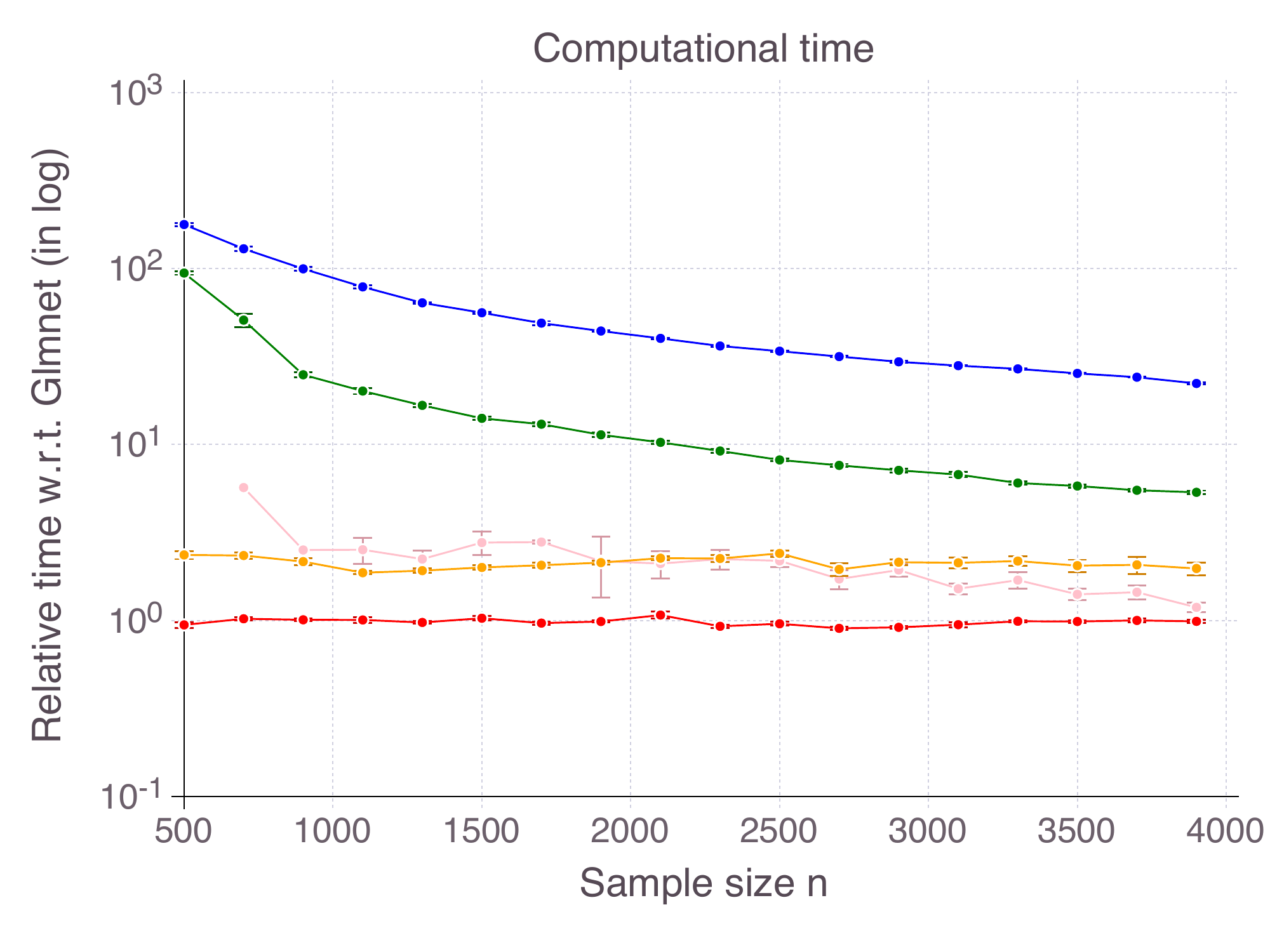}
	\caption{Low noise, high correlation}
\end{subfigure}

\begin{subfigure}[t]{.45\linewidth}
	\centering
	\includegraphics[width=\linewidth]{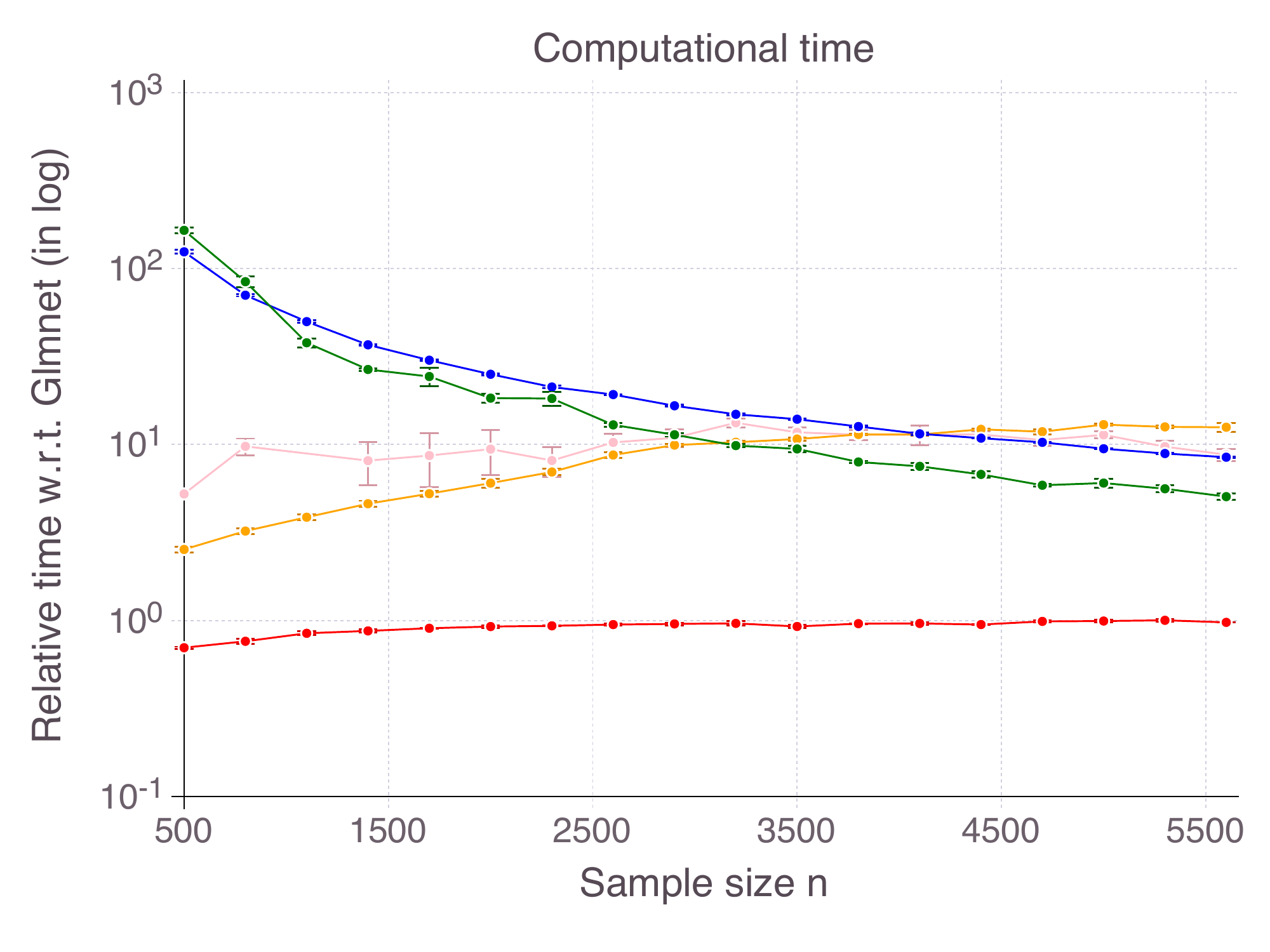}
	\caption{Medium noise, low correlation}
\end{subfigure} %
~
\begin{subfigure}[t]{.45\linewidth}
	\centering
	\includegraphics[width=\linewidth]{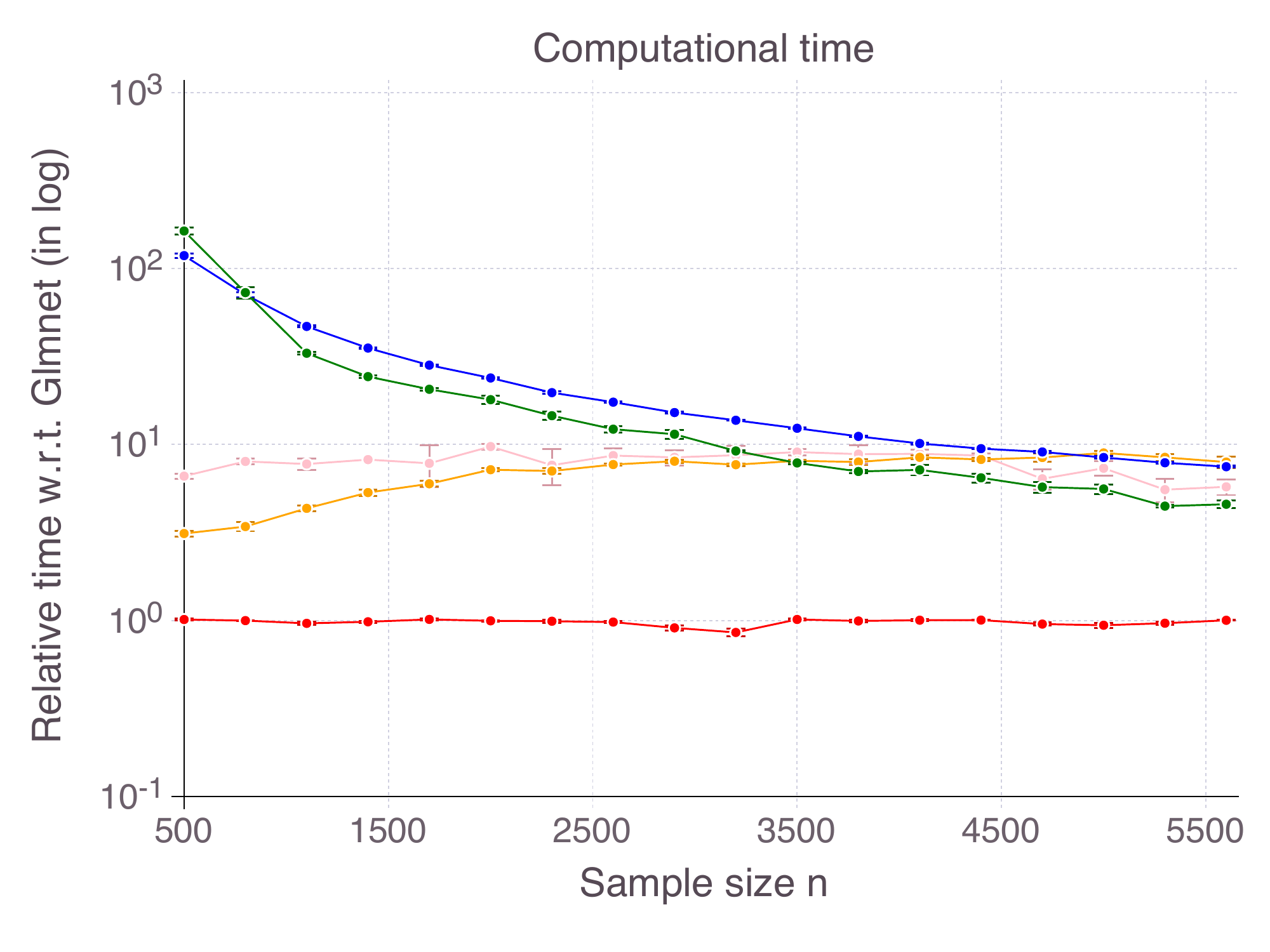}
	\caption{Medium noise, high correlation}
\end{subfigure}

\begin{subfigure}[t]{.45\linewidth}
	\centering
	\includegraphics[width=\linewidth]{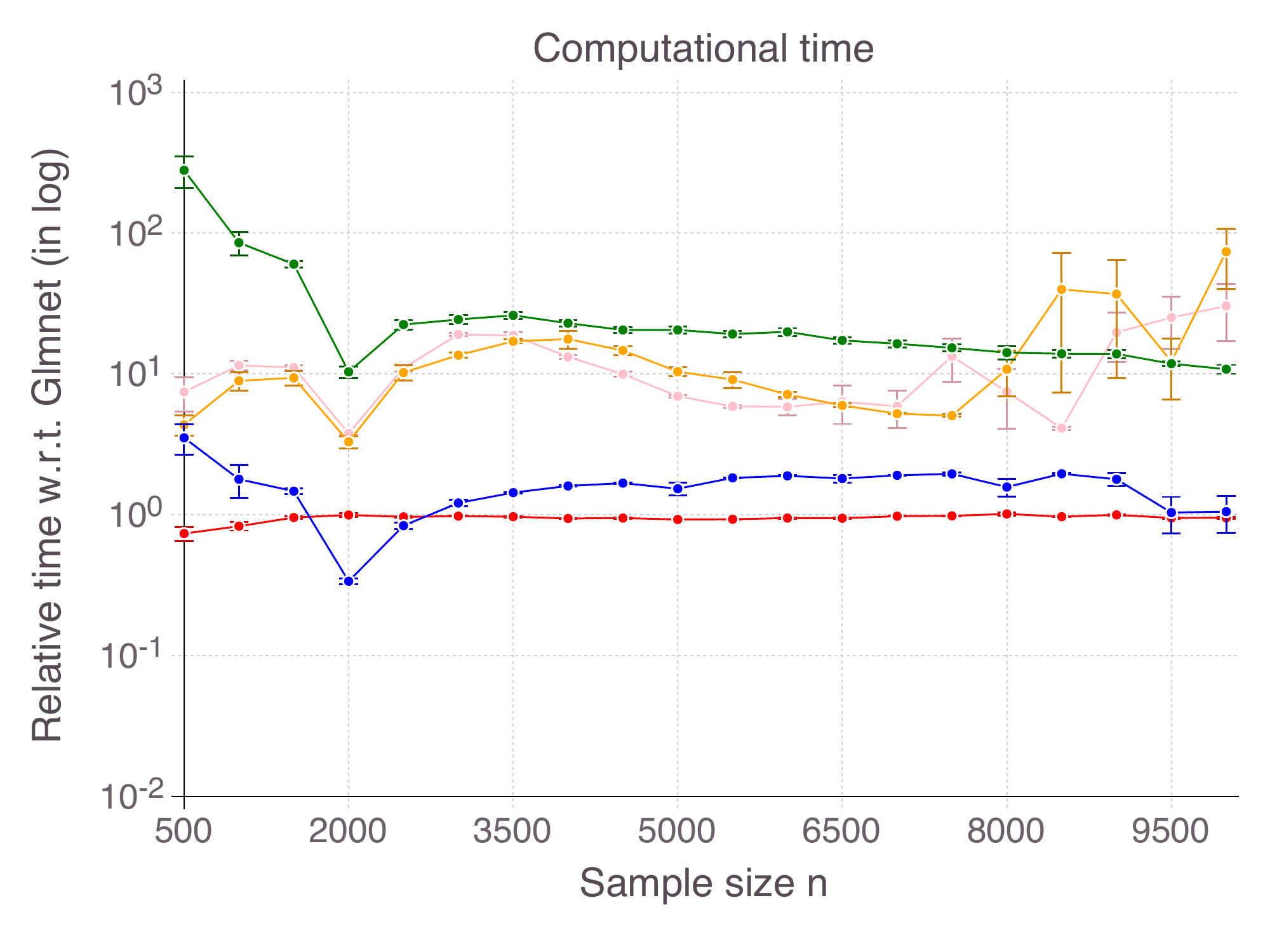}
	\caption{High noise, low correlation}
\end{subfigure} %
~
\begin{subfigure}[t]{.45\linewidth}
	\centering
	\includegraphics[width=\linewidth]{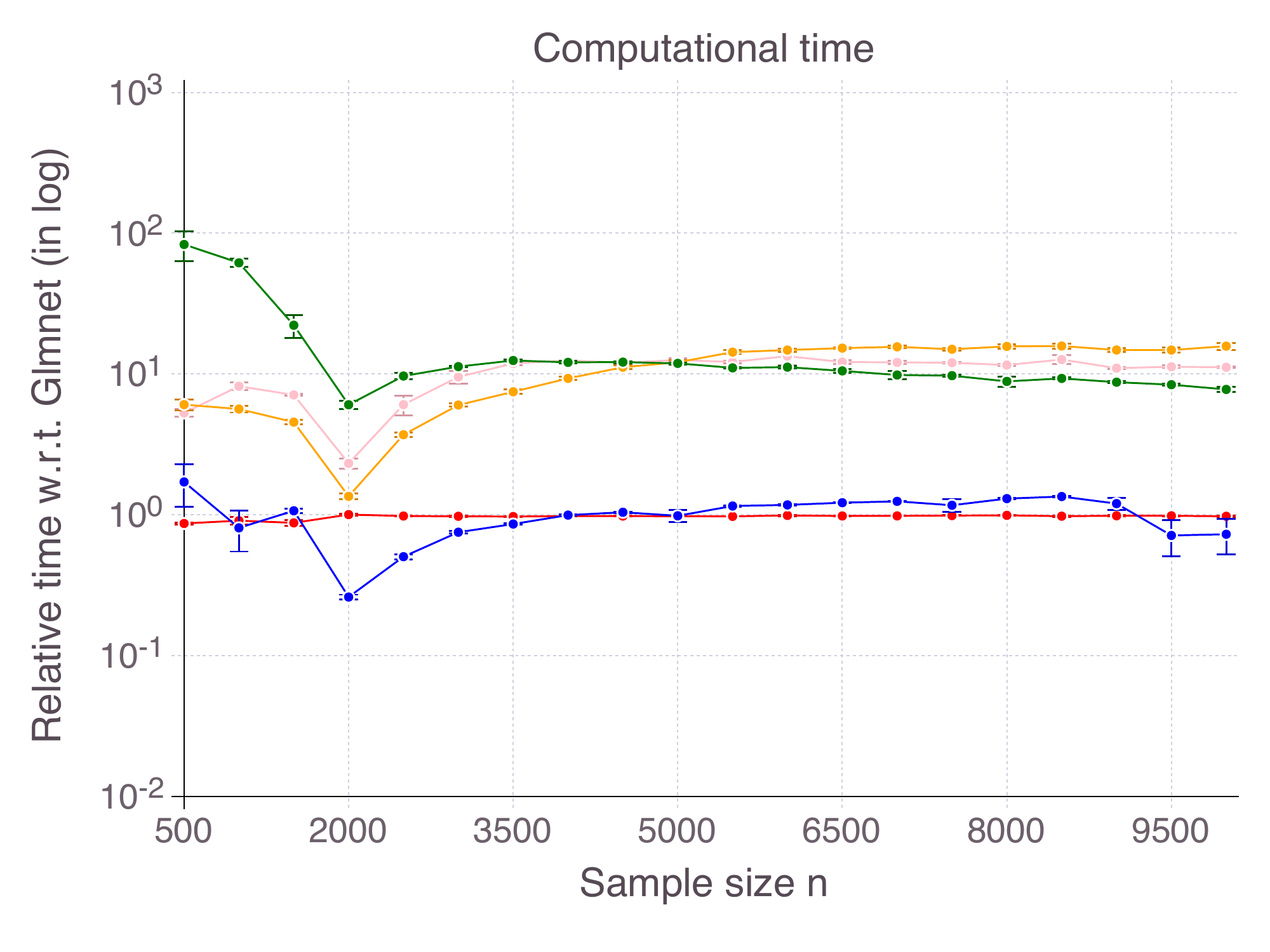}
	\caption{High noise, high correlation}
\end{subfigure}
\caption{Computational time relative to Lasso with glmnet as $n$ increases, for CIO (in green), SS (in blue with $T_{max}=200$), ENet (in red), MCP (in orange), SCAD (in pink) with OLS loss. We average results over $10$ data sets.}
\label{fig:RegFixTime}
\end{figure*}

\begin{figure*}
\centering
\begin{subfigure}[t]{.45\linewidth}
	\centering
	\includegraphics[width=\linewidth]{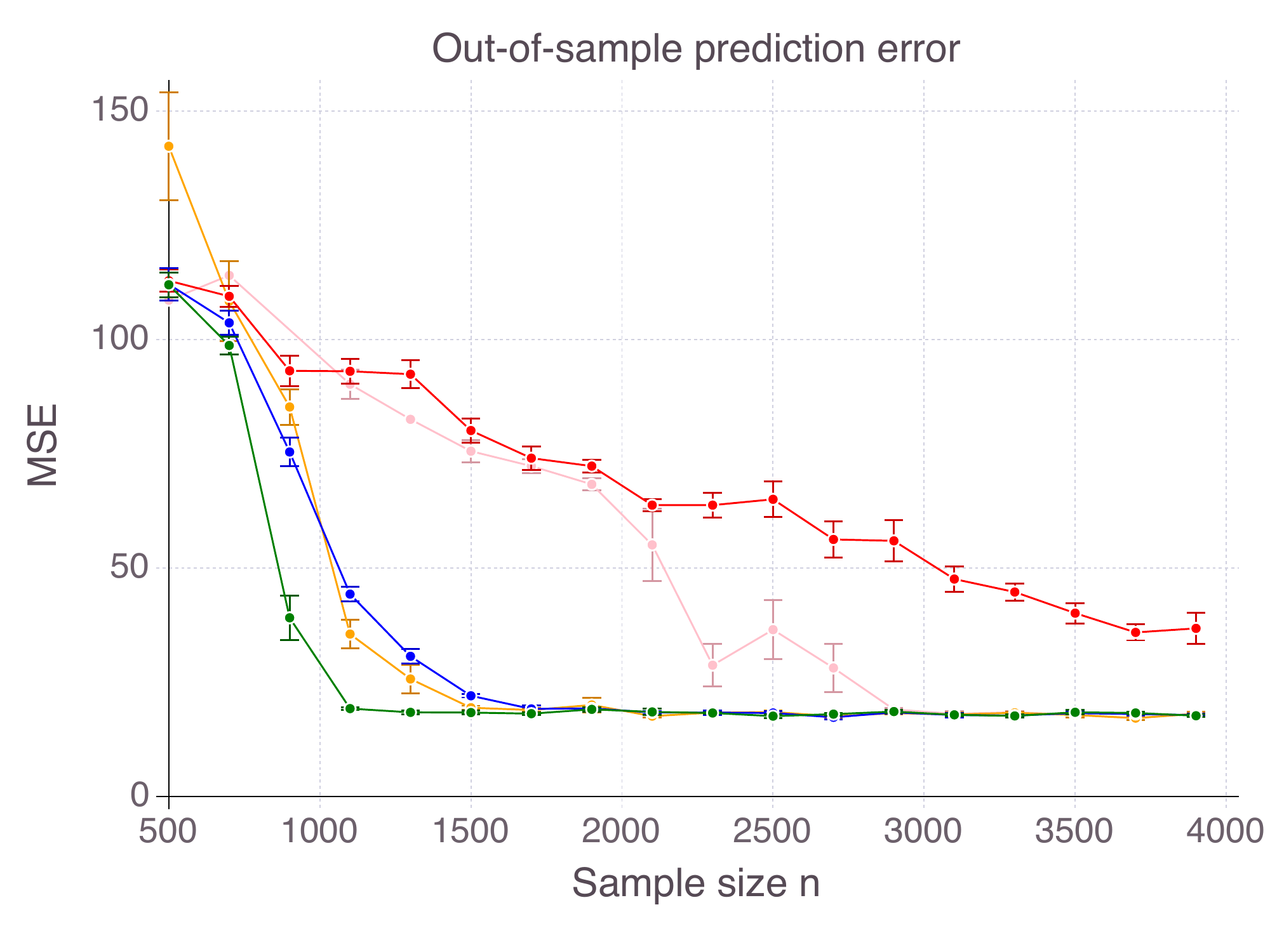}
	\caption{Low noise, low correlation}
\end{subfigure} %
~
\begin{subfigure}[t]{.45\linewidth}
	\centering
	\includegraphics[width=\linewidth]{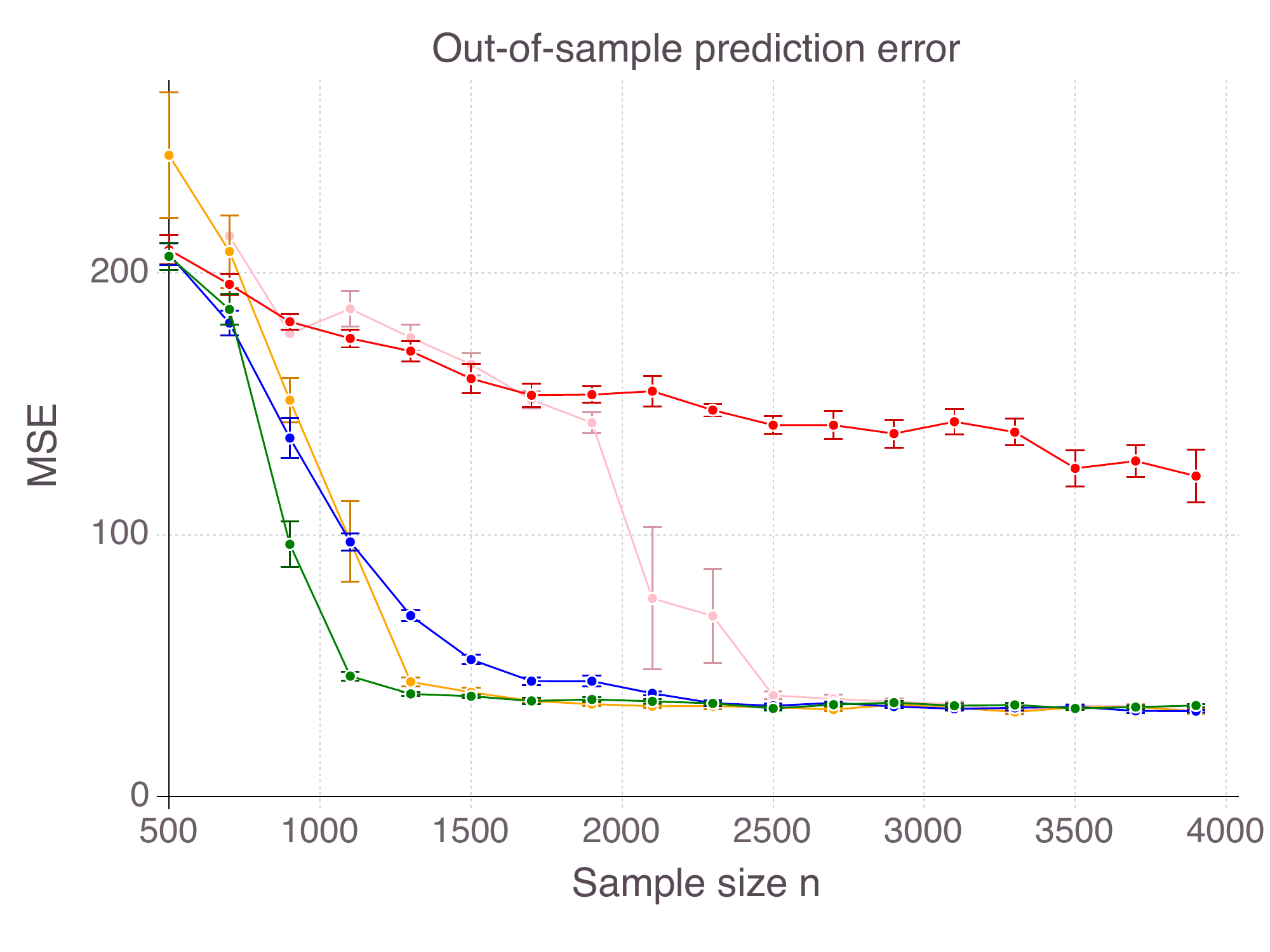}
	\caption{Low noise, high correlation}
\end{subfigure}

\begin{subfigure}[t]{.45\linewidth}
	\centering
	\includegraphics[width=\linewidth]{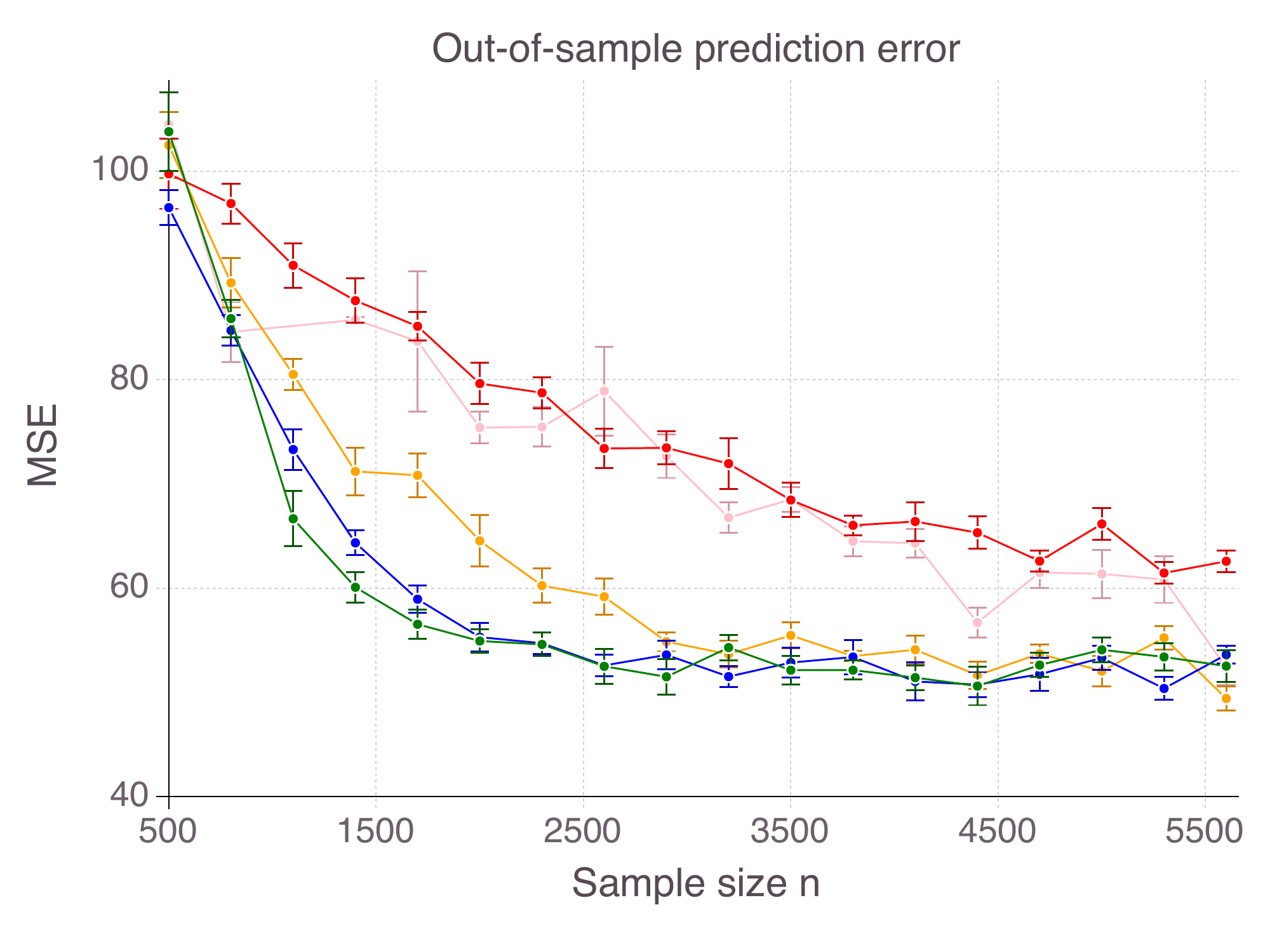}
	\caption{Medium noise, low correlation}
\end{subfigure} %
~
\begin{subfigure}[t]{.45\linewidth}
	\centering
	\includegraphics[width=\linewidth]{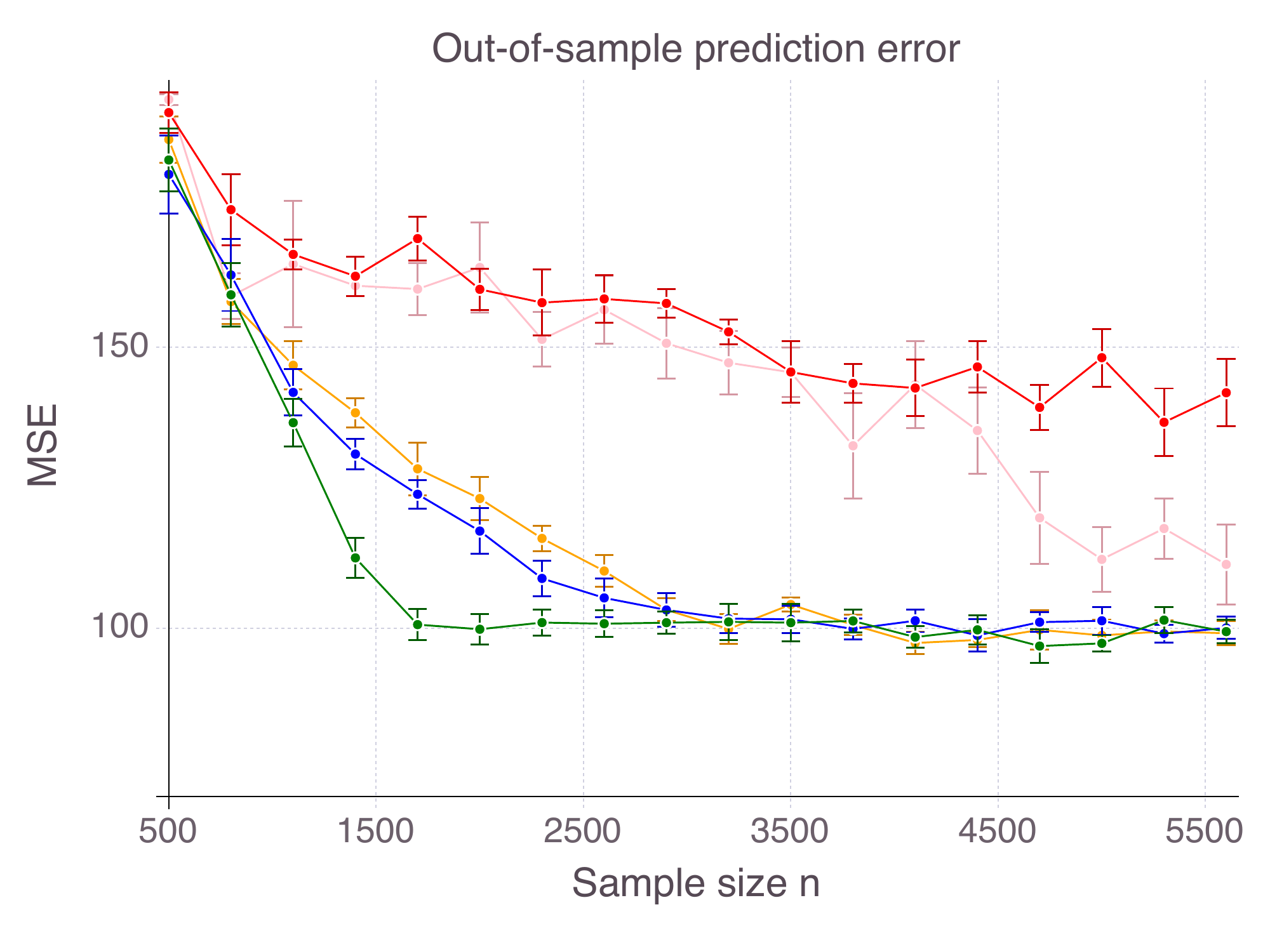}
	\caption{Medium noise, high correlation}
\end{subfigure}

\begin{subfigure}[t]{.45\linewidth}
	\centering
	\includegraphics[width=\linewidth]{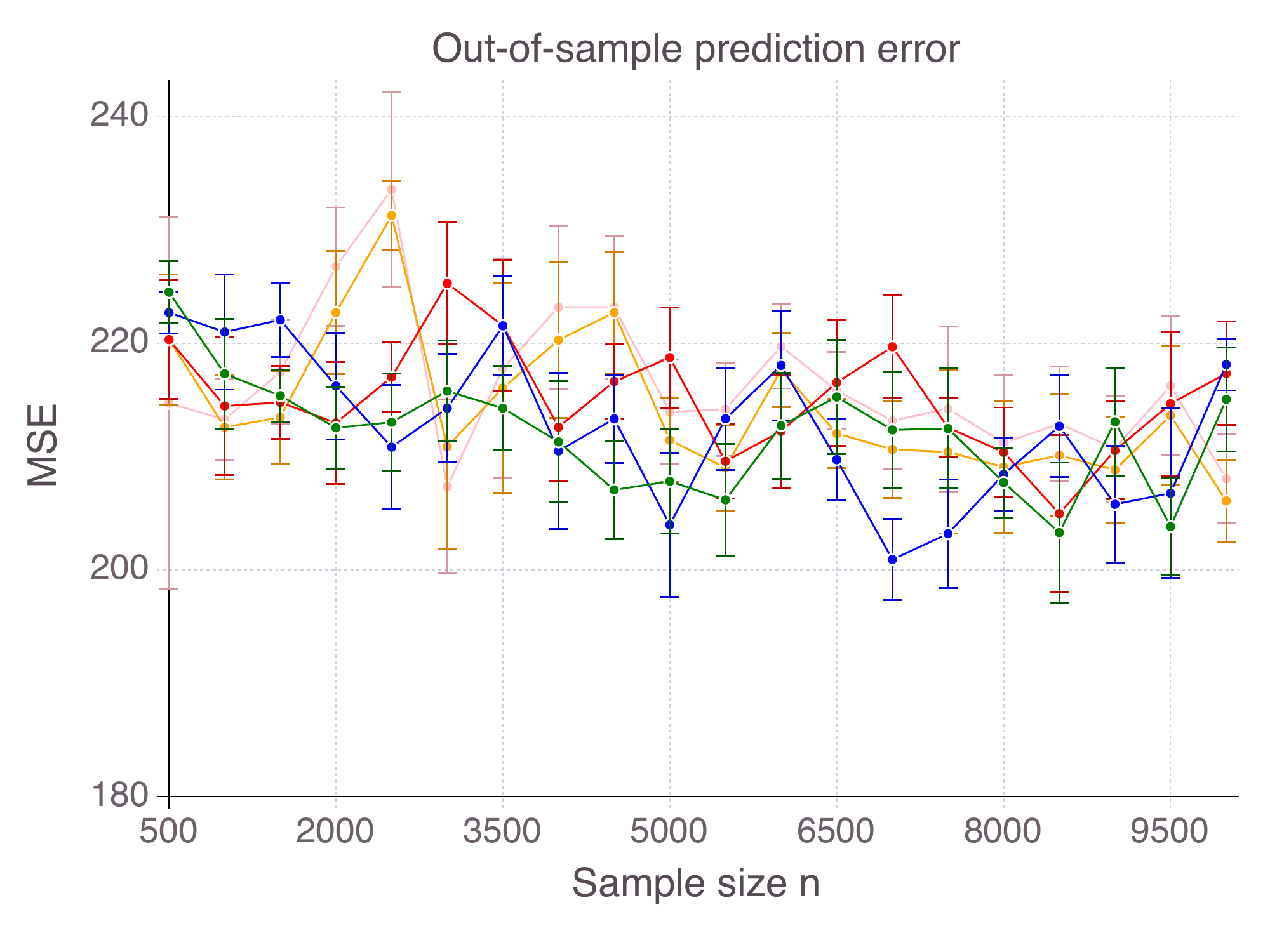}
	\caption{High noise, low correlation}
\end{subfigure} %
~
\begin{subfigure}[t]{.45\linewidth}
	\centering
	\includegraphics[width=\linewidth]{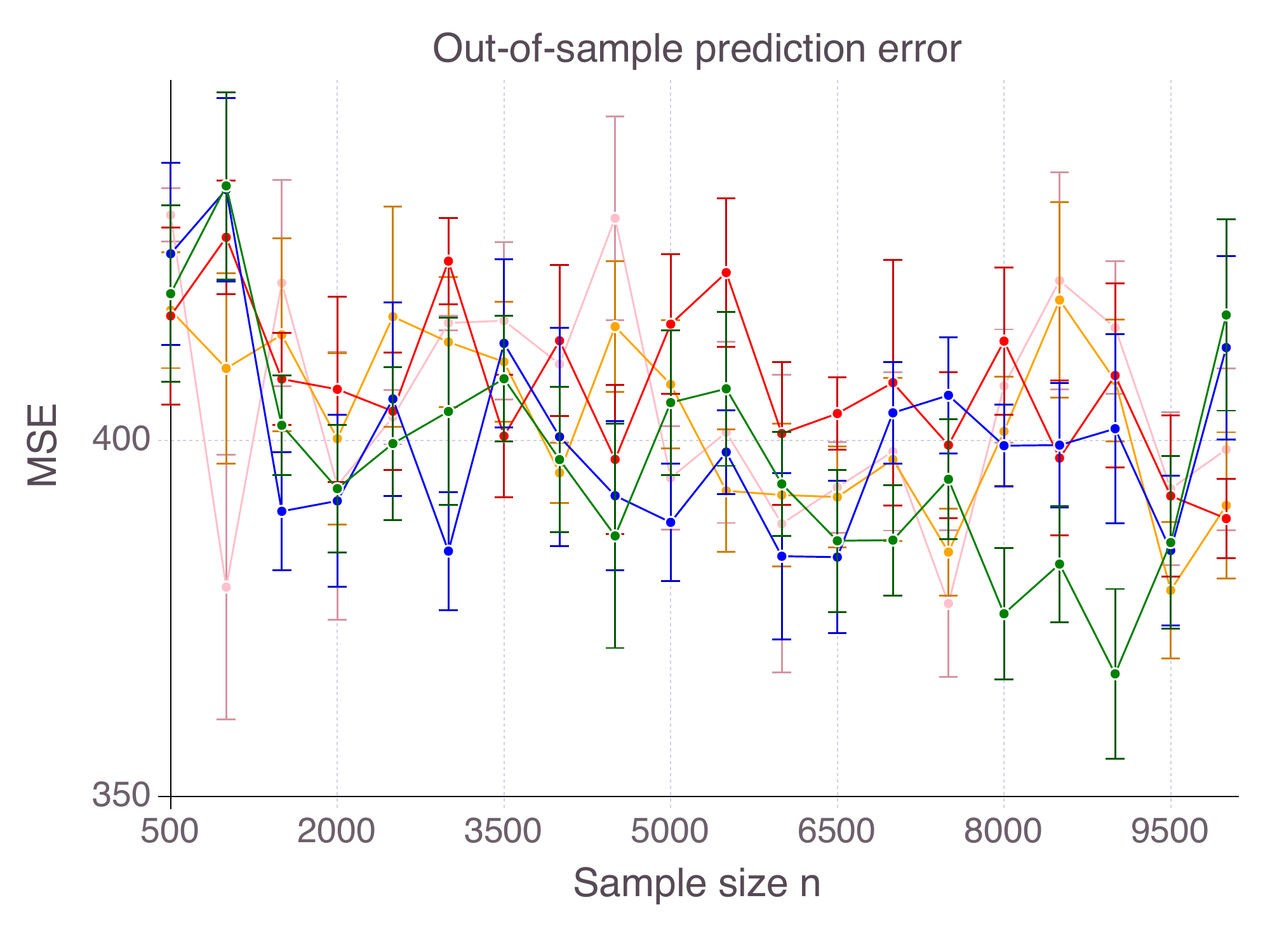}
	\caption{High noise, high correlation}
\end{subfigure}
\caption{Out-of-sample mean square error as $n$ increases, for the CIO (in green), SS (in blue with $T_{max}=200$), ENet (in red), MCP (in orange), SCAD (in pink) with OLS loss. We average results over $10$ data sets.}
\label{fig:RegFixMSE}
\end{figure*}

Finally, though a purely theoretic metric, accuracy has some intuitive and practical implications in terms of out-of-sample prediction. To support our claim, Figure \ref{fig:RegFixMSE} (p. \pageref{fig:RegFixMSE}) represents the out-of-sample $MSE$ for all five methods, as $n$ increases, for the { six} noise/correlation settings of interest. There is a clear connection between performance in terms of accuracy and in terms of predictive power, with CIO performing the best. Still, good predictive power does not necessarily imply that the features selected are mostly correct.{  SCAD, for instance, seems to provide a larger improvement over ENet in terms of predictive power than in accuracy. Similarly, SS dominates MCP in terms of out-of-sample MSE, while this is not the case in terms of accuracy.} 

\subsubsection{Feature selection with cross-validated support size} 
We now compare all methods when $k_{true}$ is no longer given and needs to be cross-validated from the data itself. 

For each value of $n$, each method fits a model on a training set for various levels of sparsity $k$, either explicitly or by adjusting the penalization parameter. { For each sparsity level $k$, the resulting classifier incorporates some true and false features. Figure \ref{fig:RegCVROC} (p. \pageref{fig:RegCVROC}) represents the number of true features against the number of false features for all five methods, for a range of sparsity levels $k$, all other hyper-parameters being tuned so as to minimize $MSE$ on a validation set. To obtain a fair comparison, we used the same range of sparsity levels for all methods. Some methods only indirectly control the sparsity $k$ through a regularization parameter $\lambda$ and do not guarantee to return \emph{exactly} $k$ features. In these cases, we calibrated  $\lambda$ as precisely as possible and used linear interpolation when we were unable to get the exact value of $k$ we were interested in. From Figure \ref{fig:RegCVROC}, we observe that in low correlation settings, CIO and MCP strictly dominate ENet, SCAD and SS. There is no clear winner between CIO and MCP. When noise is low, CIO tends to make less false discoveries, while the latter is generally more accurate, but the difference between all methods diminishes as noise increases. In high correlation settings, no method clearly dominates. CIO, SS and MCP are better for small support size $k$, while ENet and SCAD dominate for larger supports. In high noise and high correlation regimes though, Enet and SCAD seem to clearly dominate their competitors.} 
{ In practice however, one does not have access to "true" features and cannot decide on the value of $k$ based on such ROC curves.} As often, we select the value $k^\star$ which minimizes out-of-sample error on a validation set. To this end, Figure \ref{fig:RegCVcoupe} (p. \pageref{fig:RegCVcoupe}) visually represents validation $MSE$ as a function of $k$ for all five methods.The vertical black line corresponds to $k = k_{true}$. For each method, $k^\star$ is identified as the minimum of the out-of-sample $MSE$ curve. From Figure \ref{fig:RegCVcoupe}, we can expect the Lasso/ENet and SCAD formulations to select many irrelevant features, while CIO, SS and MCP are relatively close to the true sparsity pattern.
\begin{figure*}
\centering
\begin{subfigure}[t]{.45\linewidth}
	\centering
	\includegraphics[width=\linewidth]{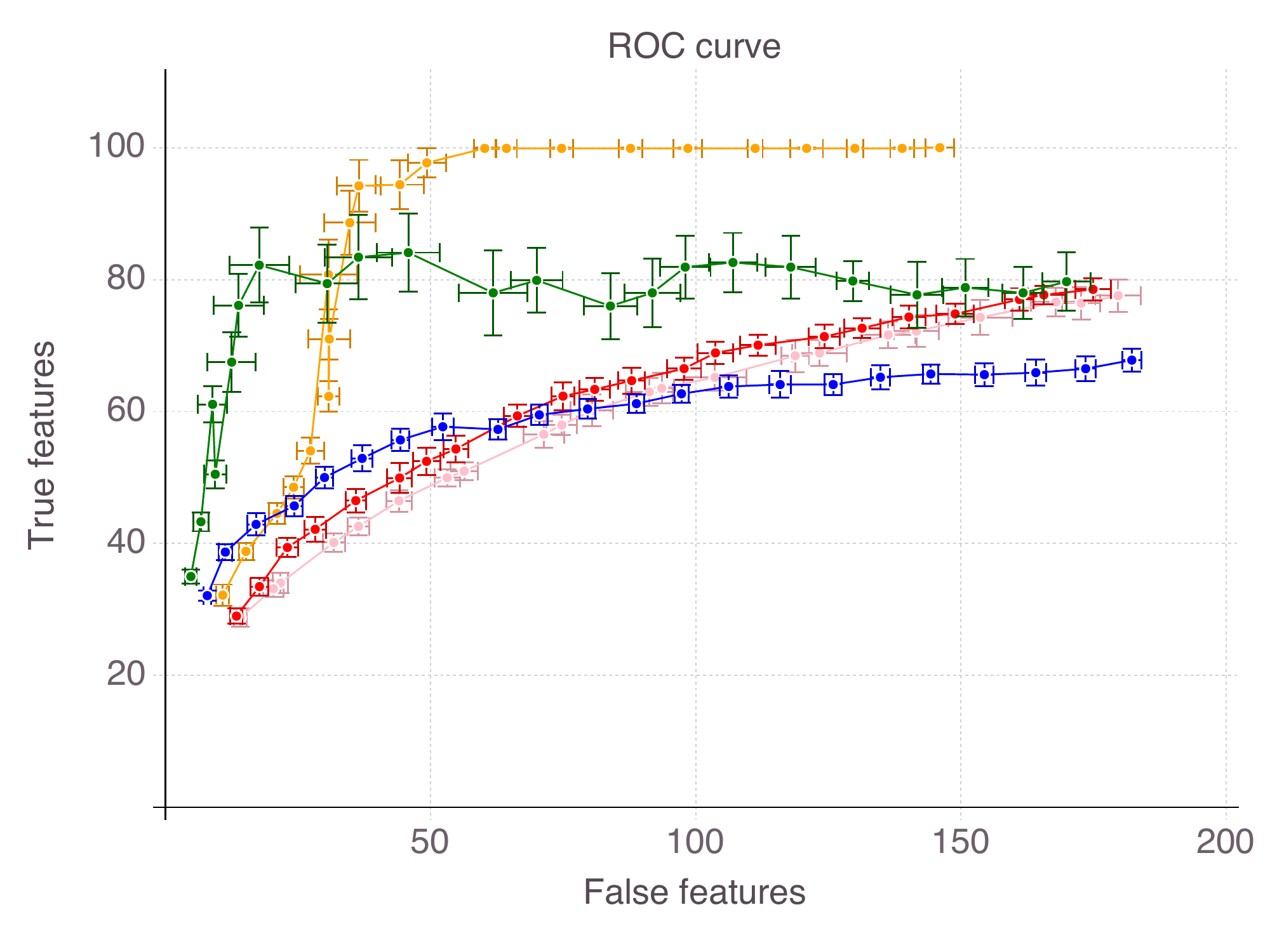}
	\caption{Low noise, low $\rho$, $n=900$}
\end{subfigure} %
~
\begin{subfigure}[t]{.45\linewidth}
	\centering
	\includegraphics[width=\linewidth]{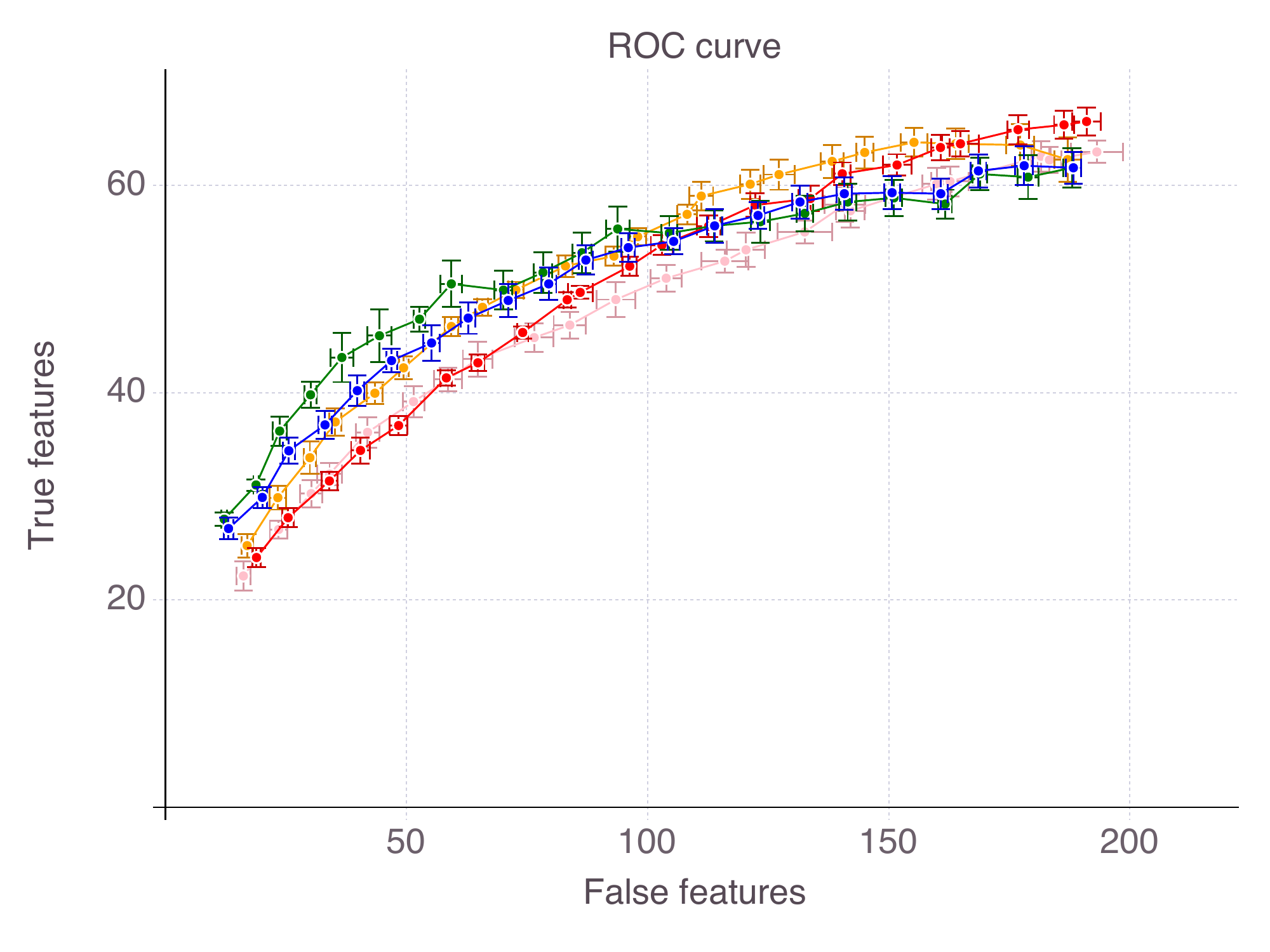}
	\caption{Low noise, high $\rho$, $n=900$}
\end{subfigure}

\begin{subfigure}[t]{.45\linewidth}
	\centering
	\includegraphics[width=\linewidth]{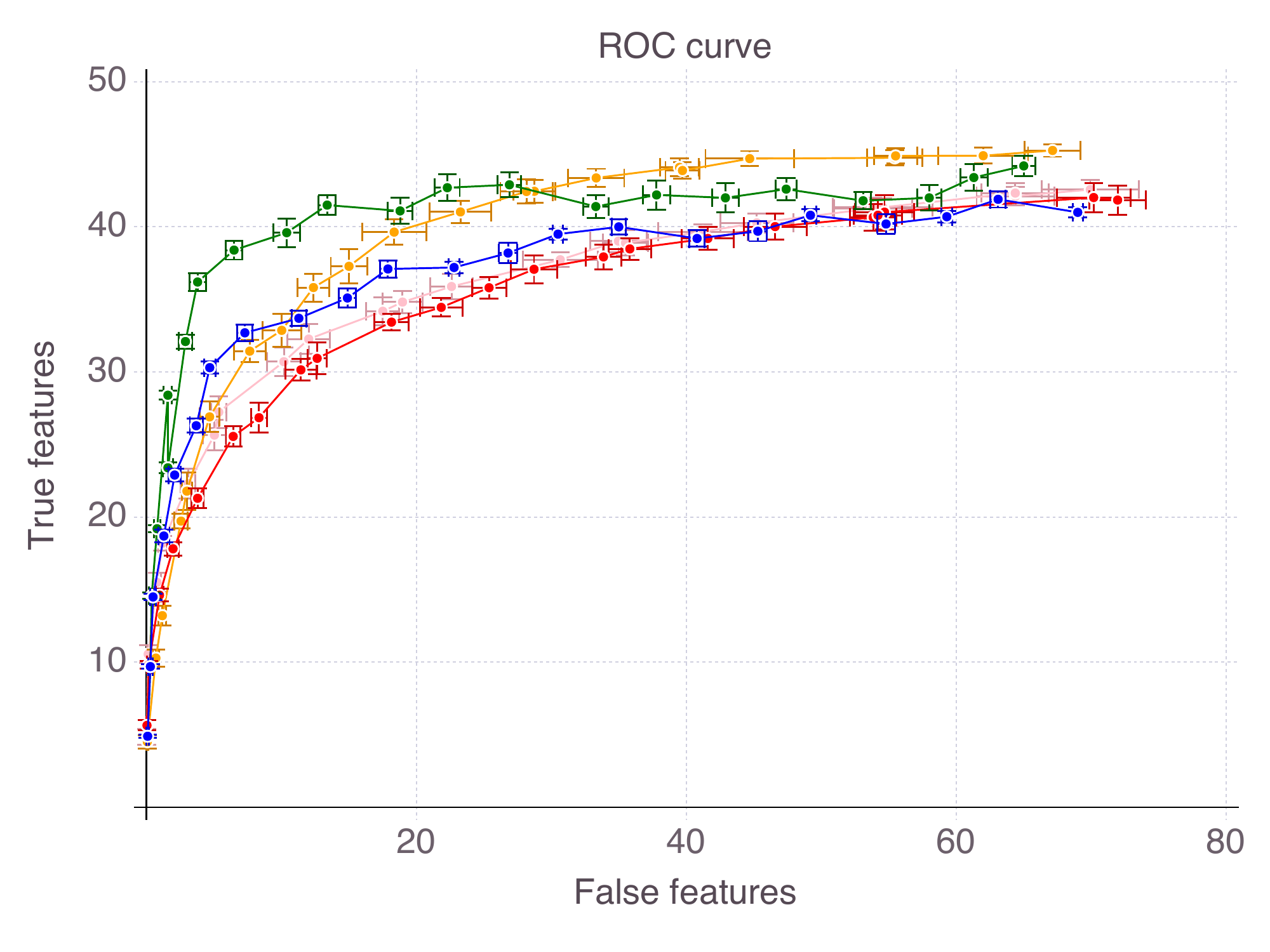}
	\caption{Medium noise, low $\rho$, $n=1,100$}
\end{subfigure} %
~
\begin{subfigure}[t]{.45\linewidth}
	\centering
	\includegraphics[width=\linewidth]{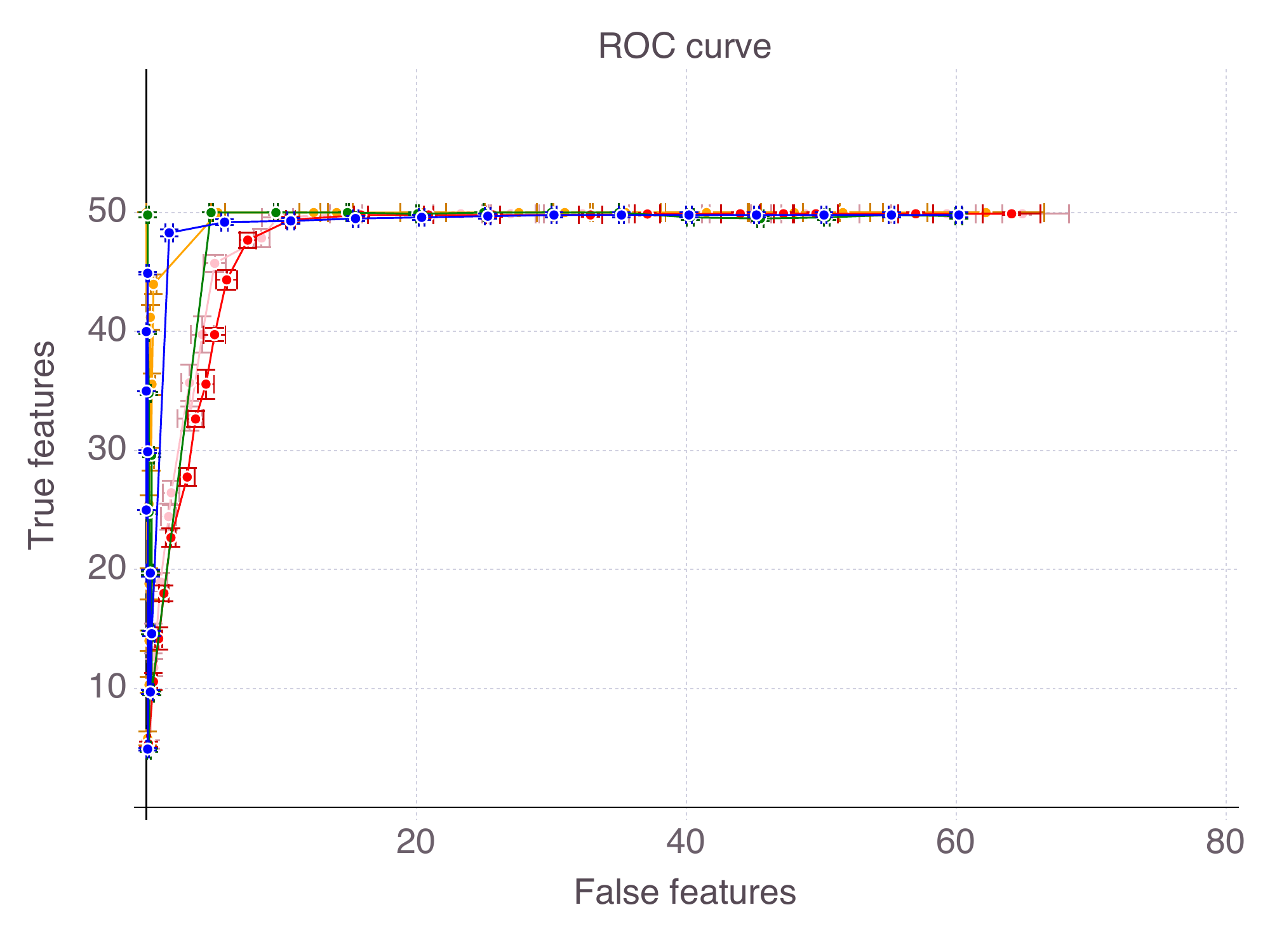}
	\caption{Medium noise, high $\rho$, $n=1,100$}
\end{subfigure}

\begin{subfigure}[t]{.45\linewidth}
	\centering
	\includegraphics[width=\linewidth]{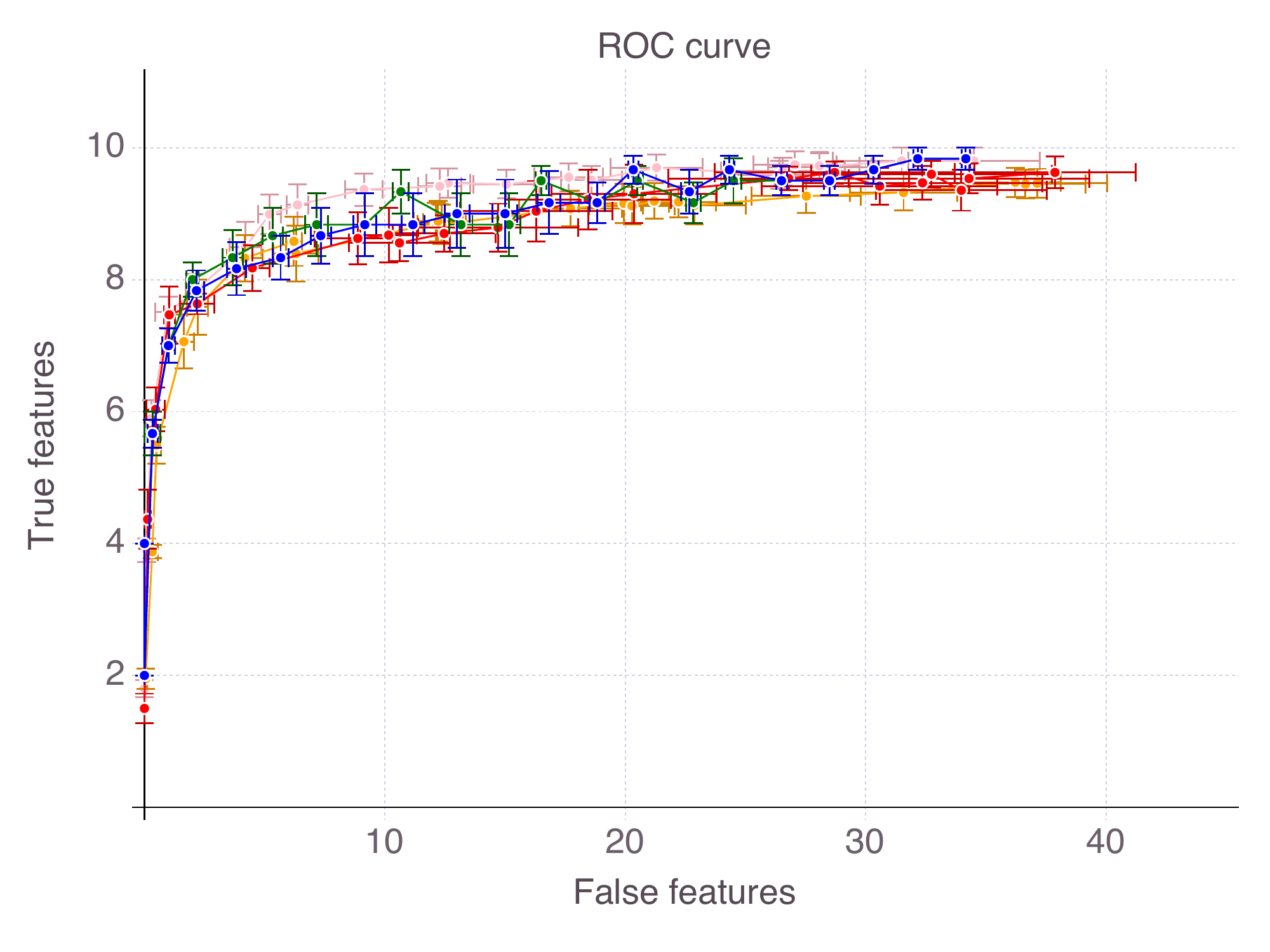}
	\caption{High noise, low $\rho$, $n=3,500$}
\end{subfigure} %
~
\begin{subfigure}[t]{.45\linewidth}
	\centering
	\includegraphics[width=\linewidth]{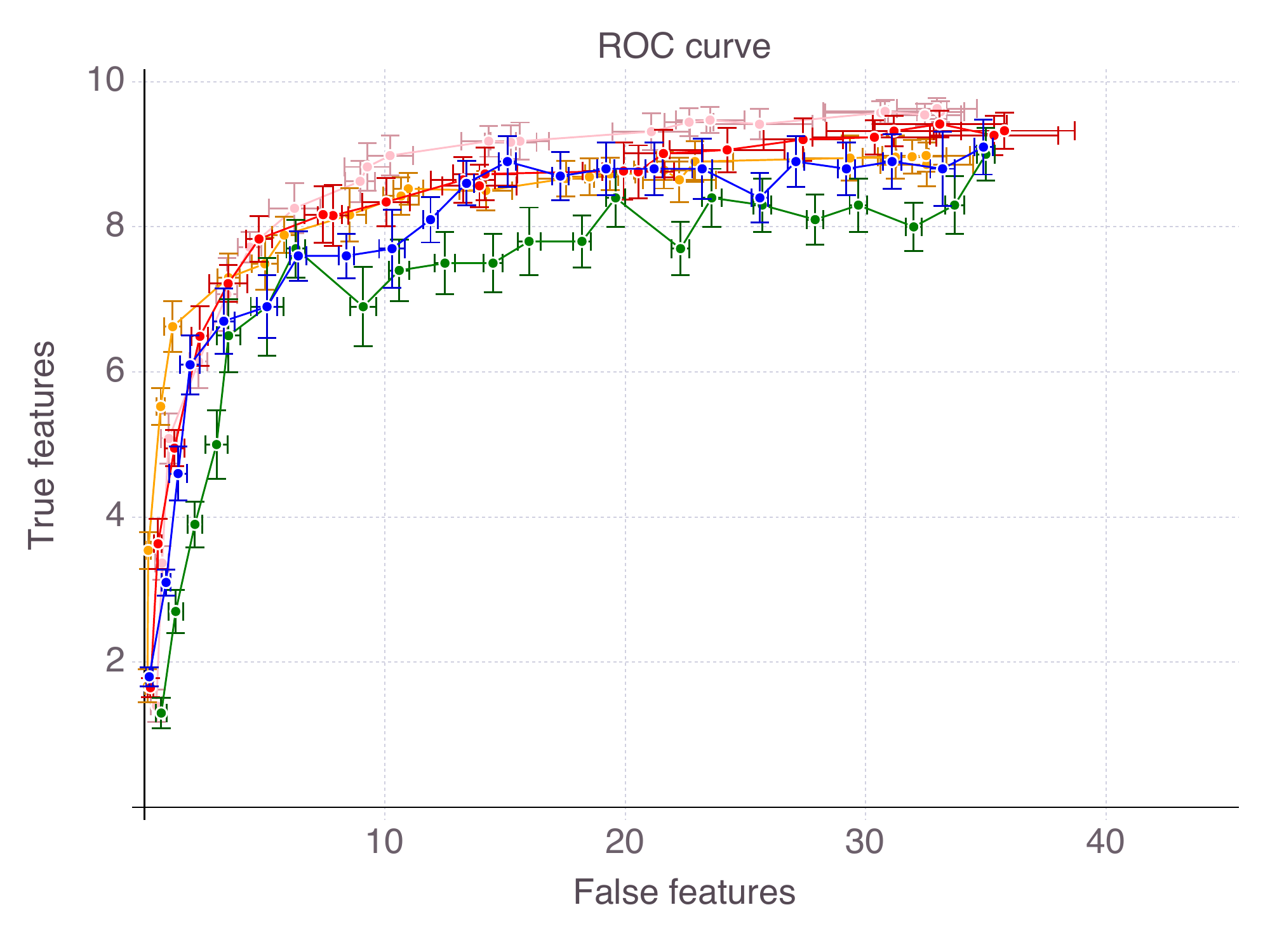}
	\caption{High noise, high $\rho$, $n=3,500$}
\end{subfigure}

\caption{Number of true features $TF$ vs. number of false features $FF$ for the CIO (in green), SS (in blue with $T_{max}=200$), ENet (in red), MCP (in orange), SCAD (in pink) with OLS loss. We average results over $10$ data sets with a fixed $n$.}
\label{fig:RegCVROC}
\end{figure*}
\begin{figure*}
\centering
\begin{subfigure}[t]{.45\linewidth}
	\centering
	\includegraphics[width=\linewidth]{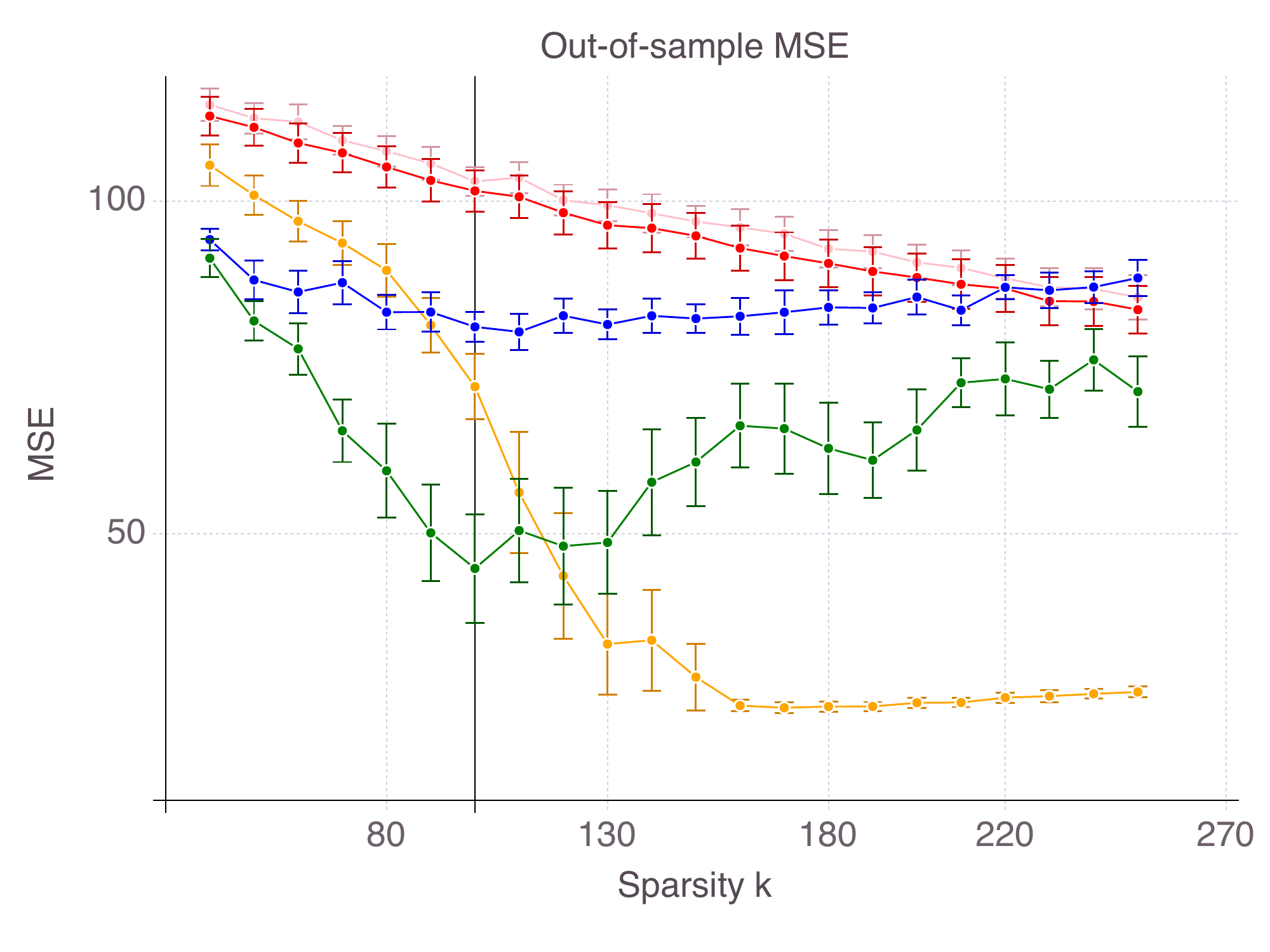}
	\caption{Low noise, low $\rho$, $n=900$}
\end{subfigure} %
~
\begin{subfigure}[t]{.45\linewidth}
	\centering
	\includegraphics[width=\linewidth]{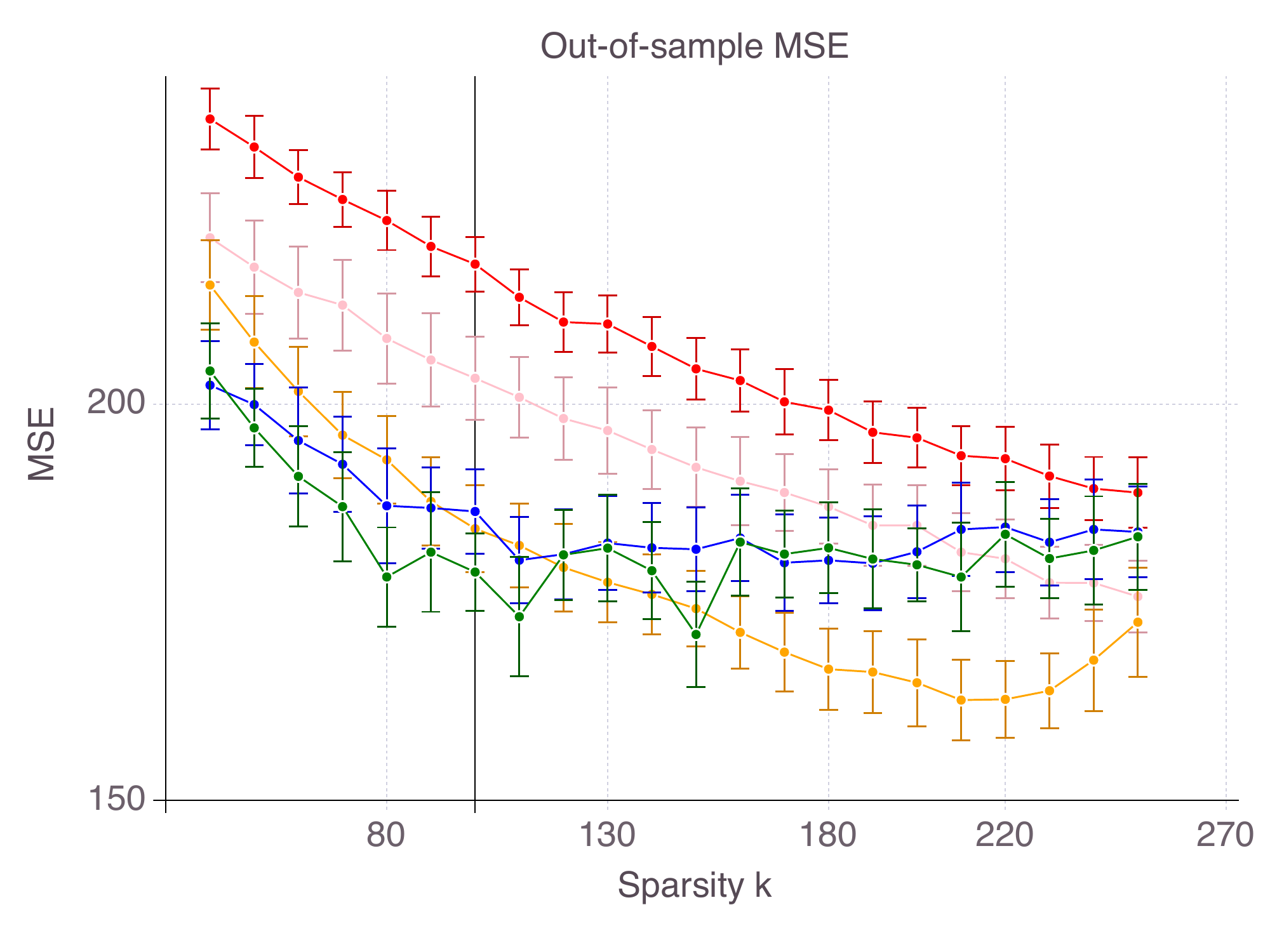}
	\caption{Low noise, high $\rho$, $n=900$}
\end{subfigure}

\begin{subfigure}[t]{.45\linewidth}
	\centering
	\includegraphics[width=\linewidth]{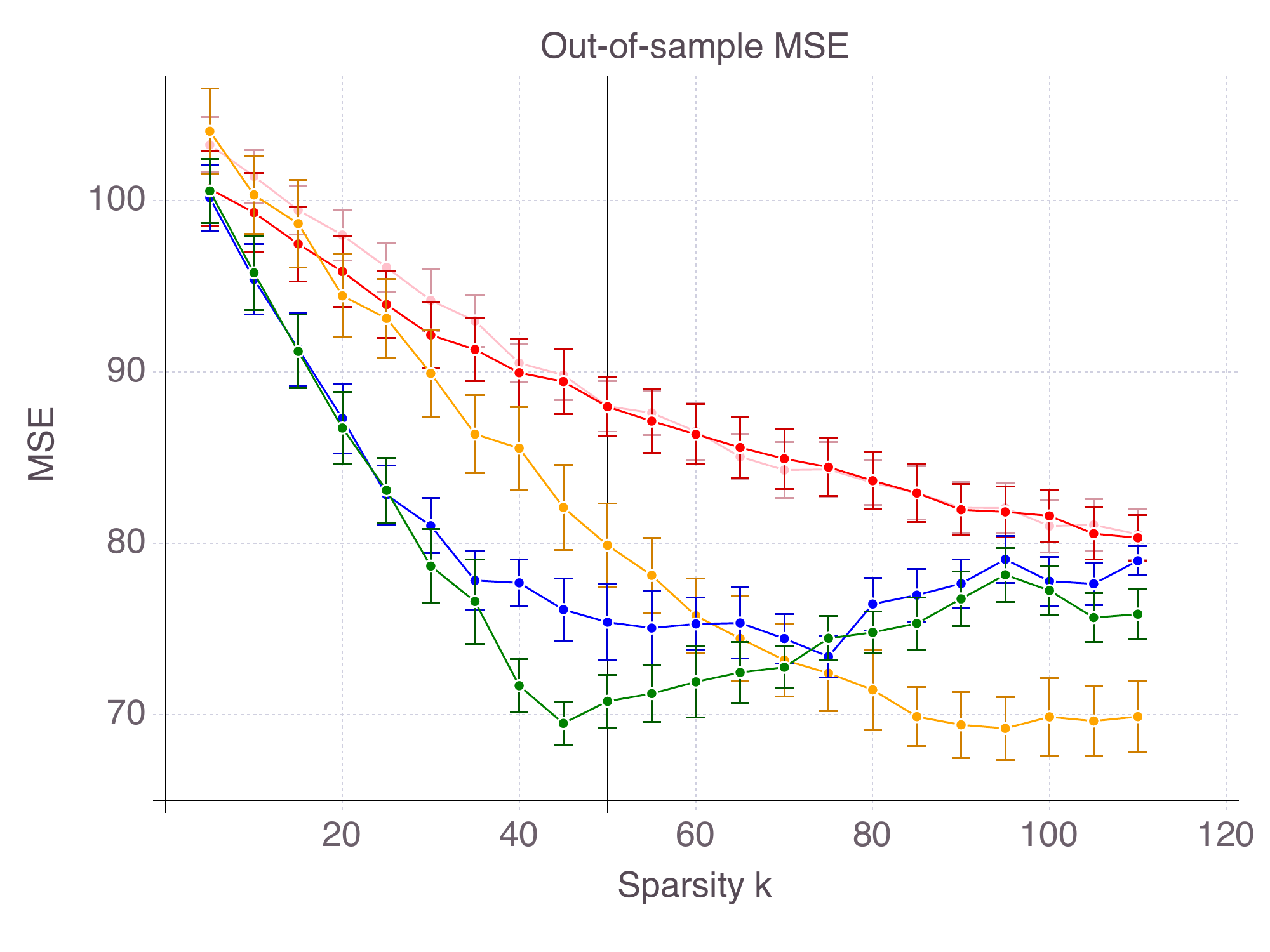}
	\caption{Medium noise, low $\rho$, $n=1,100$}
\end{subfigure} %
~
\begin{subfigure}[t]{.45\linewidth}
	\centering
	\includegraphics[width=\linewidth]{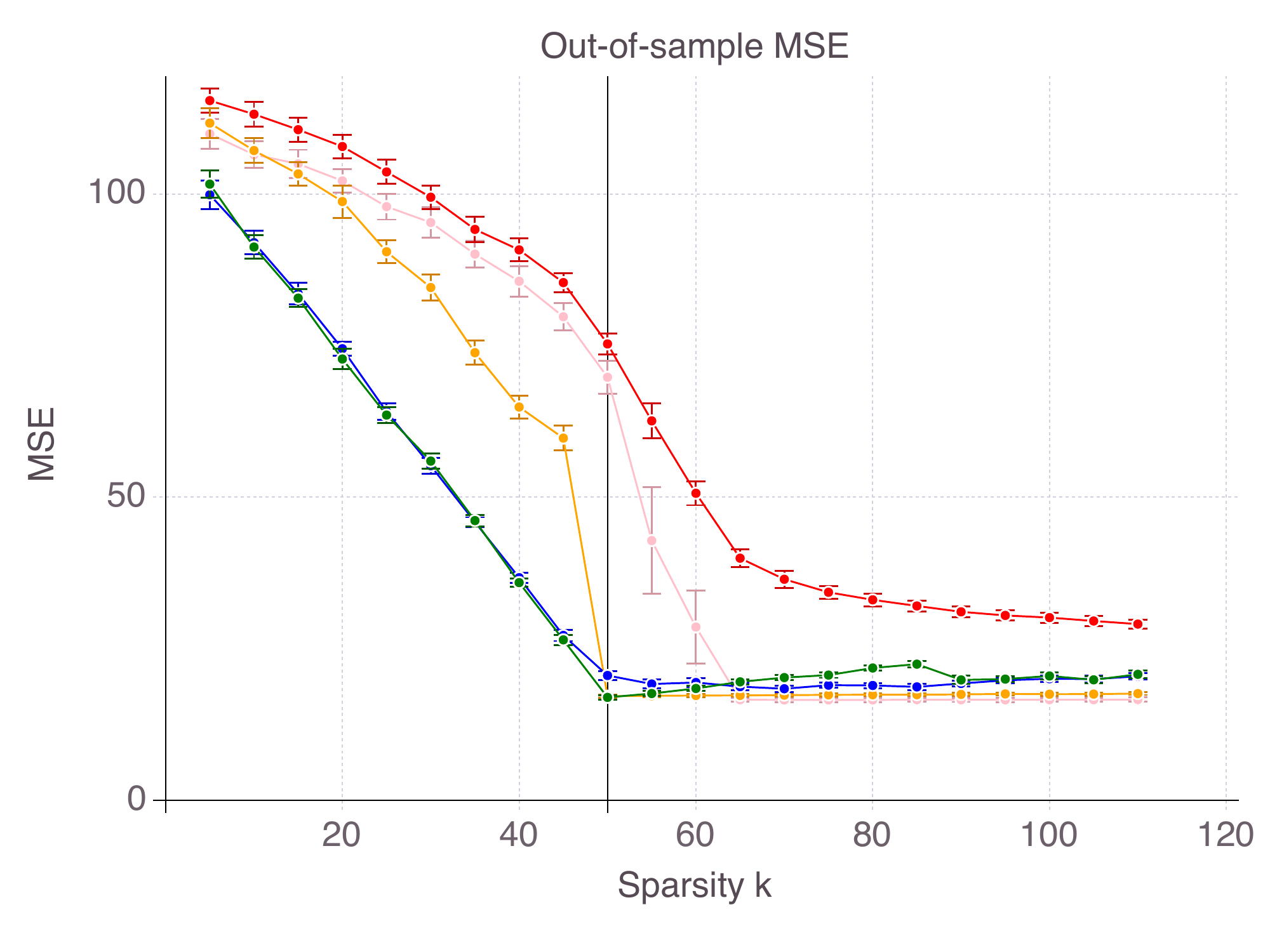}
	\caption{Medium noise, high $\rho$, $n=1,100$}
\end{subfigure}
\begin{subfigure}[t]{.45\linewidth}
	\centering
	\includegraphics[width=\linewidth]{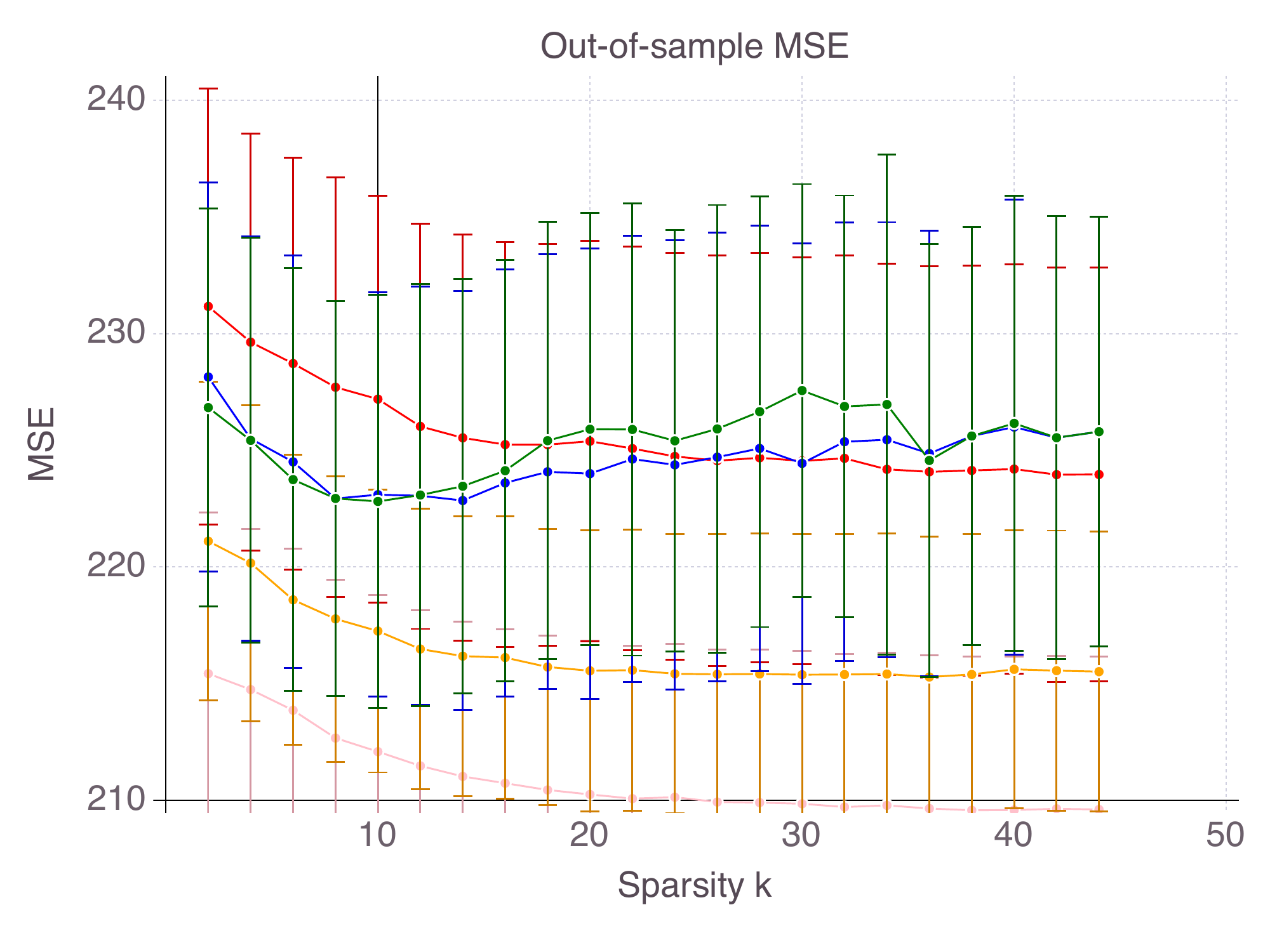}
	\caption{High noise, low $\rho$, $n=3,500$}
\end{subfigure} %
~
\begin{subfigure}[t]{.45\linewidth}
	\centering
	\includegraphics[width=\linewidth]{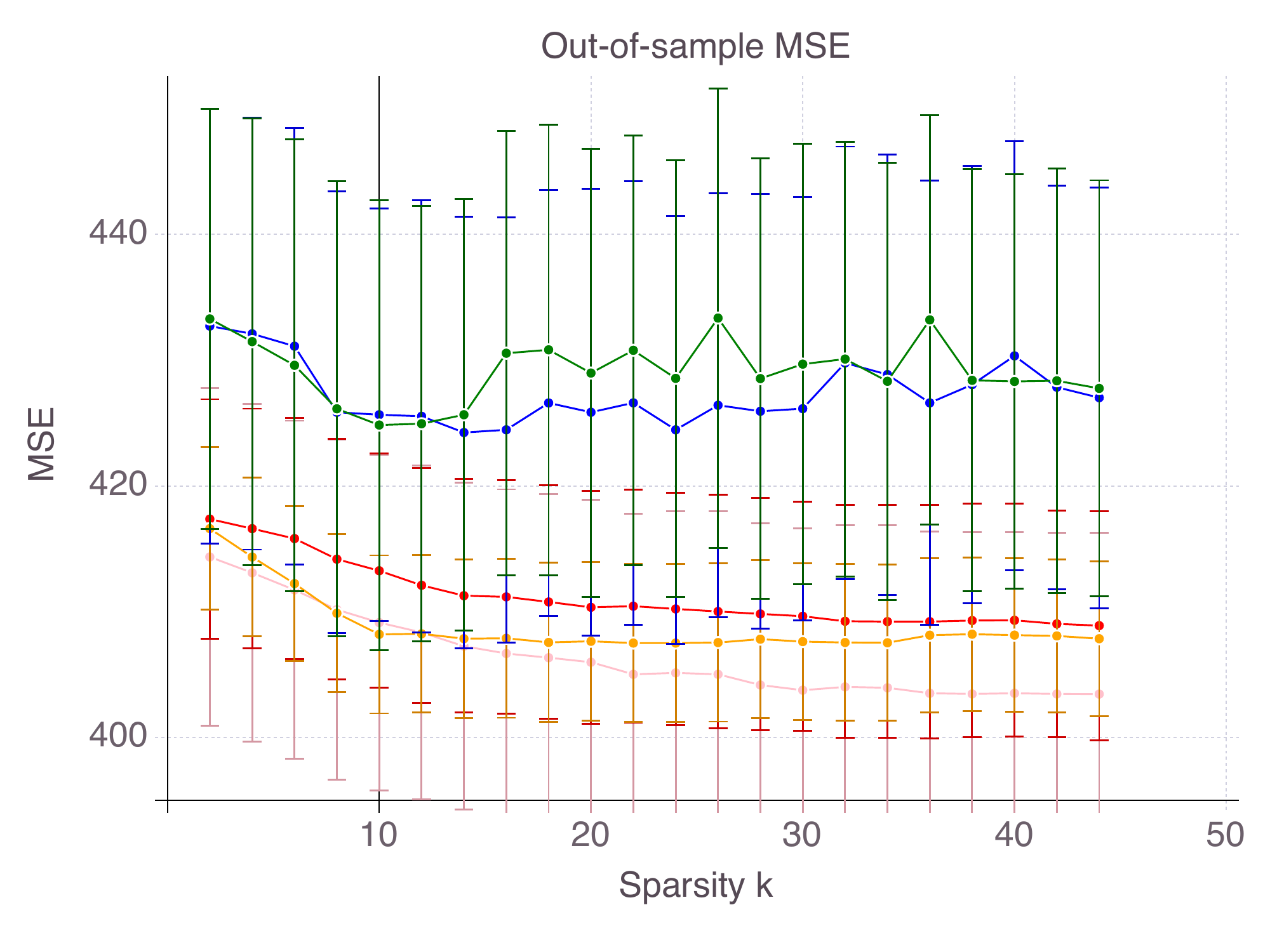}
	\caption{High noise, high $\rho$, $n=3,500$}
\end{subfigure}

\caption{Out-of-sample mean square error as $k$ increases, for the CIO (in green), SS (in blue with $T_{max}=200$), ENet (in red), MCP (in orange), SCAD (in pink) with OLS loss. We average results over $10$ data sets with a fixed $n$.}
\label{fig:RegCVcoupe}
\end{figure*}

As a result, for every $n$, each method selects $k^\star$ features, some of which are in the true support, others being irrelevant, as measured by accuracy and false detection rate respectively. Figures \ref{fig:RegCV.A} (p.  \pageref{fig:RegCV.A}) and \ref{fig:RegCV.FDR} (p. \pageref{fig:RegCV.FDR}) report the results of the cross-validation procedure for { increasing} $n$. In terms of accuracy (Figure \ref{fig:RegCV.A}), all five methods are relatively equivalent and demonstrate a clear convergence: $A\rightarrow 1$ as $n \rightarrow \infty$. { The first to achieve perfect accuracy is ENet, followed by SCAD, MCP, CIO and then SS. On false detection rate however (Figure \ref{fig:RegCV.FDR}), the methods rank in the opposite order. Among all five, CIO and SS achieve the lowest $FDR$ with $FDR$ as low as $0\%$ in low noise settings and around $30\%$ when noise is high. On the contrary, ENet persistently returns around $80\%$ of incorrect features in all regimes of noise and correlation. Concerning MCP and SCAD, in low noise regimes, false detection rate quickly drops as sample size increases. Yet, for large values of $n$, we observe a strictly positive $FDR$ on average (around $15\%$ in our experiments) and high variance, suggesting that feature selection with these regularizations is pretty unstable. As noise increases, $FDR$ for those methods remains significant (around $50\%$), with a fine advantage of MCP over SCAD. } In our opinion, this is due to the fact that MCP and SCAD, just like Lasso/ENet, do not enforce sparsity explicitly, like CIO or SS do, but rely on regularization to \emph{induce} it.
\begin{figure*}[p]
\centering
\begin{subfigure}[h]{.45\linewidth}
	\centering
	\includegraphics[width=\linewidth]{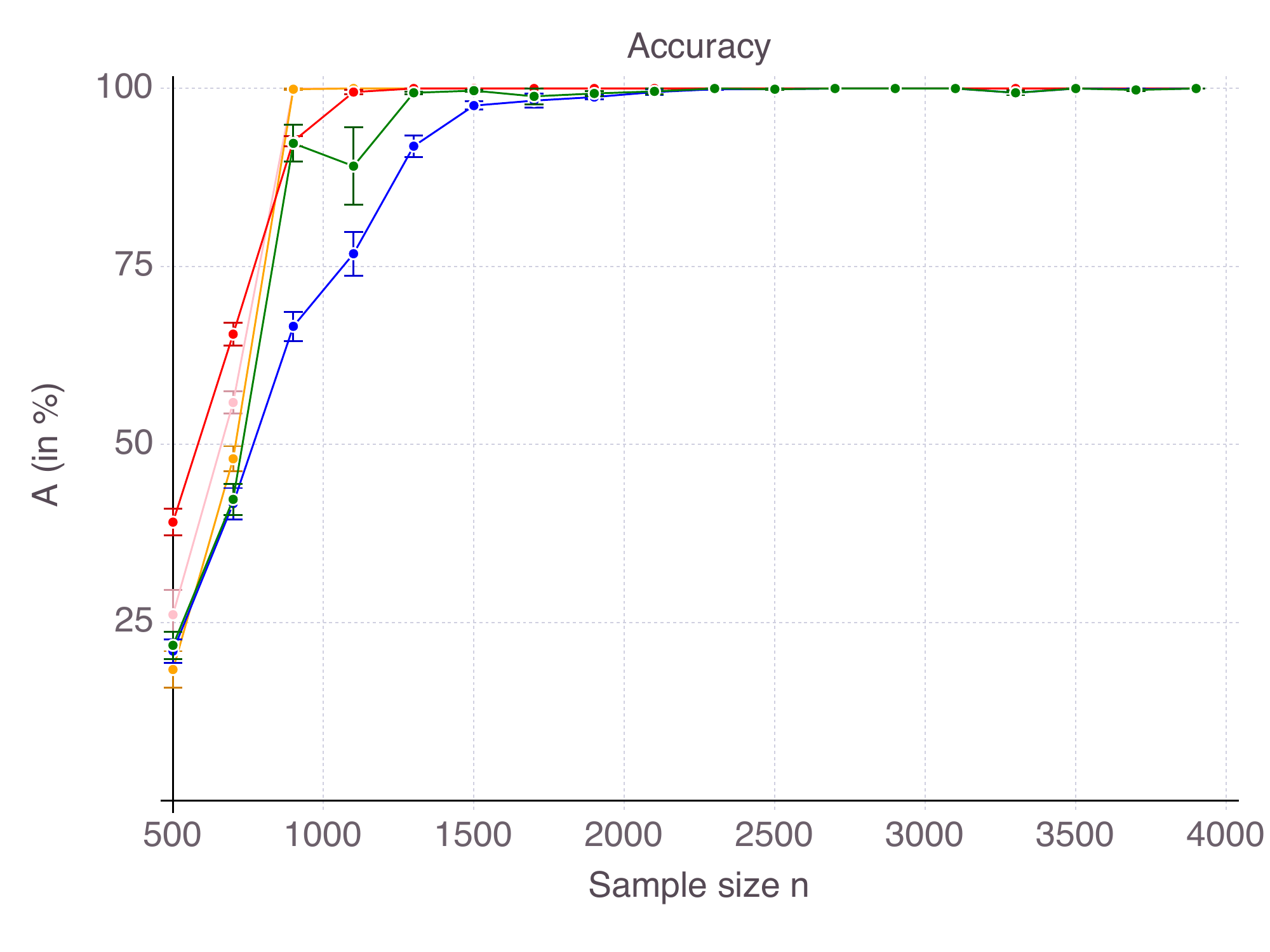}
	\caption{Low noise, low correlation}
\end{subfigure} %
~
\begin{subfigure}[h]{.45\linewidth}
	\centering
	\includegraphics[width=\linewidth]{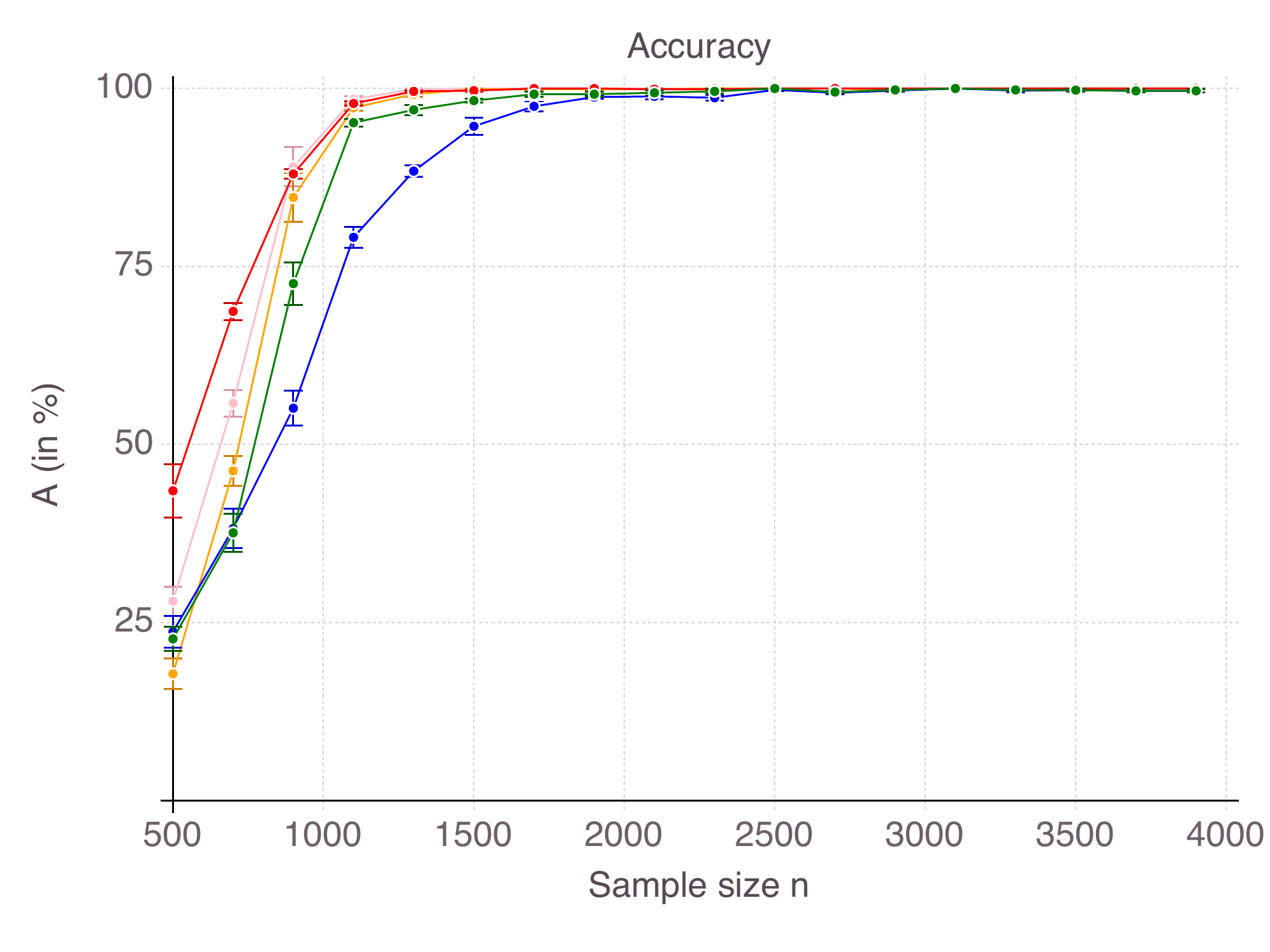}
	\caption{Low noise, high correlation}
\end{subfigure}

\begin{subfigure}[h]{.45\linewidth}
	\centering
	\includegraphics[width=\linewidth]{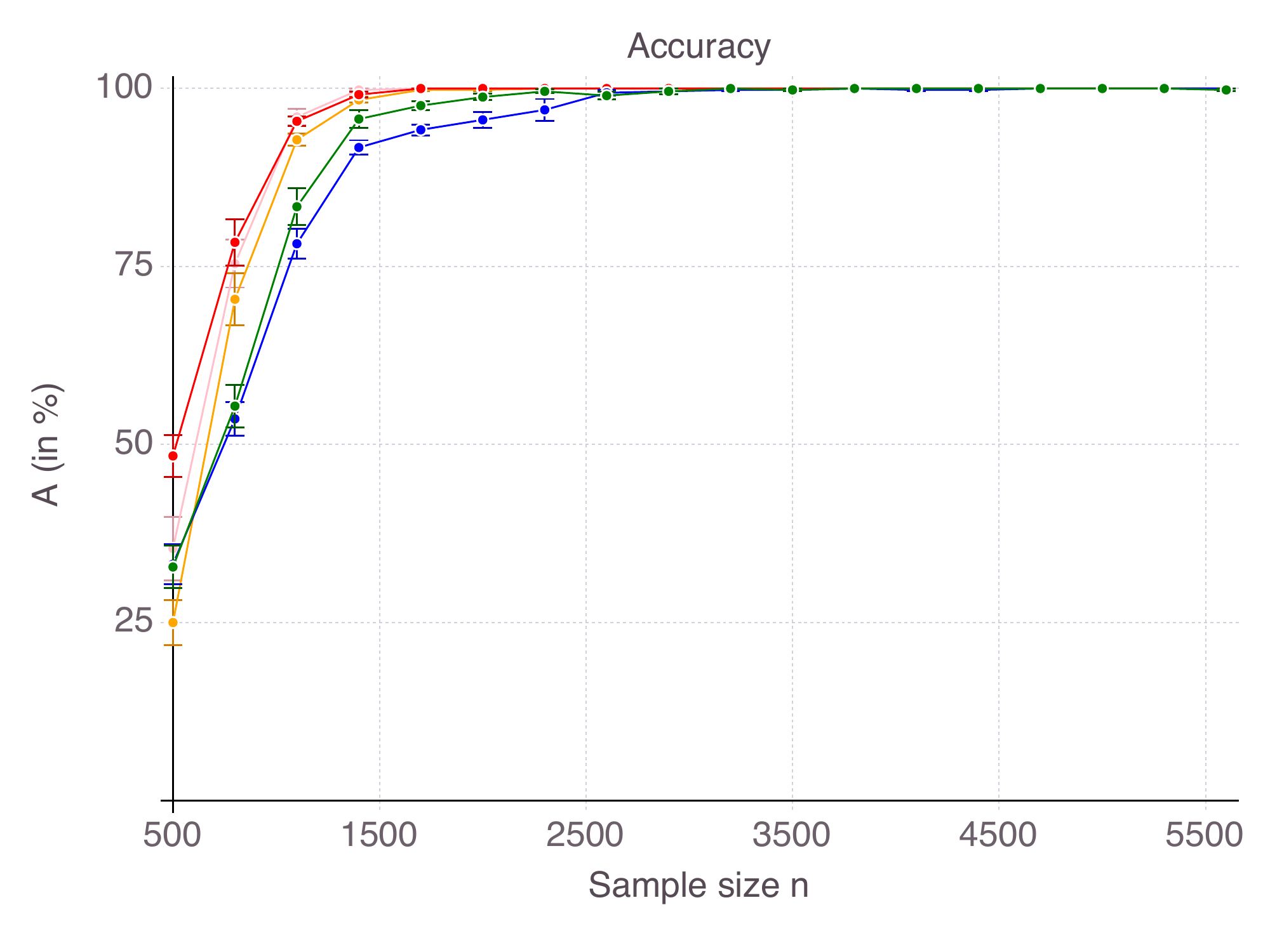}
	\caption{Medium noise, low correlation}
\end{subfigure} %
~
\begin{subfigure}[h]{.45\linewidth}
	\centering
	\includegraphics[width=\linewidth]{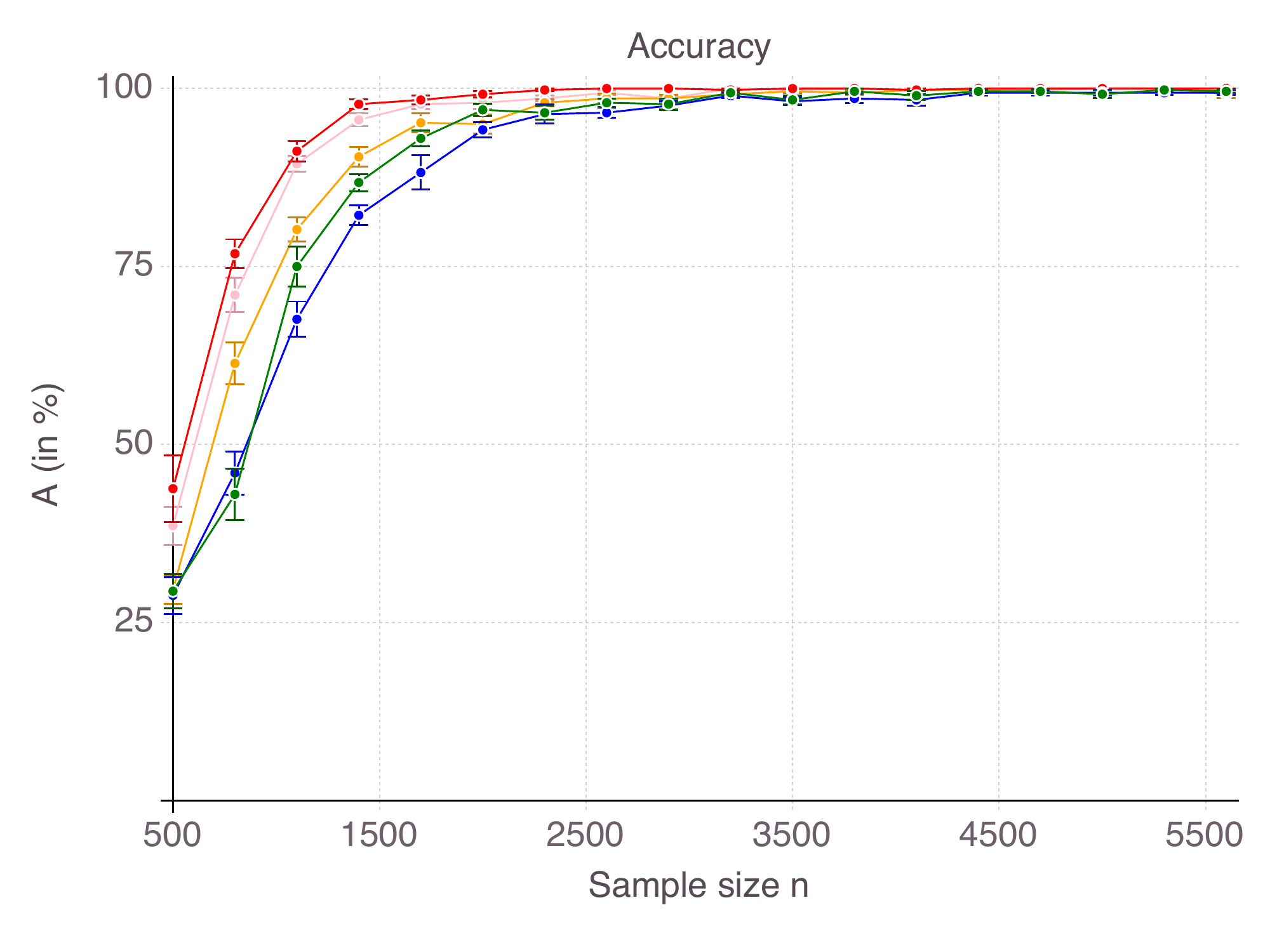}
	\caption{Medium noise, high correlation}
\end{subfigure}

\begin{subfigure}[h]{.45\linewidth}
	\centering
	\includegraphics[width=\linewidth]{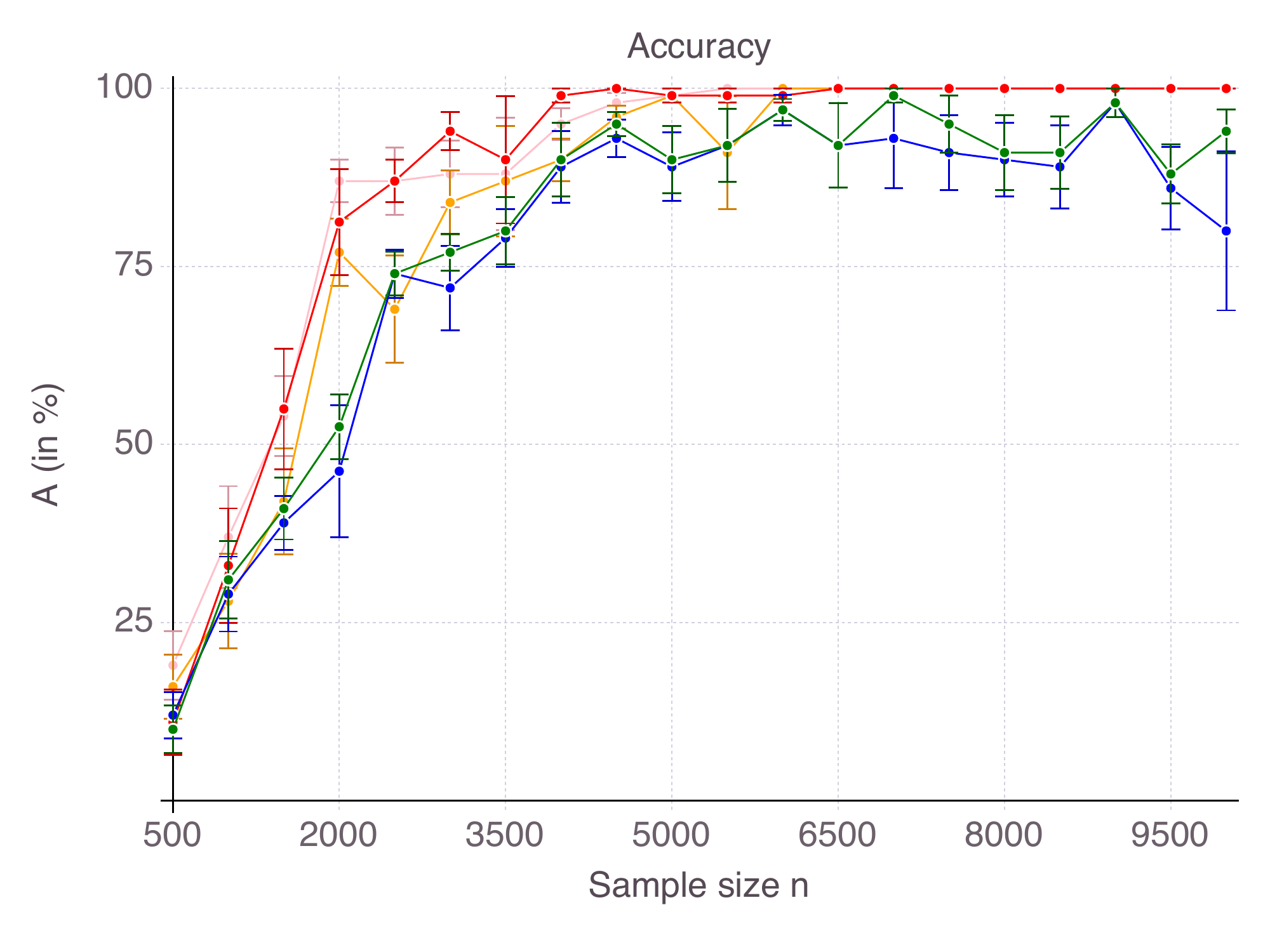}
	\caption{High noise, low correlation}
\end{subfigure} %
~
\begin{subfigure}[h]{.45\linewidth}
	\centering
	\includegraphics[width=\linewidth]{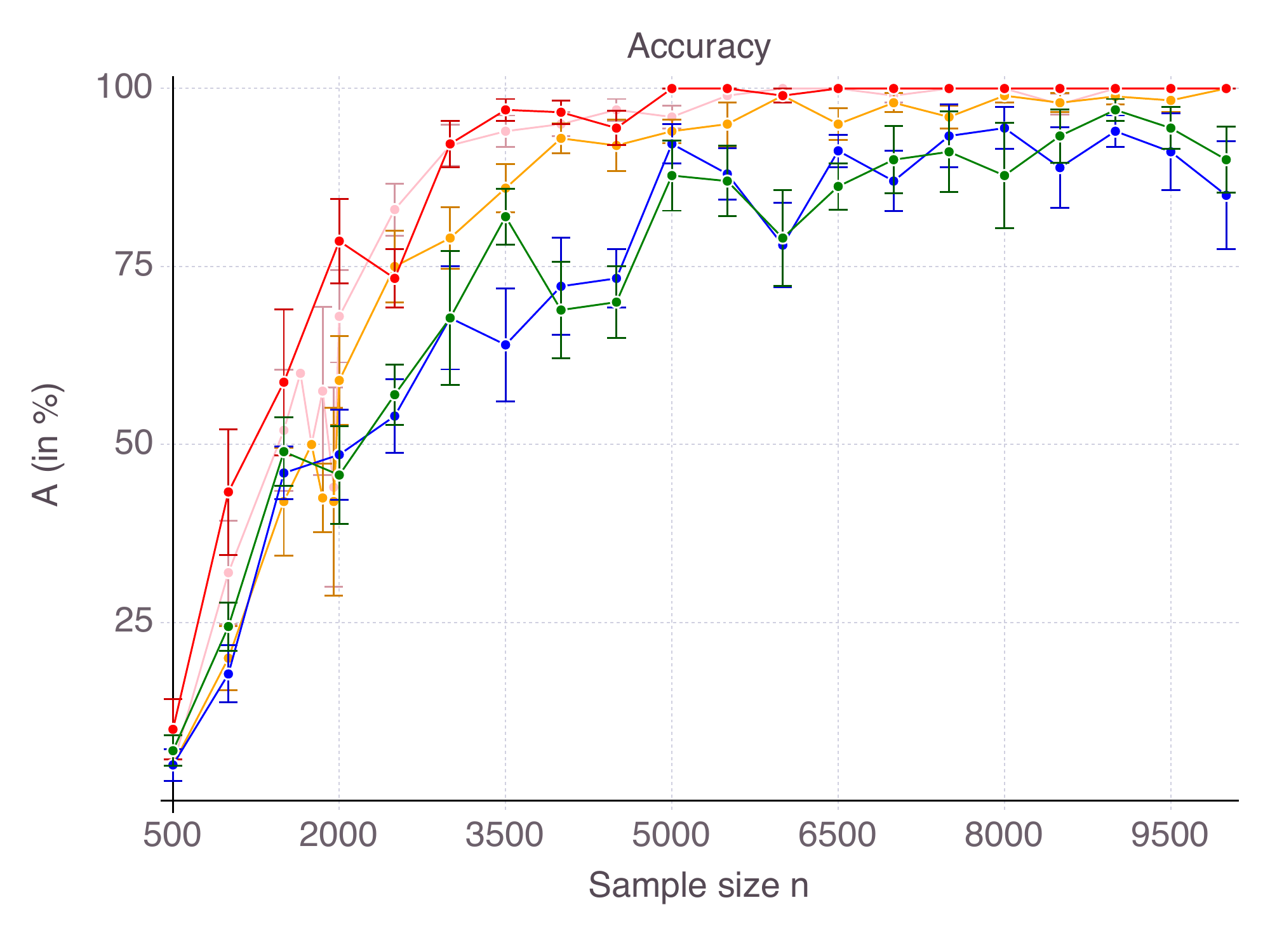}
	\caption{High noise, high correlation}
\end{subfigure}
\caption{Accuracy $A$ as $n$ increases, for the CIO (in green), SS (in blue with $T_{max}=200$), ENet (in red), MCP (in orange), SCAD (in pink) with OLS loss. We average results over $10$ data sets.}
\label{fig:RegCV.A}
\end{figure*}
\begin{figure*}[p]
\centering
\begin{subfigure}[h]{.45\linewidth}
	\centering
	\includegraphics[width=\linewidth]{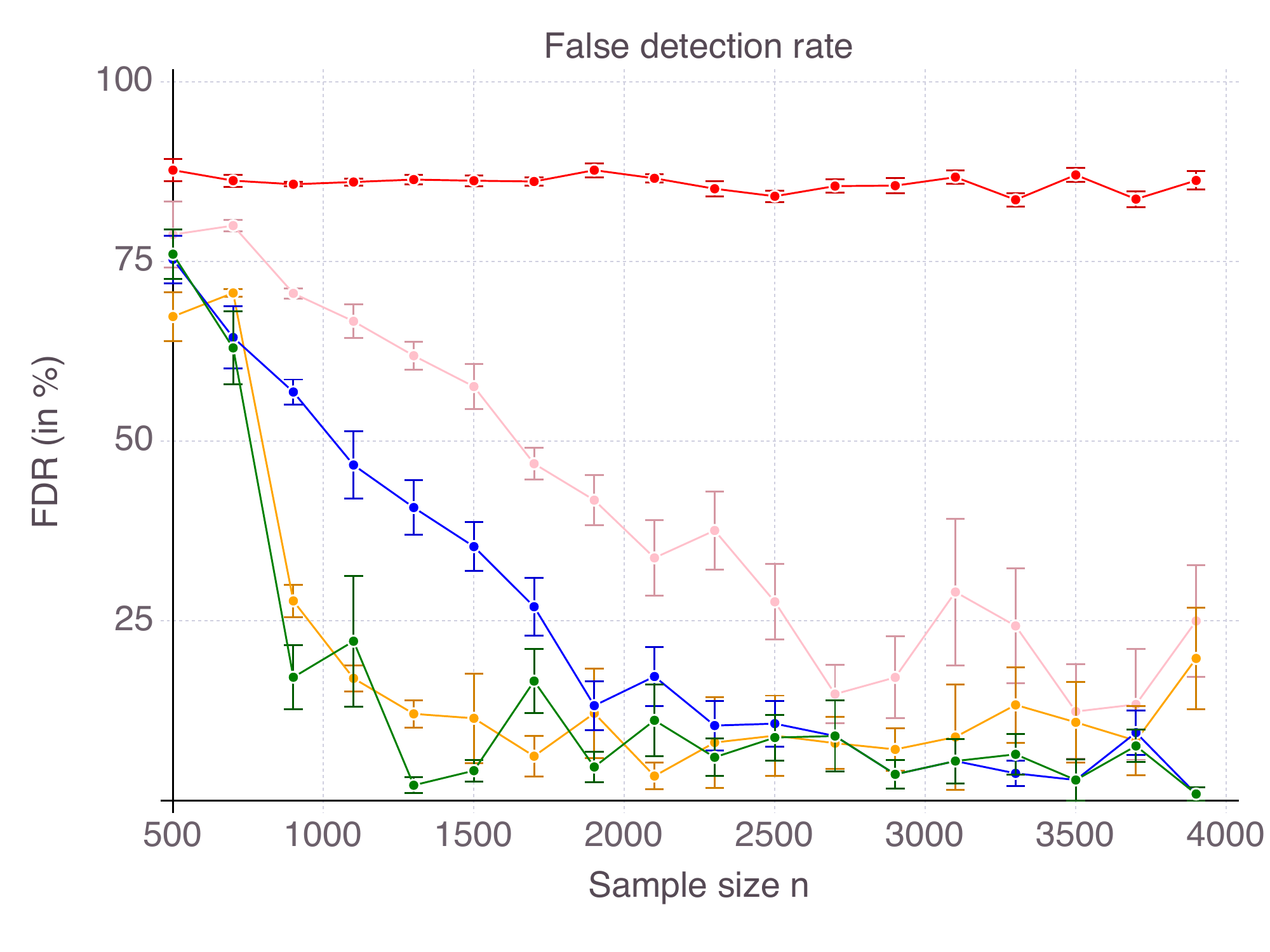}
	\caption{Low noise, low correlation}
\end{subfigure} %
~
\begin{subfigure}[h]{.45\linewidth}
	\centering
	\includegraphics[width=\linewidth]{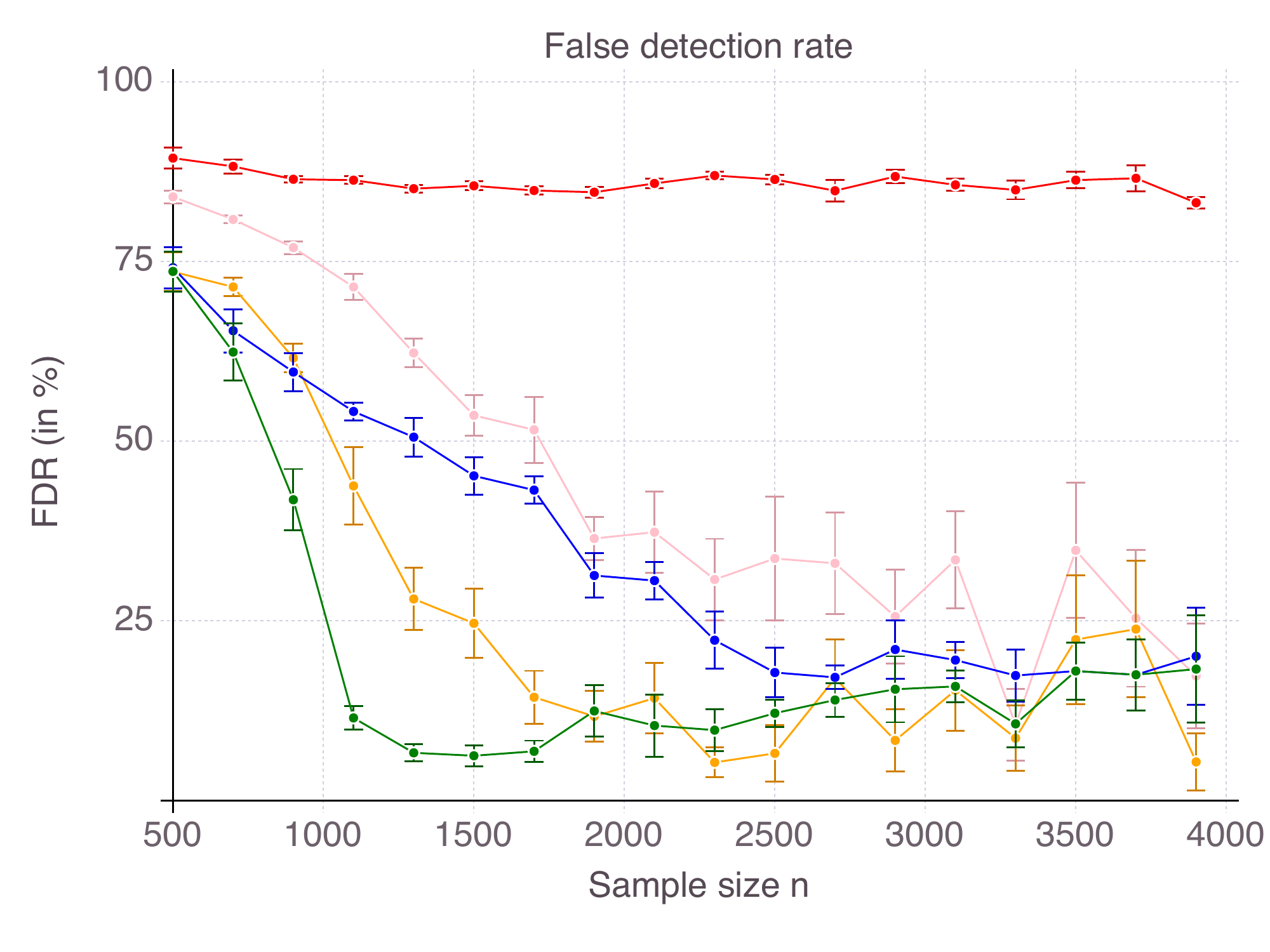}
	\caption{Low noise, high correlation}
\end{subfigure}

\begin{subfigure}[h]{.45\linewidth}
	\centering
	\includegraphics[width=\linewidth]{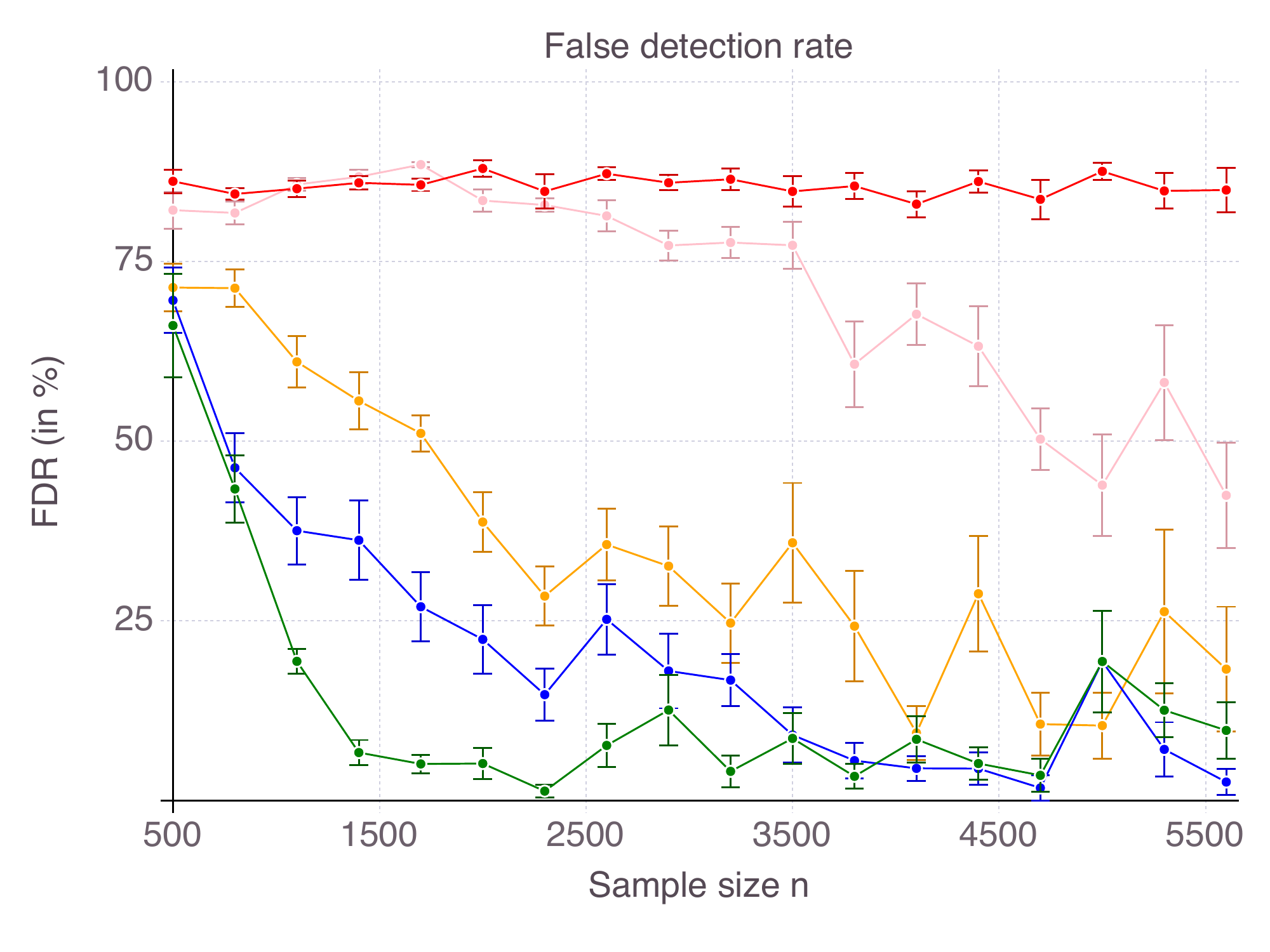}
	\caption{Medium noise, low correlation}
\end{subfigure} %
~
\begin{subfigure}[h]{.45\linewidth}
	\centering
	\includegraphics[width=\linewidth]{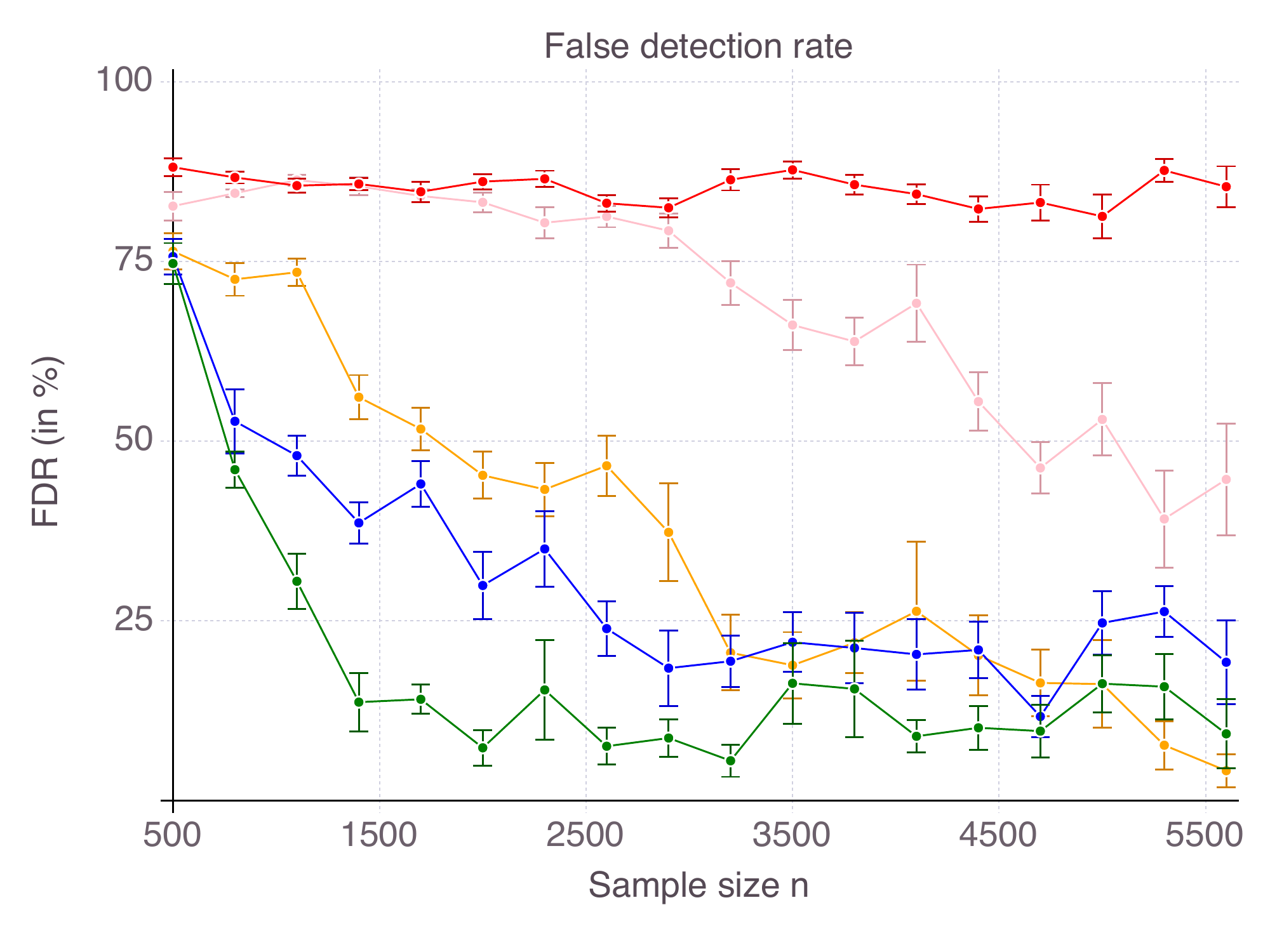}
	\caption{Medium noise, high correlation}
\end{subfigure}

\begin{subfigure}[h]{.45\linewidth}
	\centering
	\includegraphics[width=\linewidth]{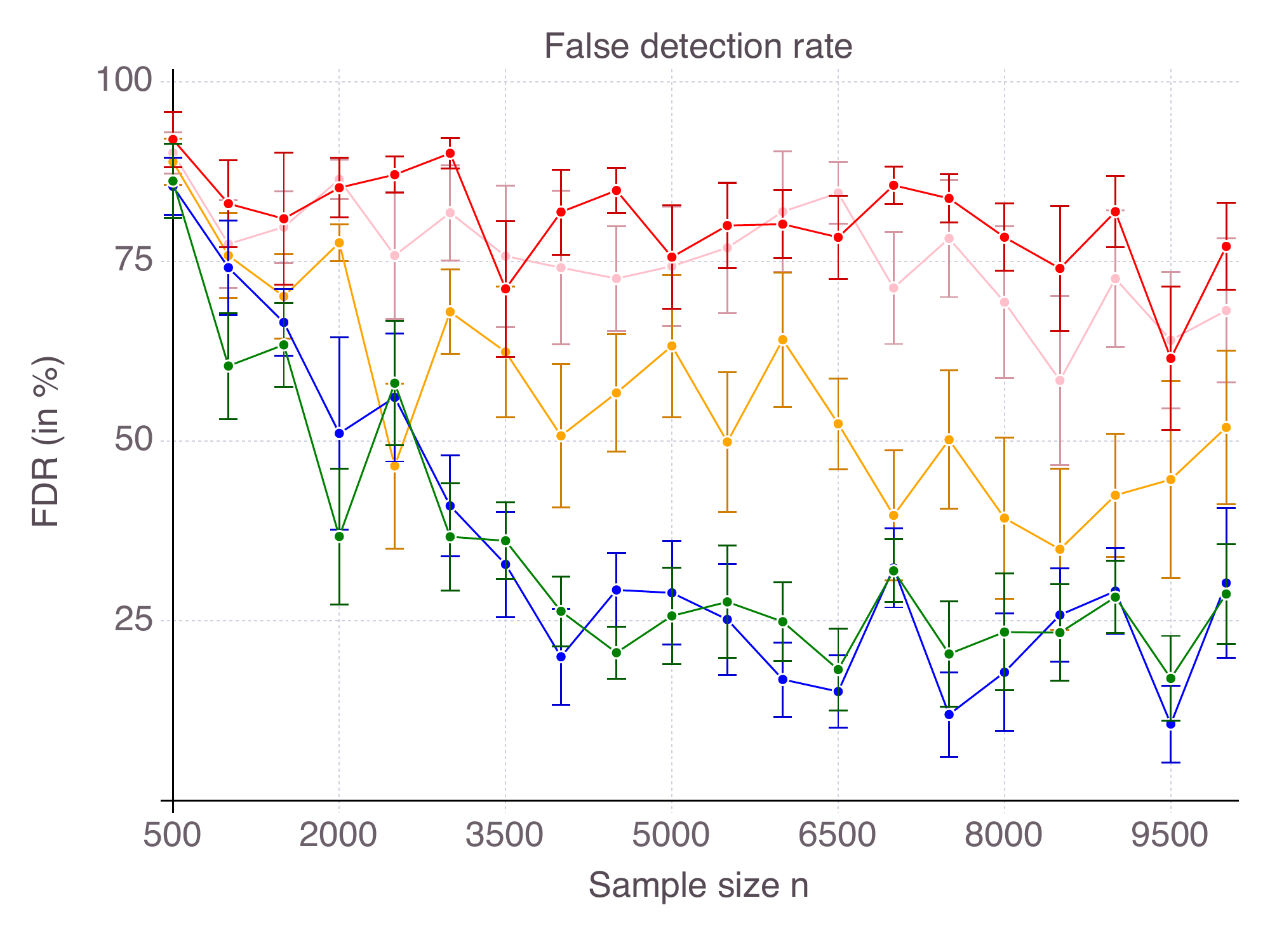}
	\caption{High noise, low correlation}
\end{subfigure} %
~
\begin{subfigure}[h]{.45\linewidth}
	\centering
	\includegraphics[width=\linewidth]{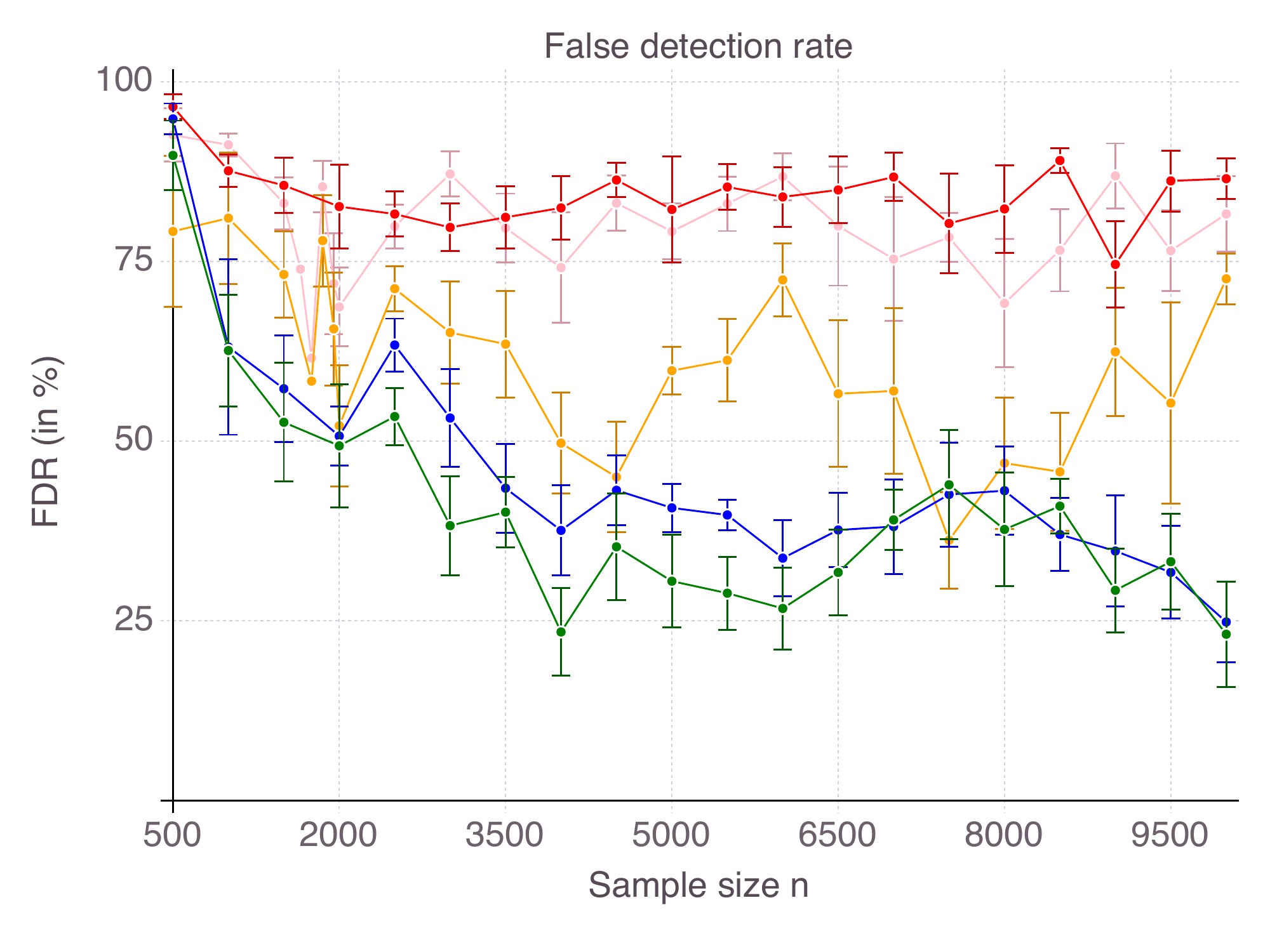}
	\caption{High noise, high correlation}
\end{subfigure}
\caption{False detection rate $FDR$ as $n$ increases, for the CIO (in green), SS (in blue with $T_{max}=200$), ENet (in red), MCP (in orange), SCAD (in pink) with OLS loss. We average results over $10$ data sets.}
\label{fig:RegCV.FDR}
\end{figure*}

{ 
\subsection{Synthetic data \emph{not} satisfying mutual incoherence condition}
We now consider a "hard" correlation structure, i.e., a setting where the standard Lasso estimator is inconsistent. Fix $p$, $k_{true}$ and a scalar $\theta \in \left( \tfrac{1}{k_{true}}, \tfrac{1}{\sqrt{k_{true}}} \right)$\footnote{in our experiment we take $\theta =\tfrac{1}{2 k_{true}} + \tfrac{1}{2 \sqrt{k_{true}}}$.} Define $\Sigma$ as a matrix with $1$’s on the diagonal, $\theta$’s in the first $k_{true}$ positions of the $(k_{true}+1)$th row and column, and $0$’s everywhere else. Such a matrix does not satisfy mutual incoherence \citep[see][Appendix F.2. for a proof]{loh2017support}. As opposed to the previous setting, we fix $w_{true} = \left(\tfrac{1}{\sqrt{k_{true}}}, \dots, \tfrac{1}{\sqrt{k_{true}}}, 0, \dots, 0\right)$, and compute noisy signals, for increasing noise levels (see Table \ref{tab:m1.regimes} p. \pageref{tab:m1.regimes}). In this setting, the $\ell_1$-penalty result in an estimator that puts nonzero weight on the $(k+1)$th coordinate, while MCP and SCAD penalties eventually recover the true support \citep{loh2017support}. 

\begin{table}[h]
\centering
\caption{Regimes of noise ($SNR$) considered in our experiments on regression}
\label{tab:m1.regimes}
\begin{tabular}{lc}
\toprule
Low noise &  {$\begin{aligned}  &{SNR}=6  \\ & p =20,000 \\ &k=100\end{aligned} $} \\
\midrule
Medium noise & {$\begin{aligned} &{SNR}=1 \\ & p =10,000 \\ &k=50 \end{aligned} $} \\
\midrule
High noise & {$\begin{aligned} &{SNR}=0.05 \\ & p =2,000 \\ &k=10 \end{aligned} $} \\
\bottomrule
\end{tabular}
\end{table}
}
{
\subsubsection{Feature selection with a given support size} 
We first consider the case when the cardinality $k$ of the support to be returned is given and equal to the true sparsity $k_{true}$. In this setting, $\ell_1$-estimators are expected to always return at least $1$ incorrect feature, while MCP and SCAD will provably recover the entire support \citep{loh2017support}.

As shown on Figure \ref{fig:RegHardFixMSE} (p. \pageref{fig:RegHardFixMSE}), we observe empirically what theory dictates: The accuracy of ENet reaches a threshold strictly lower than $1$. Non-convex penalties MCP and SCAD, on the other hand, see their accuracy converging to $1$ as $n$ increases. Cardinality-constrained estimators CIO and SS, which are also non-convex, behave similarly, although no theory like \citet{loh2017support} exists, to the best of our knowledge. As far as accuracy is concerned (left panel), CIO dominates all other methods. Interestingly, while ENet is the least accurate in the limit $n \rightarrow +\infty$, it is sometimes more accurate than non-convex penalties for smaller values of $n$. 

We report computational time in Appendix \ref{sec:regression.supp.nomic.fix}.

\begin{figure*}
\centering
\begin{subfigure}[t]{\linewidth}
	\centering
	\includegraphics[width=.45\linewidth]{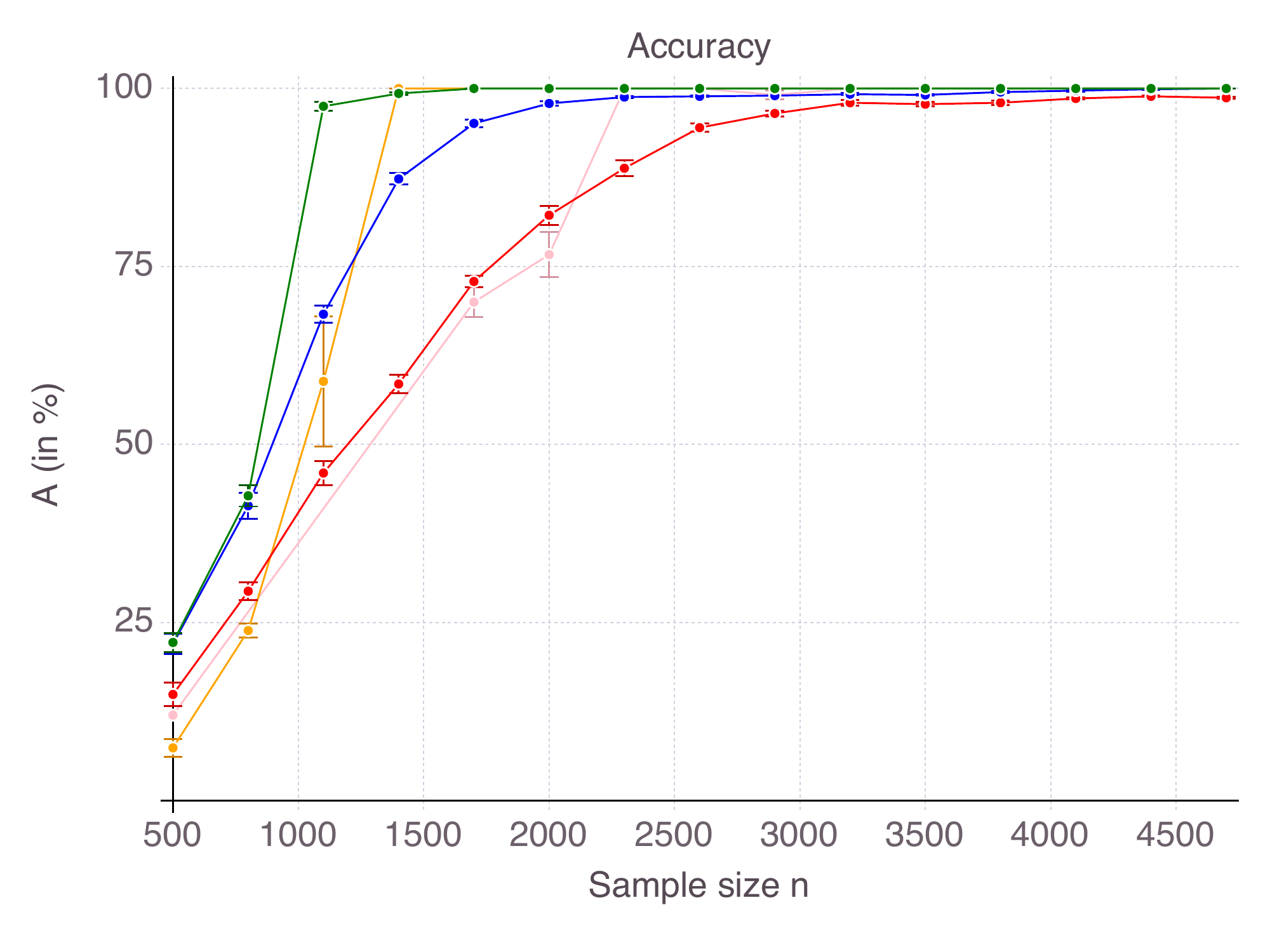}
	\includegraphics[width=.45\linewidth]{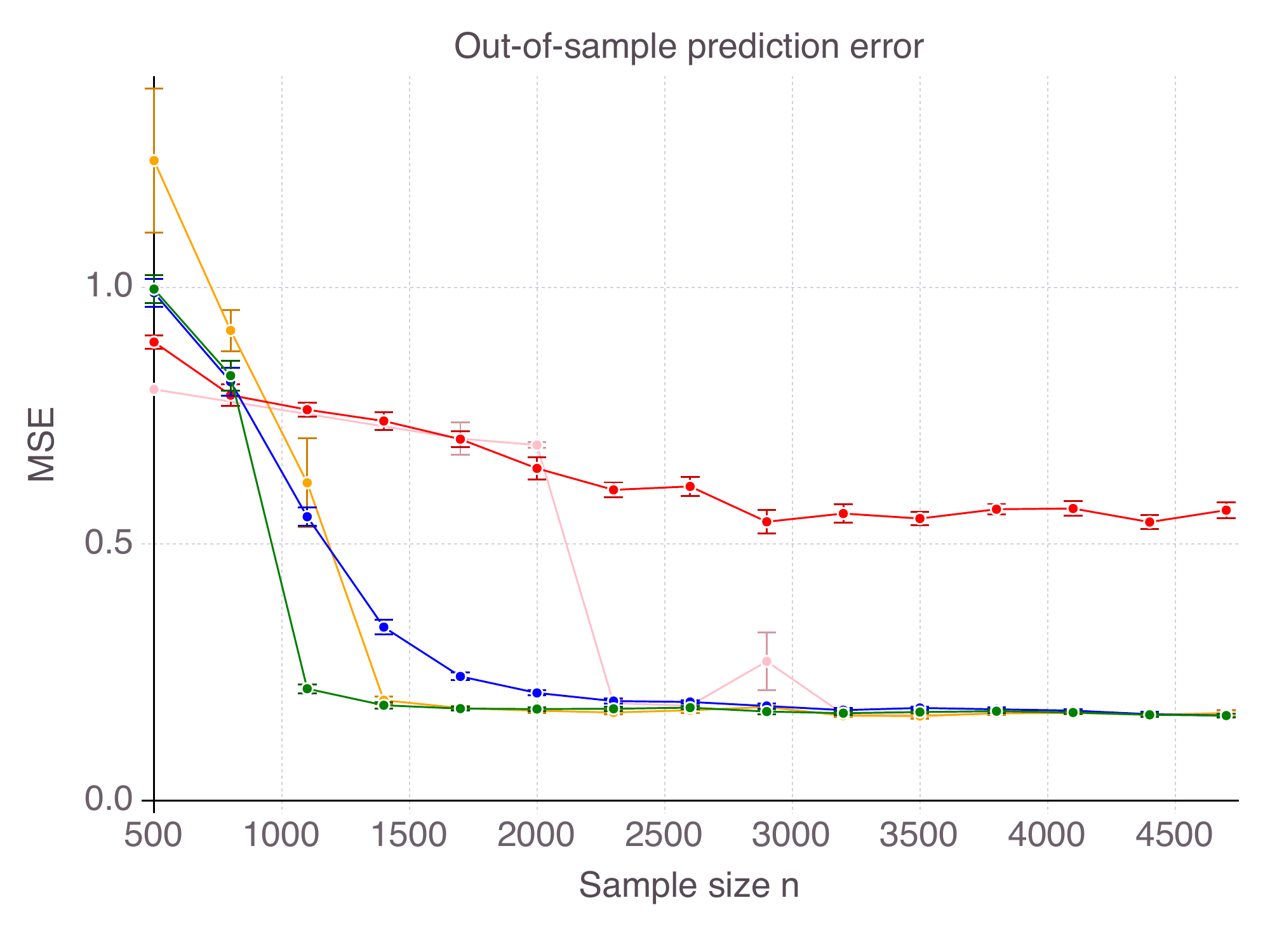}
	\caption{Low noise}
\end{subfigure} %

\begin{subfigure}[t]{\linewidth}
	\centering
	\includegraphics[width=.45\linewidth]{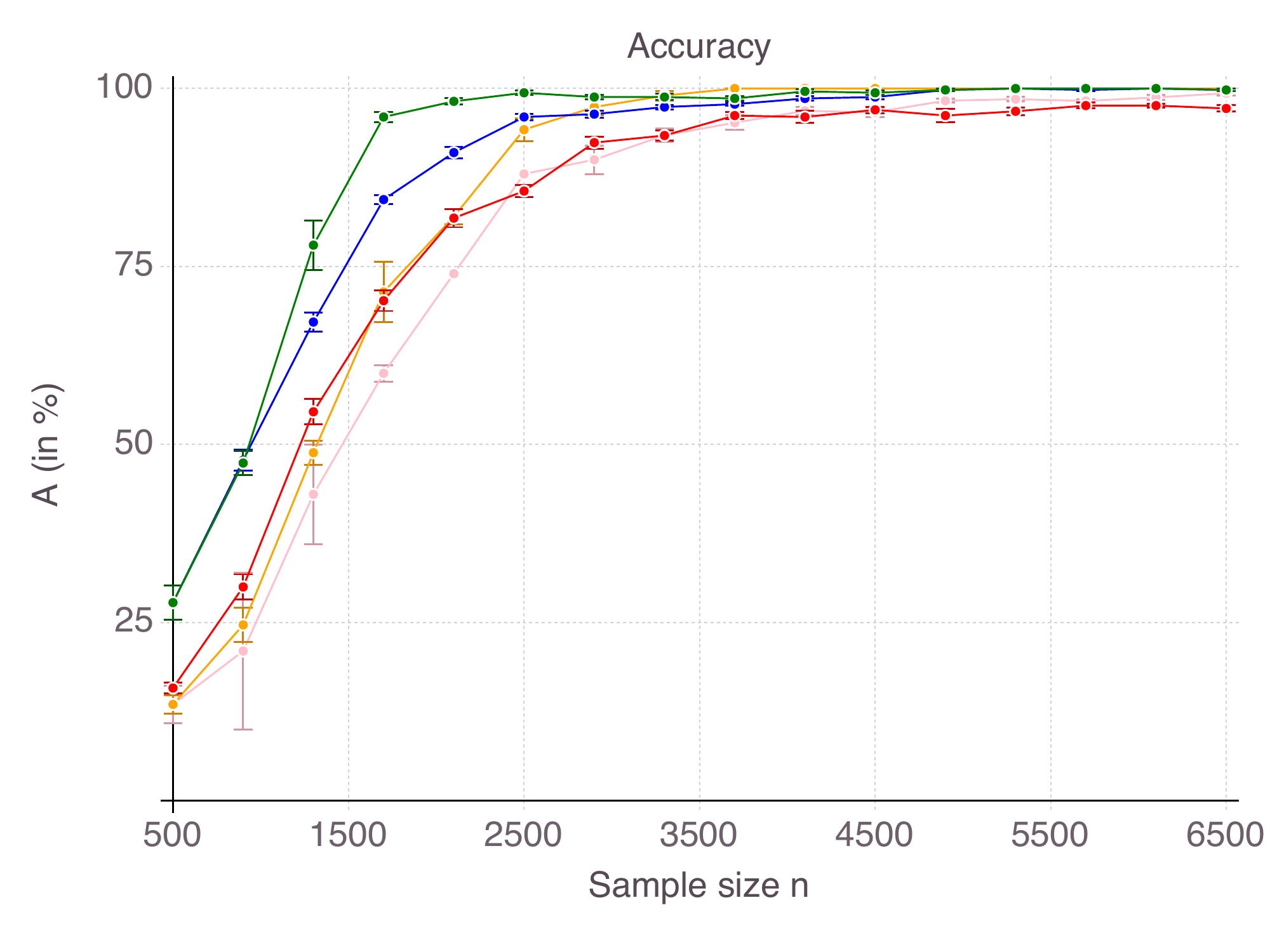}
	\includegraphics[width=.45\linewidth]{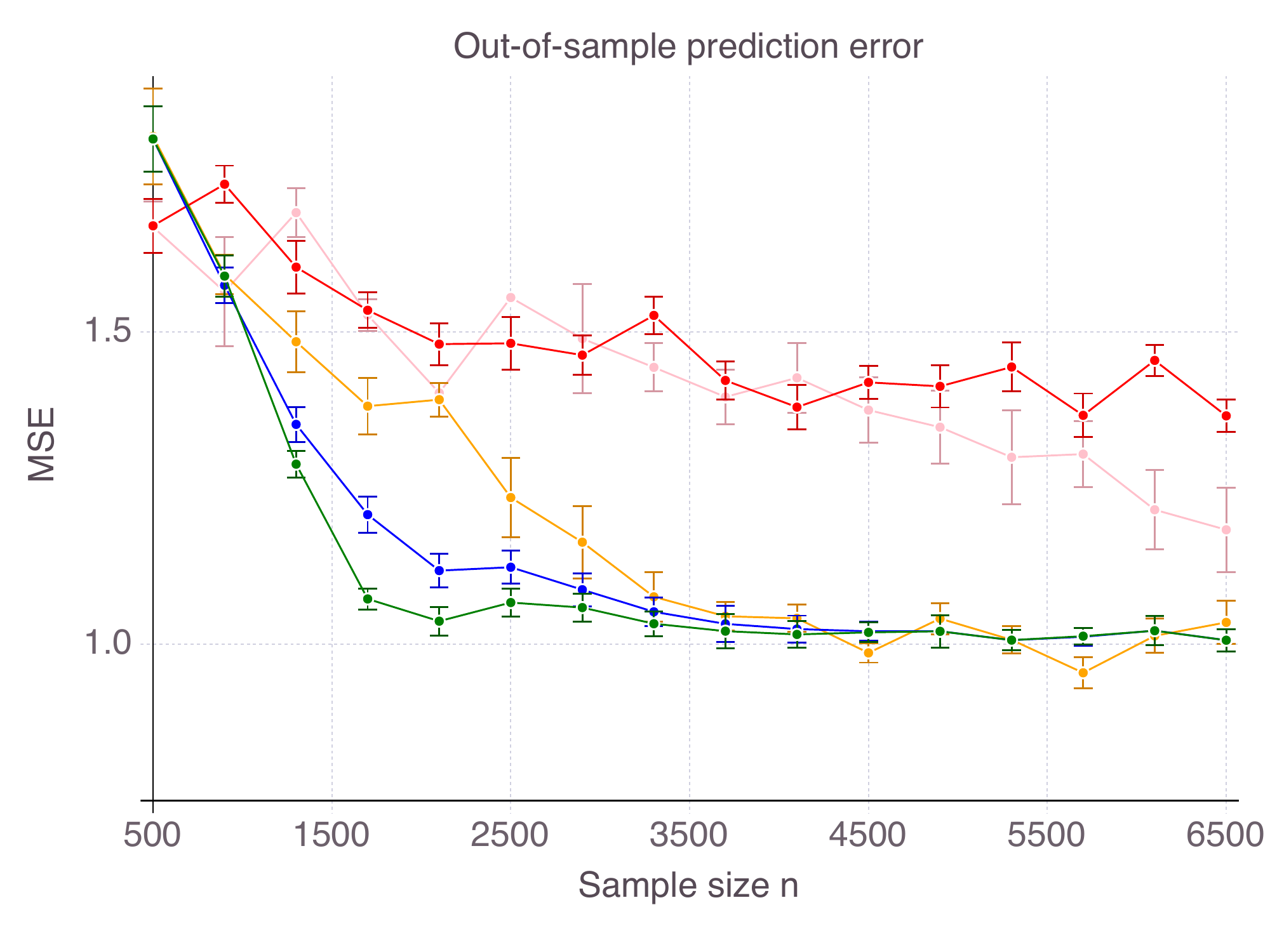}
	\caption{Medium noise}
\end{subfigure} %

\begin{subfigure}[t]{\linewidth}
	\centering
	\includegraphics[width=.45\linewidth]{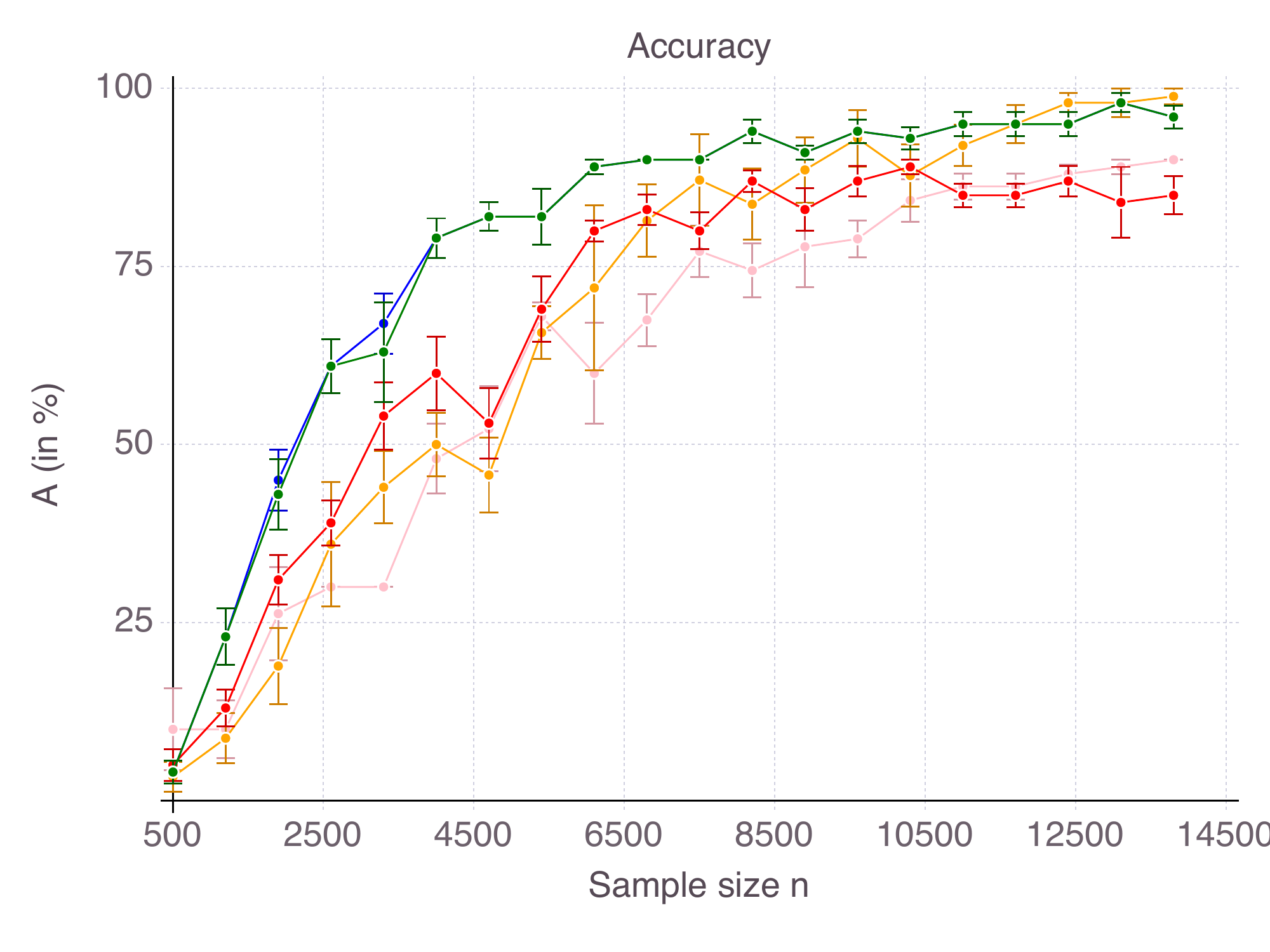}
	\includegraphics[width=.45\linewidth]{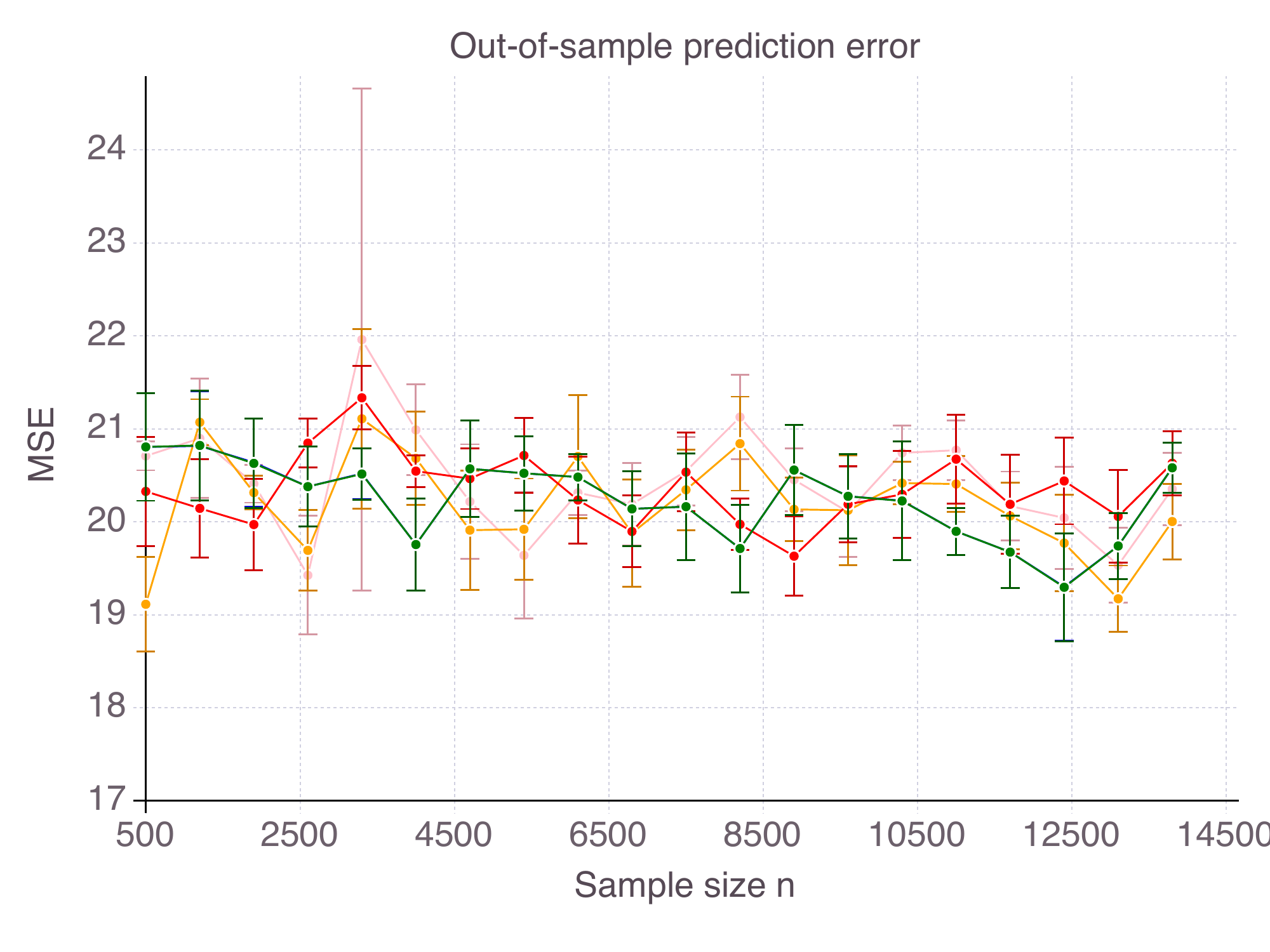}
	\caption{High noise}
\end{subfigure} %

\caption{Accuracy (left panel) and out-of-sample mean square error (right panel) as $n$ increases, for the CIO (in green), SS (in blue with $T_{max}=150$), ENet (in red), MCP (in orange), SCAD (in pink) with OLS loss. We average results over $10$ data sets with $n=500,...,3700$, $p=20,000$, $k_{true}=100$.}
\label{fig:RegHardFixMSE}
\end{figure*}
}

\subsubsection{Feature selection with cross-validated support size} 
Behavior of the methods when $k_{true}$ is no longer given and needs to be cross-validated from the data itself is very similar to the case where $\Sigma$ satisfies the mutual incoherence condition. To avoid redundancies, we report those results in Appendix \ref{sec:regression.supp.nomic.cv}. 
{
\subsection{Real-world design matrix $X$}

To illustrate the implications of feature selection on real-world applications, we consider an example from genomics. We collected data from The Cancer Genome Atlas Research Network\footnote{\url{http://cancergenome.nih.gov}} on $n = 1,145$ lung cancer patients. 
%Those patients divide into two groups, corresponding to different tumor types. In our sample for instance, $594$ patients ($51.9\%$) suffered from Adenocarcinoma while the remaining $551$ patients ($48.1\%$) suffered from Squamous Cell Carcinoma. 
The data set consists of $p=14,858$ gene expression data for each patient. We discarded genes for which information was only partially recorded so there is no missing data. We used this data as our design matrix $X\in \mathbb{R}^{n \times p}$ and generated synthetic noisy outputs $Y$, for 10 uniformely log-spaced values of $SNR$, as in \citet{hastie2017extended} (Table \ref{tab:cancer.regimes} p. \pageref{tab:cancer.regimes}).
\begin{table}[h]
\centering
\caption{Regimes of noise ($SNR$) considered in our regression experiments on the Cancer data set}
\label{tab:cancer.regimes}
\begin{tabular}{r|cccccccccc}
\toprule
$SNR$ & $0.05$ & $0.09$  & $0.14$  & $0.25$  & $0.42$  & $0.71$  & $1.22$  & $2.07$  & $3.52$  & $6$  \\
\midrule
$PVE$ & $0.05$ & $0.08$  & $0.12$  & $0.20$  & $0.30$  & $0.42$  & $0.55$  & $0.67$  & $0.78$  & $0.86$  \\
\bottomrule
\end{tabular}
\end{table}
We held $15\%$ of patients in a test set ($171$ patients). We used the remaining $974$ patients as a training and validation set. For each algorithm, we computed models with various degrees of sparsity and regularization on the training set, evaluated them on the validation set and took the most accurate model. Figure \ref{fig:RegCancer} (p. \pageref{fig:RegCancer}) represents the accuracy and false detection rate of the resulting regressor, for all methods, as $SNR$ increases. ENet ranks the highest both in terms of number of true and false features. On the contrary, MCP is the least accurate, while making fewer incorrect guesses. CIO, SS, SCAD demonstrate in-between performance. Nevertheless, differences in feature selection does not translate into significant differences into predictive power in this case (see Figure \ref{fig:RegCancer2} in Appendix \ref{sec:regression.supp.real} page \pageref{fig:RegCancer2}).
\begin{figure*}[h]
\centering
\begin{subfigure}[t]{.48\linewidth}
	\centering
	\includegraphics[width=\linewidth]{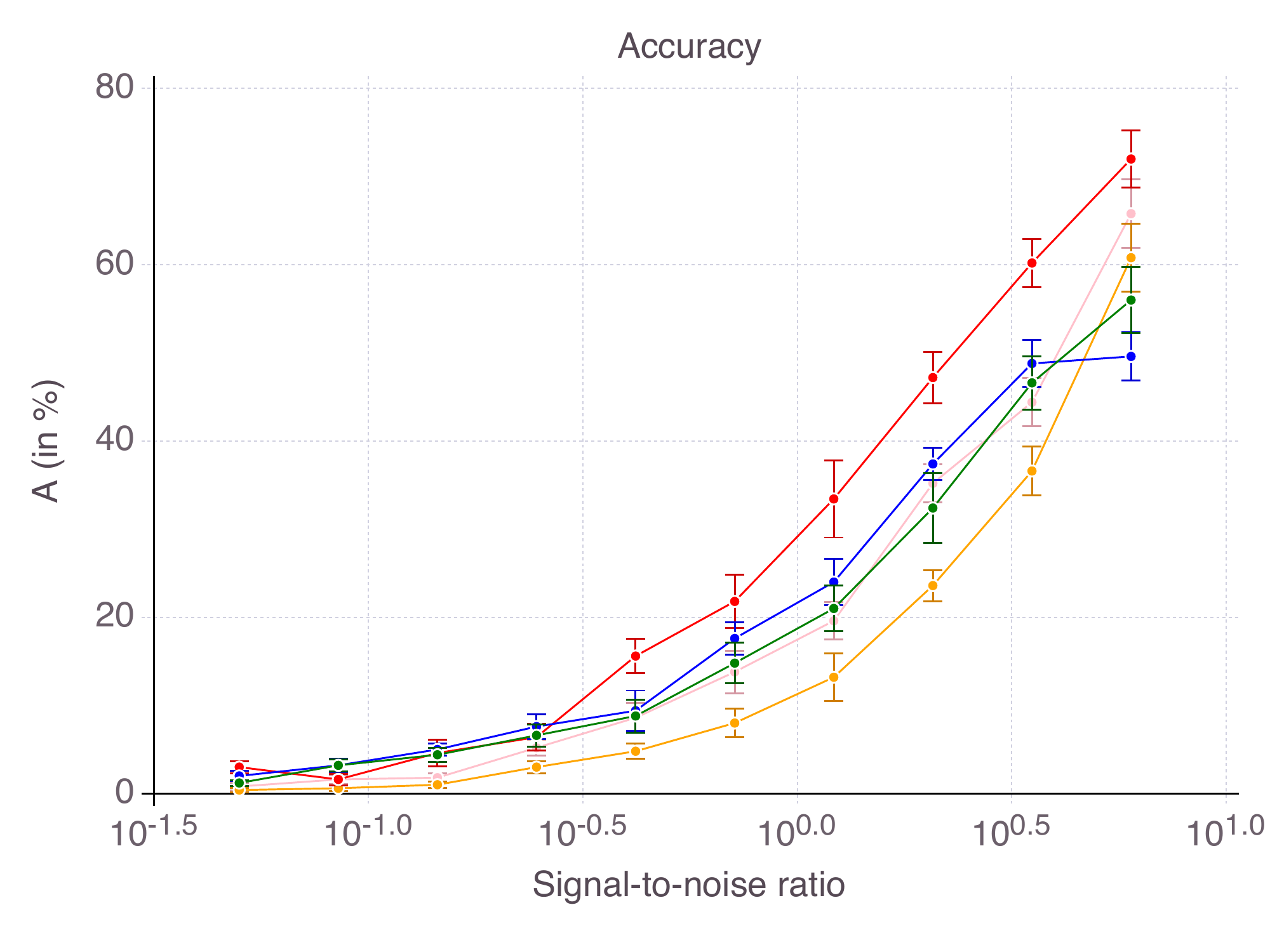}
	\caption{Accuracy}
\end{subfigure}
~
\begin{subfigure}[t]{.48\linewidth}
	\centering
	\includegraphics[width=\linewidth]{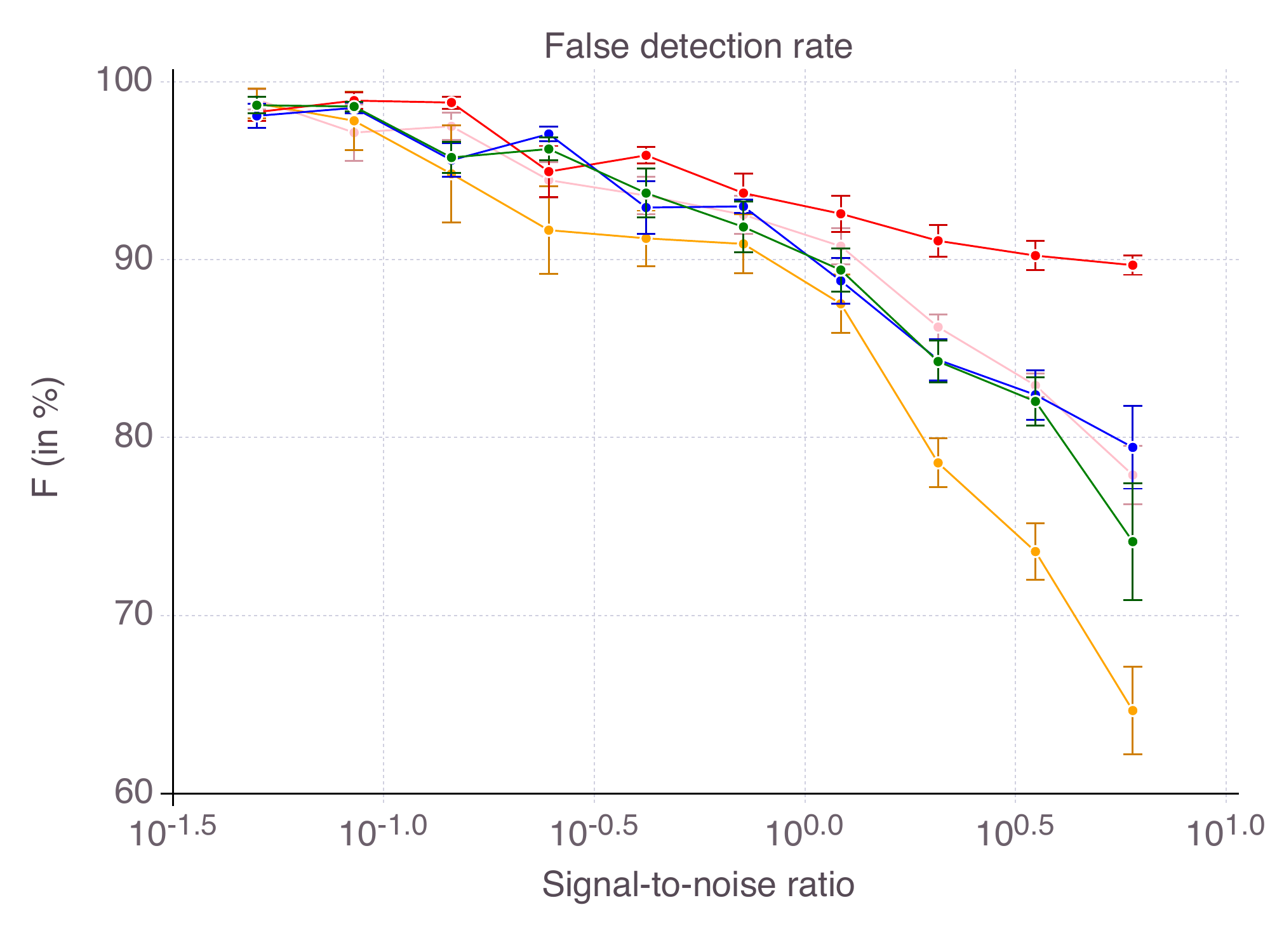}
	\caption{False detection rate}
\end{subfigure} %

\caption{Accuracy and false detection  as $SNR$ increases, for the CIO (in green), SS (in blue with $T_{max}=150$), ENet (in red), MCP (in orange), SCAD (in pink) with OLS loss. We average results over $10$ data sets with $SNR=0.05,...,6$, $k_{true}=50$.}
\label{fig:RegCancer}
\end{figure*}

As previously mentioned, these results are the conclusion of a cross-validation procedure to find the right value of $k$. In Figure \ref{fig:RegCancerROC} (p. \pageref{fig:RegCancerROC}), we represent the ROC curve corresponding to four of the ten regimes of noise. For low noise, ENet is dominated by SCAD, SS, MCP and CIO. As noise increases however, ENet gradually improves and even dominates all methods in very noisy regimes. These ROC curves are of little interest in practice, where true features are unknown - and potentially do not even exist. They raise, in our view, interesting research questions about the cross-validation procedure and its ability to efficiently select the "best" model.  

\begin{figure*}
\centering
\begin{subfigure}[t]{.45\linewidth}
	\centering
	\includegraphics[width=\linewidth]{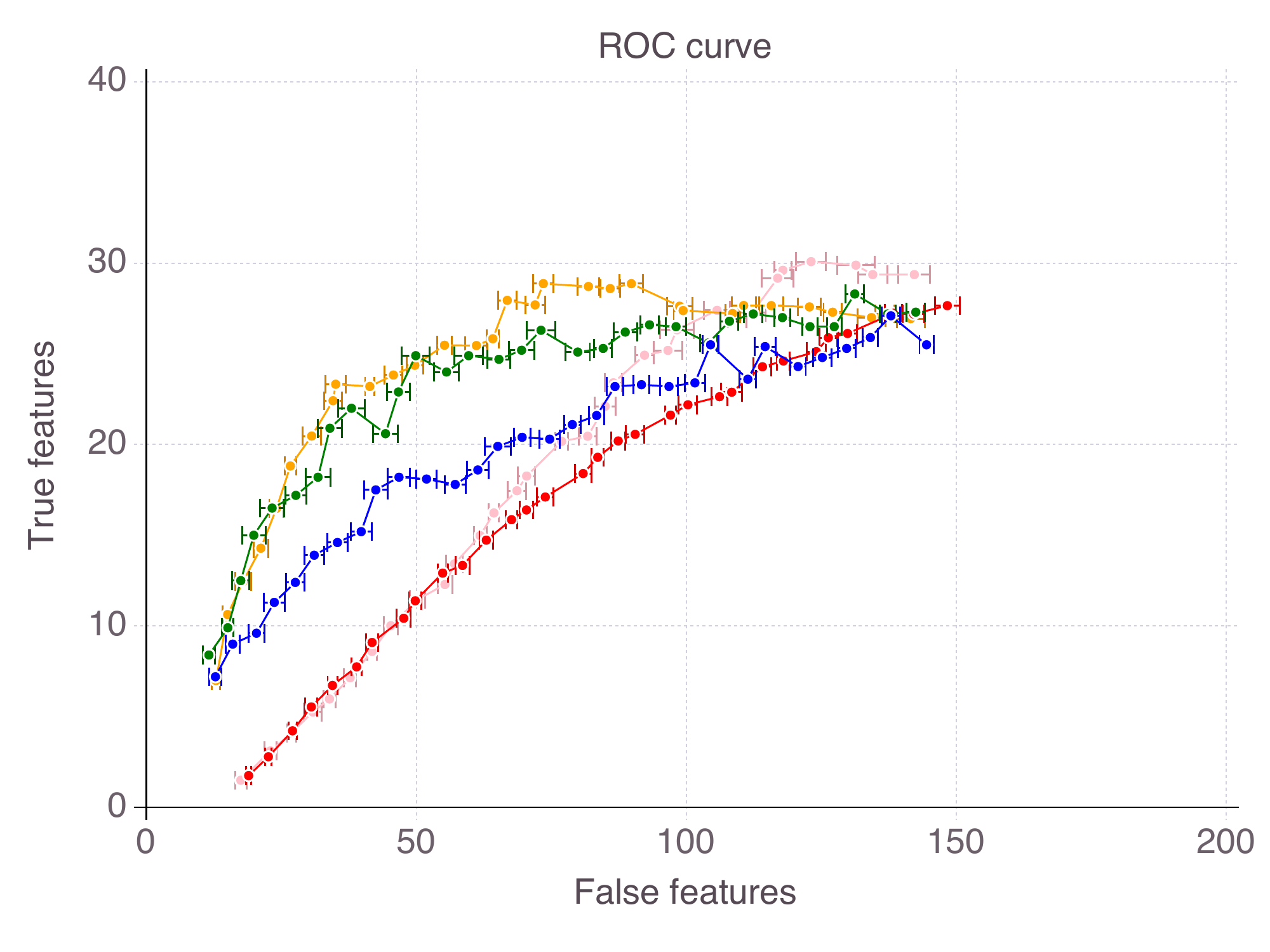}
	\caption{$SNR = 6$}
\end{subfigure} %
~
\begin{subfigure}[t]{.45\linewidth}
	\centering
	\includegraphics[width=\linewidth]{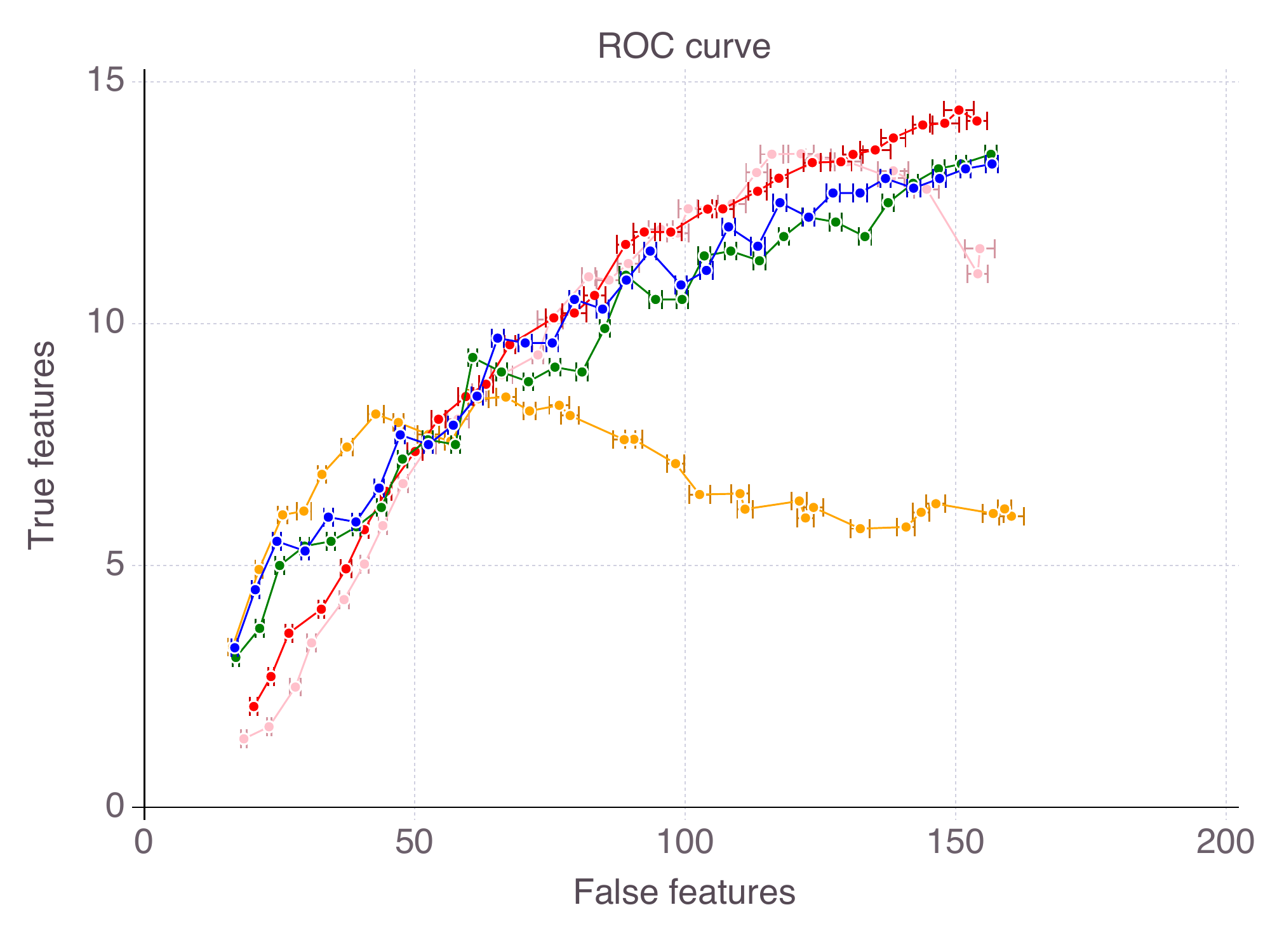}
	\caption{$SNR=1.22$}
\end{subfigure}

\begin{subfigure}[t]{.45\linewidth}
	\centering
	\includegraphics[width=\linewidth]{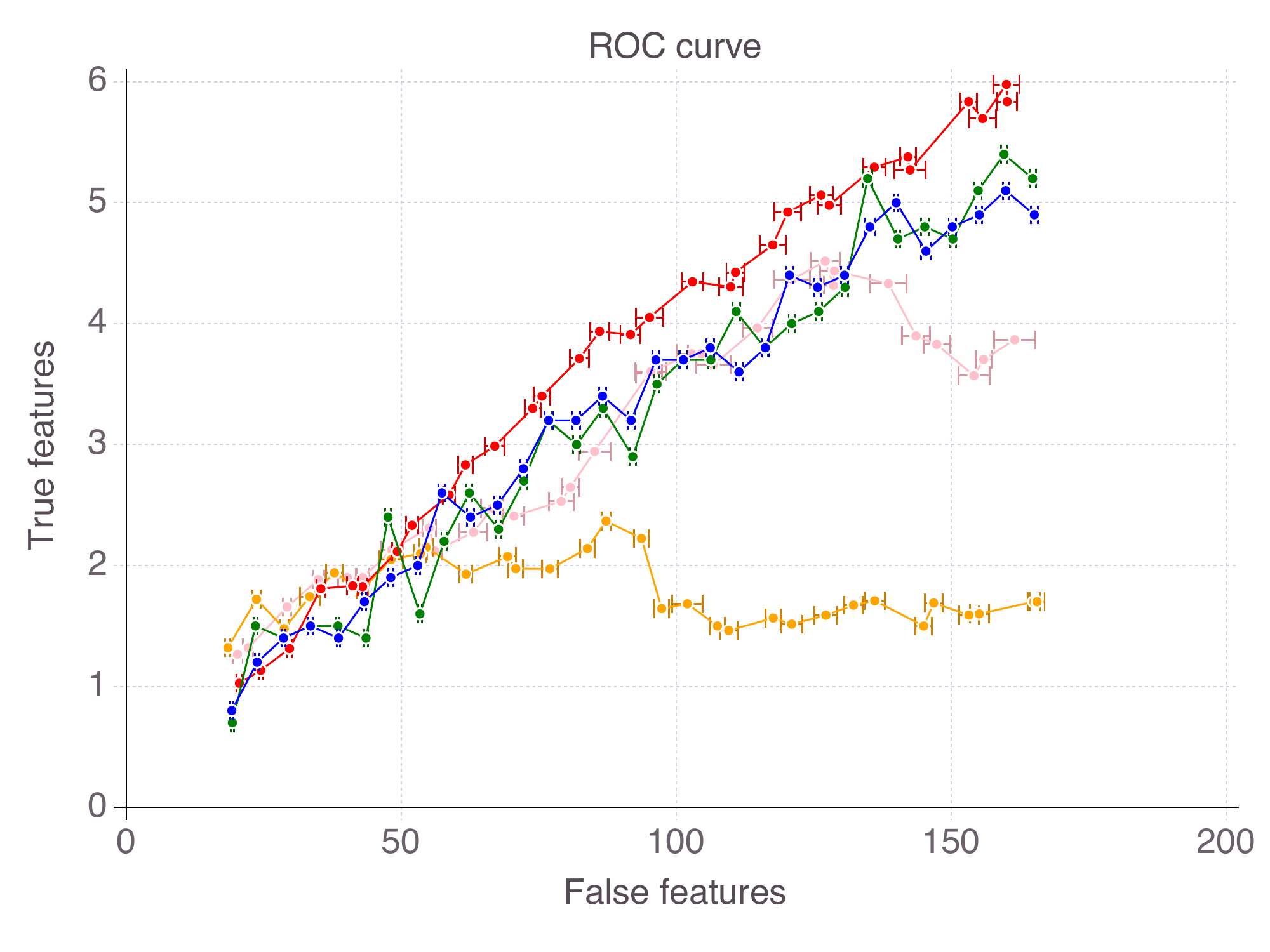}
	\caption{$SNR=0.25$}
\end{subfigure} %
~
\begin{subfigure}[t]{.45\linewidth}
	\centering
	\includegraphics[width=\linewidth]{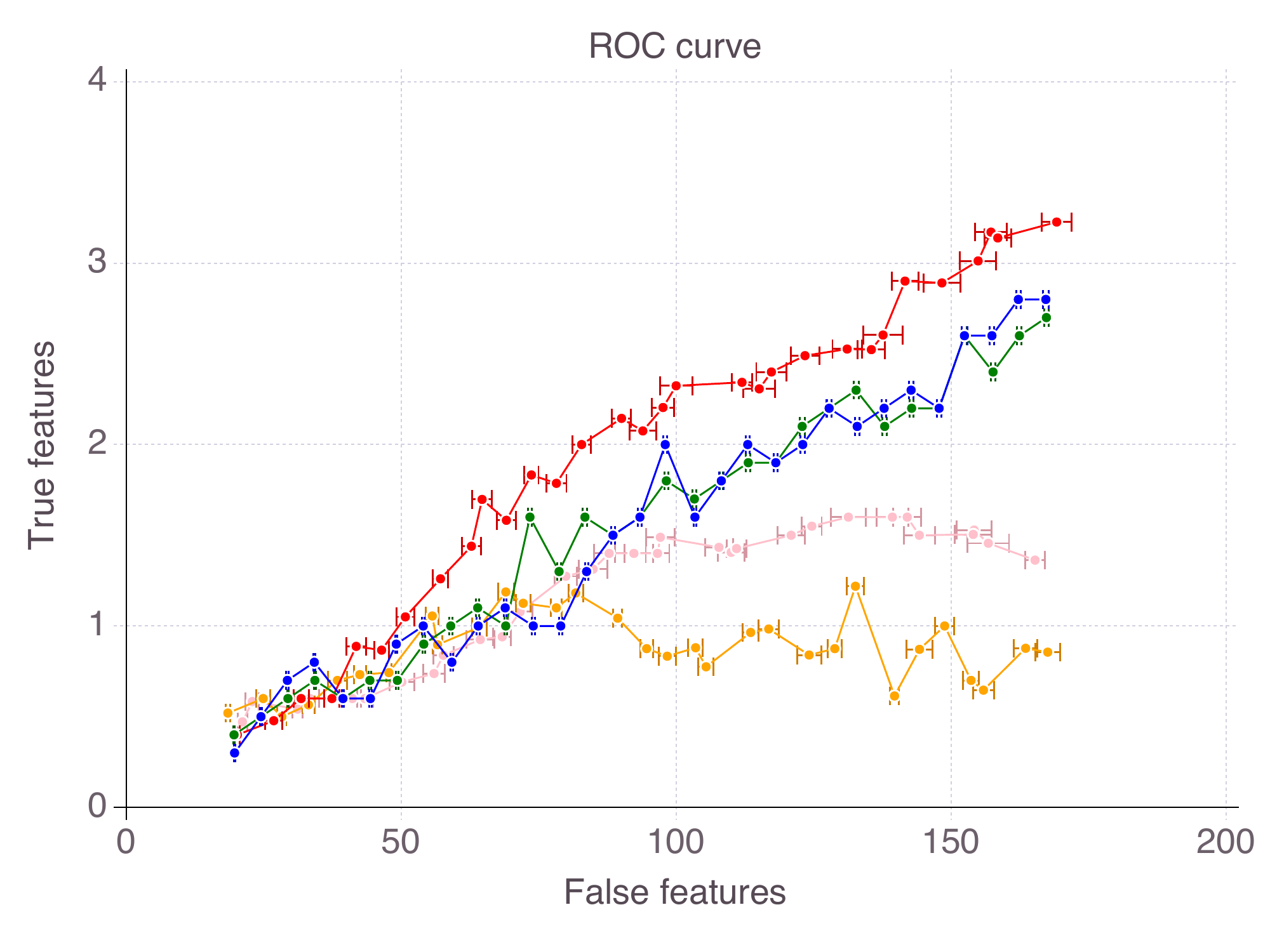}
	\caption{$SNR=0.05$}
\end{subfigure}
\caption{True features against false features, for the CIO (in green), SS (in blue with $T_{max}=150$), ENet (in red), MCP (in orange), SCAD (in pink) with OLS loss. We average results over $10$ data sets.}
\label{fig:RegCancerROC}
\end{figure*}

}

{ 
\subsection{Summary and guidelines}
In this section, we compared five feature selection methods in regression, in various regimes of noise and correlation, and under different design matrices. Based on those extensive experiments, we can make the following observations:
\begin{itemize}
\item As far as accuracy is concerned, non-convex methods should be preferred over $\ell_1$-regularization for they provide better feature selection, even in the absence of the mutual incoherence condition. In particular, MCP, the cutting-plane algorithm for the cardinality-constrained formulation, and its Boolean relaxation have been particularly effective in our experiments.
\item In terms of false detection rate, cardinality-constrained formulations improve substantially over ENet and SCAD{\blue, and  moderately over MCP}. 
\item Computational time might still be the limiting factor in the use of such methods in practice. To that matter, publicly available software, such as the \verb|ncvreg| package for SCAD and MCP estimators and our package \verb|SubsetSelection| for the Boolean relaxation, should be advertised to practitioners since they compete with \verb|glmnet|, which remains the gold standard for tractability. For time can be a crucial bottleneck in practice, we provide detailed experiments regarding computational time and scalability of the algorithms in Appendix \ref{sec:regression.supp.time}. 
\item In practice, we should recommend using a combination of all these methods: Lasso or ENet can be used as first feature screening/dimension reduction step, to be followed by a more computationally expensive non-convex feature selection method if time permits. 
\item While Lasso/ENet performs poorly in low noise settings, its competes and sometimes dominates other methods as noise increases. This observation supports the view that $\ell_1$-regularization is, first and foremost, a robustness story \citep{bertsimas2009equivalence,xu2009robustness}: Through shrinkage of the coefficients, the $\ell_1$ penalty reduces variance in the estimator and improves out-of-sample accuracy, especially in presence of noise. Experiments by \citet{hastie2017extended} even suggested that Lasso outperforms cardinality-constrained estimators in high noise regimes. Our experiments suggest that their observations are still valid but less obvious as soon as the best subset selection estimator is regularized as well (with an $\ell_2$ penalty in our case).
\end{itemize} }

\section{Synthetic and real-world classification problems}
\label{sec:classification}
In this section, we compare the five methods included in our study on classification problems. For implementation considerations, we use CIO and SS with the Hinge loss and ENet, MCP and SCAD with the logistic loss.

\subsection{Methodology and metrics}
Synthetic data is generated according to the same methodology as for regression, except that we now compute the signal $Y$ according to $$Y = \text{sign}(X w_{true} + \varepsilon),$$ instead of $Y = X w_{true} + \varepsilon$ previously. 

On synthetic data, feature selection is assessed in terms of accuracy $A$ and false detection rate $FDR$ as in the previous section. Prediction accuracy, on the other hand, is assessed in terms of Area Under the Curve ($AUC$). {
The $AUC$ corresponds to the area under the receiver operating characteristic curve, which represents true positive rate against false positive rate. The $AUC$ ranges from $0.5$ (for a completely random classifier) to $1$. This area also corresponds to the probability that a randomly chosen positive example is correctly ranked with higher suspicion than a randomly chosen negative example.} Correspondingly, $1-AUC$ is a common measure of prediction error for real-world data. 

\subsection{Synthetic data satisfying mutual incoherence condition}
In this section, we consider consider Toeplitz covariance matrix $\Sigma = \left( \rho^{|i-j|} \right)_{i,j}$, which satisfy mutual incoherence condition. We compare the performance of the methods in six different regimes of noise and correlation described in Table \ref{tab:class.mic.regimes} (p. \pageref{tab:class.mic.regimes}).
\begin{table}[h]
\centering
\caption{Regimes of noise ($SNR$) and correlation ($\rho$) considered in our experiments on regression with Toeplitz covariance matrix}
\label{tab:class.mic.regimes}
\begin{tabular}{lc|c}
\toprule
 &Low correlation &High correlation \\
\midrule
Low noise &  {$\begin{aligned} &\rho = 0.2 \\ &{SNR}=6  \\ & p =10,000 \\ &k=100\end{aligned} $}& {$\begin{aligned} &\rho = 0.7 \\ &{SNR}=6 \\ & p =10,000 \\ &k=100 \end{aligned} $}\\
\midrule
Medium noise & {$\begin{aligned} &\rho = 0.2 \\ &{SNR}=1 \\ & p =5,000 \\ &k=50 \end{aligned} $} & {$\begin{aligned} &\rho = 0.7 \\ &{SNR}=1 \\ & p =5,000 \\ &k=50 \end{aligned} $}\\
\midrule
High noise & {$\begin{aligned} &\rho = 0.2 \\ &{SNR}=0.05 \\ & p =1,000 \\ &k=10 \end{aligned} $} & {$\begin{aligned} &\rho = 0.7 \\ &{SNR}=0.05 \\ & p =1,000 \\ &k=10 \end{aligned} $}\\
\bottomrule
\end{tabular}
\end{table}

\subsubsection{Feature selection with a given support size}
We first conducted experiments where the cardinality $k$ of the support to be returned is given and equal to the true sparsity $k_{true}$ for all methods. We report the results in Appendix \ref{sec:classification.supp.mic.fix}.

\subsubsection{Feature selection with cross-validated support size}
We now compare the methods on cases where the support size needs to be cross-validated from data. 

For every $n$, each method selects $k^\star$ features, some of which are in the true support, others being irrelevant, as measured by accuracy and false detection rate respectively. Figures \ref{fig:ClassCV.A} (p.  \pageref{fig:ClassCV.A}) and \ref{fig:ClassCV.FDR} (p. \pageref{fig:ClassCV.FDR}) report the results of the cross-validation procedure for increasing $n$. In terms of accuracy (Figure \ref{fig:ClassCV.A}), all methods increase in accuracy as $n$ increases, although CIO and SS converge significantly slower than ENet, MCP and SCAD. However, this lower accuracy comes with the benefit of a strictly lower false detection rate (Figure \ref{fig:ClassCV.FDR}). 

\begin{figure*}[p]
\centering
\begin{subfigure}[h]{.45\linewidth}
	\centering
	\includegraphics[width=\linewidth]{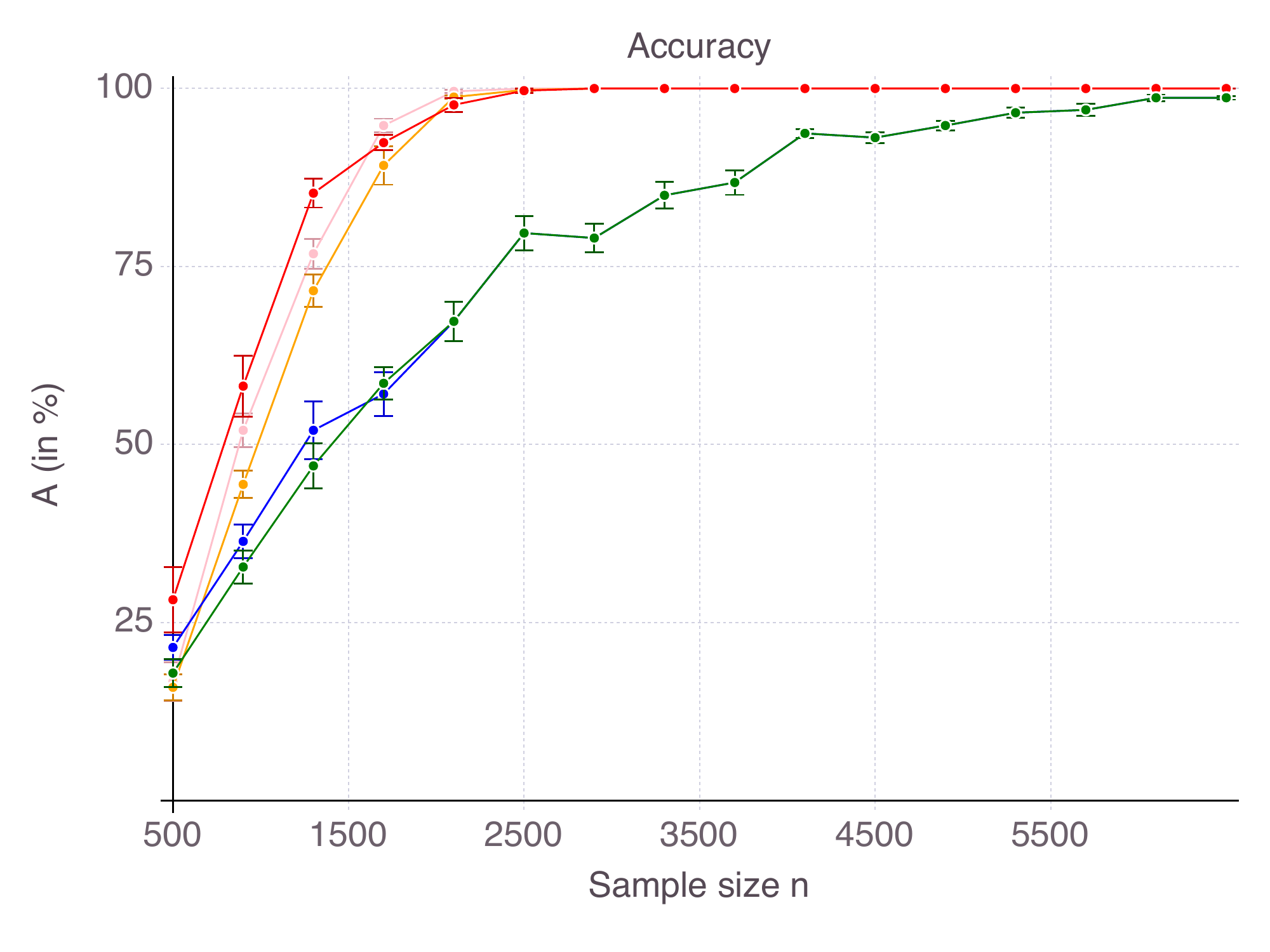}
	\caption{Low noise, low correlation}
\end{subfigure} %
~
\begin{subfigure}[h]{.45\linewidth}
	\centering
	\includegraphics[width=\linewidth]{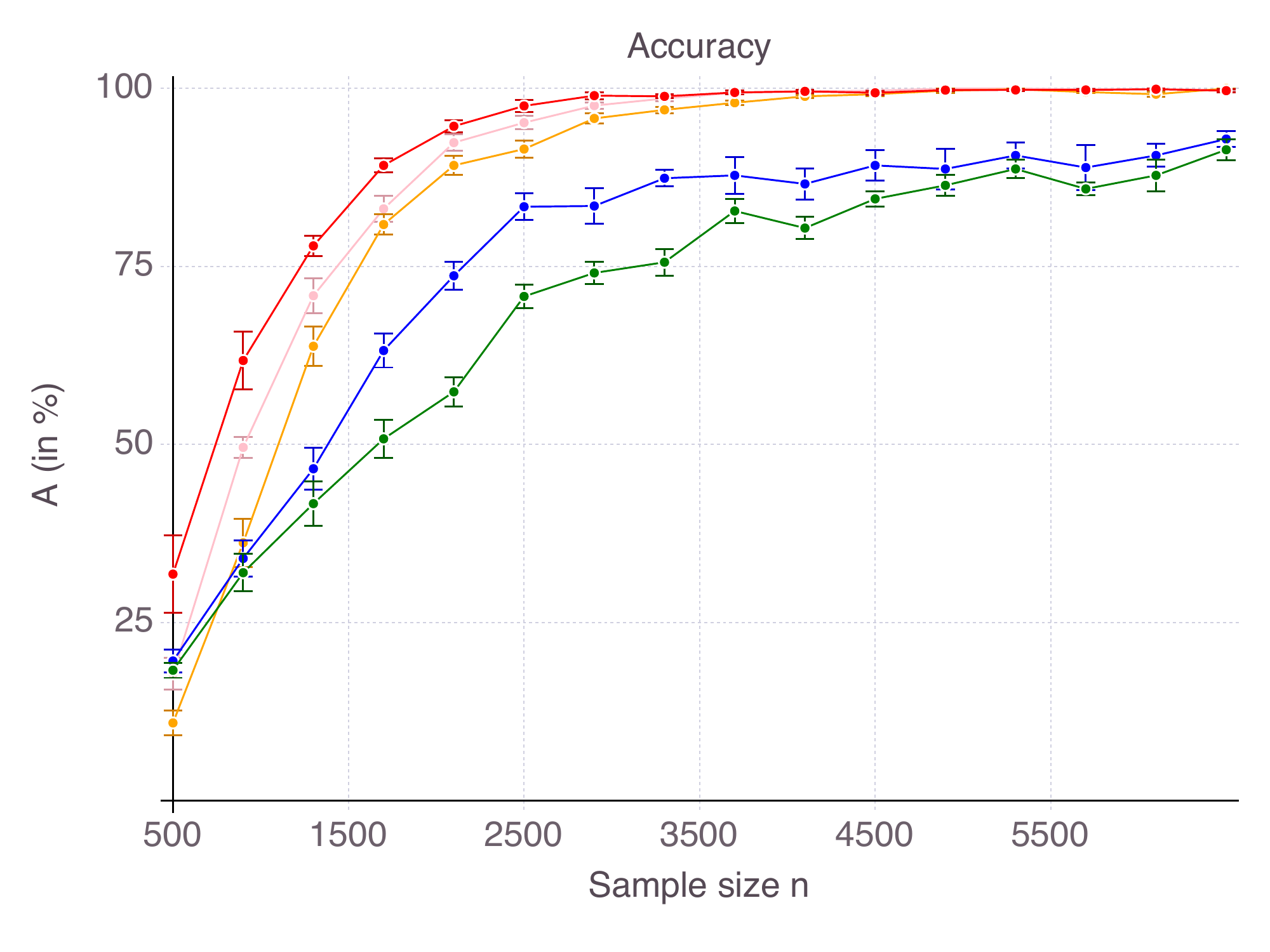}
	\caption{Low noise, high correlation}
\end{subfigure}

\begin{subfigure}[h]{.45\linewidth}
	\centering
	\includegraphics[width=\linewidth]{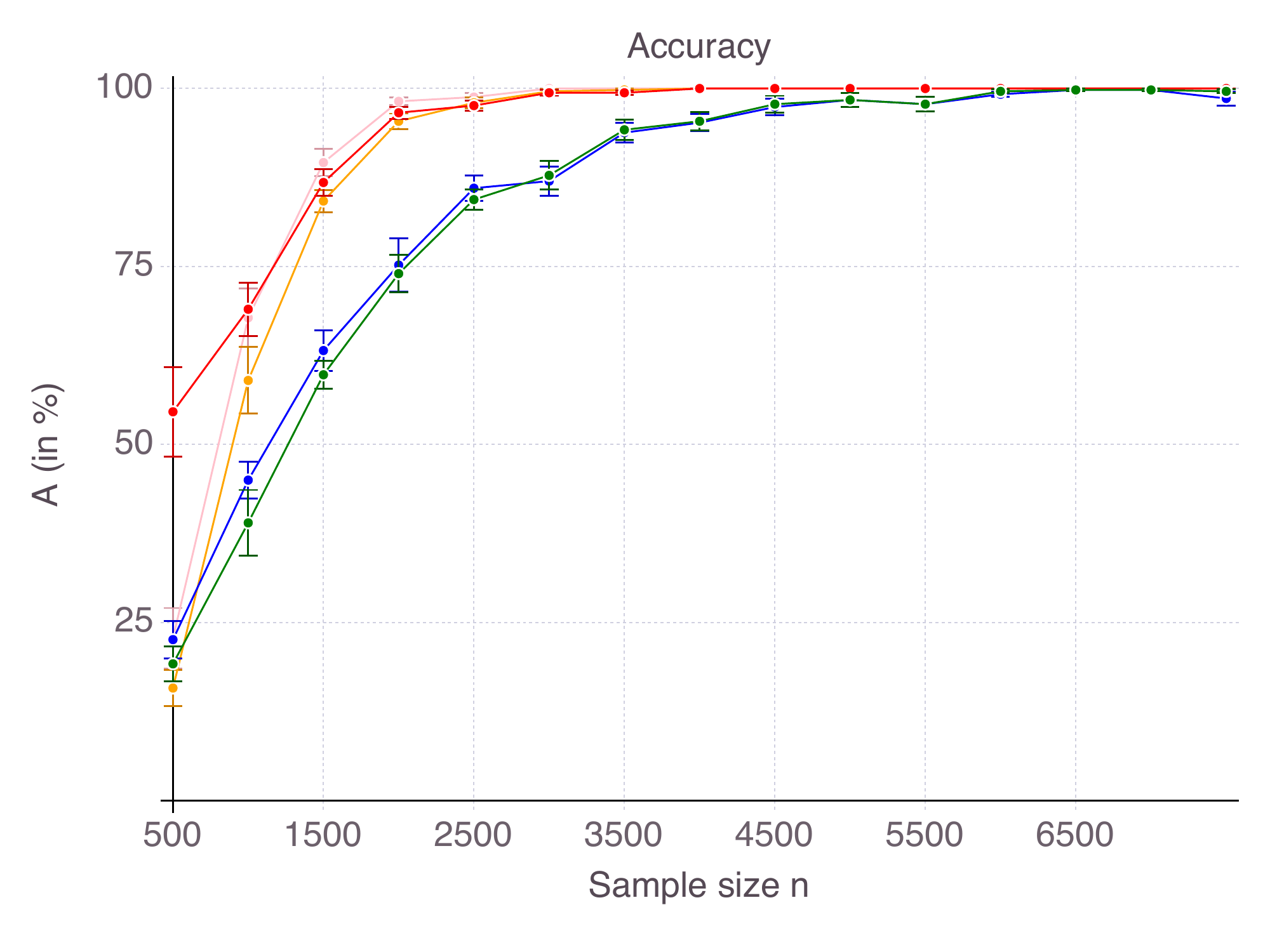}
	\caption{Medium noise, low correlation}
\end{subfigure} %
~
\begin{subfigure}[h]{.45\linewidth}
	\centering
	\includegraphics[width=\linewidth]{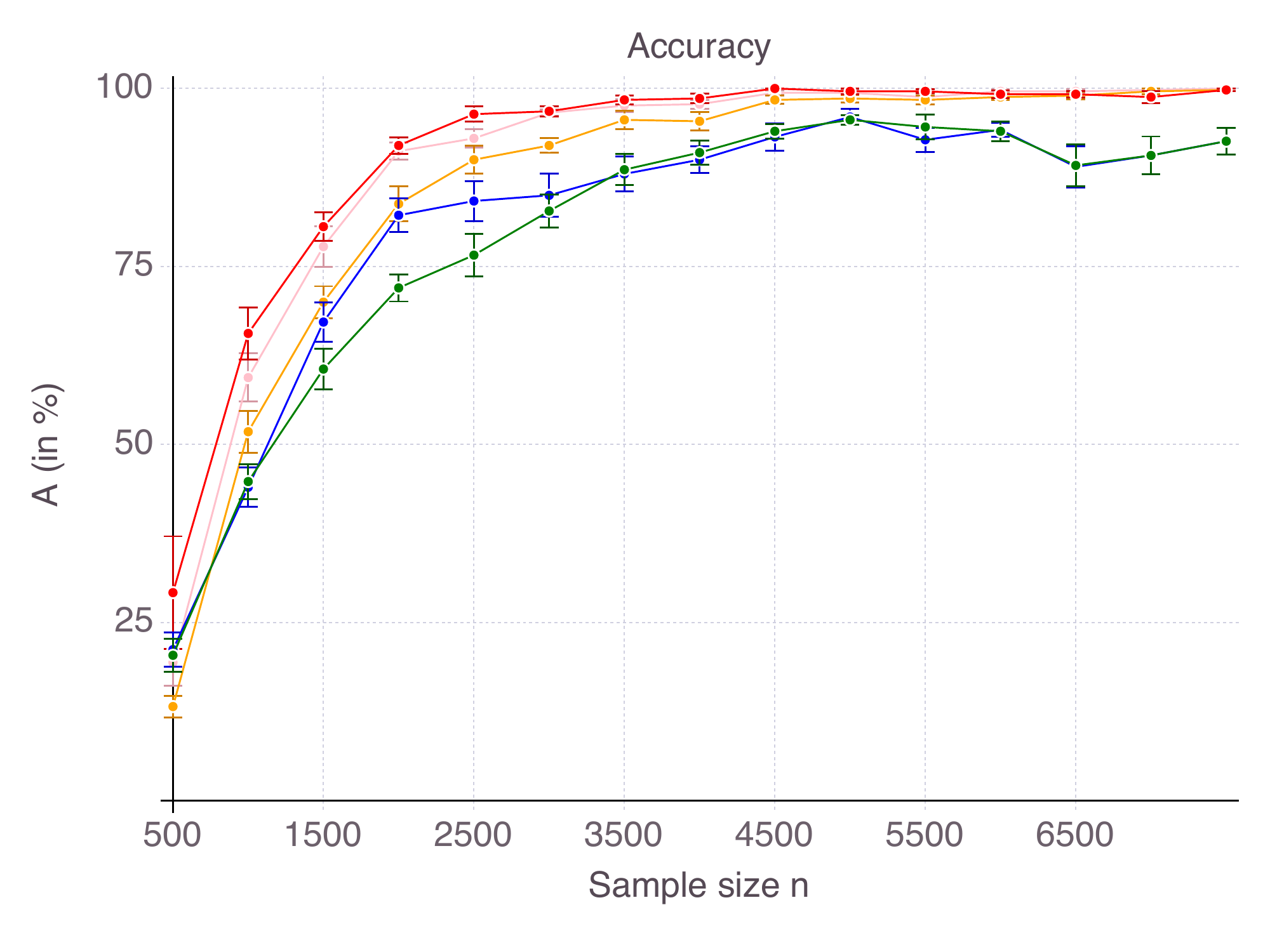}
	\caption{Medium noise, high correlation}
\end{subfigure}

\begin{subfigure}[h]{.45\linewidth}
	\centering
	\includegraphics[width=\linewidth]{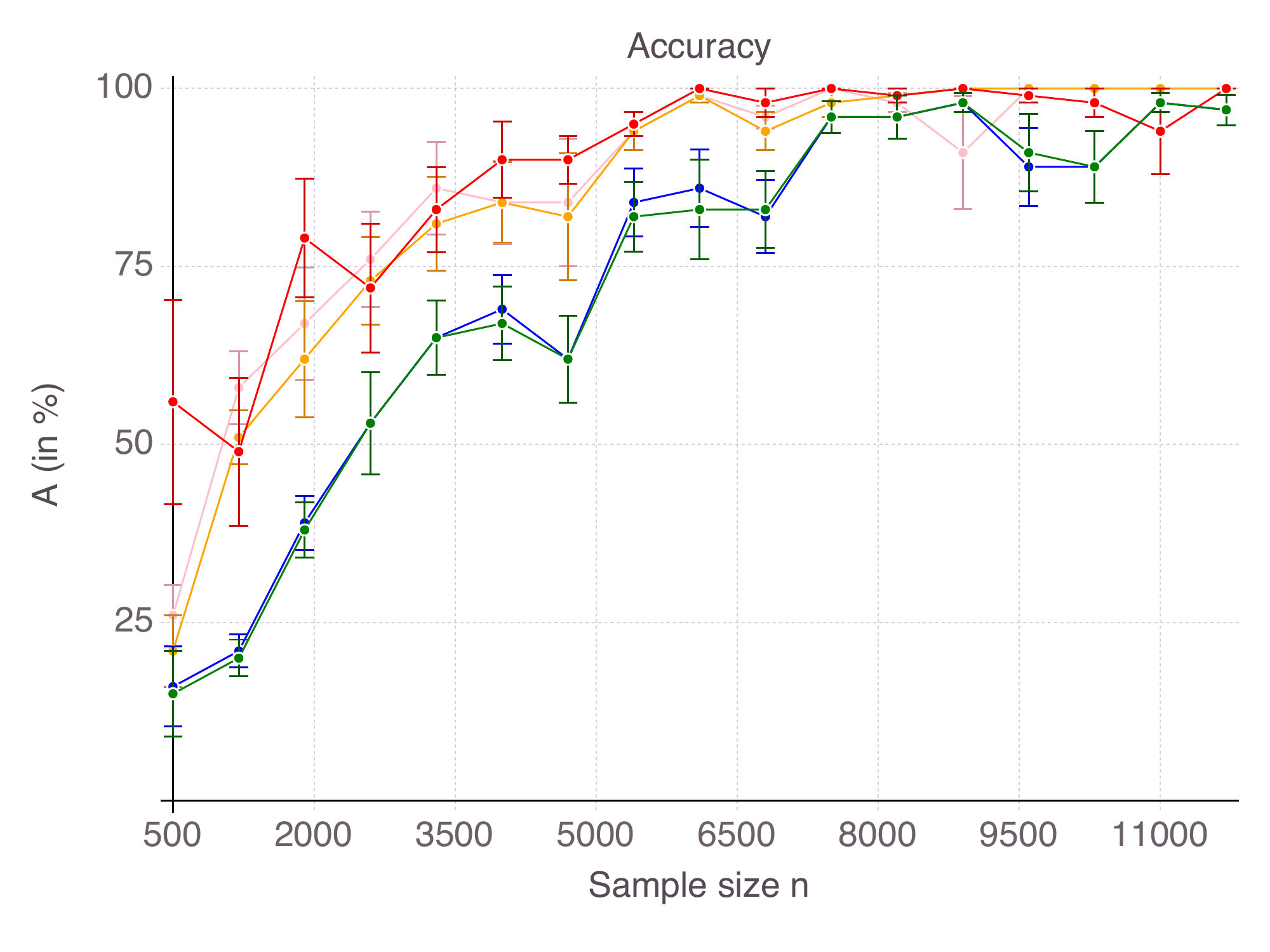}
	\caption{High noise, low correlation}
\end{subfigure} %
~
\begin{subfigure}[h]{.45\linewidth}
	\centering
	\includegraphics[width=\linewidth]{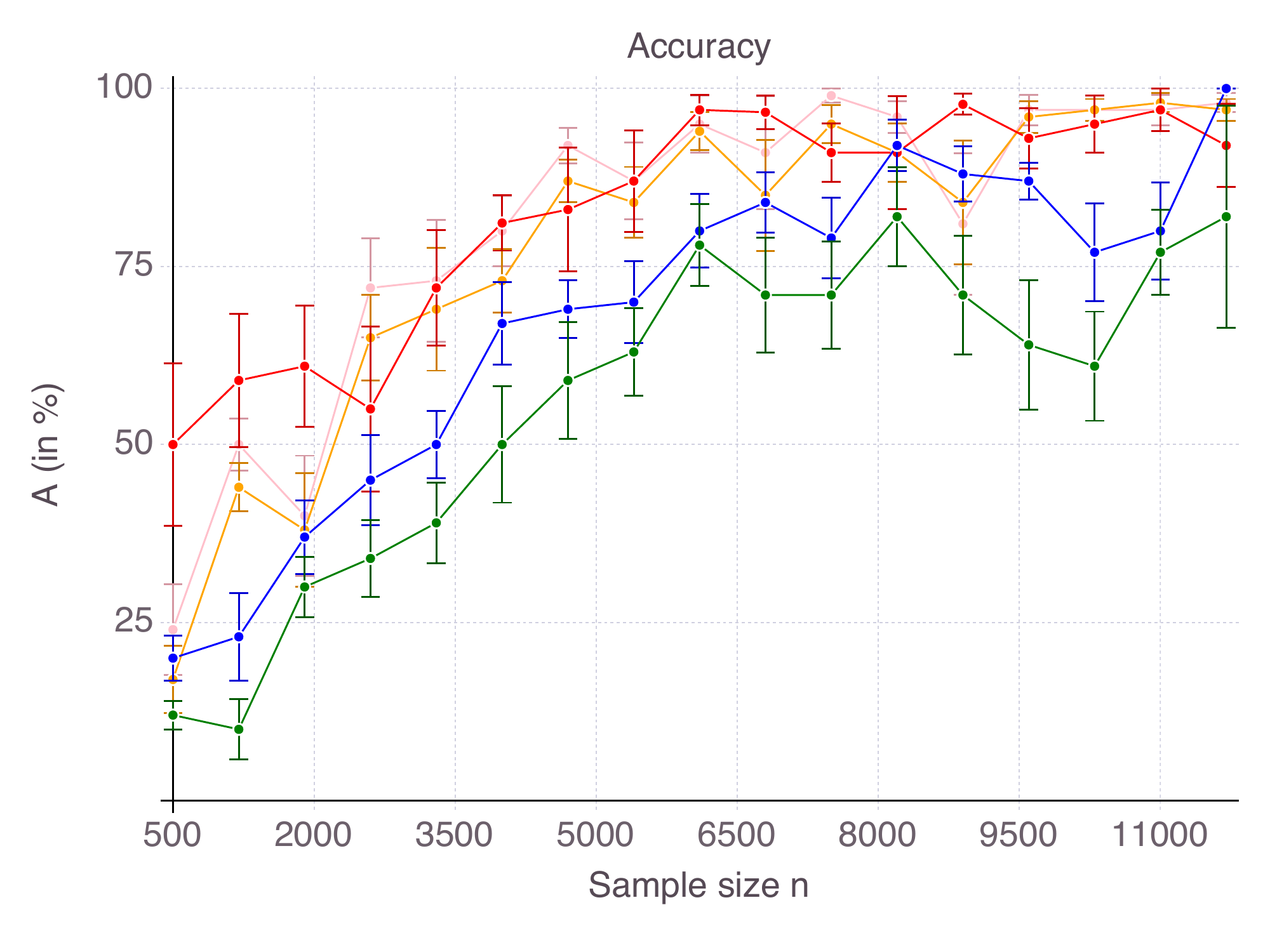}
	\caption{High noise, high correlation}
\end{subfigure}
\caption{Accuracy $A$ as $n$ increases, for the CIO (in green), SS (in blue with $T_{max}=200$) with Hinge loss, ENet (in red), MCP (in orange), SCAD (in pink) with logistic loss. We average results over $10$ data sets.}
\label{fig:ClassCV.A}
\end{figure*}
\begin{figure*}[p]
\centering
\begin{subfigure}[h]{.45\linewidth}
	\centering
	\includegraphics[width=\linewidth]{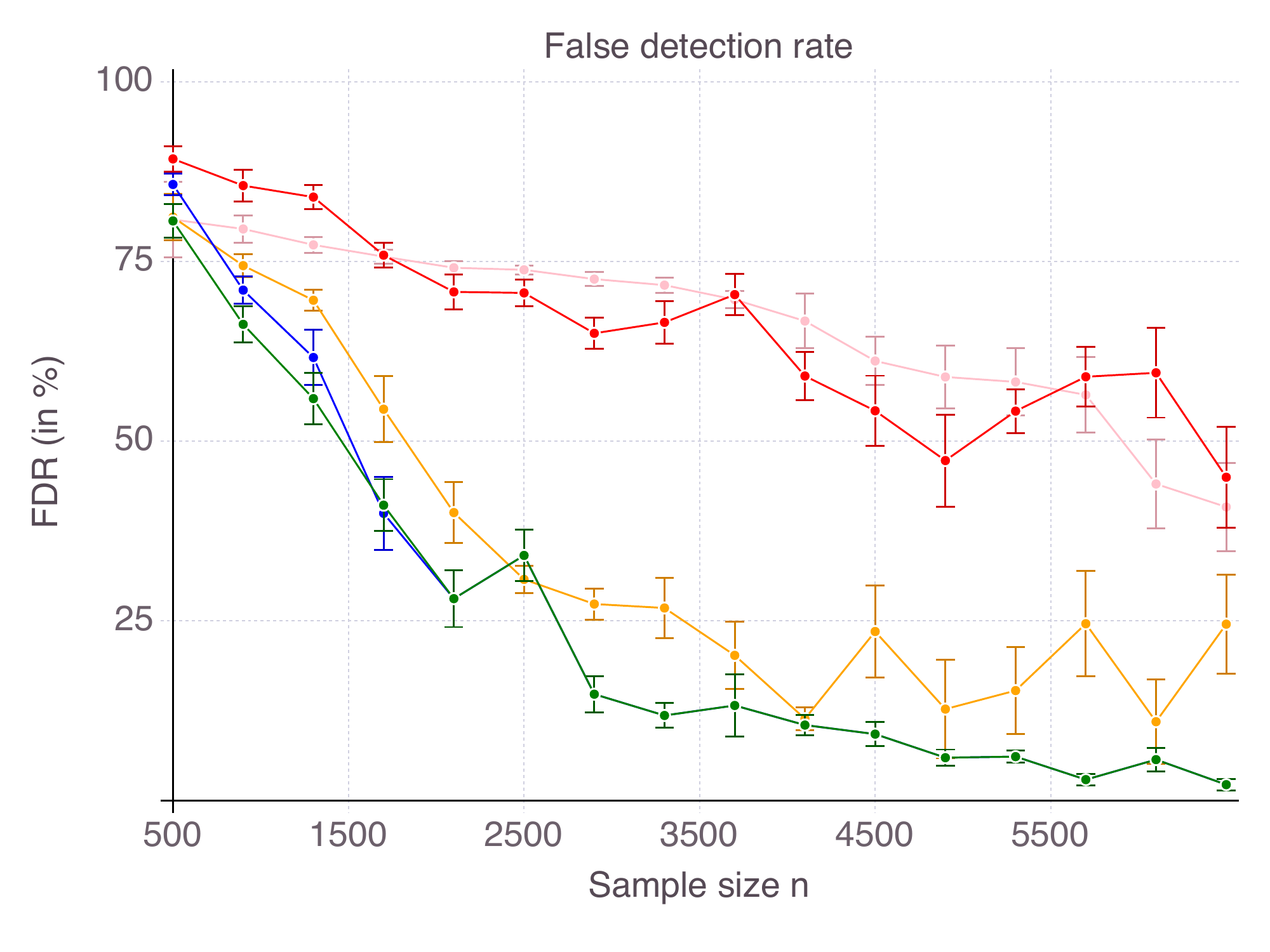}
	\caption{Low noise, low correlation}
\end{subfigure} %
~
\begin{subfigure}[h]{.45\linewidth}
	\centering
	\includegraphics[width=\linewidth]{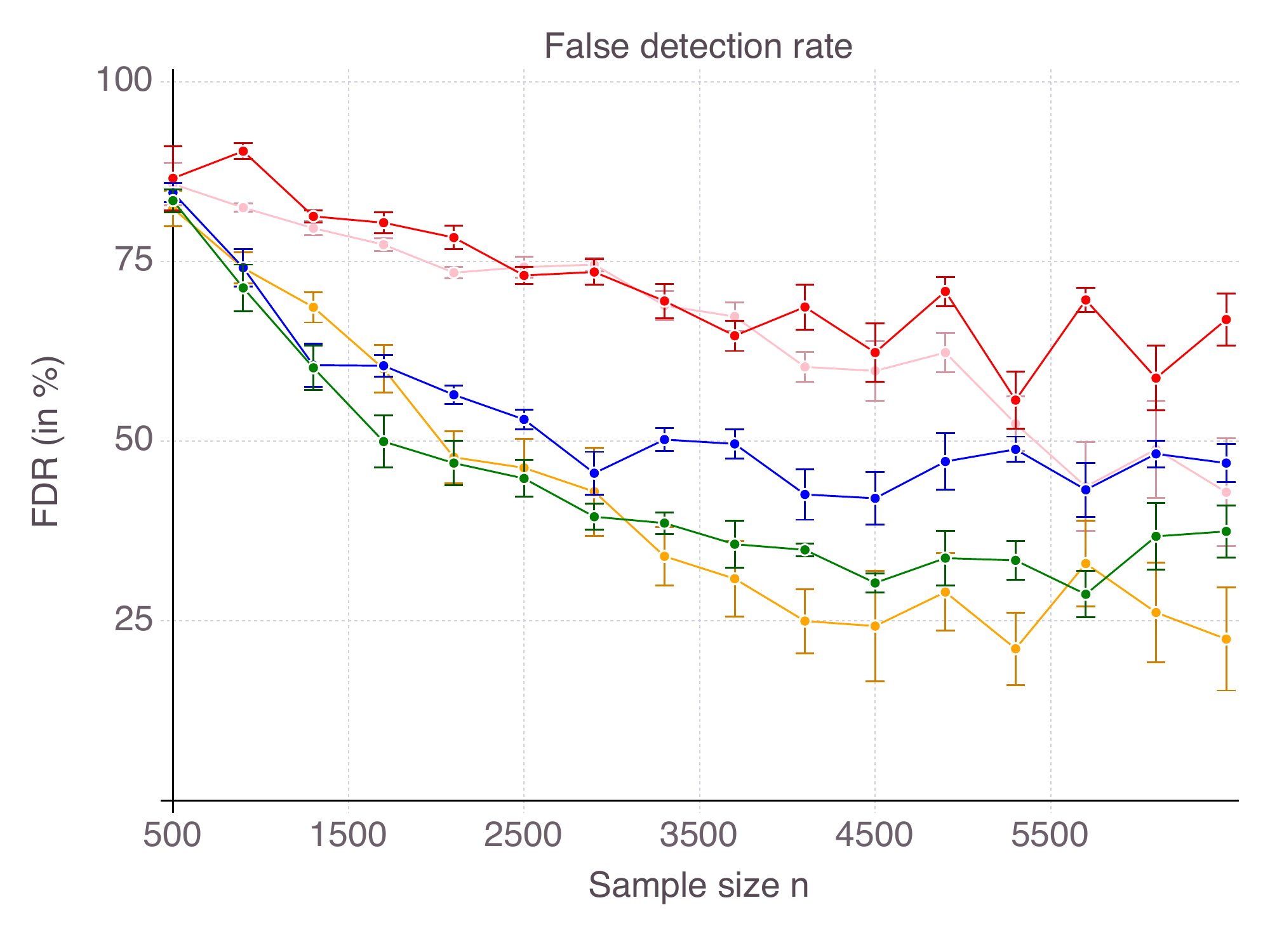}
	\caption{Low noise, high correlation}
\end{subfigure}

\begin{subfigure}[h]{.45\linewidth}
	\centering
	\includegraphics[width=\linewidth]{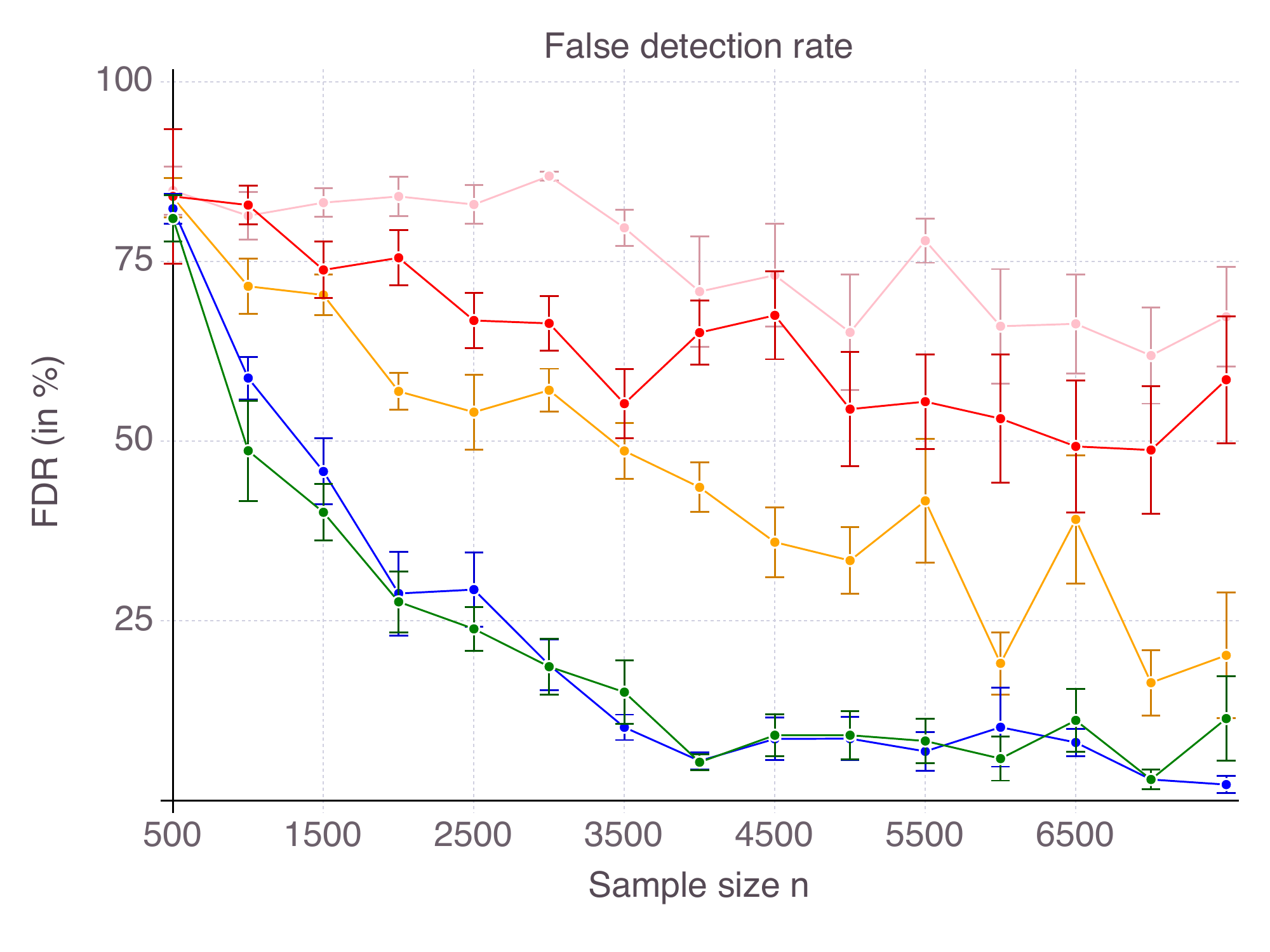}
	\caption{Medium noise, low correlation}
\end{subfigure} %
~
\begin{subfigure}[h]{.45\linewidth}
	\centering
	\includegraphics[width=\linewidth]{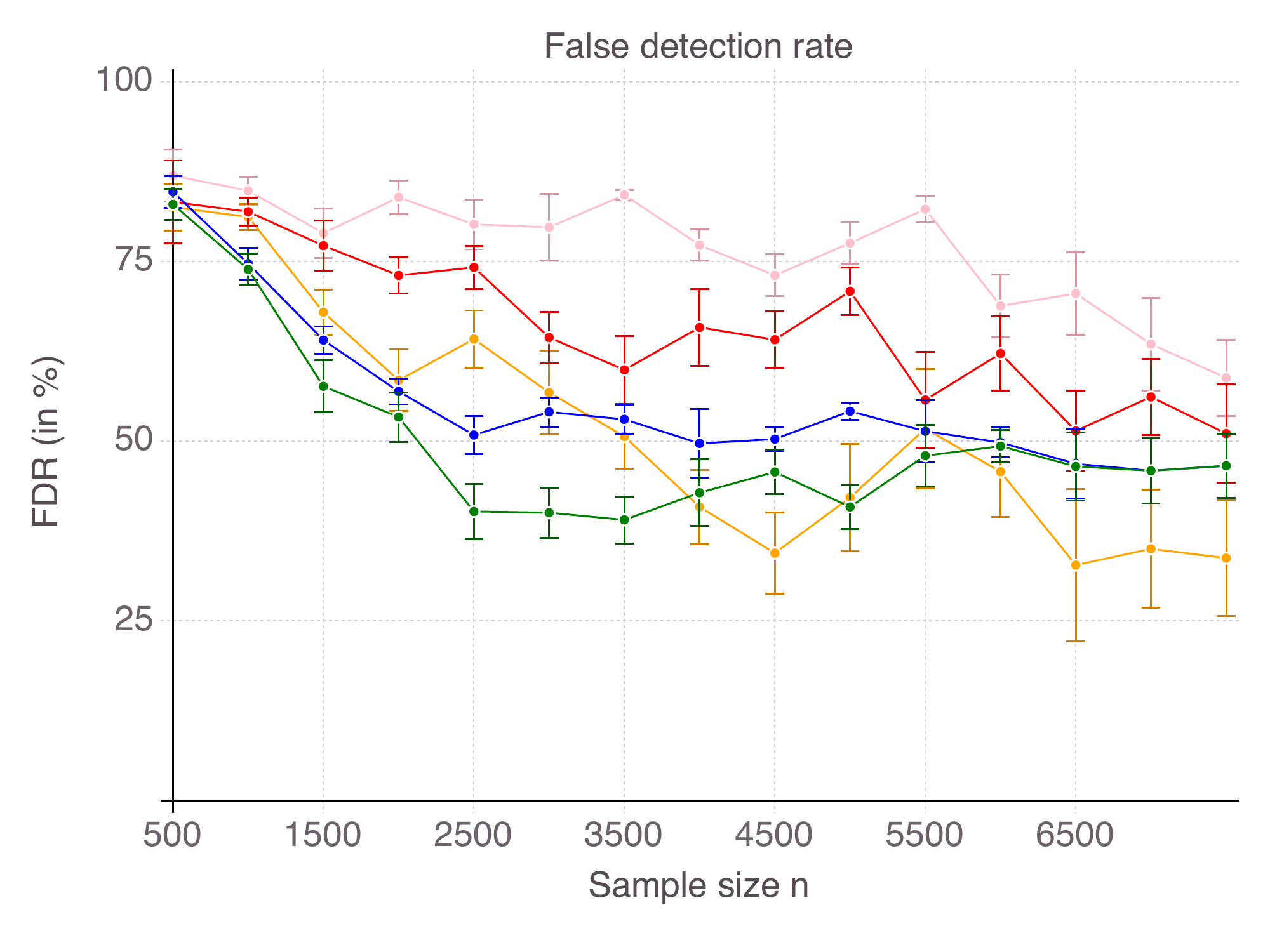}
	\caption{Medium noise, high correlation}
\end{subfigure}

\begin{subfigure}[h]{.45\linewidth}
	\centering
	\includegraphics[width=\linewidth]{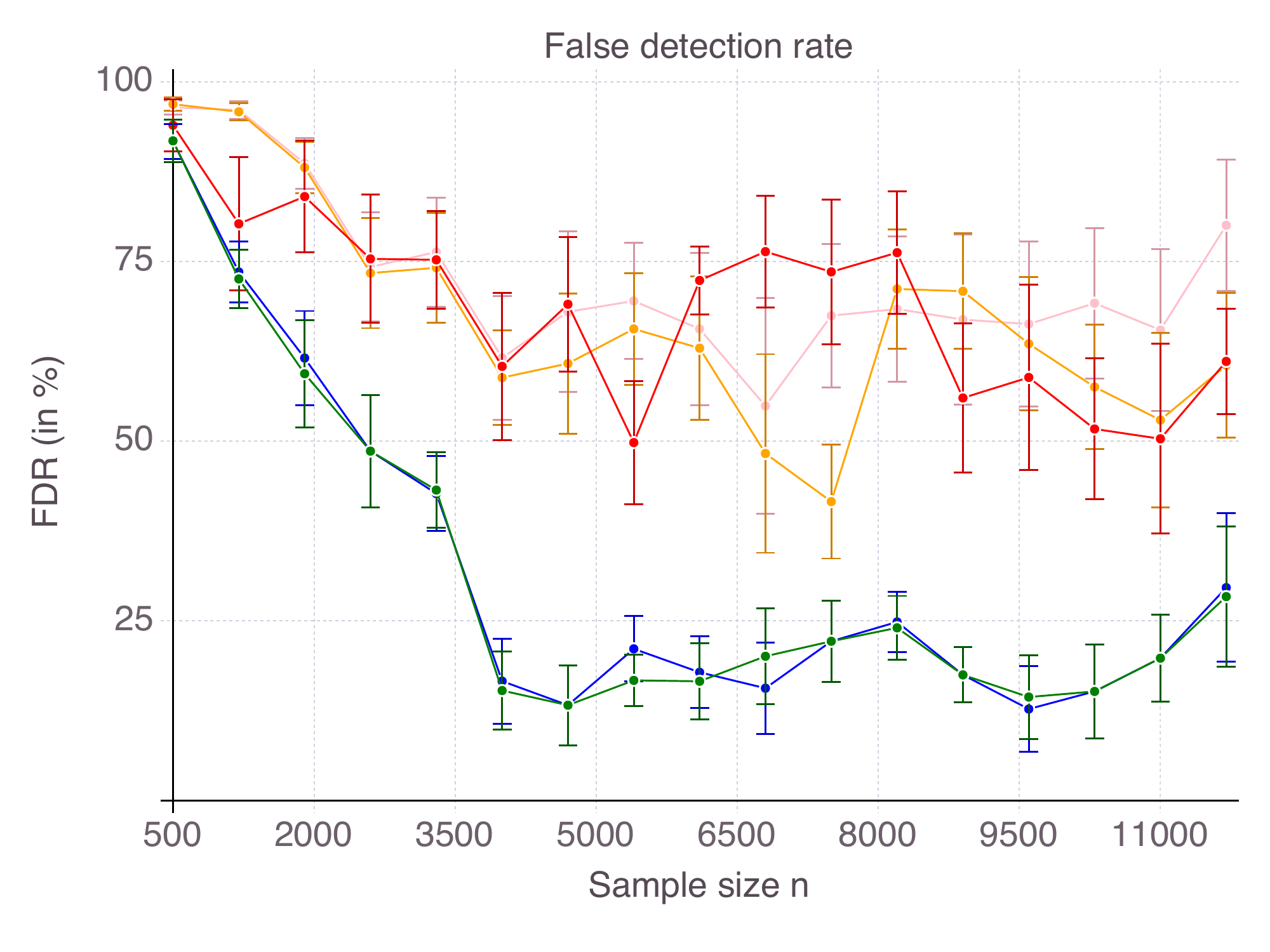}
	\caption{High noise, low correlation}
\end{subfigure} %
~
\begin{subfigure}[h]{.45\linewidth}
	\centering
	\includegraphics[width=\linewidth]{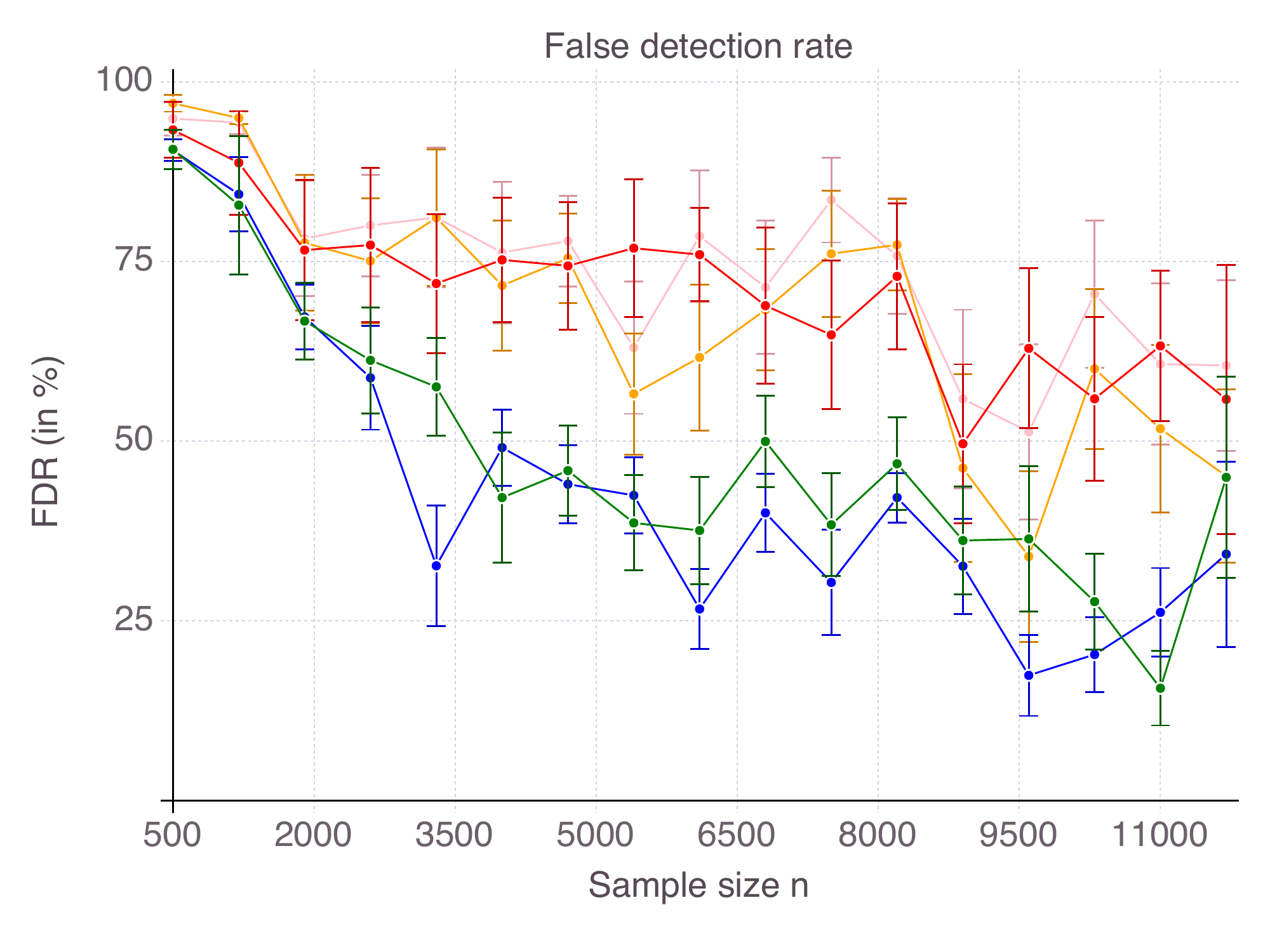}
	\caption{High noise, high correlation}
\end{subfigure}
\caption{False detection rate $FDR$ as $n$ increases, for the CIO (in green), SS (in blue with $T_{max}=200$) with Hinge loss, ENet (in red), MCP (in orange), SCAD (in pink) with logistic loss. We average results over $10$ data sets.}
\label{fig:ClassCV.FDR}
\end{figure*}

{ 
\subsection{Synthetic data \emph{not} satisfying mutual incoherence condition}
As for regression, we now consider the covariance matrix that does not satisfy mutual incoherence \citep{loh2017support}, in three regimes of noise (see Table \ref{tab:class.m1.regimes} p. \pageref{tab:class.m1.regimes}). 
\begin{table}[h]
\centering
\caption{Regimes of noise ($SNR$) considered in our experiments on regression}
\label{tab:class.m1.regimes}
\begin{tabular}{lc}
\toprule
Low noise &  {$\begin{aligned}  &{SNR}=6  \\ & p =10,000 \\ &k=100\end{aligned} $} \\
\midrule
Medium noise & {$\begin{aligned} &{SNR}=1 \\ & p =5,000 \\ &k=50 \end{aligned} $} \\
\midrule
High noise & {$\begin{aligned} &{SNR}=0.05 \\ & p =1,000 \\ &k=10 \end{aligned} $} \\
\bottomrule
\end{tabular}
\end{table}
}
{
%\subsubsection{Feature selection with a given support size} 
We consider the case when the cardinality $k$ of the support to be returned is given and equal to the true sparsity $k_{true}$. 

Results are shown on Figure \ref{fig:ClassHardFix} (p. \pageref{fig:ClassHardFix}) and corroborate our previous observations in the case of regression: the accuracy of ENet reaches a threshold strictly lower than $1$. Non-convex penalties MCP and SCAD, on the other hand, will see their accuracy converging to $1$, yet for a fixed $n$ there are not necessarily more accurate than ENet. Cardinality-constrained estimators CIO and SS dominate all other methods, with a clear edge for CIO. }

\begin{figure*}
\centering
\begin{subfigure}[t]{\linewidth}
	\centering
	\includegraphics[width=.45\linewidth]{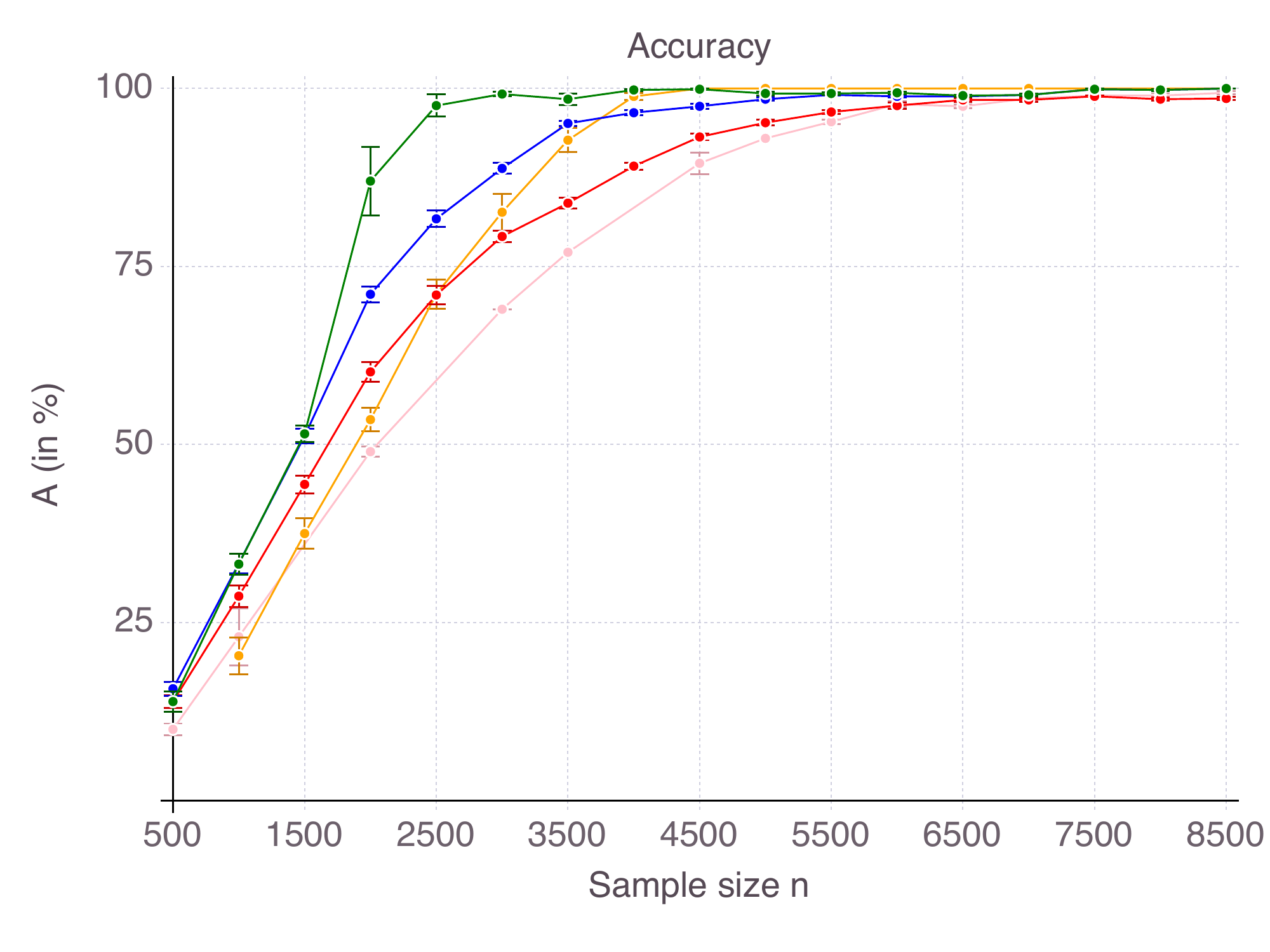}
	\includegraphics[width=.45\linewidth]{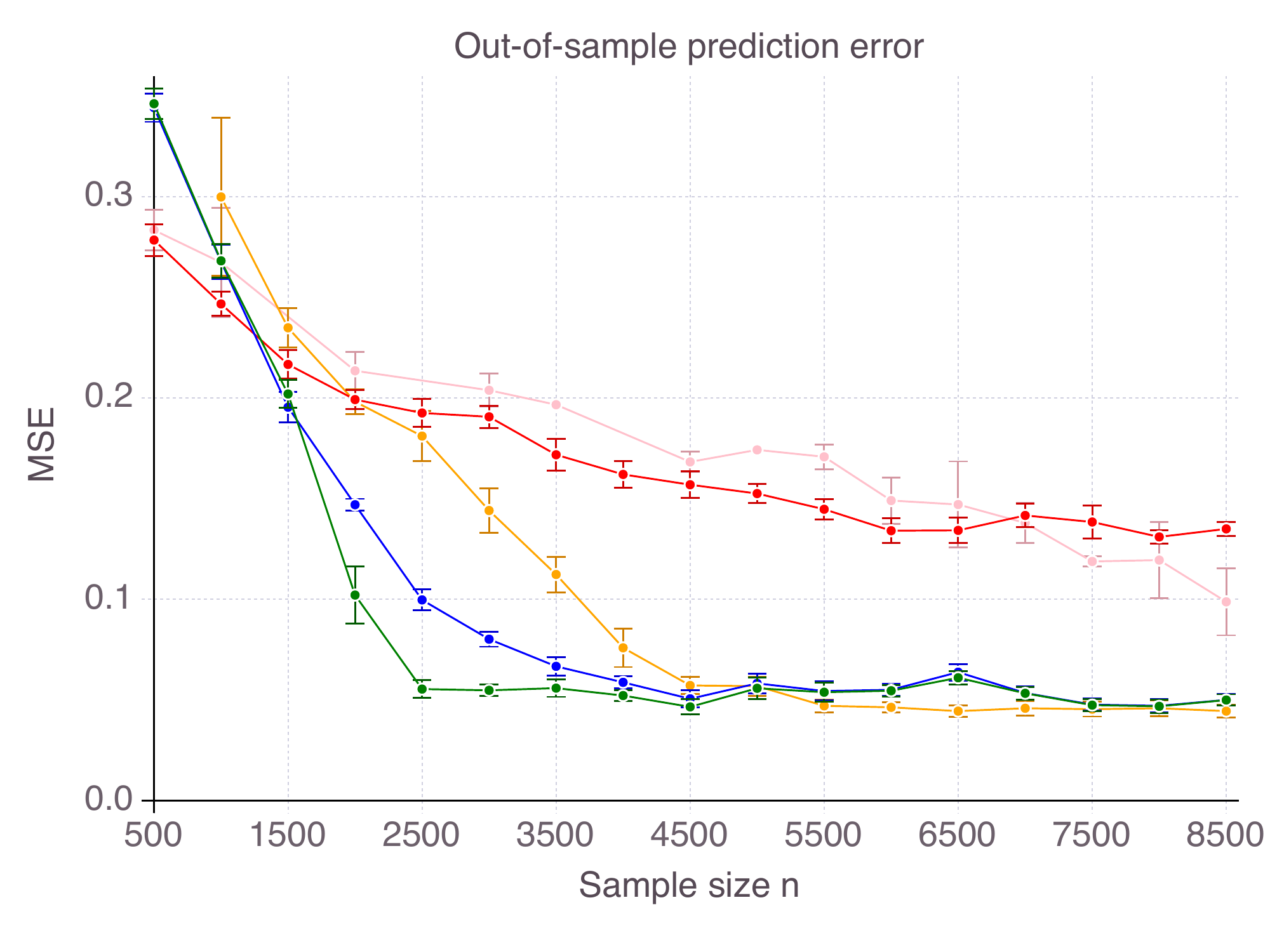}
	\caption{Low noise}
\end{subfigure} %

\begin{subfigure}[t]{\linewidth}
	\centering
	\includegraphics[width=.45\linewidth]{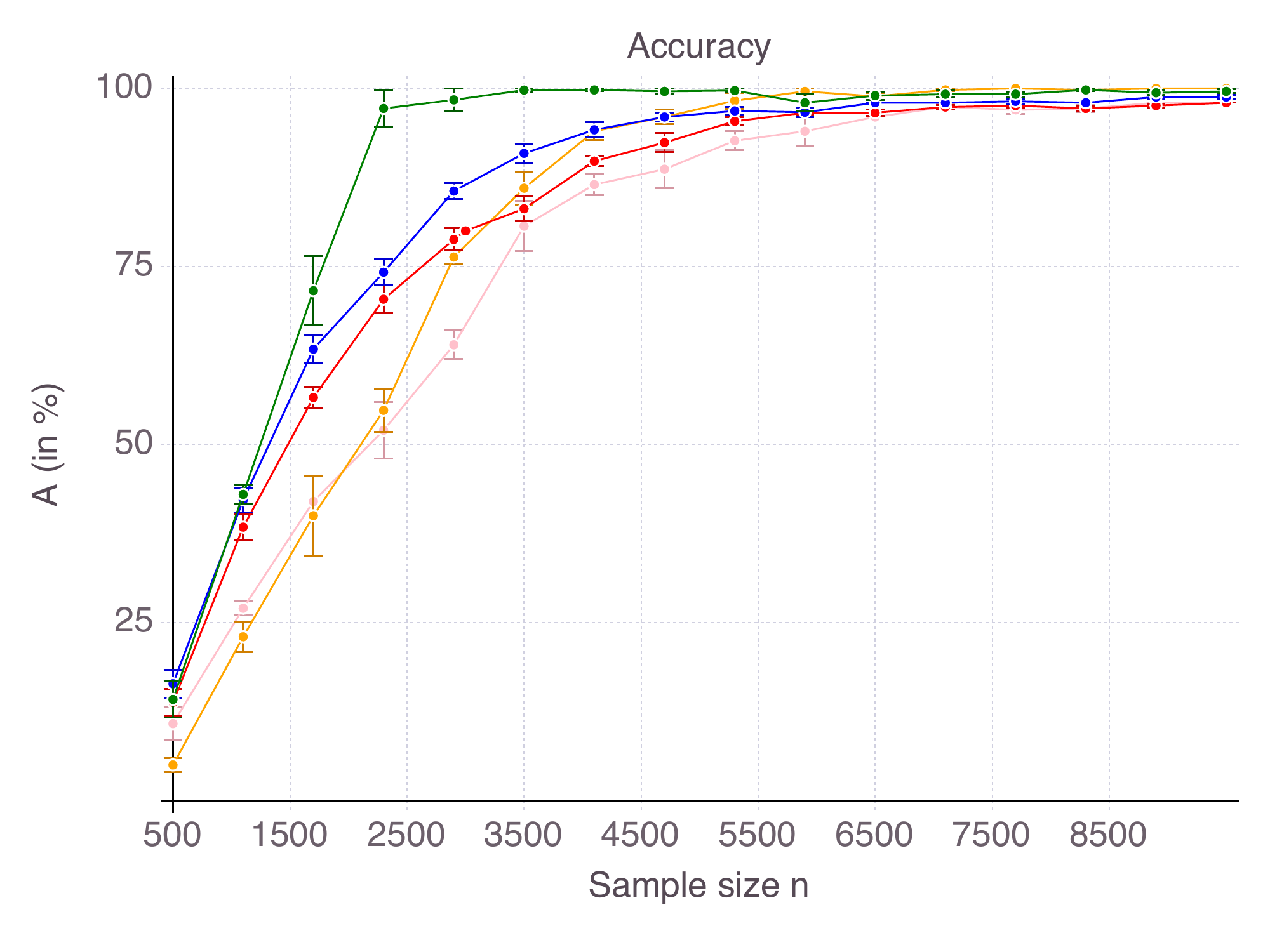}
	\includegraphics[width=.45\linewidth]{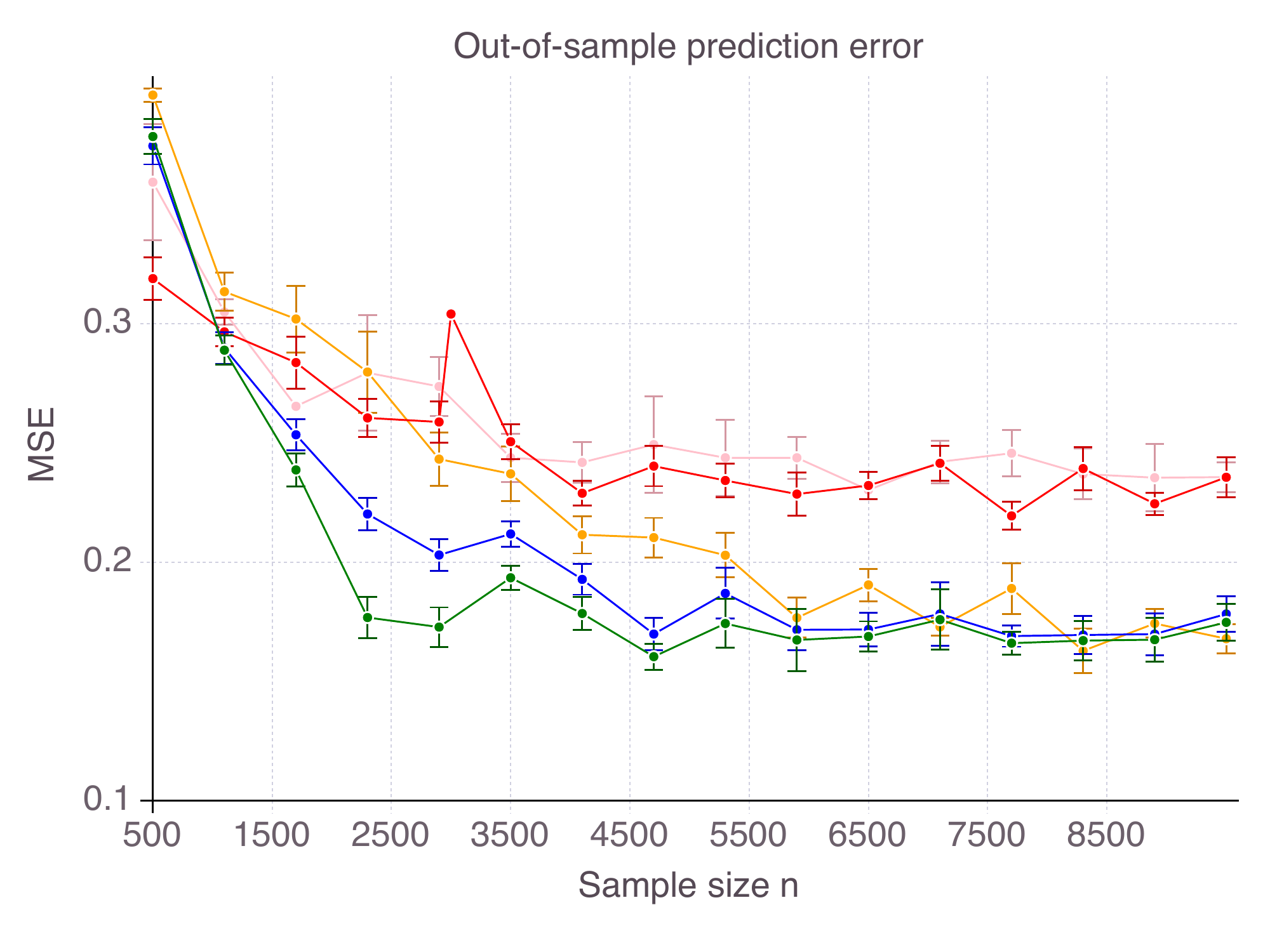}
	\caption{Medium noise}
\end{subfigure} %

\begin{subfigure}[t]{\linewidth}
	\centering
	\includegraphics[width=.45\linewidth]{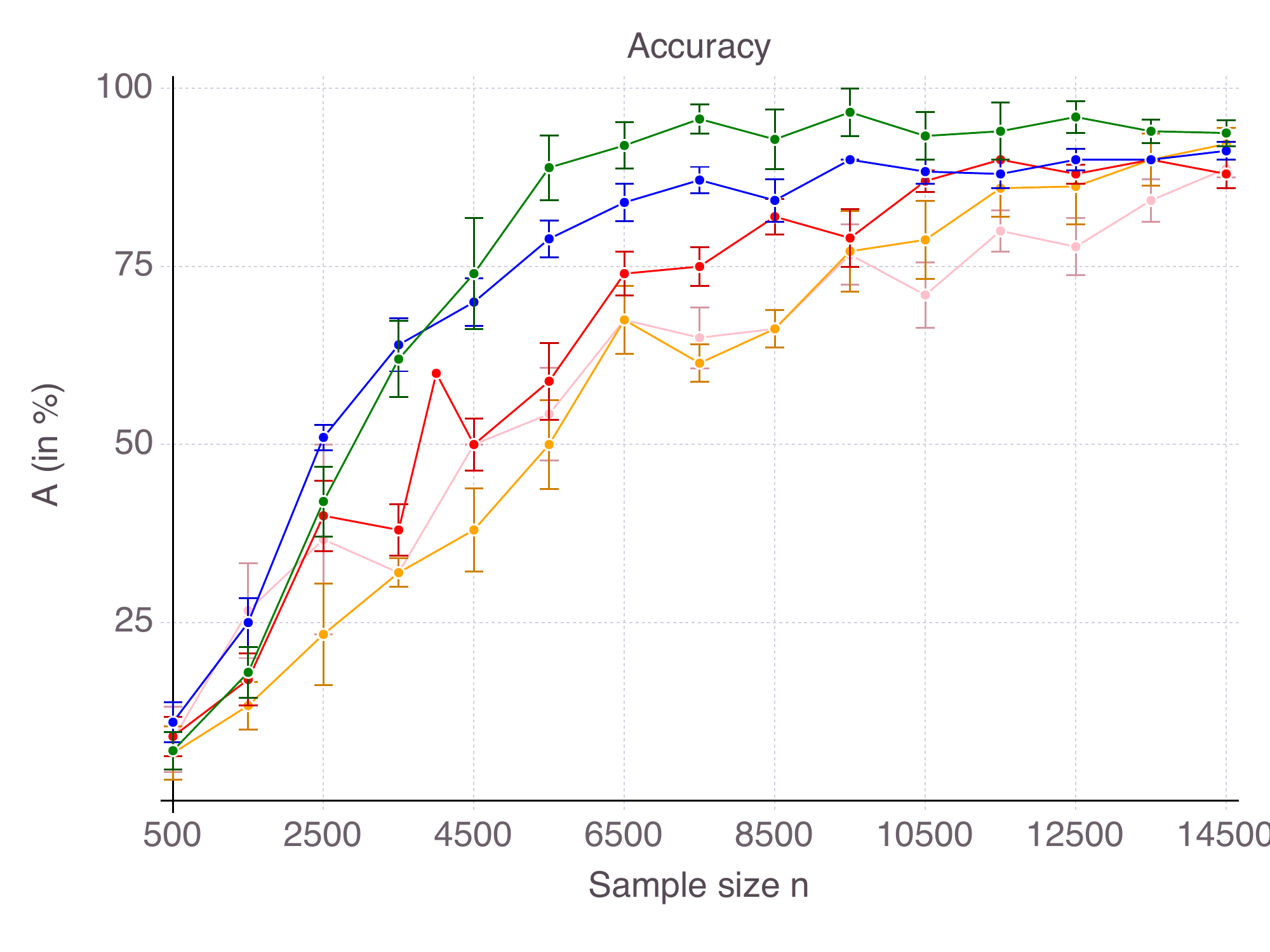}
	\includegraphics[width=.45\linewidth]{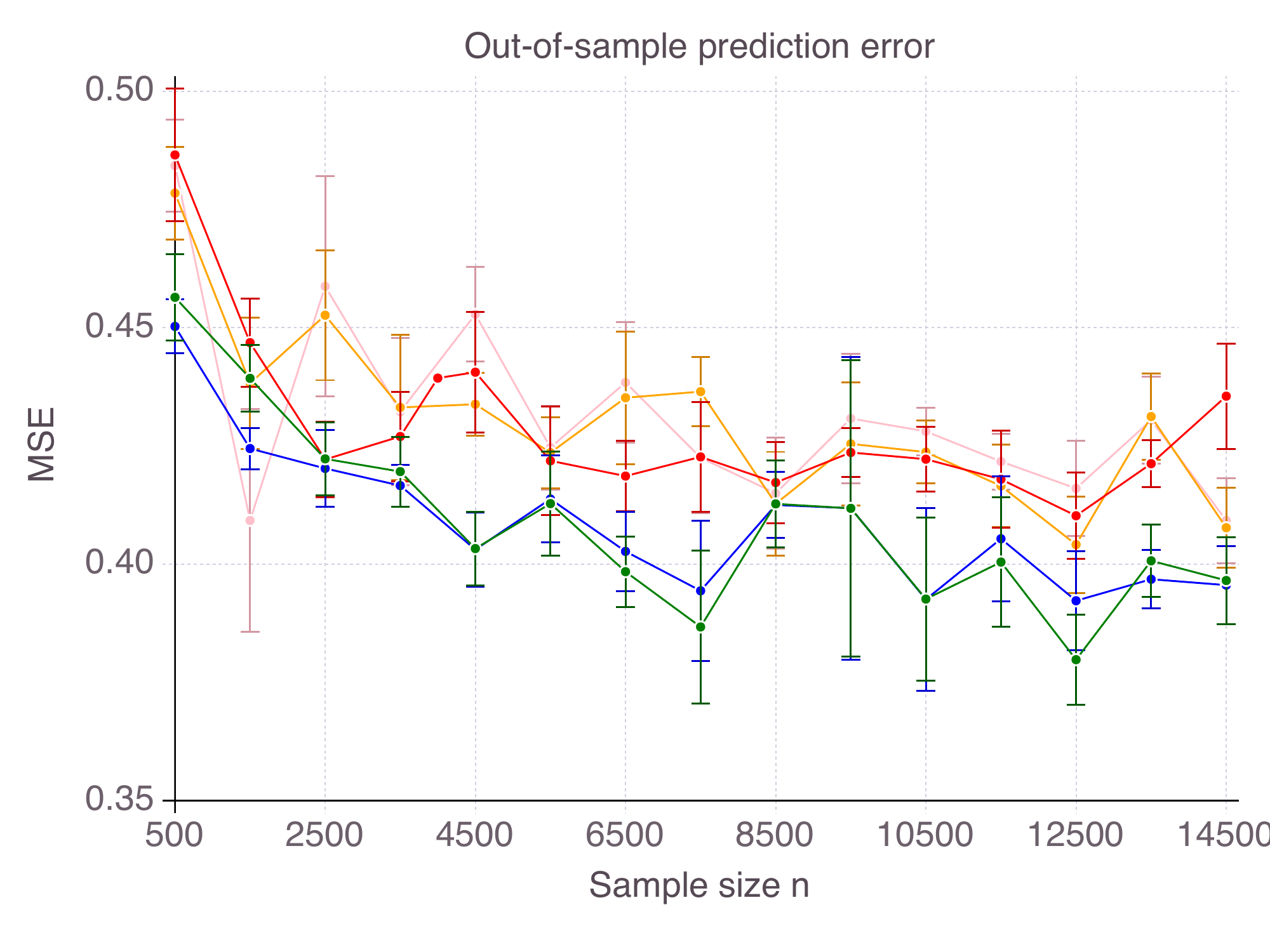}
	\caption{High noise}
\end{subfigure} %

\caption{Accuracy (left panel) and out-of-sample mean square error (right panel) as $n$ increases, for the CIO (in green), SS (in blue with $T_{max}=150$) with Hinge loss, ENet (in red), MCP (in orange), SCAD (in pink) with logistic loss. We average results over $10$ data sets with $n=500,...,3700$, $p=20,000$, $k_{true}=100$.}
\label{fig:ClassHardFix}
\end{figure*}

\subsection{Real-world design matrix $X$}
To illustrate the implications of feature selection on real-world applications, we re-consider the example from genomics we introduced in the previous section. Our $n = 1,145$ lung cancer patients naturally divide themselves into two groups, corresponding to different tumor types. In our sample for instance, $594$ patients ($51.9\%$) suffered from Adenocarcinoma while the remaining $551$ patients ($48.1\%$) suffered from Squamous Cell Carcinoma, making our data amenable to a binary classification task. Our goal is to identify a genetic signature for each tumor type, which only involves a limited number of genes, to better understand the disease or narrow the search for potential treatment for instance. We held $15\%$ of patients from both groups in a test set ($171$ patients). We used the remaining $974$ patients as a training and validation set. For each algorithm, we computed models with various degrees of sparsity on the training set, evaluated them on the validation set and took the most accurate model. Table \ref{tab:cancer} reports the induced sparsity $k^\star$ and out-of-sample accuracy in terms of $AUC$ on the test set for each models. Results correspond to the median values obtained over ten different training/validation splits. Compared to the regression cases, we now have a real-world design matrix $X$ and real world signals $Y$ as well. 

\begin{table}[h]
\centering
\caption{Median of the results on Lung Cancer data, over $10$ different training/validation set splits.}
\label{tab:cancer}
\begin{tabular}{lcc}
Method & Sparsity $k^\star$& Out-of-sample $AUC$\\
\midrule
Exact sparse & $87.5$ & $0.9798$ \\
Boolean relaxation & $65$ & $0.9821$ \\
Lasso & $114$ & $0.9814$ \\
ENet & $398.5$ & $0.9806$  \\
MCP & $39$ & $0.9741$ \\
SCAD & $79.5$ & $0.9752$ \\
\bottomrule
\end{tabular}
\end{table}

The first conclusion to be drawn from our results is that the all the feature selection methods considered in this paper, including the convex integer optimization formulation, scale to sizes encountered in real-world impactful applications. MCP and SCAD provide the sparsest classifiers with median sparsity of $39$ and $79.5$ respectively. At the same time, they achieve the lowest prediction accuracy, which questions the relevance of the genes selected by those methods. The $\ell_1$-based formulations, Lasso and ENet, reach an $AUC$ above $0.98$ with $114$ and $398.5$ genes respectively. In comparison, CIO and SS have similar accuracy while selecting respectively only $87$ and $65$ genes.
 
\section{Conclusion}
In this paper, we provided a unified treatment of methods for feature selection in statistics. We focused on five methods: the NP-hard cardinality-constrained formulation \eqref{eqn:reg_sparse}, its Boolean relaxation, $\ell_1$-regularized estimators (Lasso and Elastic-Net) and two non-convex penalties, namely the smoothly clipped absolute deviation (SCAD) and minimax concave penalty (MCP). 

In terms of statistical performance, we compared the methods based on two metrics: accuracy and false detection rate. A reasonable feature selection method should exhibit a two-fold convergence property: the accuracy and false detection rate should converge to $1$ and $0$ respectively, as the sample size increases. Jointly observed, these two properties ensure the method selects all true features and nothing but true features. 

Most of the literature on feature selection so far has focused solely on accuracy, from both a theoretical and empirical point of view. Indeed, on that matter, our observations match existing theoretical results. When mutual incoherence condition is satisfied, all five methods attain perfect accuracy, irrespective of the noise and correlation level. As soon as mutual incoherence fails to hold however, $\ell_1$-regularized estimators do not recover all true features, while non-convex formulations do. In all our experiments, Lasso-based formulations are the least accurate and sensitive to correlation between features, while cardinality-constrained formulation and the MCP non-convex estimator are the most accurate.  
{ As far as accuracy is concerned, we observe a clear distinction between convex and non-convex penalties, which echoes in our opinion the distinction between robustness and sparsity. Robustness is the property of an estimator to demonstrate good out-of-sample predictive performance, even in noisy settings, and convex regularization techniques are known to produce robust estimators \citep{bertsimas2009equivalence,xu2009robustness}. When it comes to sparsity however, non-convex penalties are theoretically more appealing, for they do not require stringent assumptions on the data \citep{loh2017support}. Because both properties should deserve attention, we believe - and observe - that the best approaches are those combining a convex and a non-convex component. The $\ell_1$-regularization on its own is not sufficient to produce reliably accurate feature selection.}

In real-world applications, false detection rate is at least as important as accuracy. We were able to observe a zero false detection rate for Lasso-based formulations only under the mutual incoherence condition and in low noise settings where $SNR \rightarrow \infty$. Otherwise, false detection rate remains strictly positive and stabilizes above $80\%$ (we observed this behavior as early as for $SNR\leqslant 25$).  False detection rate for non-convex penalties MCP and SCAD quickly drops as $n$ increases, but remains strictly positive (around $15-30\%$) and highly volatile, even for large sample sizes. The exact sparse formulation is the only method in our study which clearly outperforms all other methods, in all settings, and both for regression and classification, with the lowest false detection rate. Its Boolean relaxation demonstrates a similar behavior but less acute, especially in classification. In our opinion, such an observation speaks in favor of formulations that explicitly constrain the number of features instead of using regularization to induce sparsity. In practice, one could use Lasso or non-convex penalties as a good feature screening method, that is to discard irrelevant features and reduce the dimensionality of the problem. Nonetheless, in order to select relevant features only, we highly recommend the use of cardinality-constrained formulation or its relaxation, depending on available computing resources. Table \ref{tab:summary} (p. \pageref{tab:summary}) summarizes the advantages and disadvantages we observed for each method.

Those observations would be of little use if the best performing method were neither scalable nor available to practitioners. To that end, we released the code of a cutting-plane algorithm which solves the exact formulation \eqref{eqn:reg_sparse} in minutes for $n$ and $p$ in the $10,000$s. Though computationally expensive, this method requires only one to two orders of magnitude more time than other methods. We believe this additional computational cost is affordable in many applications and justified by the resulting improved statistical performance. For more time-sensitive applications, its Boolean relaxation provides a high-quality approximation. We proposed a scalable sub-gradient algorithm to solve it and released our code in the \verb|Julia| package \verb|SubsetSelection|, which can compete with the \verb|glmnet| implementation of Lasso in terms of computational time, while returning statistically more relevant features. With \verb|SubsetSelection|, we hope to bring to the community an easy-to-use and generic feature selection tool, which addresses deficiencies of $\ell_1$-penalization but scales to high-dimensional data sets.

\begin{table}[h]
\caption{Summary of the advantages and disadvantages of each method.}
\label{tab:summary}
\centering
\begin{tabular}{ll}
Method & Pros and Cons \\
\midrule
\multirow{2}{*}{Exact sparse} & $(+)$ Very good $A$/$FDR$. Convergence robust to noise/correlation. \\
& $(-)$ Commercial solver and extra computational time. \\
\midrule
\multirow{2}{*}{Boolean relaxation}  &$(+)$ Good $A$/$FDR$. Convergence robust to noise/correlation. \\
& $(-)$ Heuristic. \\
\midrule
\multirow{2}{*}{Lasso/ENet}  & $(+)$ Whole regularization path at no extra cost. \\
& $(-)$  $FDR$ very sensitive to noise. $A$ very sensitive to correlation. \\
\midrule
\multirow{2}{*}{MCP}  & $(+)$ Excellent $A$. \\
& $(-)$ Unstable $FDR$. \\
\midrule
\multirow{2}{*}{SCAD}  & $(+)$ Very good $A$. \\
& $(-)$ Unstable $FDR$. $A$ sensitive to correlation.   \\
\bottomrule
\end{tabular}
\end{table}

\appendix

\section{Extension of the sub-gradient algorithm and implementation}
\label{sec:subsetselection}
From a theoretical point of view, the boolean relaxation of the sparse learning problem is often tight, especially in the presence of randomness or noise in the data, so that feature selection can be expressed as a saddle point problem with continuous variables only. This min-max formulation is easier to solve numerically than the original mixed-integer optimization problem \eqref{eqn:reg_sparse} because the original combinatorial structure vanishes. Our proposed algorithm benefits from both tightness of the Boolean relaxation and integrality of optimal solutions. 
In this section, we present the \verb|Julia| package \verb|SubsetSelection|  which competes with the \verb|glmnet| implementation of Lasso in terms of computational time, while returning statistically more relevant features, in terms of accuracy but more significantly in terms of false discovery rate. With \verb|SubsetSelection|, we hope to bring to the community an easy-to-use and generic feature selection tool, which addresses deficiencies of $\ell_1$-penalization but scales just as well to high-dimensional data sets.

\subsection{Cardinality-constrained formulation}
In this paper, we have proposed a sub-gradient algorithm for solving the following boolean relaxation 
\begin{equation*}
\min_{s\in [ 0, 1 ]^p \: : \: \textbf{e}^\top   s \leqslant k} \: \max_{\alpha \in \mathbb{R}^n} \: f(\alpha, s),
\end{equation*}
where $$f(\alpha,s) := -\sum_{i=1}^n \hat \ell(y_i, \alpha_i) - \dfrac{\gamma}{2} \sum_{j=1}^p s_j \alpha^\top   X_j X_j^\top   \alpha $$ is a linear function in $s$ and concave function in $\alpha$. The function $f$ depends on the loss function $\ell$ through its Fenchel conjugate $\hat \ell$. In this paper, we mainly focused on OLS and logistic loss but the same methodology could be applied to any convex loss function. Indeed, the package \verb|SubsetSelection| supports all loss functions presented in Table \ref{tab:loss_extended}.
\begin{table}[h]
\centering
\caption{Supported loss functions $\ell$ and their corresponding Fenchel conjugates $\hat\ell$ as defined in Theorem \ref{cio_formulation}. The observed data $y\in \mathbb{R}$ for regression and $y \in \{ -1,1\}$ for classification. By convention, $\hat \ell$ equals $+ \infty$ outside of its domain. The binary entropy function is denoted as $H(x) := - x \log x - (1-x) \log(1-x)$.}
\label{tab:loss_extended}
\begin{tabular}{lll}
\toprule
Method &Loss $\ell(y, u)$ &Fenchel conjugate $\hat \ell(y,\alpha)$ \\
\midrule
Logistic loss & $\log \left( 1 + e^{-y u}\right)$ & $-H(-y \alpha) \mbox{ for } y \alpha \in [-1,0]$ \\
1-norm SVM - Hinge loss & $\max(0, 1-y u)$ & $y \alpha \mbox{ for } y \alpha \in [-1,0]$  \\
2-norm SVM & $\tfrac{1}{2} \max(0, 1-y u)^2$ & $\tfrac{1}{2} \alpha^2 + y \alpha \mbox{ for } y \alpha \leqslant 0$  \\ \midrule
Least Square Regression & $\tfrac{1}{2}(y-u)^2$ & $\tfrac{1}{2} \alpha^2 + y \alpha $\\
1-norm SVR & $(|y-u|-\varepsilon)_+$ & $ y \alpha + \varepsilon |\alpha|\mbox{ for } |\alpha| \leqslant 1 $ \\
2-norm SVR & $\tfrac{1}{2}(|y-u|-\varepsilon)^2_+$ & $\tfrac{1}{2} \alpha^2 + y \alpha +\varepsilon |\alpha|$ \\
\bottomrule
\end{tabular}
\end{table}

At each iteration, the algorithm updates the variable $\alpha$ by performing one step of projected gradient ascent with step size $\delta$, and updates the support $s$ by minimizing $f(\alpha, s)$ with respect to $s$, $\alpha$ being fixed. Since $s$ satisfies $s \in [0,1]^p, \; s^\top  \textbf{e} \leqslant k$, and $f$ is linear in $s$, this partial minimization boils down to sorting the components of $(-\alpha^\top   X_j X_j^\top   \alpha)_{j=1,...,p}$ and selecting the $k$ smallest. Pseudo-code is given in Algorithm \ref{subgradient}.

%The algorithm terminates after a fixed number of iterations $T_{max}$ (typically $T_{max} = 50, 100 \text{ or } 200$), which is standard for sub-gradient methods. One could also add an anti-cycling rule as a stopping criterion ($s_{T+1}=s_T$), so that the algorithm terminates as soon as the sequence $\{s_T\}$ reaches a stationary value. This criterion connects with the notion of stationarity for cardinality-constrained optimization problems \citep{beck2013sparsity}, which is a necessary but not sufficient condition condition for optimality in our case. 
%
%In terms of computational cost, updating $\alpha$ requires $O\left(n \|s \|_0 \right)$ operations for computing the sub-gradient $\nabla_\alpha f(\alpha_T, s_T)$ plus at most $O\left(n \right)$ operations for the simple projection on the feasible domain. The most time-consuming step in Algorithm \ref{subgradient} is updating $s$ which requires on average $O\left(n p + p \log p \right)$ operations. 

\subsection{Scalability}
As experiments in Sections \ref{sec:regression} and \ref{sec:classification} demonstrated, our proposed algorithm and implementation provides an excellent approximation for the solution of the discrete optimization problem \eqref{eqn:reg_sparse}, while terminating in times comparable with coordinate descent for Lasso estimators for low values of $\gamma$. Table \ref{tab:time} reports some computational time of \verb|SubsetSelection| for data sets with various values of $n$, $p$ and $k$. The algorithm scales to data sets with $(n, p)=(10^5,10^5)$s or  $(10^4, 10^6)$s within a few minutes. More comparison on computational time are given in Appendix \ref{sec:regression.supp.time}.
\begin{table}[h]
\centering
\caption{Computational time of SS with $T_{max} = 200$ for data sets with large values of $n$ and $p$, $\gamma = 2 \, p /k /\max_i \|x_i \|^2 /n$. Due to the dimensionality of the data, computations where performed on $1$ CPU with $250$GB of memory. We provide the average computational time (and the standard deviation) over $10$ experiments.}
\label{tab:time}
\begin{tabular}{lrrrr}
\toprule
Loss function $\ell$ & $n$ & $p$ & $k$ & time (in s) \\
\midrule
Least Squares & $10,000$ & $100,000$ & $100$ & $12.90 \, (0.45)$ \\
Least Squares & $50,000$ & $100,000$ & $100$ & $28.45 \, (1.83)$ \\
Least Squares & $10,000$ & $500,000$ & $100$ & $33.00 \, (1.86)$ \\
Least Squares & $10,000$ & $500,000$ & $500$ & $43.00 \, (0.54)$ \\
\midrule
Hinge Loss & $10,000$ & $100,000$ & $100$ & $37.26\, (0.14)$ \\
Hinge Loss & $50,000$ & $100,000$ & $100$ & $160.73\, (0.28)$ \\
Hinge Loss & $10,000$ & $500,000$ & $100$ & $157.09\, (1.18) $ \\
Hinge Loss & $10,000$ & $500,000$ & $500$ & $59.74\, (0.08)$ \\
\bottomrule
\end{tabular}
\end{table}

%\subsubsection{Final averaging scheme}
%From a theoretical point of view, the final averaging step $\hat \alpha_T = \nicefrac{1}{T} \sum_t \alpha_t$ in Algorithm \ref{subgradient} is needed to ensure convergence even when the sequence $\{ \alpha_t\}$ should not converge. Figure \ref{fig:NoAvg} represents the accuracy of our algorithm when the last iterate for $\alpha_T$ is returned instead of the average over past iterates. Performance of the algorithm without the averaging scheme demonstrates a sharp transition around $n=1,000$. For smaller values of $n$, the returned solution is even worse than Lasso, whereas it achieves a perfect support recovery for $n$ large enough. To that extent, averaging makes the performance of the algorithm more reliable. 
%\begin{figure*}[h]
%\centering
%\includegraphics[width=.7\linewidth]{RegFixNoAveraging.pdf}
%\caption{Accuracy as $n$ increases, for Lasso (in red) and SS with no averaging scheme (in blue, for different  values of $T_{max}$)  with OLS loss. We average values  over $10$ data sets with $n=500,...,1900$, $p=20,000$, $k_{true}=100$, $\rho = .5$, $SNR=100$.}
%\label{fig:NoAvg}
%\end{figure*}

\subsection{Extension to cardinality-penalized formulation}
Our proposed approach naturally extends to cardinality-penalized estimators as well, where the $0$-{\blue pseudo}norm is added as a penalization term instead of an explicit constraint. Let us consider the $\ell_2$-regularized optimization problem
\begin{equation}
\label{eqn:reg_bic}
\min_{w \in \mathbb{R}^p} \: \sum_{i=1}^n \ell(y_i, w^\top   x_i) + \dfrac{1}{2 \gamma} \Vert w \Vert_2^2 + \lambda \Vert w \Vert_0,
\end{equation}
which corresponds to the estimator \eqref{eqn:bic} in the unregularized limit $\gamma \rightarrow +\infty$. Similarly introducing a binary variable $s$ encoding the support of $w$, we get that the previous problem \eqref{eqn:reg_bic} is equivalent to 
\begin{equation}
\min_{s\in \{0,1\}^p} \: \max_{\alpha \in \mathbb{R}^n} \: f(\alpha, s) + \lambda \textbf{e}^\top   s.
\end{equation}
The new saddle point function $s \mapsto f(\alpha, s) + \lambda \textbf{e}^\top   s$ is still linear in $s$ and concave in $\alpha$. As before, its boolean relaxation
\begin{equation}
\label{eqn:relax_bic}
\min_{s\in [0,1]^p} \: \max_{\alpha \in \mathbb{R}^n} \: f(\alpha, s) + \lambda \textbf{e}^\top   s
\end{equation}
is tight if the minimizer of $f(\alpha, s) + \lambda \textbf{e}^\top   s$ with respect to $s$ is unique. More precisely, we prove an almost verbatim analogue of Theorem \ref{tightness}.
\begin{thm}
The boolean relaxation \eqref{eqn:relax_bic} is tight if there exists a saddle point $(\bar \alpha, \bar s)$ such that the vector $(\lambda - \tfrac{\gamma}{2} \bar \alpha^\top   X_j X_j^\top   \bar \alpha)_{j=1,...,p}$ has non-zero entries. 
\end{thm}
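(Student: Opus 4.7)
The plan is to follow the same blueprint as Theorem \ref{tightness} (Proposition 1 of Pilanci et al.), adapting it to the penalized setting where the cardinality constraint $s^\top \textbf{e} \leqslant k$ is replaced by the penalty $\lambda \textbf{e}^\top s$. First, I would observe that the saddle function $(\alpha, s) \mapsto f(\alpha, s) + \lambda \textbf{e}^\top s$ remains linear in $s$ and concave in $\alpha$, that $[0,1]^p$ is convex and compact, and hence Sion's minimax theorem applies. In particular, one can interchange the min and the max, saddle points $(\bar \alpha, \bar s)$ exist under the hypothesis, and any such saddle point satisfies
\begin{equation*}
\bar \alpha \in \arg\max_{\alpha \in \mathbb{R}^n} \, f(\alpha, \bar s) + \lambda \textbf{e}^\top \bar s, \quad \bar s \in \arg\min_{s \in [0,1]^p} \, f(\bar \alpha, s) + \lambda \textbf{e}^\top s.
\end{equation*}

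Next, I would expand the partial minimization problem in $s$. Since $f(\bar \alpha, s) = -\sum_{i} \hat \ell(y_i, \bar \alpha_i) - \tfrac{\gamma}{2} \sum_j s_j \bar \alpha^\top X_j X_j^\top \bar \alpha$, the $s$-subproblem is a separable linear program on the box $[0,1]^p$ with coefficients $c_j := \lambda - \tfrac{\gamma}{2} \bar \alpha^\top X_j X_j^\top \bar \alpha$. Its solution is characterized componentwise: $\bar s_j = 1$ when $c_j < 0$, $\bar s_j = 0$ when $c_j > 0$, and $\bar s_j \in [0,1]$ arbitrary when $c_j = 0$. Under the hypothesis that every entry of the vector $(c_j)_{j=1,\dots,p}$ is non-zero, this minimizer is unique and, crucially, binary.

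Finally, I would close the tightness argument. The value of the Boolean relaxation \eqref{eqn:relax_bic} equals $f(\bar \alpha, \bar s) + \lambda \textbf{e}^\top \bar s$ at any saddle point, and since the uniquely determined $\bar s$ lies in $\{0,1\}^p$, it is a feasible point for the original integer problem. Because the relaxation lower-bounds the integer problem (the feasible set $\{0,1\}^p$ sits inside $[0,1]^p$) and this lower bound is attained by a binary solution, the two optimal values coincide, which is exactly tightness.

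The main obstacle is essentially cosmetic rather than substantive: the heavy lifting (Sion's theorem, saddle-point characterization, linearity in $s$) is identical to Theorem \ref{tightness}, and the one structural difference is that the box constraint $s \in [0,1]^p$ now decouples completely over coordinates in the absence of the knuckle constraint $s^\top \textbf{e} \leqslant k$. This makes the uniqueness condition simpler here — nonvanishing of every $c_j$ — rather than the more delicate separation between the $k$-th and $(k{+}1)$-st largest component required in Theorem \ref{tightness}. I would make sure to state clearly that saddle points exist (invoking Sion and compactness of $[0,1]^p$ together with the concavity/upper semicontinuity of $f(\cdot, s)$ in $\alpha$ on a neighborhood of the optimal set), since concavity in $\alpha$ alone does not automatically guarantee existence without some coercivity argument coming from the Fenchel conjugates $\hat \ell$.
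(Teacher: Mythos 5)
Your argument is correct and follows essentially the same route as the paper's proof: interchangeability of min and max (the paper invokes Slater/strong duality where you cite Sion), the saddle-point partial-optimality conditions, and the observation that the linear $s$-subproblem over $[0,1]^p$ has a unique, necessarily binary, minimizer when every coefficient $\lambda - \tfrac{\gamma}{2}\bar\alpha^\top X_j X_j^\top \bar\alpha$ is non-zero. Your added remark on existence of saddle points is harmless but not needed, since the theorem's hypothesis already assumes a saddle point $(\bar\alpha,\bar s)$ exists.
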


\begin{proof}
The saddle-point problem \eqref{eqn:relax_bic} is a continuous convex/concave minimax problem and Slater's condition is satisfied so strong duality holds. Therefore, any saddle-point $(\bar \alpha, \bar s)$ must satisfy
$$\bar \alpha \in \arg \max_{\alpha \in \mathbb{R}^n} f(\alpha, \bar s) + \lambda \textbf{e}^\top   s, ~~~
\bar s  \in \arg \min_{s \in \in [0,1]^p} \,f(\bar \alpha, s)+ \lambda \textbf{e}^\top   s.$$
If there exists $\bar \alpha$ such that $(\lambda - \tfrac{\gamma}{2} \bar \alpha^\top   X_j X_j^\top   \bar \alpha)_{j=1,...,p}$ has non-zero entries, then there is a unique $\bar s \in \arg \min_{s \in \in [0,1]^p} \,f(\bar \alpha, s)+ \lambda \textbf{e}^\top   s$. In particular, this minimizer is binary and the relaxation is tight.
\end{proof}
This theoretical result suggests that the Lagrangian relaxation \eqref{eqn:relax_bic} can provide a good approximation for the combinatorial problem \eqref{eqn:reg_bic}  in many cases. The same sub-gradient strategy as the one described in Algorithm \ref{subgradient} can be used to solve \eqref{eqn:relax_bic}, with a slightly different partial minimization step: Now, minimizing $f(\alpha, s) + \lambda s^\top   \textbf{e}$ with respect $s \in [0,1]^p$ for a fixed $\alpha$ boils down to computing the components of $(\lambda - \gamma /2 \alpha^\top   X_j X_j^\top   \alpha)_{j=1,...,p}$ and selecting the negative ones, which requires $O\left(n p \right)$ operations. This strategy is also implemented in the package \verb|SubsetSelection|. 

\section{Numerical experiments for regression - Supplementary material}
\label{sec:regression.supp}
In this section, we provide additional material for the simulations conducted in Section \ref{sec:regression} on regression examples. 
\subsection{Synthetic data satisfying mutual incoherence condition}
\label{sec:regression.supp.time}
We first consider the case where the design matrix $X \sim \mathcal{N}(0,\Sigma)$, with $\Sigma$ a Toeplitz matrix. In particular, $\Sigma$ satisfies the so-called mutual incoherence condition required by Lasso estimators to be statistically consistent. In this setting, we provide more details about the computational time comparison between algorithms for sparse regression presented in Sections \ref{sec:regression}. 

\subsubsection{Impact of the hyper-parameters $k$ and $\gamma$}
The discrete convex optimization formulation \eqref{eqn:reg_sparse} and its Boolean relaxation \eqref{eqn:relax} involve two hyper-parameters: the ridge penalty $\gamma$ and the sparsity level $k$. 

Intuition suggests that computational time would increase with $\gamma$. Indeed, when $\gamma \rightarrow 0$, $w^\star = 0$ is an obvious optimal solution, while for $\gamma \rightarrow +\infty$ the problem can become ill-conditioned. We generate $10$ problems with $p=5,000$, $k_{true}=50$, $SNR=1$ and $\rho = .5$ and various sample sizes $n$, fix $k=k_{true}$ and report absolute computational time as $n \times \gamma$ increases in Figure \ref{fig:RegTime.gamma} (p. \pageref{fig:RegTime.gamma}).  For small values of $\gamma$, both methods terminate extremely fast - within $10$ seconds for CIO and in less than $2$ seconds for SS. As $\gamma$ increases, computational time sharply increases. For CIO, we capped computational time to $600$ seconds. For SS, we limited the number of iterations to $T_{max}=200$.
\begin{figure*}
\centering
\begin{subfigure}[t]{.45\linewidth}
	\centering
	\includegraphics[width=\linewidth]{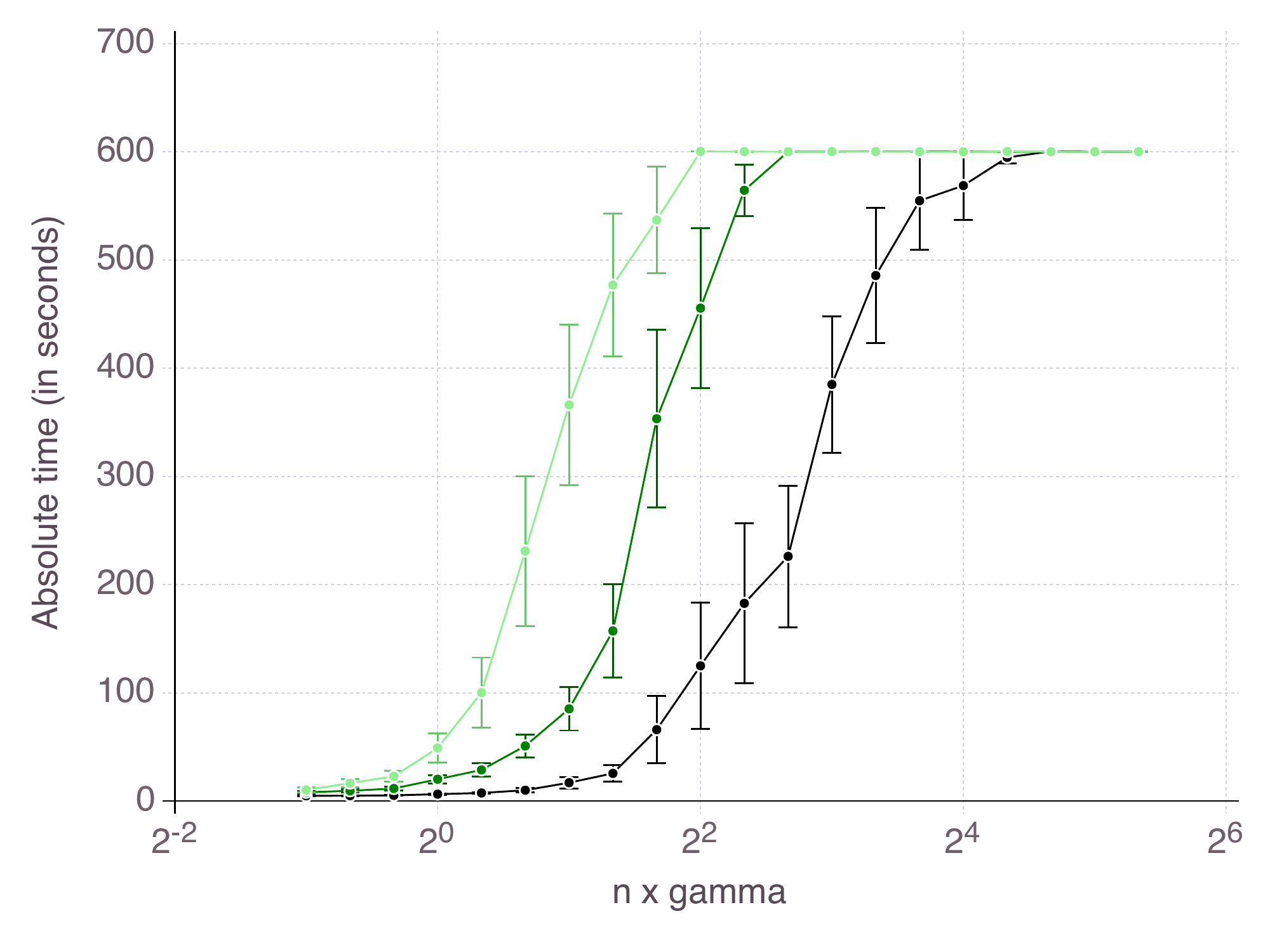}
	\caption{CIO}
\end{subfigure} %
~
\begin{subfigure}[t]{.45\linewidth}
	\centering
	\includegraphics[width=\linewidth]{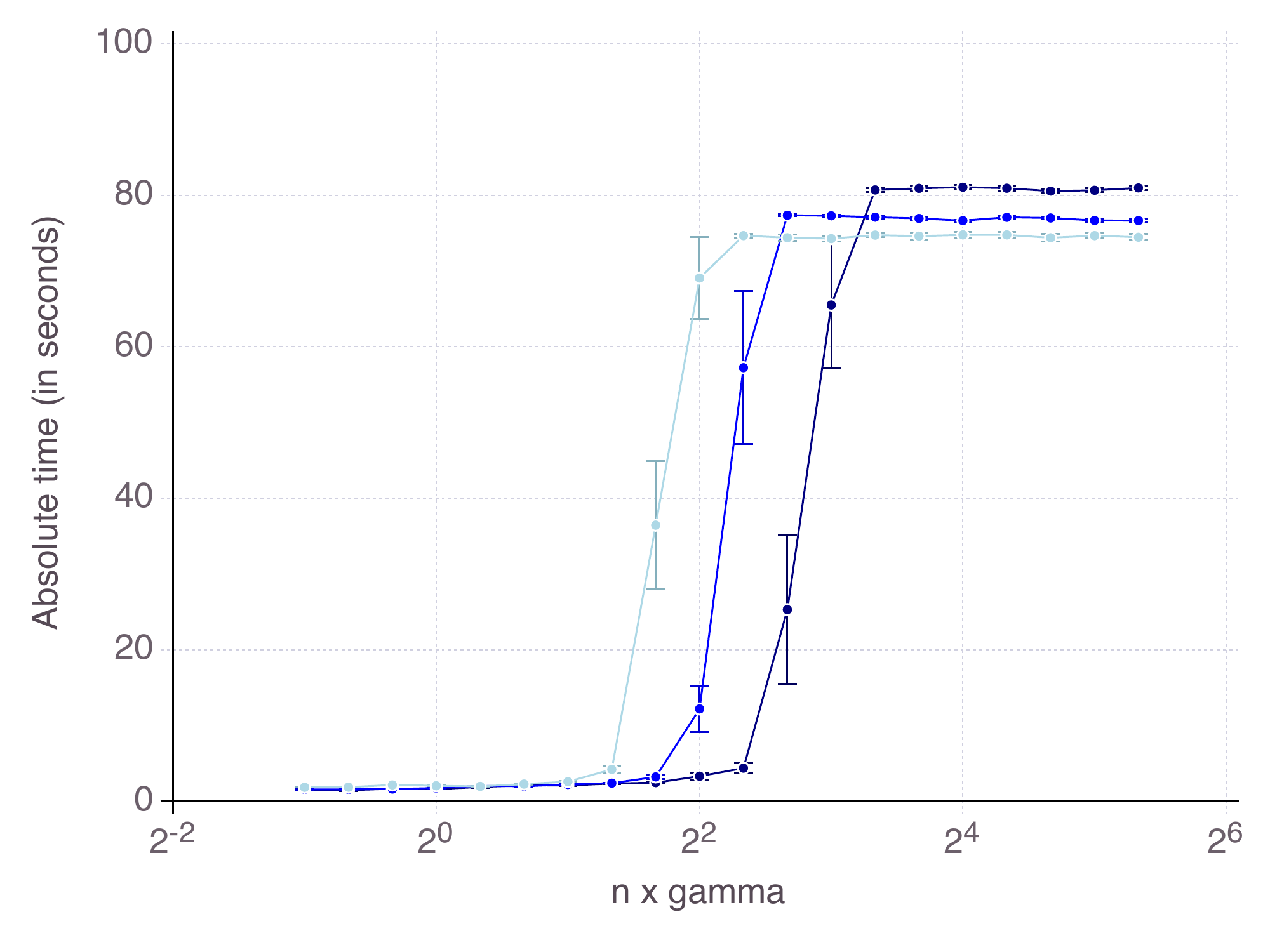}
	\caption{SS}
\end{subfigure} %

\caption{Absolute computational time as $n \times \gamma$ increases, for CIO (left panel), SS (right panel), with OLS loss. We fixed $p=5,000$, $k_{true}=50$, $SNR=1$, $\rho = .5$ and $k=k_{true}$, and averaged results over $10$ data sets.  We report results for $n=500$, $n=1,000$ and $n=2,000$ (from light to dark)}
\label{fig:RegTime.gamma}
\end{figure*}
Regarding the sparsity $k$, the size of the feasible space $\{ s \in \{0,1\}^p \: : \: s^\top   \textbf{e} \leqslant k \}$ grows as $p^k$. Empirically, we observe (Figure \ref{fig:RegTime.k} p. \pageref{fig:RegTime.k}) that computational time increases at most polynomially with $k$.
\begin{figure*}
\centering
\begin{subfigure}[t]{.45\linewidth}
	\centering
	\includegraphics[width=\linewidth]{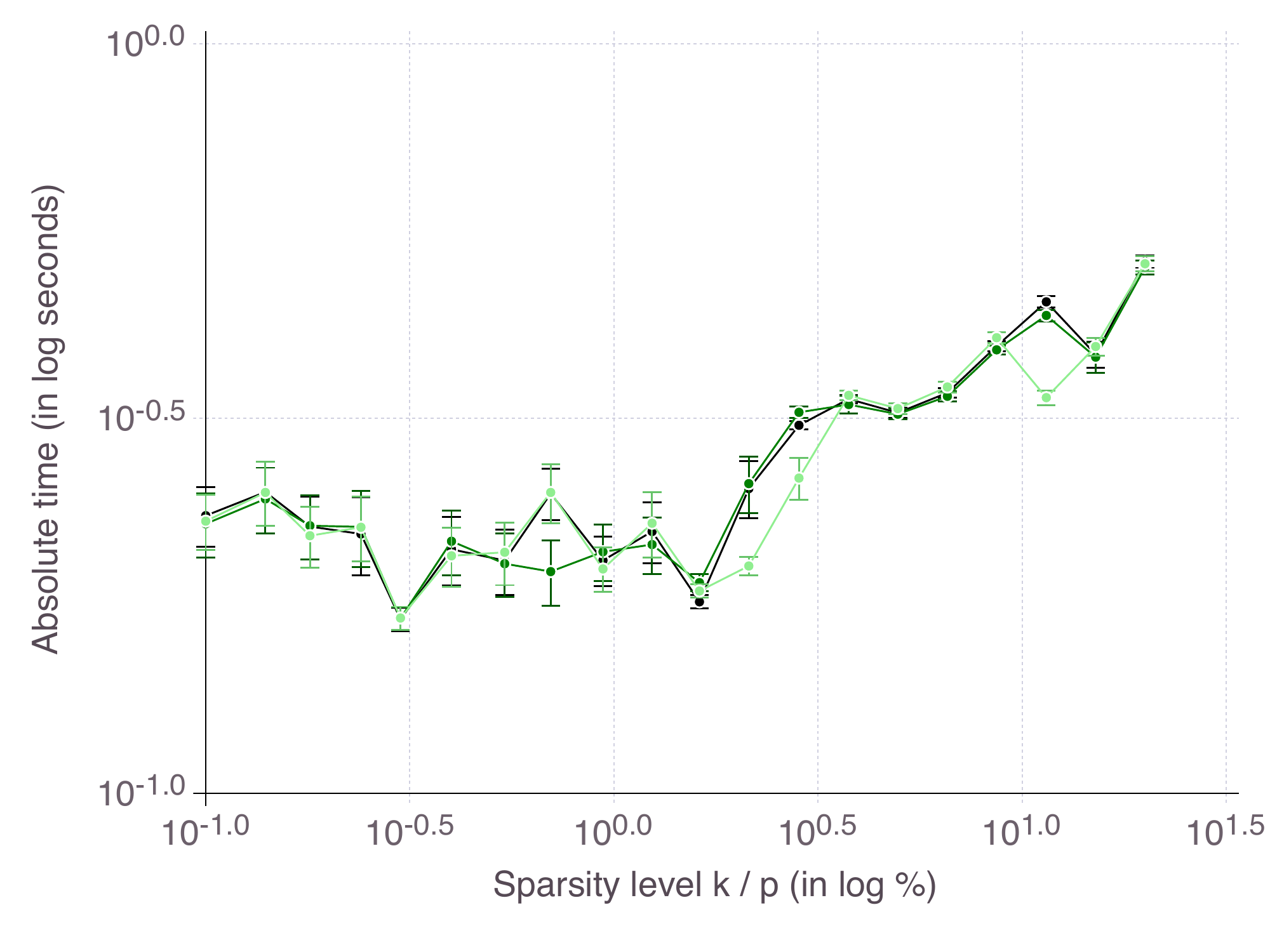}
	\caption{CIO}
\end{subfigure} %
~
\begin{subfigure}[t]{.45\linewidth}
	\centering
	\includegraphics[width=\linewidth]{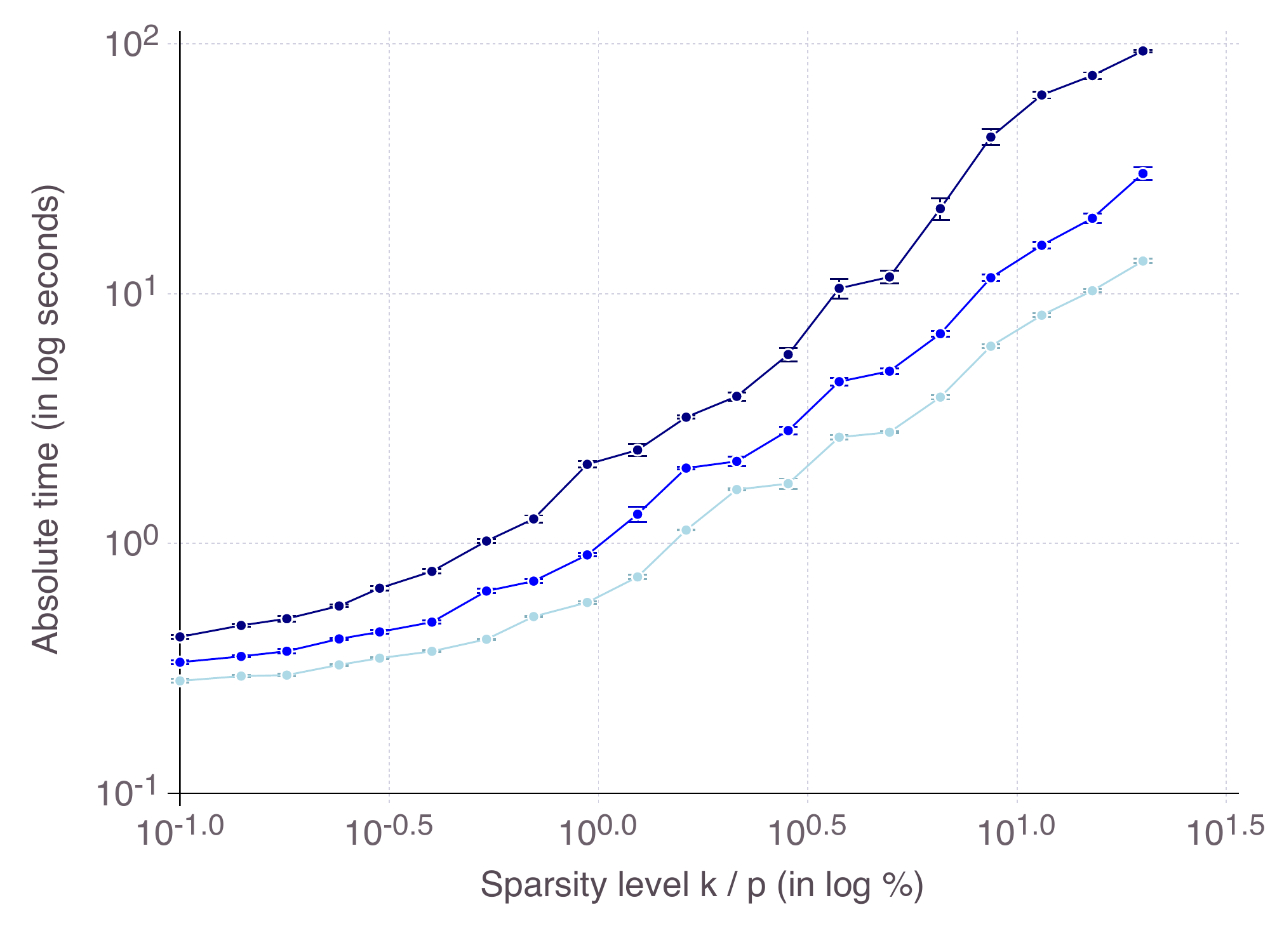}
	\caption{SS}
\end{subfigure} %

\caption{Absolute computational time as $k$ increases from $5$ to $1,000$, for CIO (left panel), SS (right panel), with OLS loss. We fixed $p=5,000$, $k_{true}=50$, $SNR=1$, $\rho = .5$ and $n=1,000$, and averaged results over $10$ data sets. We report results for $\gamma = 2^i \gamma_0$ with $i=0,2,4$ (from light to dark) and $\gamma_0 = \dfrac{p}{n \, k_{true} \, \max_i \|x_i\|^2}$.}
\label{fig:RegTime.k}
\end{figure*}

\subsubsection{Impact of the signal-to-noise ratio, sample size $n$ and problem size $p$}
As more signal becomes available, the feature selection problem should become easier and computational time should decrease. Indeed, in Figure \ref{fig:RegTime.snr} (p. \pageref{fig:RegTime.snr}), we observe that low $SNR$ generally increases computational time for all methods (left panel). The correlation parameter $\rho$ (right panel), however, does not seem to have a strong impact on computational time. In our opinion, with $SNR=1$, the effect of correlation on computational time is second order compared to the impact of noise. 
\begin{figure*}
\centering
\begin{subfigure}[t]{.45\linewidth}
	\centering
	\includegraphics[width=\linewidth]{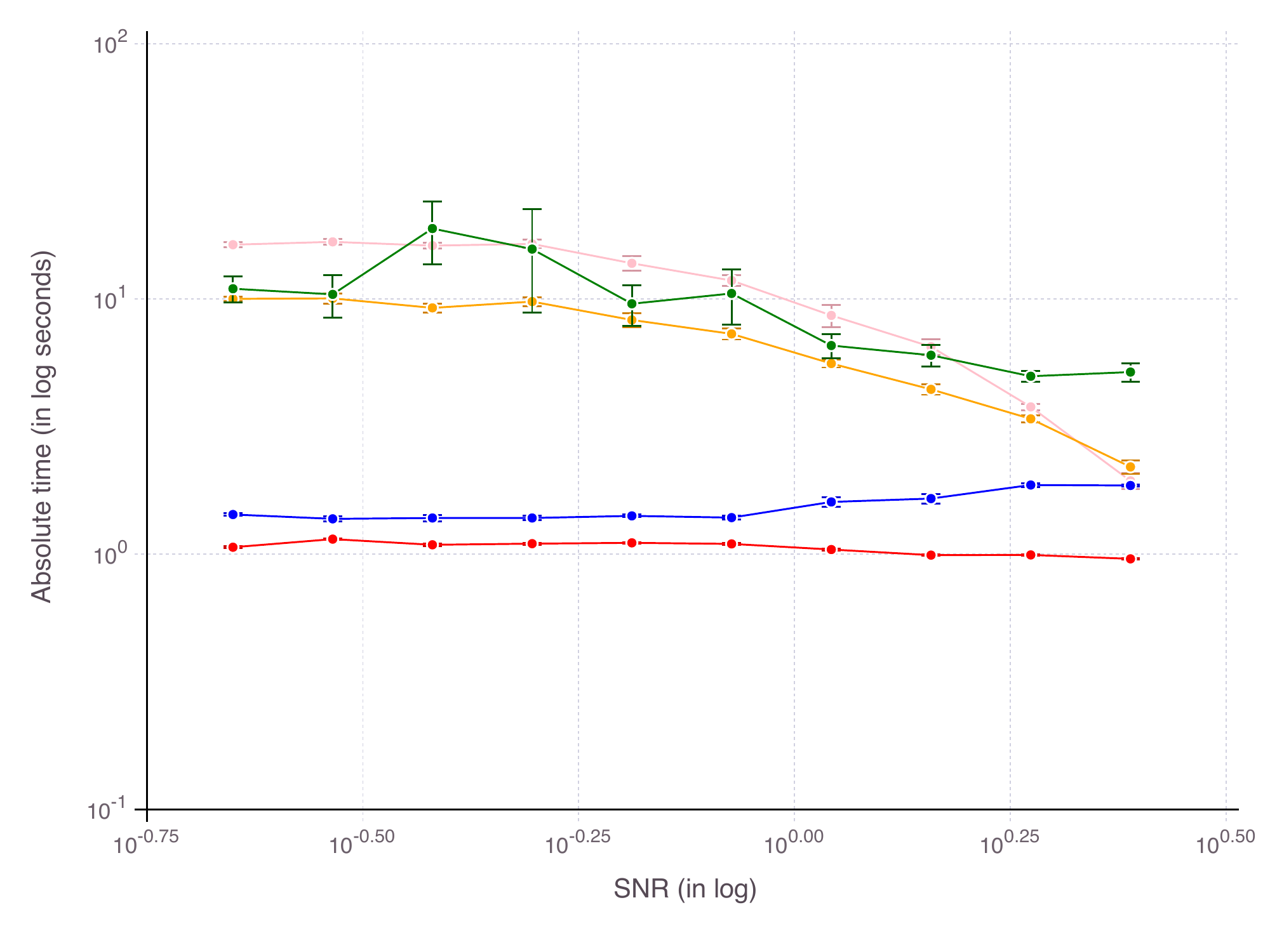}
	\caption{$SNR$, for $\rho = .5$ }
\end{subfigure} %
~
\begin{subfigure}[t]{.45\linewidth}
	\centering
	\includegraphics[width=\linewidth]{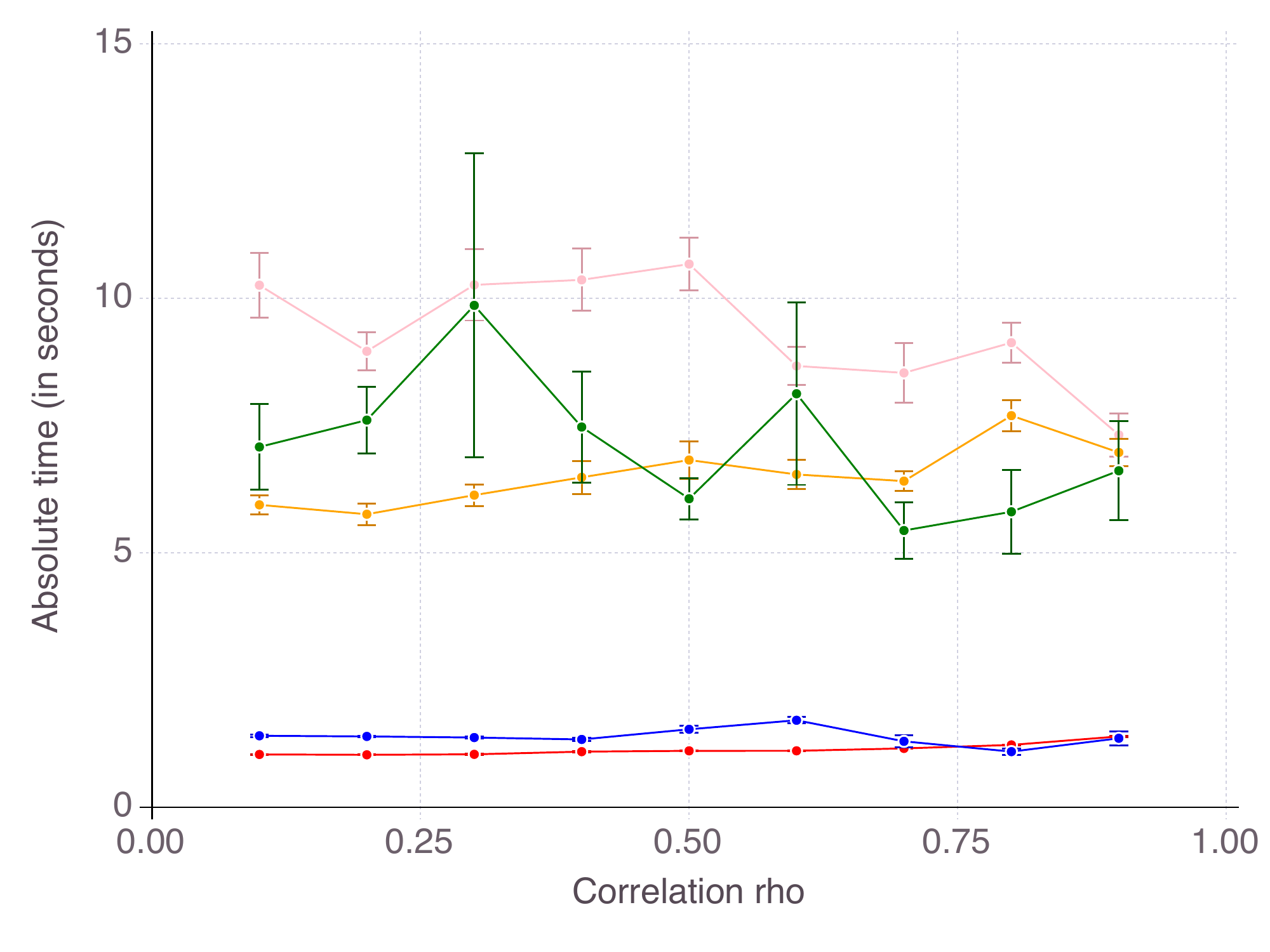}
	\caption{Correlation $\rho$, for $SNR=1$}
\end{subfigure} %
\caption{Absolute computational time as signal-to-noise or correlation increases, for CIO (in green), SS (in blue with $T_{max}=200$), ENet (in red), MCP (in orange), SCAD (in pink) with OLS loss. We fixed $p=5,000$, $k_{true}=50$, and $n=1,000$, and averaged results over $10$ data sets. We report results of CIO and SS with $k=k_{true}$ and $\gamma = \dfrac{p}{2 n \, k_{true} \, \max_i \|x_i\|^2}$.}
\label{fig:RegTime.snr}
\end{figure*}

Figure \ref{fig:RegTime.np} (p. \pageref{fig:RegTime.np}) represents computational for increasing $p$, $n/p$ being fixed and for increasing $n/p$, $p$ being fixed. As shown, all methods scale similarly with $p$ (almost linearly), while CIO and SS are less sensitive to $n /p$ than their competitors. 
\begin{figure*}
\centering
\begin{subfigure}[t]{.45\linewidth}
	\centering
	\includegraphics[width=\linewidth]{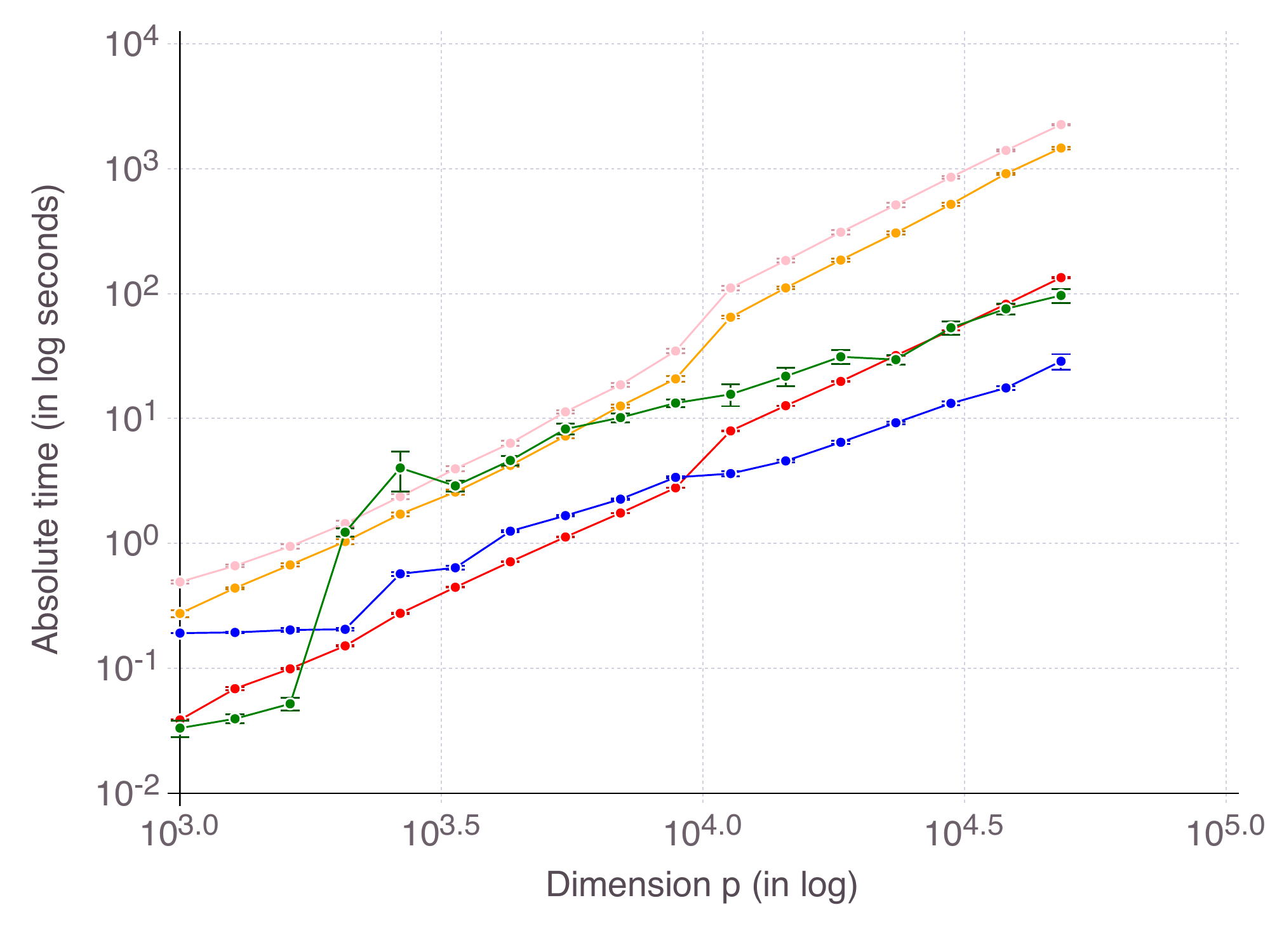}
	\caption{$p$, for $n / p = 1 / 5$ }
\end{subfigure} %
~
\begin{subfigure}[t]{.45\linewidth}
	\centering
	\includegraphics[width=\linewidth]{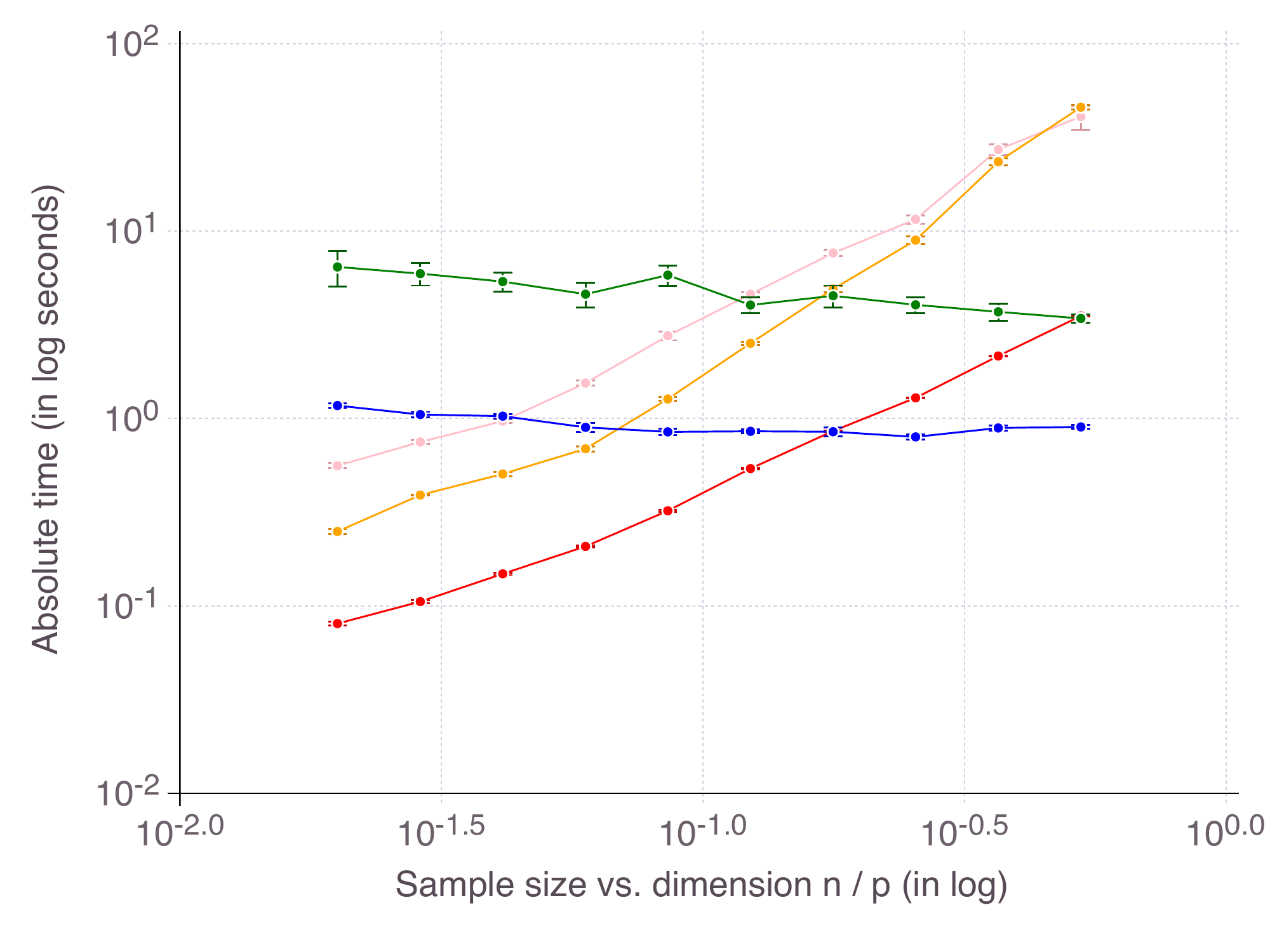}
	\caption{Sample size $n / p$, for $p=5,000$}
\end{subfigure} %
\caption{Absolute computational time as dimension $p$ or sample size $n$ increases, for CIO (in green), SS (in blue with $T_{max}=200$), ENet (in red), MCP (in orange), SCAD (in pink) with OLS loss. We fixed $p=5,000$, $k_{true}=50$ and averaged results over $10$ data sets. We report results of CIO and SS with $k=k_{true}$ and $\gamma = \dfrac{p}{2 n \, k_{true} \, \max_i \|x_i\|^2}$.}
\label{fig:RegTime.np}
\end{figure*}

\subsection{Synthetic data \emph{not} satisfying mutual incoherence condition}
\label{sec:regression.supp.nomic}
We now consider a covariance matrix $\Sigma$, which does not satisfy mutual incoherence, as proved in \citep{loh2017support}. 

\subsubsection{Feature selection with a given support size} 
\label{sec:regression.supp.nomic.fix}
Figure \ref{fig:RegHardFixTime} on page \pageref{fig:RegHardFixTime} reports relative compared to \verb|glmnet| (left panel) and absolute (right panel) computational time { in log scale}. As for the case where mutual incoherence is satisfied, all methods terminates within a $10$-$100$ factor with respect to \verb|glmnet|. 
\begin{figure*}
\centering
\begin{subfigure}[t]{\linewidth}
	\centering
	\includegraphics[width=.45\linewidth]{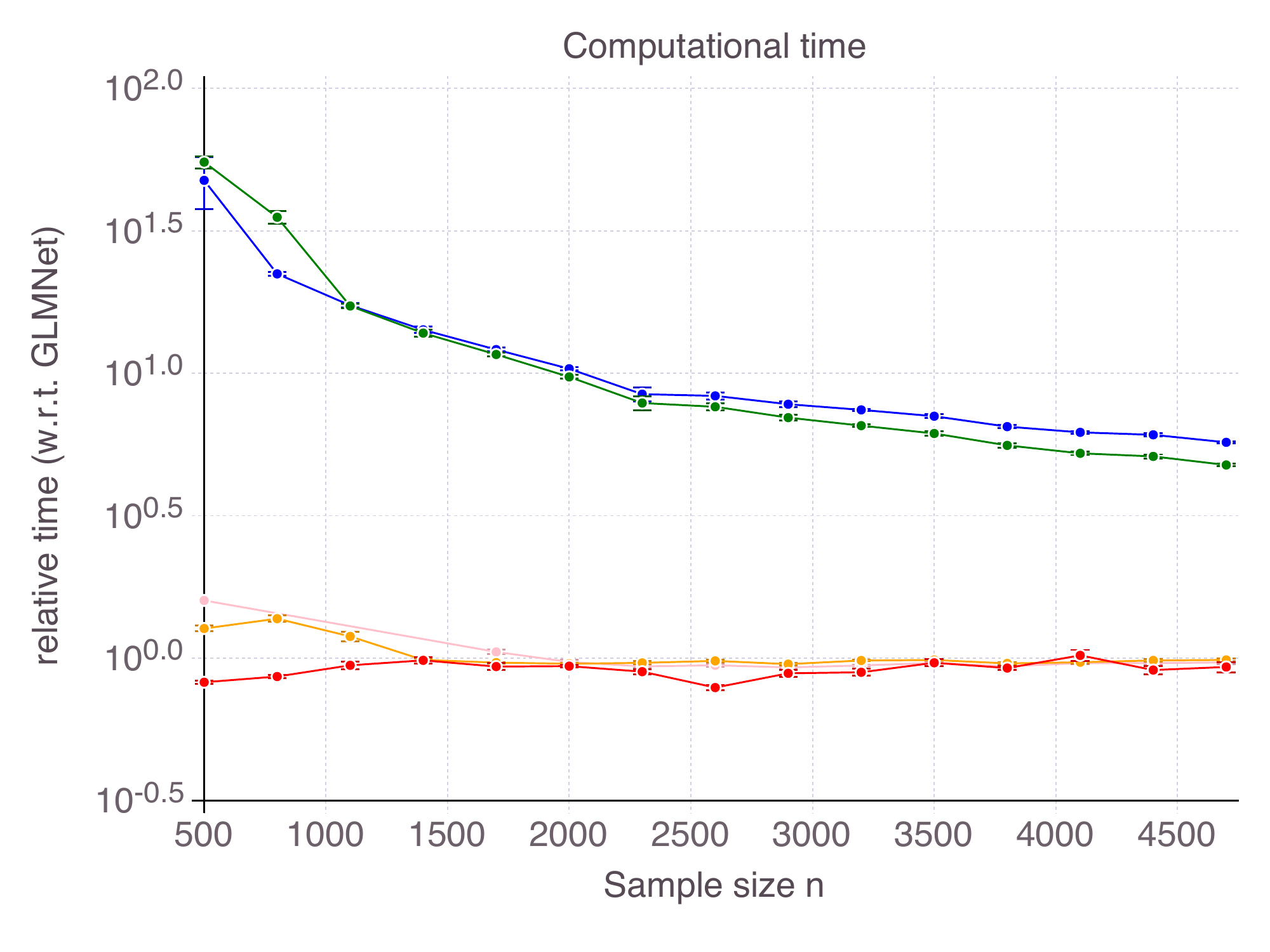}
	\includegraphics[width=.45\linewidth]{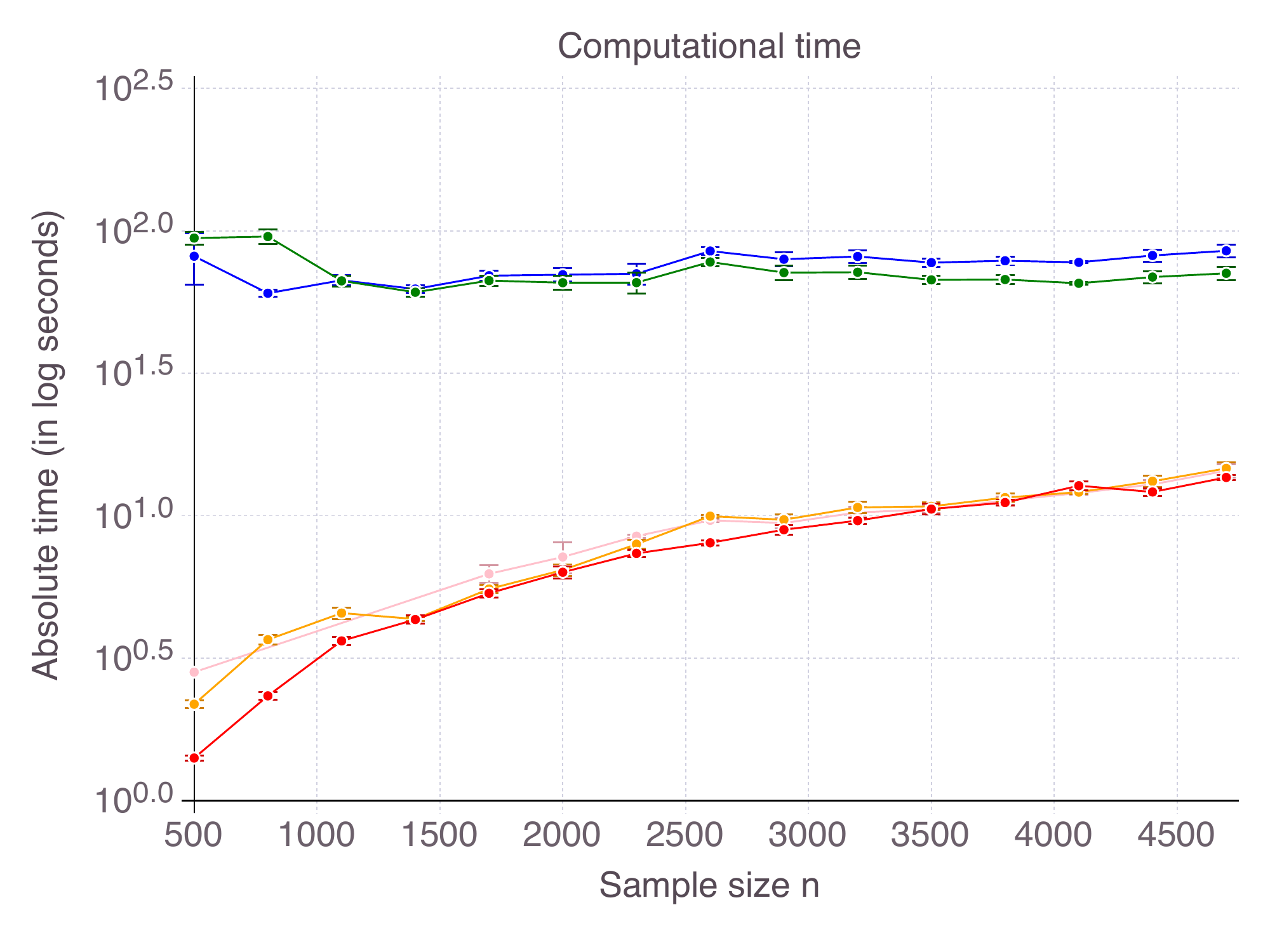}
	\caption{Low noise}
\end{subfigure} %

\begin{subfigure}[t]{\linewidth}
	\centering
	\includegraphics[width=.45\linewidth]{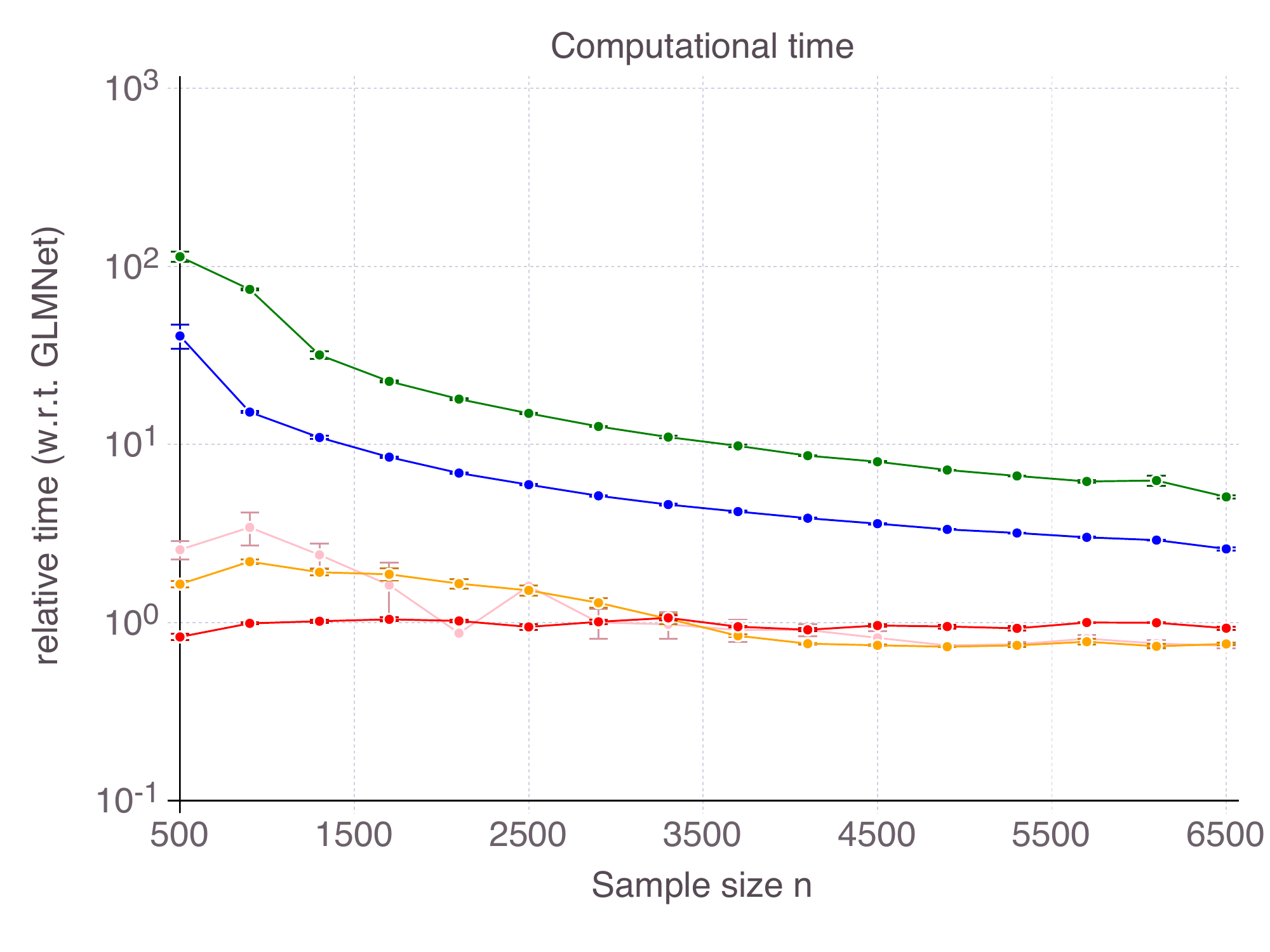}
	\includegraphics[width=.45\linewidth]{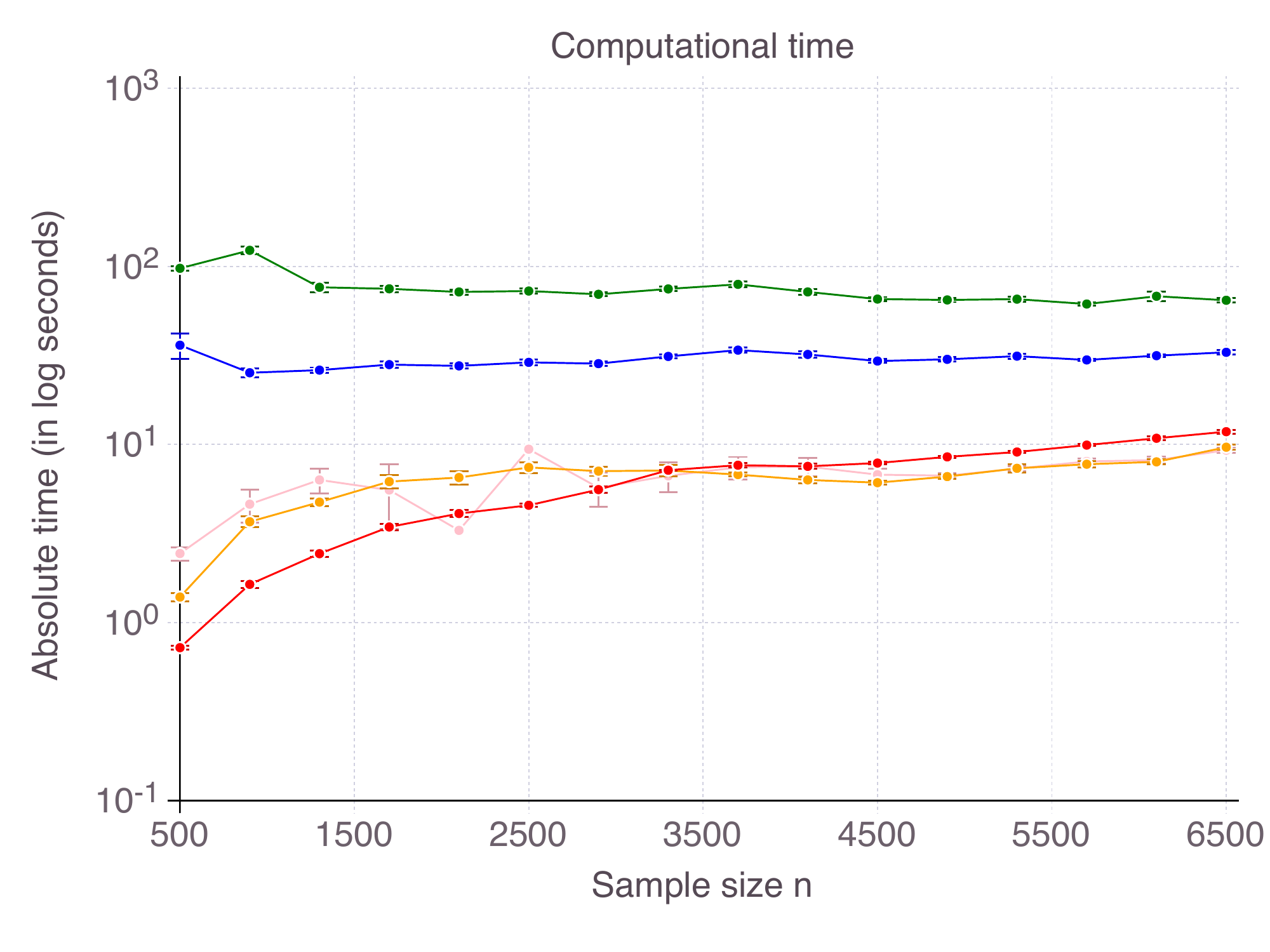}
	\caption{Medium noise}
\end{subfigure} %

\begin{subfigure}[t]{\linewidth}
	\centering
	\includegraphics[width=.45\linewidth]{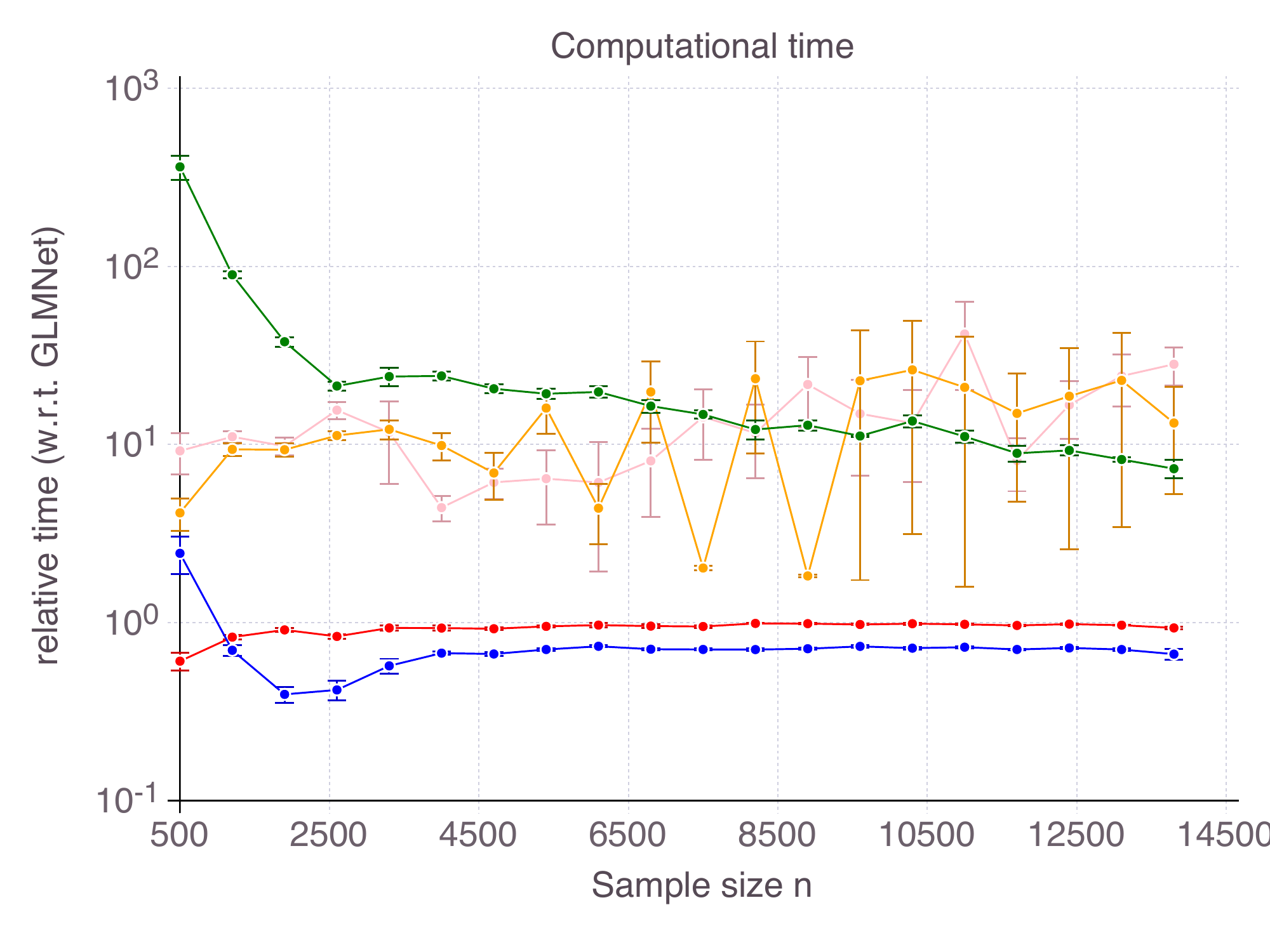}
	\includegraphics[width=.45\linewidth]{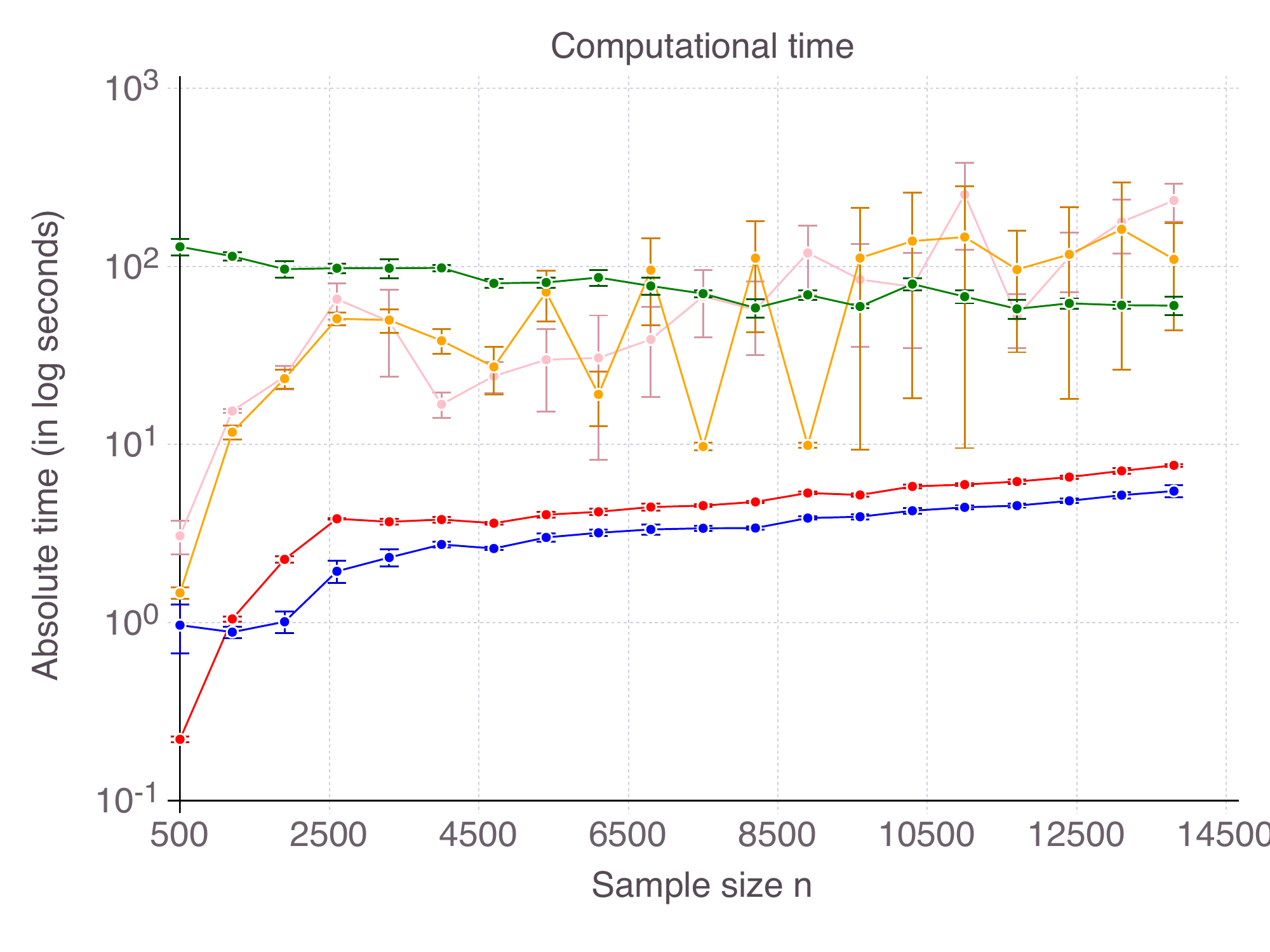}
	\caption{High noise}
\end{subfigure} %
\caption{Relative (left panel) and absolute (right panel) computational timeas $n$ increases, for CIO (in green), SS (in blue with $T_{max}=200$), ENet (in red), MCP (in orange), SCAD (in pink) with OLS loss. We average results over $10$ data sets.}
\label{fig:RegHardFixTime}
\end{figure*}

\subsubsection{Feature selection with cross-validated support size} 
\label{sec:regression.supp.nomic.cv}
We compare all methods when $k_{true}$ is no longer given and needs to be cross-validated from the data itself. 
Figure \ref{fig:RegHardCV} on page \pageref{fig:RegHardCV} reports the results of the cross-validation procedure for { increasing} $n$. In terms of accuracy (left panel), all four methods are relatively equivalent and demonstrate a clear convergence: $A\rightarrow 1$ as $n \rightarrow \infty$. On false detection rate however (right panel), behaviors vary among methods. Cardinality-constrained estimators achieve the lowest false detection rate ($0-30\%$), followed by MCP ($10-60\%$), SCAD ($20-70\%$) and then ENet (c.$80\%$). In case of ENet, this behavior was expected, for $\ell_1$-estimators are provably inconsistent, so that $FDR$ must be positive when $A=1$. 
\begin{figure*}
\centering
\begin{subfigure}[h]{\linewidth}
	\centering
	\includegraphics[width=.45\linewidth]{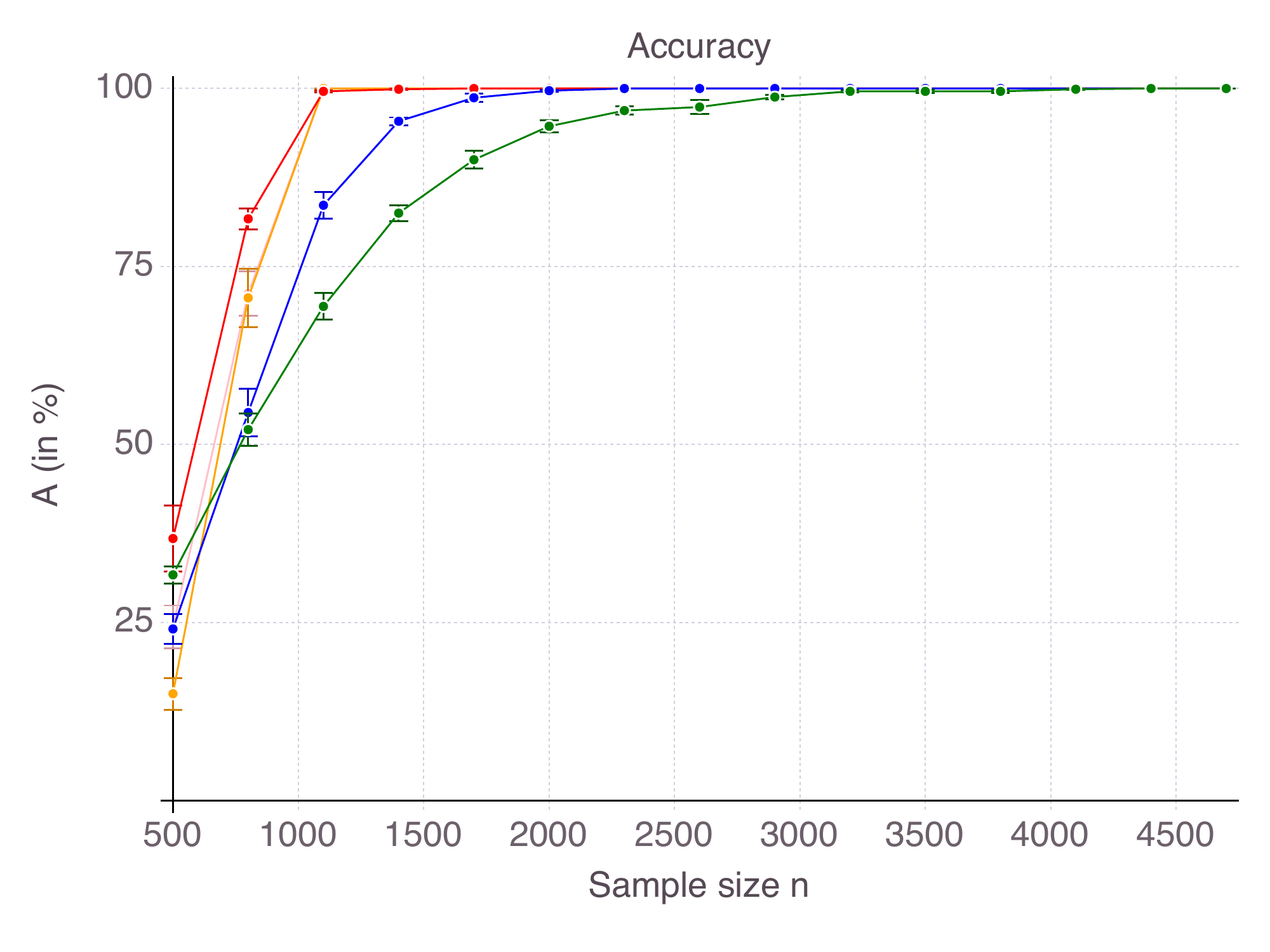}
	\includegraphics[width=.45\linewidth]{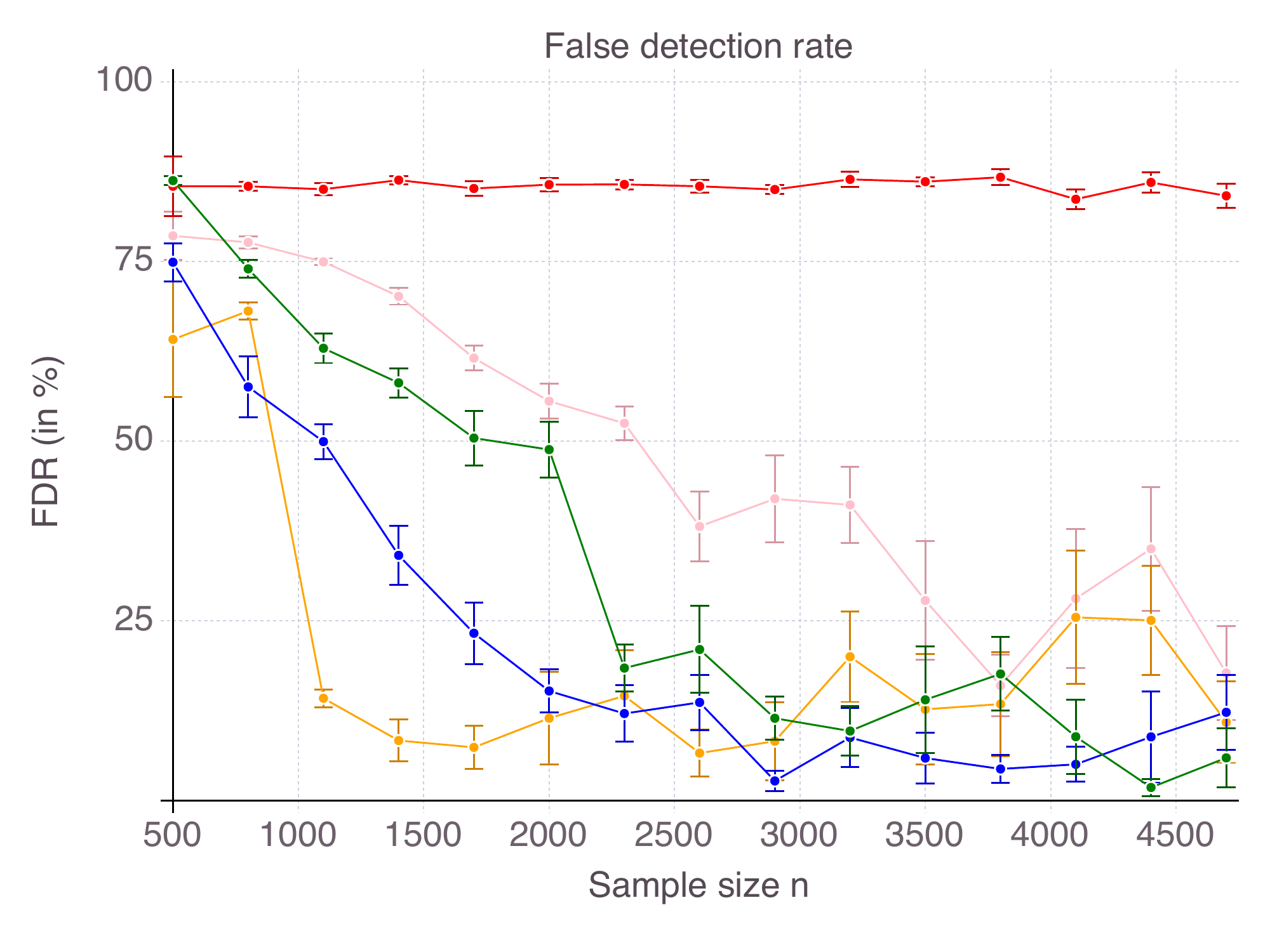}
	\caption{Low noise}
\end{subfigure} %

\begin{subfigure}[h]{\linewidth}
	\centering
	\includegraphics[width=.45\linewidth]{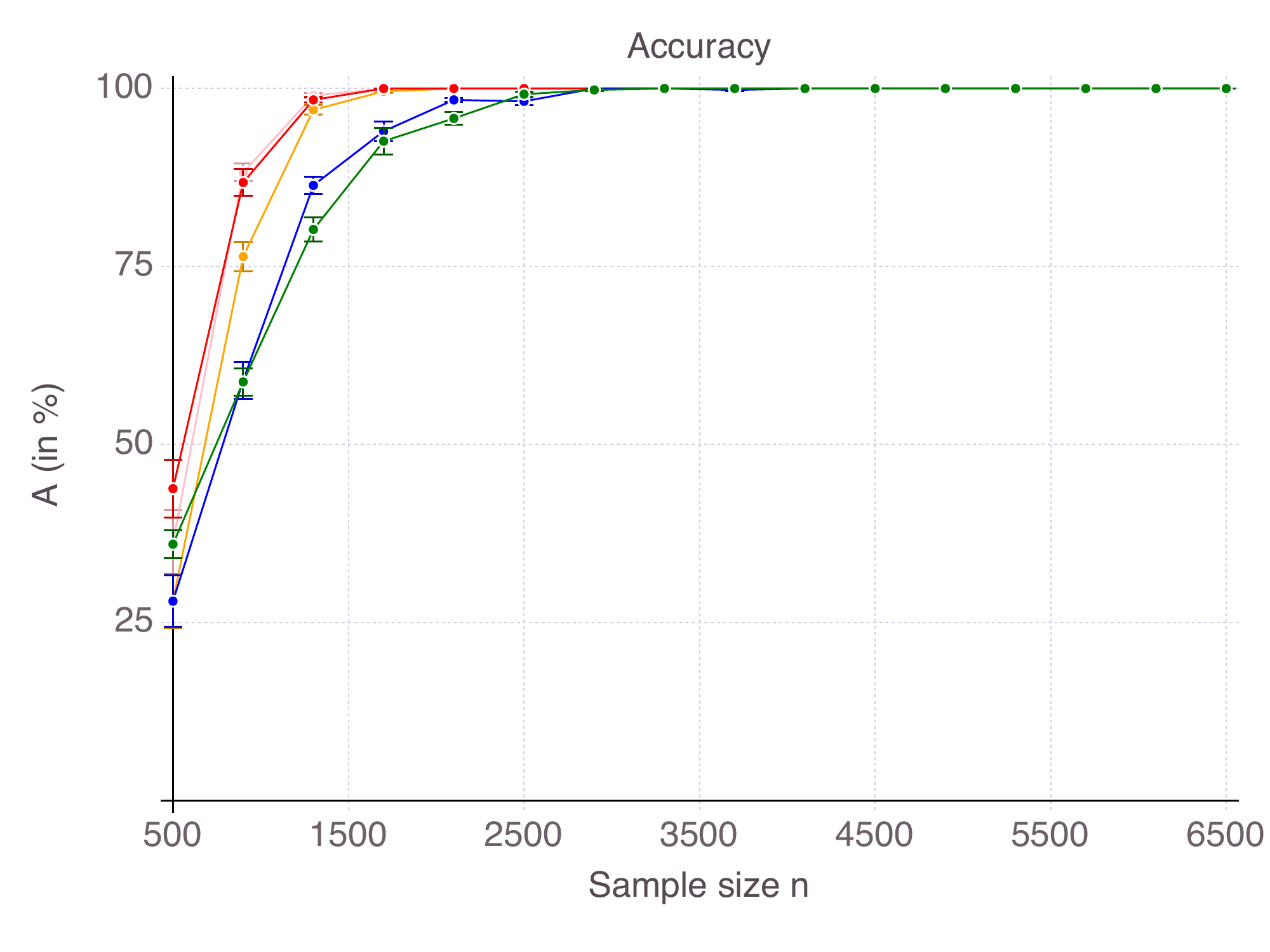}
	\includegraphics[width=.45\linewidth]{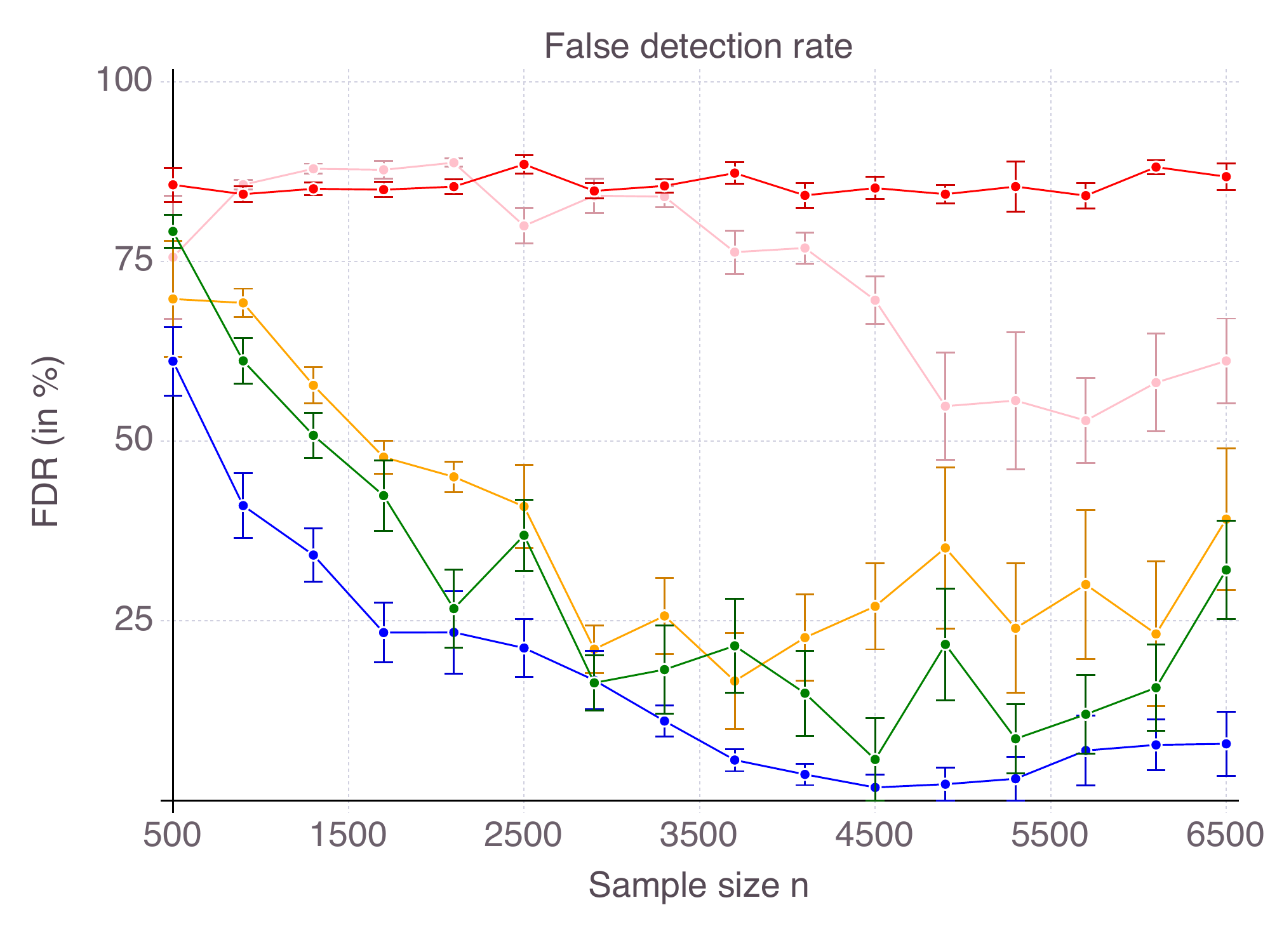}
	\caption{Medium noise}
\end{subfigure} %

\begin{subfigure}[h]{\linewidth}
	\centering
	\includegraphics[width=.45\linewidth]{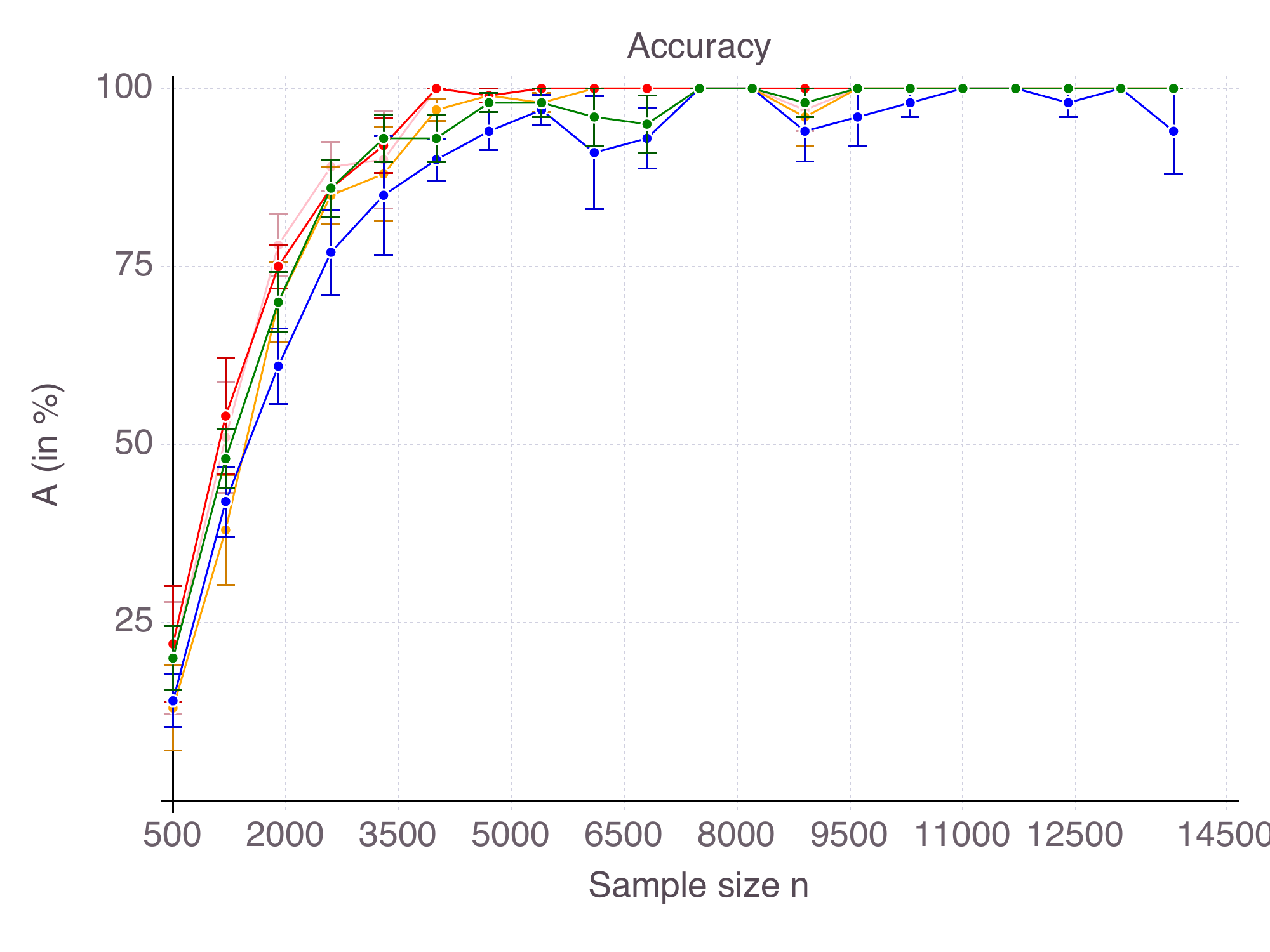}
	\includegraphics[width=.45\linewidth]{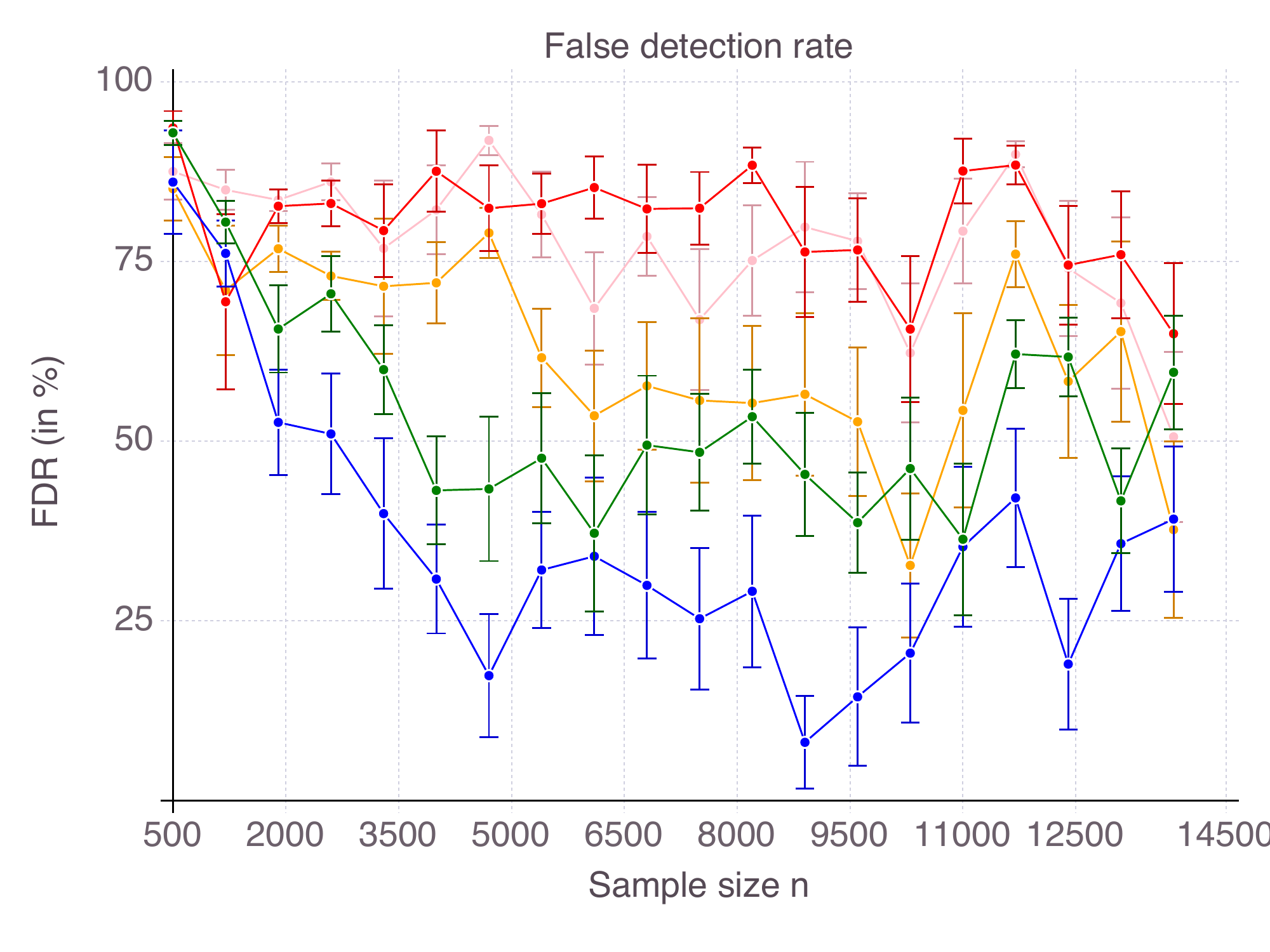}
	\caption{High noise}
\end{subfigure} %
~
\caption{Accuracy $A$ (left panel) and false detection rate $FDR$ (right panel) as $n$ increases, for the CIO (in green), SS (in blue with $T_{max}=150$), ENet (in red), MCP (in orange), SCAD (in pink) with OLS loss. We average results over $10$ data sets.}
\label{fig:RegHardCV}
\end{figure*}

\subsection{Real-world design matrix $X$}
\label{sec:regression.supp.real}
In this section, we consider a real-world design matrix $X$ and generated synthetic noisy signals $Y$ for 10 levels of noise. Figure \ref{fig:RegCancer2} (p. \pageref{fig:RegCancer2}) represents the out-of-sample $MSE$ of all five methods as $SNR$ increases. As mentioned, the difference in $MSE$ between methods is far less acute than the difference observed in terms of accuracy and false detection. 
\begin{figure*}
\centering
%\begin{subfigure}[t]{.48\linewidth}
%	\centering
	\includegraphics[width=.5\linewidth]{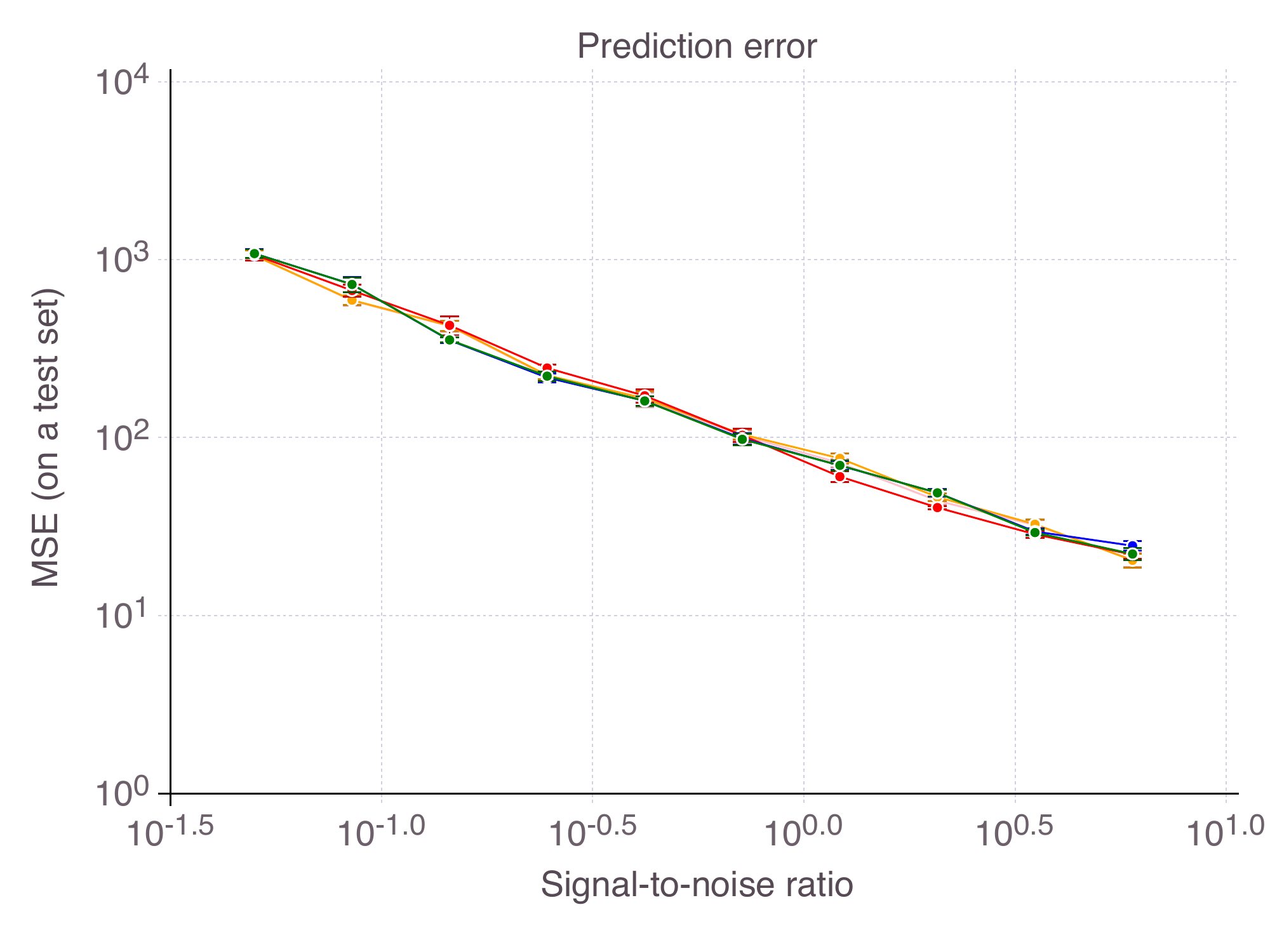}
%	\caption{Out-of-sample mean square error}
%\end{subfigure}
\caption{Out-of-sample $MSE$ as $SNR$ increases, for the CIO (in green), SS (in blue with $T_{max}=150$), ENet (in red), MCP (in orange), SCAD (in pink) with OLS loss. We average results over $10$ data sets with $SNR=0.05,...,6$, $k_{true}=50$.}
\label{fig:RegCancer2}
\end{figure*}

\newpage
\section{Numerical experiments for classification - Supplementary material}
\label{sec:classification.supp}
\subsection{Synthetic data satisfying mutual incoherence condition} 
\subsubsection{Feature selection with a given support size} \label{sec:classification.supp.mic.fix}
We first consider the case when the cardinality $k$ of the support to be returned is given and equal to the true sparsity $k_{true}$ for all methods. 

As shown on Figure \ref{fig:ClassFixTF} (p. \pageref{fig:ClassFixTF}), all methods converge in terms of accuracy. That is their ability to select correct features as measured by $A$ smoothly converges to $1$ with an increasing number of observations $n\rightarrow \infty$. Compared to regression, the difference in accuracy between methods is much narrower. MCP now slightly dominates all methods, including CIO and SS. The suboptimality gap between the discrete optimization method and its Boolean relaxation appears to be much smaller as well and the two methods perform almost identically. 

\begin{figure*}
\centering
\begin{subfigure}[t]{.45\linewidth}
	\centering
	\includegraphics[width=\linewidth]{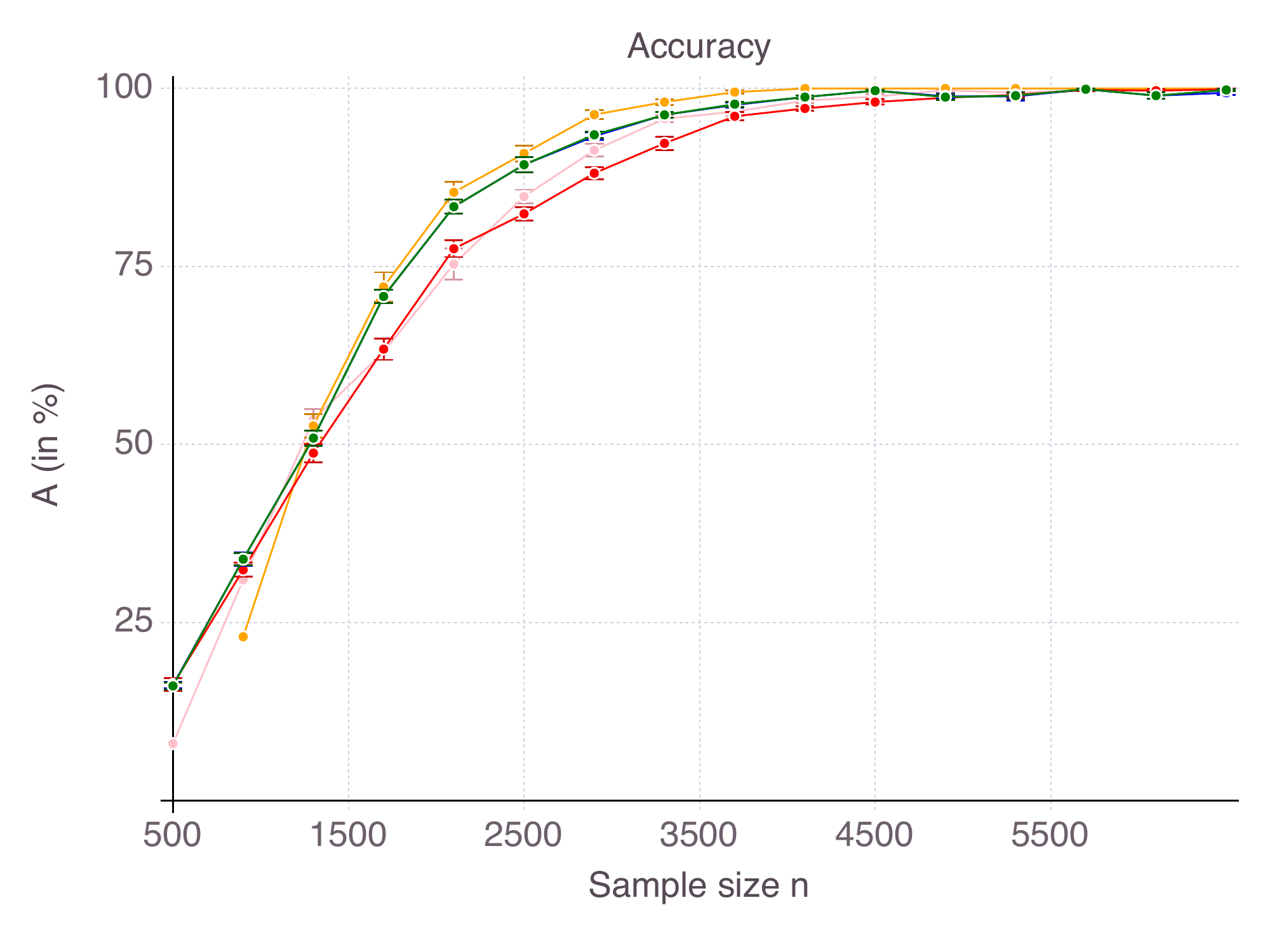}
	\caption{Low noise, low correlation}
\end{subfigure} %
~
\begin{subfigure}[t]{.45\linewidth}
	\centering
	\includegraphics[width=\linewidth]{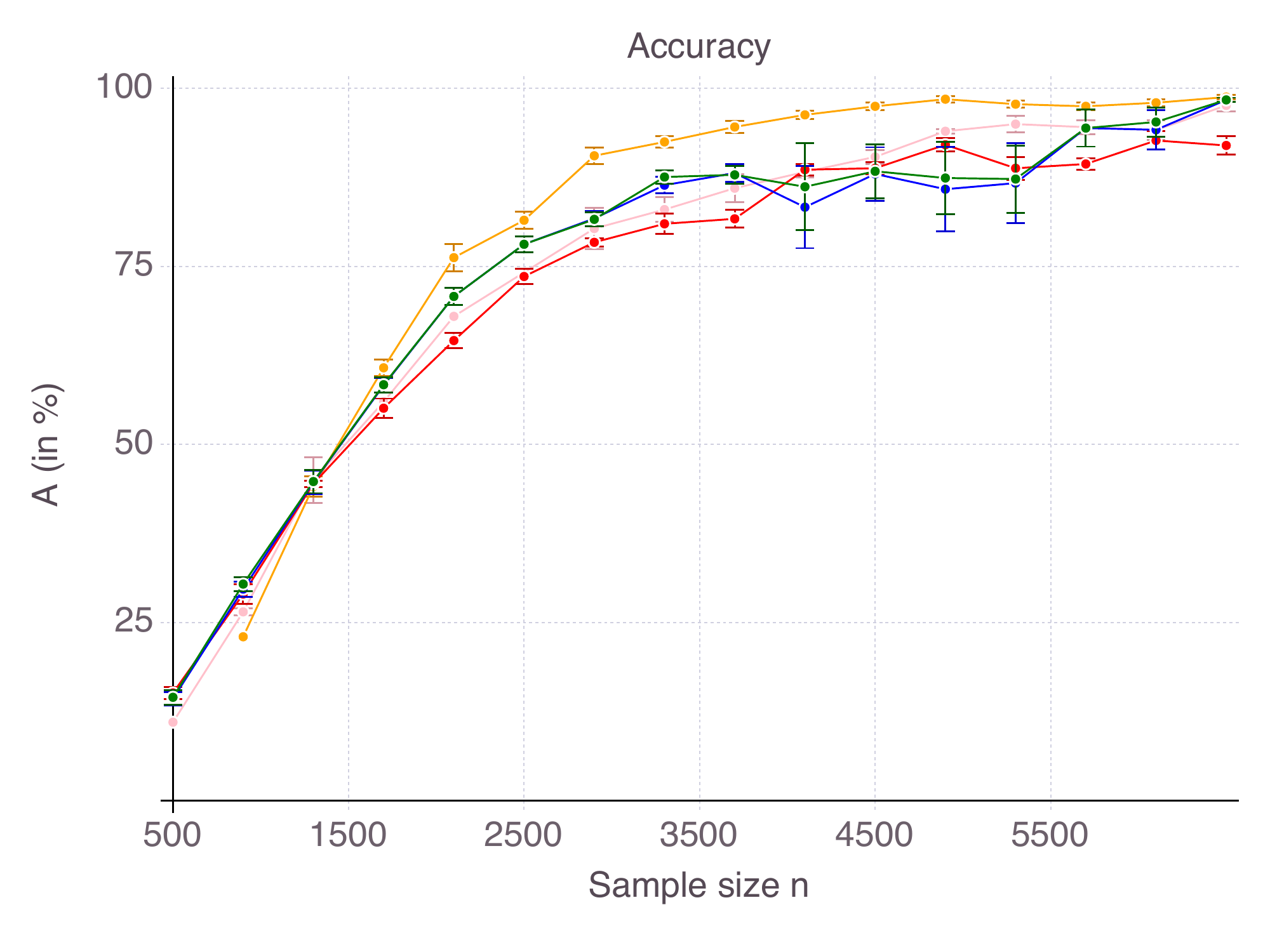}
	\caption{Low noise, high correlation}
\end{subfigure}

\begin{subfigure}[t]{.45\linewidth}
	\centering
	\includegraphics[width=\linewidth]{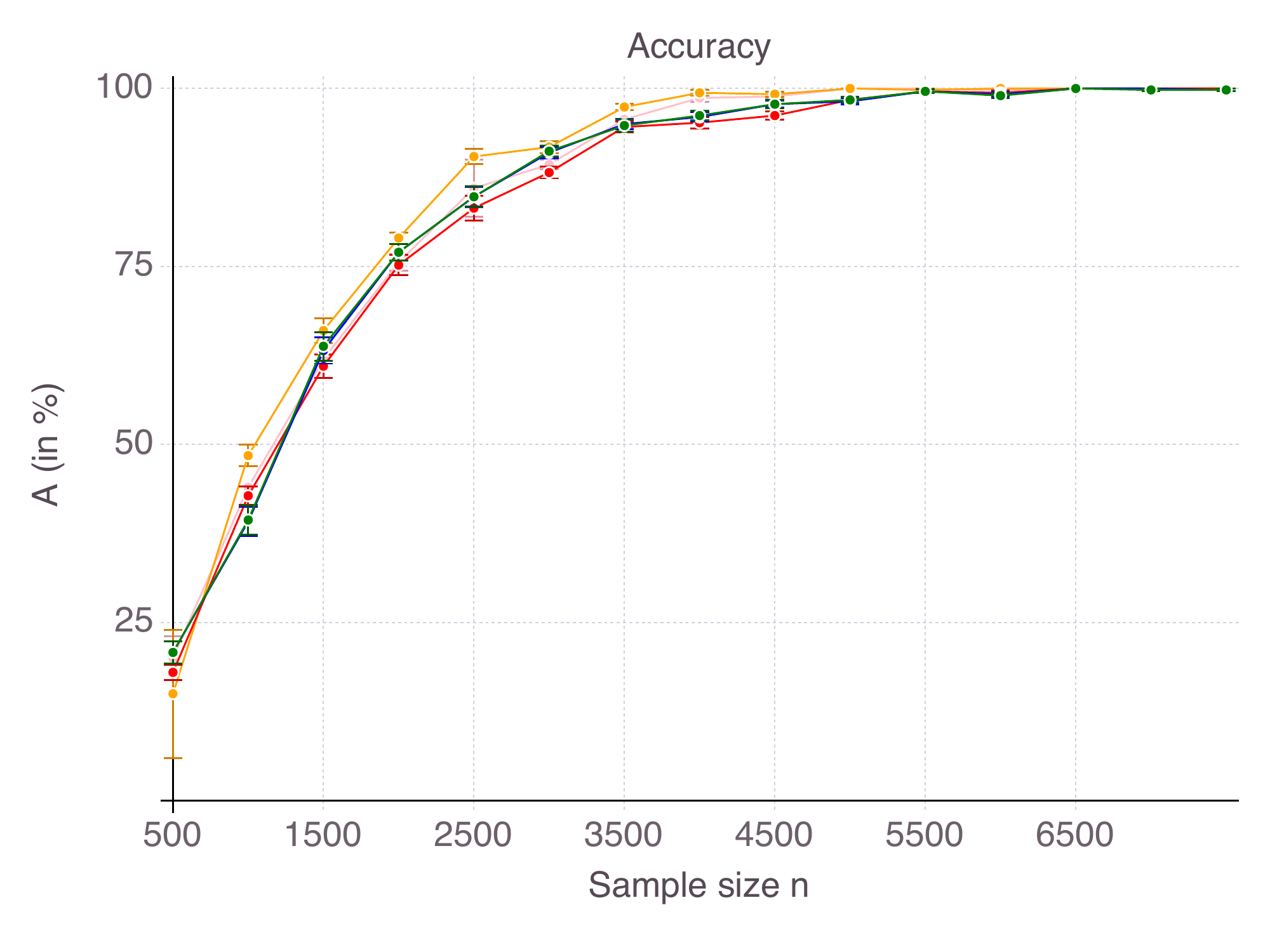}
	\caption{Medium noise, low correlation}
\end{subfigure} %
~
\begin{subfigure}[t]{.45\linewidth}
	\centering
	\includegraphics[width=\linewidth]{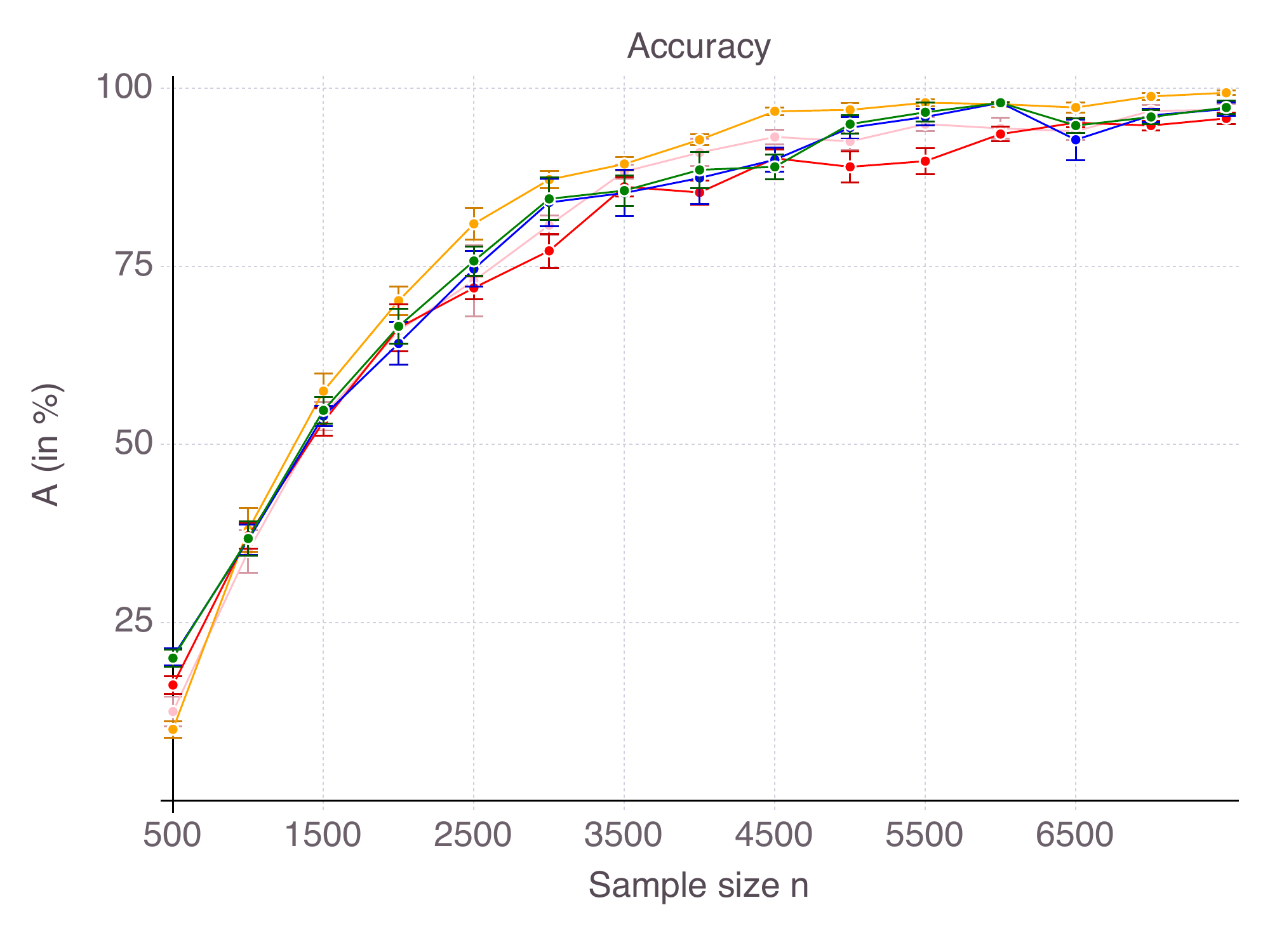}
	\caption{Medium noise, high correlation}
\end{subfigure}

\begin{subfigure}[t]{.45\linewidth}
	\centering
	\includegraphics[width=\linewidth]{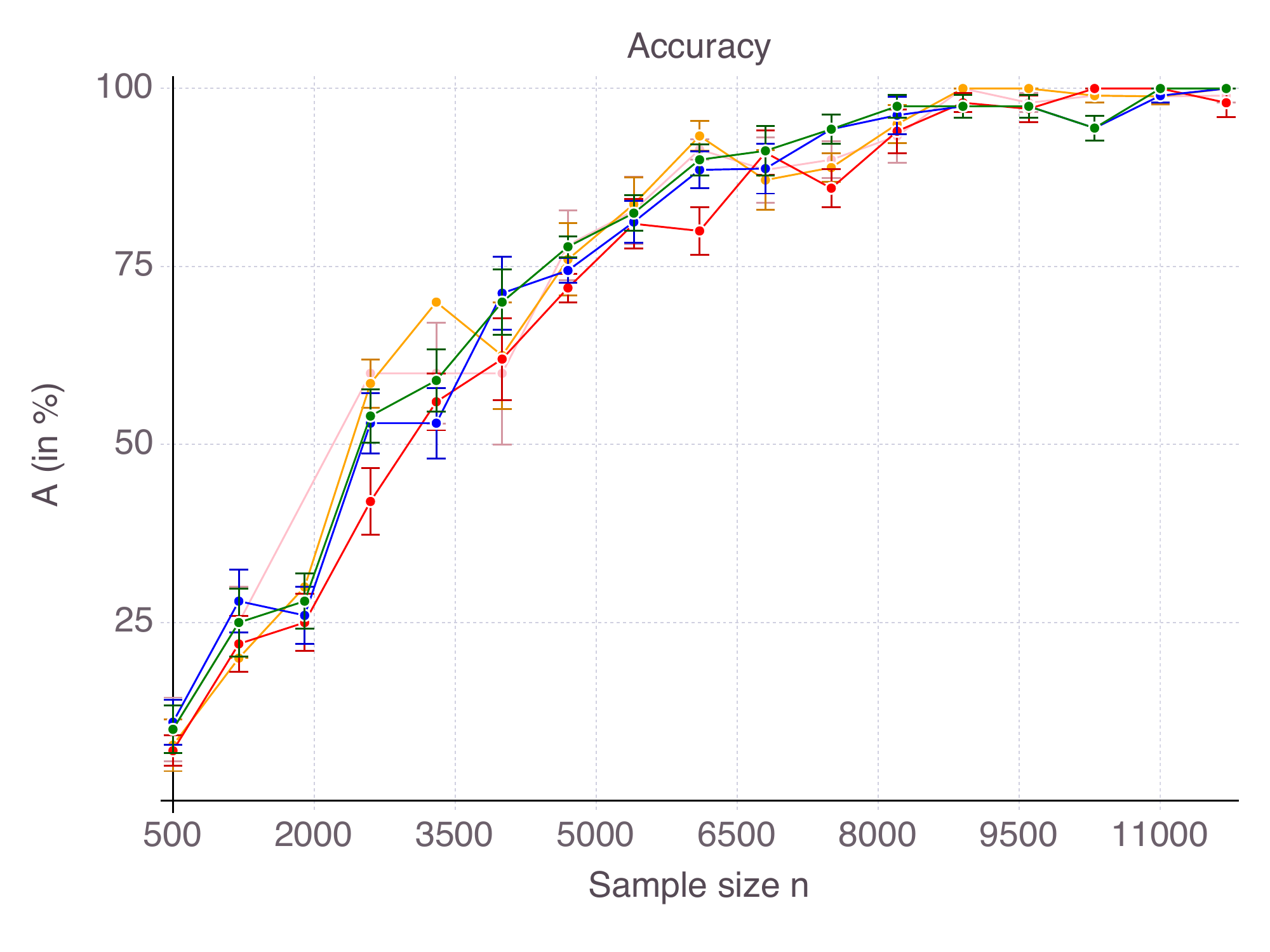}
	\caption{High noise, low correlation}
\end{subfigure} %
~
\begin{subfigure}[t]{.45\linewidth}
	\centering
	\includegraphics[width=\linewidth]{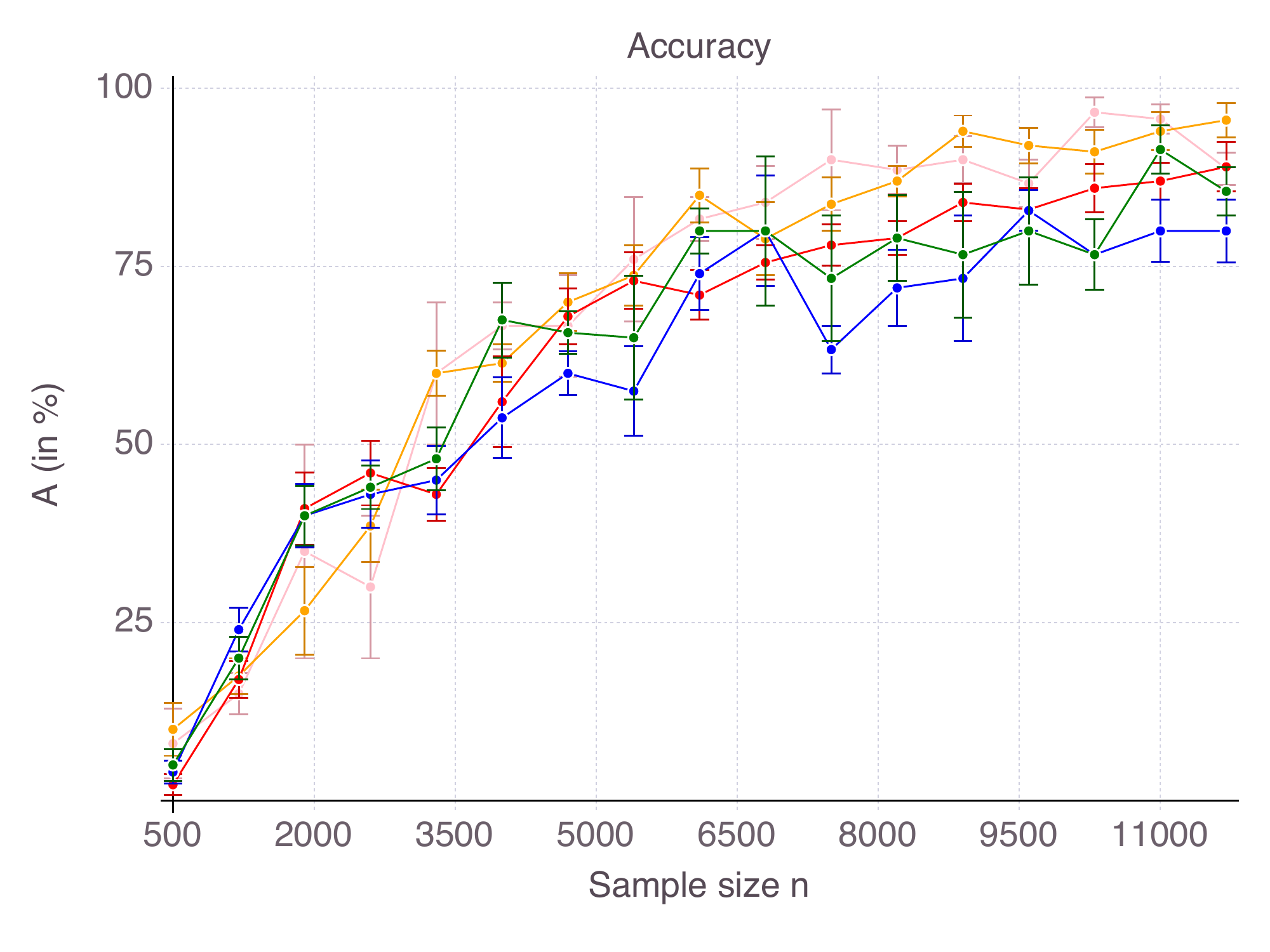}
	\caption{High noise, high correlation}
\end{subfigure}
\caption{Accuracy as $n$ increases, for the CIO (in green), SS (in blue with $T_{max}=200$) with Hinge loss , ENet (in red), MCP (in orange), SCAD (in pink) with logistic loss. We average results over $10$ data sets.}
\label{fig:ClassFixTF}
\end{figure*}

Figure \ref{fig:ClassFixTime} on page \pageref{fig:ClassFixTime} reports relative computational time compared to \verb|glmnet| { in log scale}. It should be kept in mind that we restricted the cutting-plane algorithm to a $180$-second time limit and the sub-gradient algorithm to $T_{max}=200$ iterations. \verb|glmnet| is still the fastest method in general, but it should be emphasized that other methods terminate in times at most two orders of magnitude larger, which is often an affordable price to pay in practice. Combined with results in accuracy from Figure \ref{fig:ClassFixTF}, such an observation speaks in favor of a wider use of cardinality-constrained or non-convex formulations in data analysis practice. As previously mentioned, for the sub-gradient algorithm, using an additional stopping criterion would drastically cut computational time (by a factor 2 at least) but would also deteriorate the quality of the solution significantly for such classification problems. 
\begin{figure*}
\centering
\begin{subfigure}[t]{.45\linewidth}
	\centering
	\includegraphics[width=\linewidth]{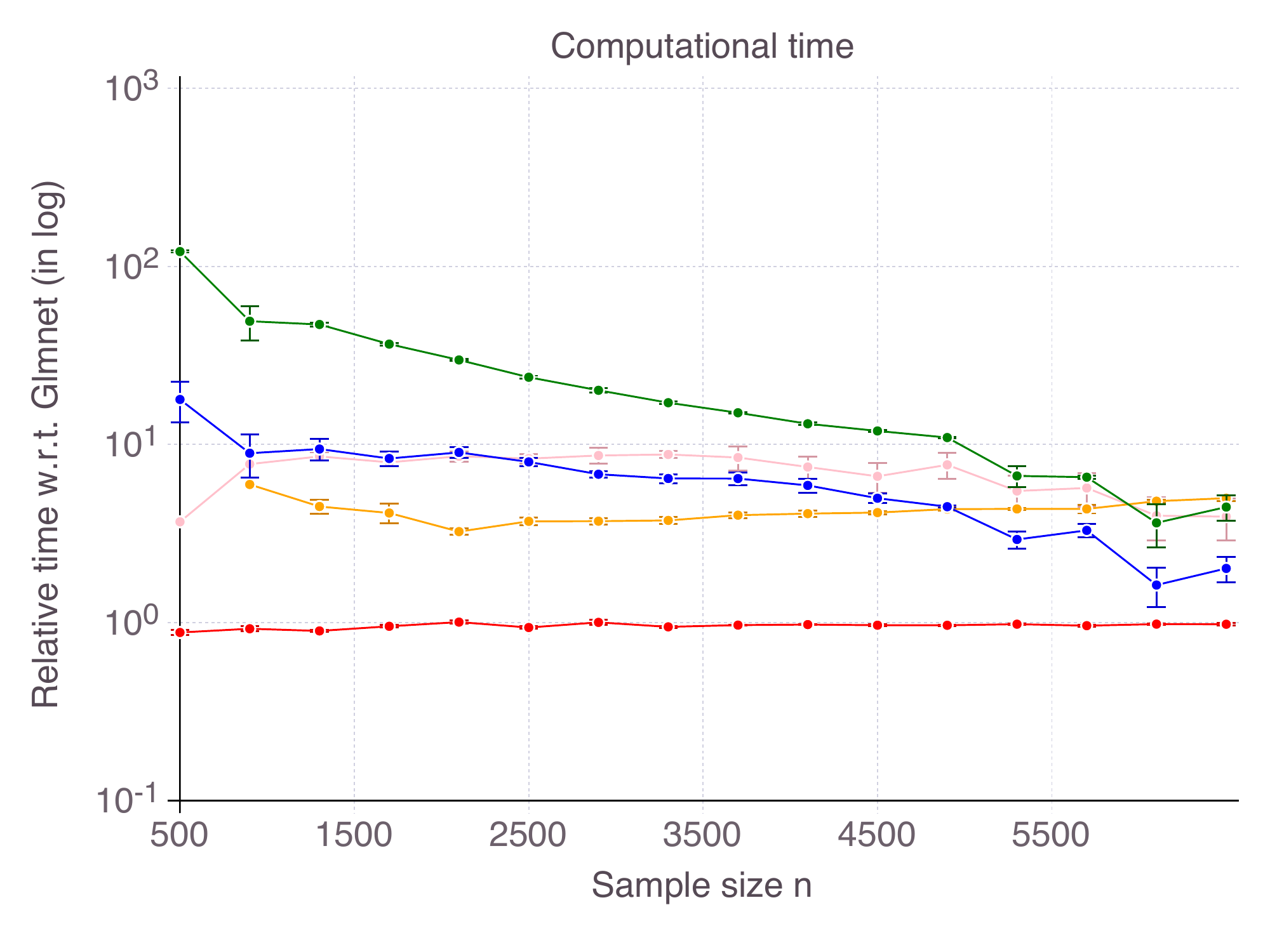}
	\caption{Low noise, low correlation}
\end{subfigure} %
~
\begin{subfigure}[t]{.45\linewidth}
	\centering
	\includegraphics[width=\linewidth]{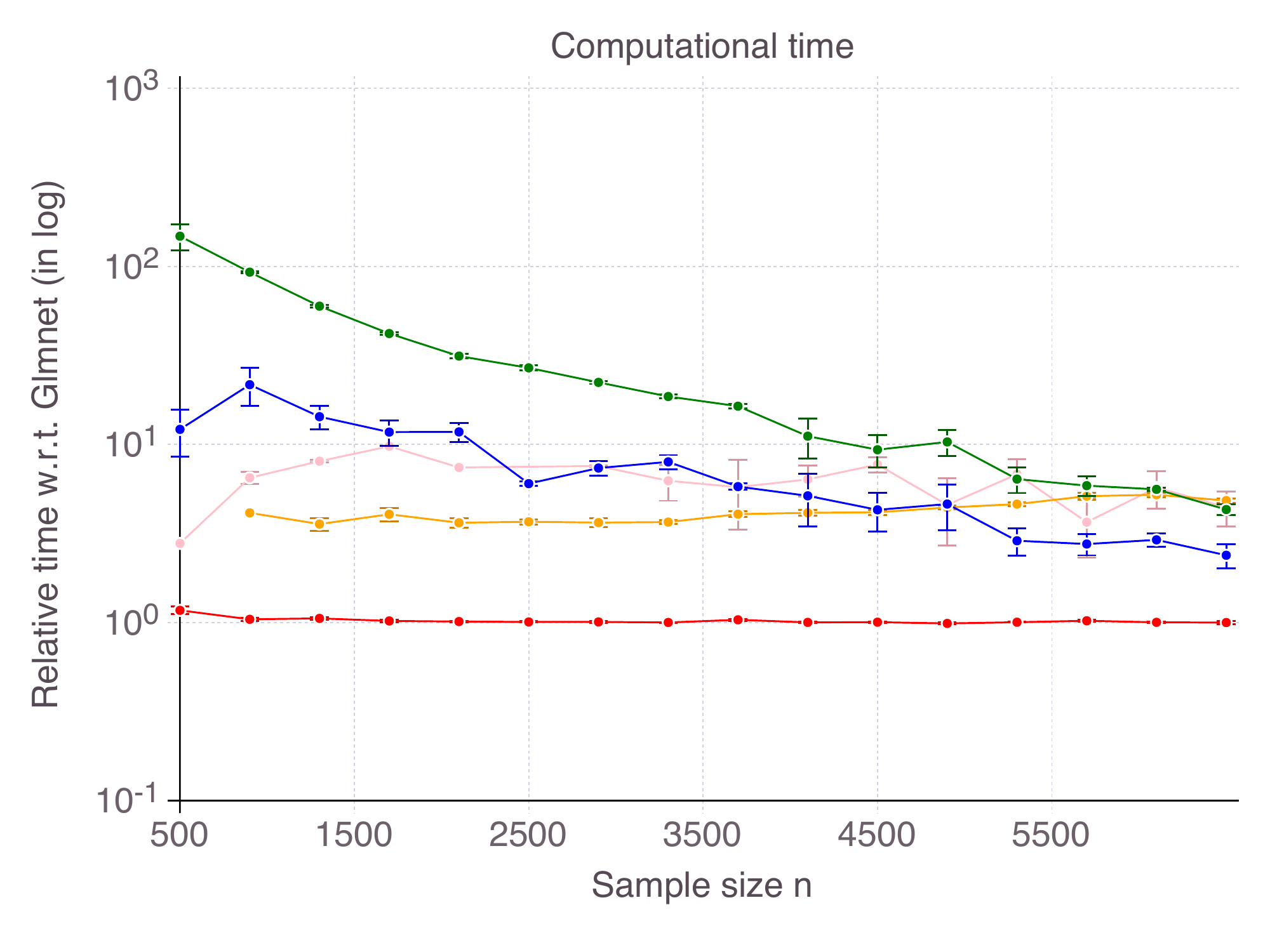}
	\caption{Low noise, high correlation}
\end{subfigure}

\begin{subfigure}[t]{.45\linewidth}
	\centering
	\includegraphics[width=\linewidth]{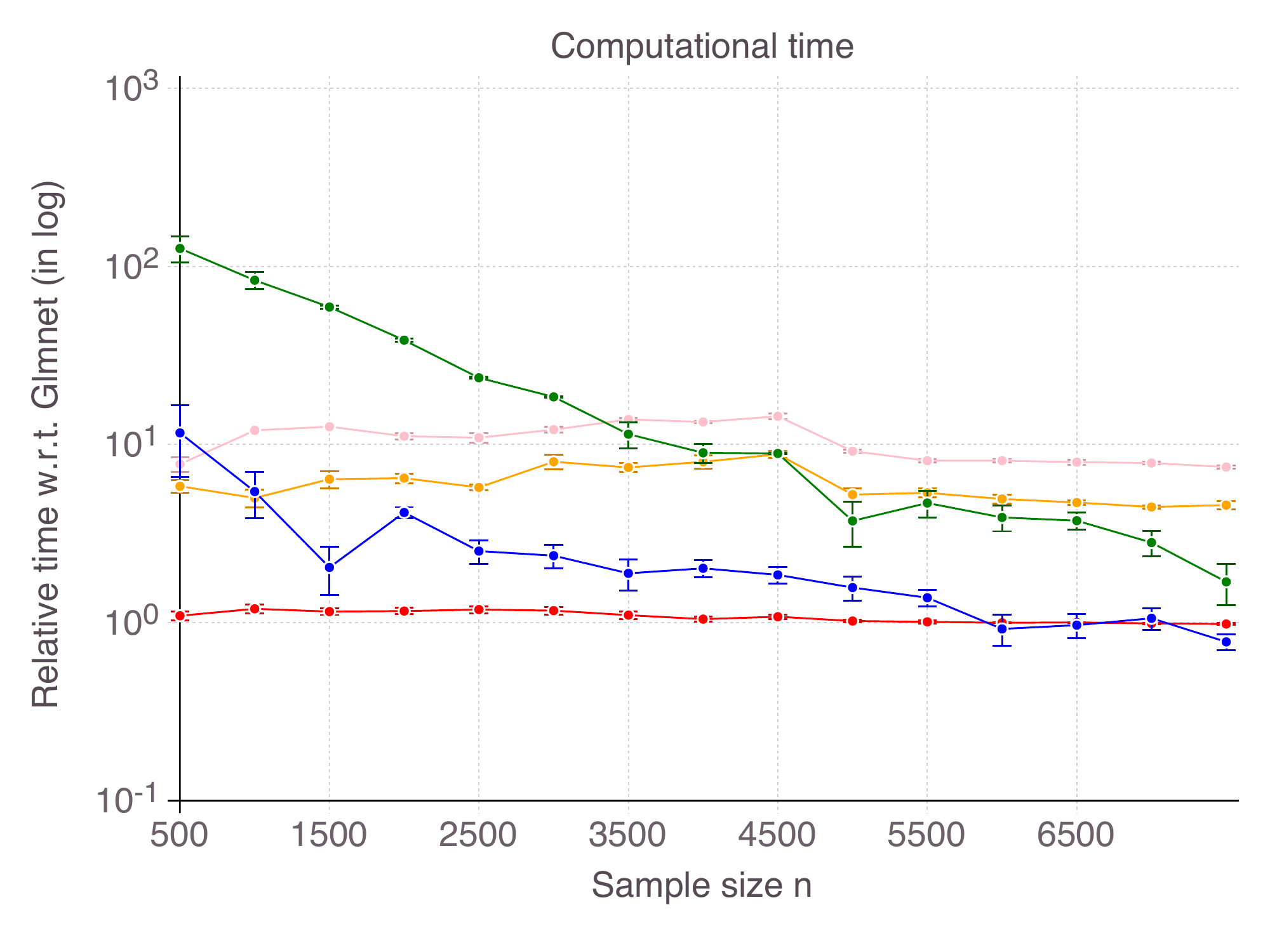}
	\caption{Medium noise, low correlation}
\end{subfigure} %
~
\begin{subfigure}[t]{.45\linewidth}
	\centering
	\includegraphics[width=\linewidth]{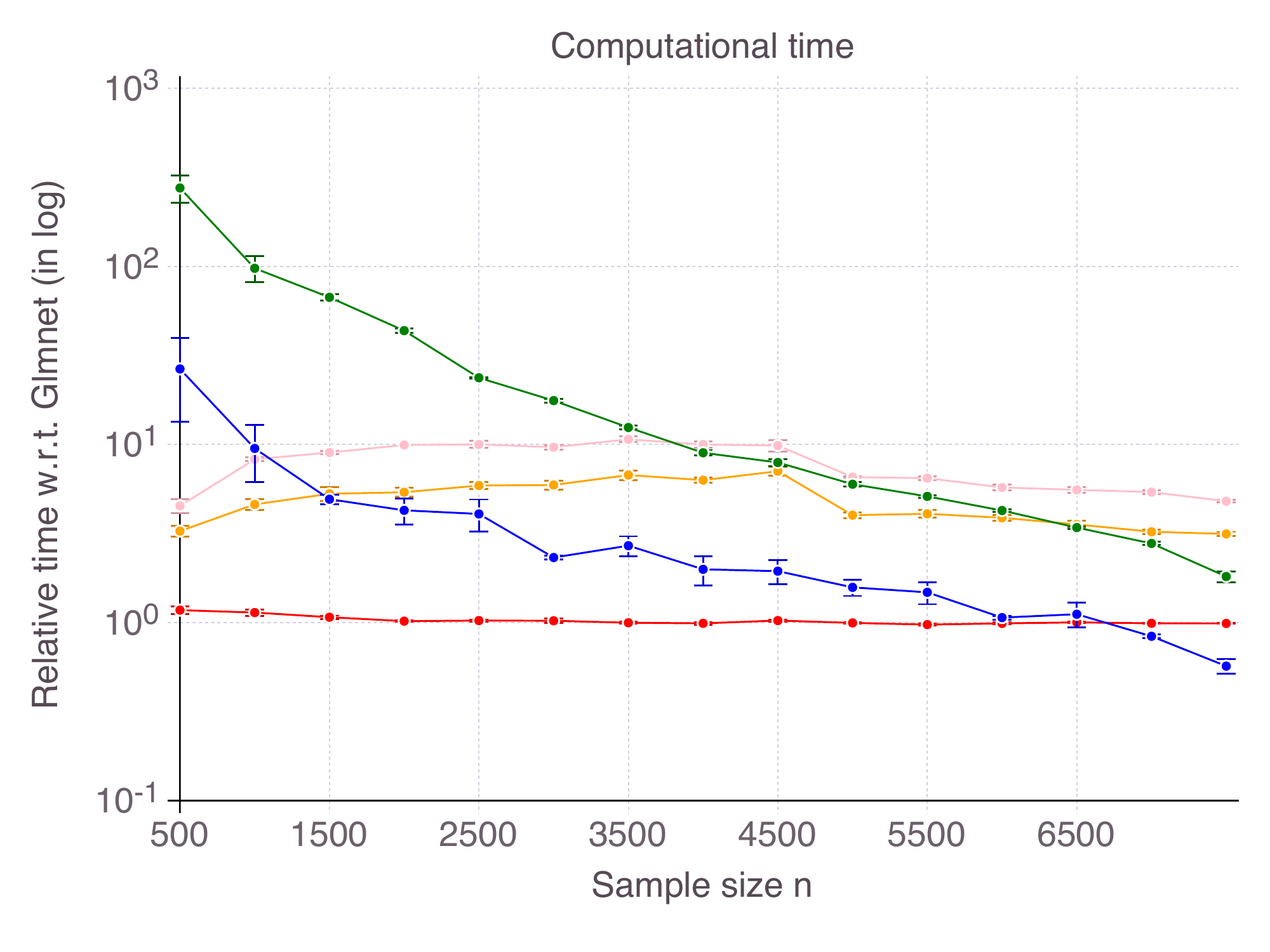}
	\caption{Medium noise, high correlation}
\end{subfigure}

\begin{subfigure}[t]{.45\linewidth}
	\centering
	\includegraphics[width=\linewidth]{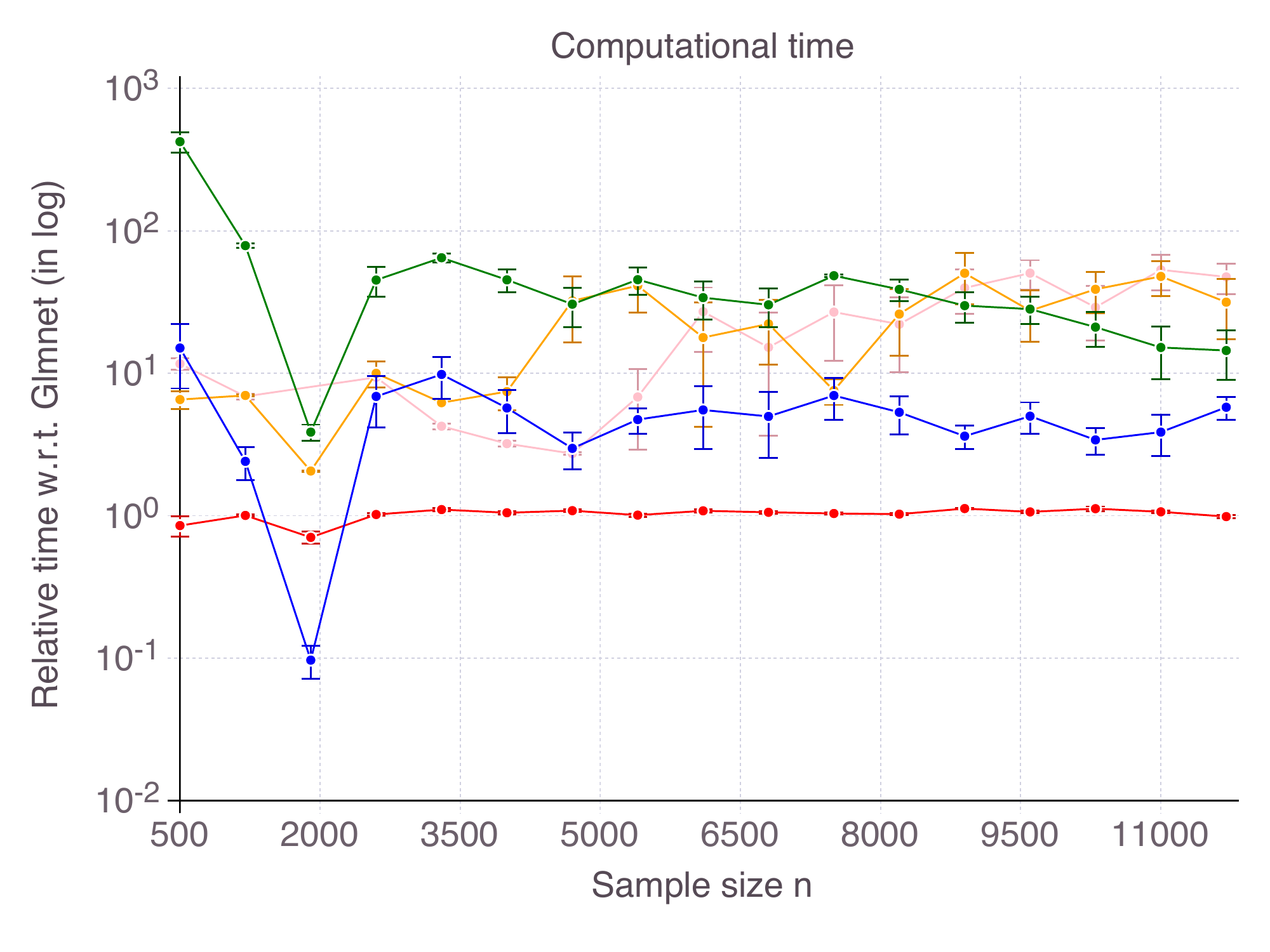}
	\caption{High noise, low correlation}
\end{subfigure} %
~
\begin{subfigure}[t]{.45\linewidth}
	\centering
	\includegraphics[width=\linewidth]{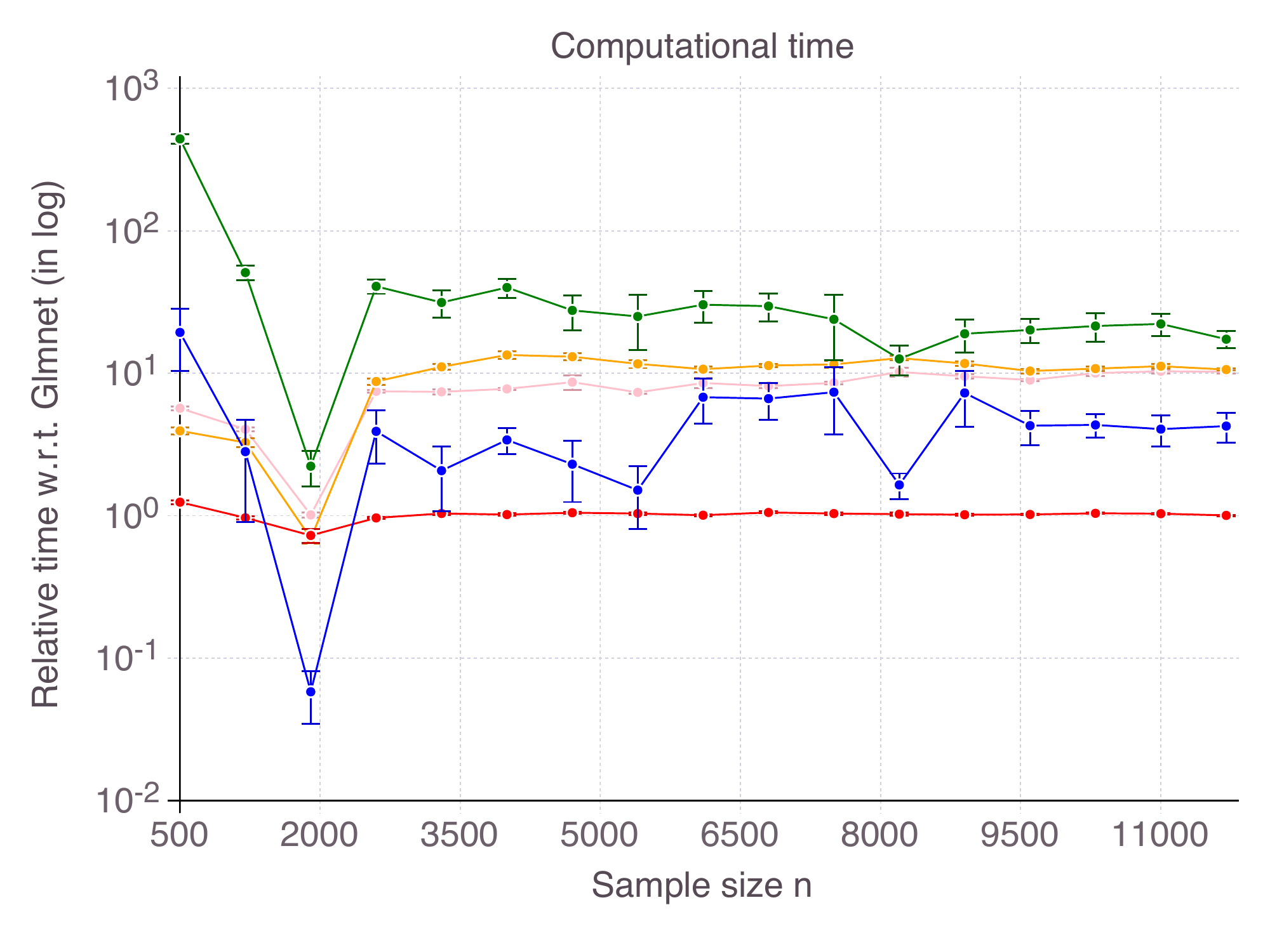}
	\caption{High noise, high correlation}
\end{subfigure}
\caption{Computational time relative to Lasso with glmnet as $n$ increases, for CIO (in green), SS (in blue with $T_{max}=200$) with Hinge loss, ENet (in red), MCP (in orange), SCAD (in pink) with logistic loss. We average results over $10$ data sets.}
\label{fig:ClassFixTime}
\end{figure*}
\begin{figure*}
\centering
\begin{subfigure}[t]{.45\linewidth}
	\centering
	\includegraphics[width=\linewidth]{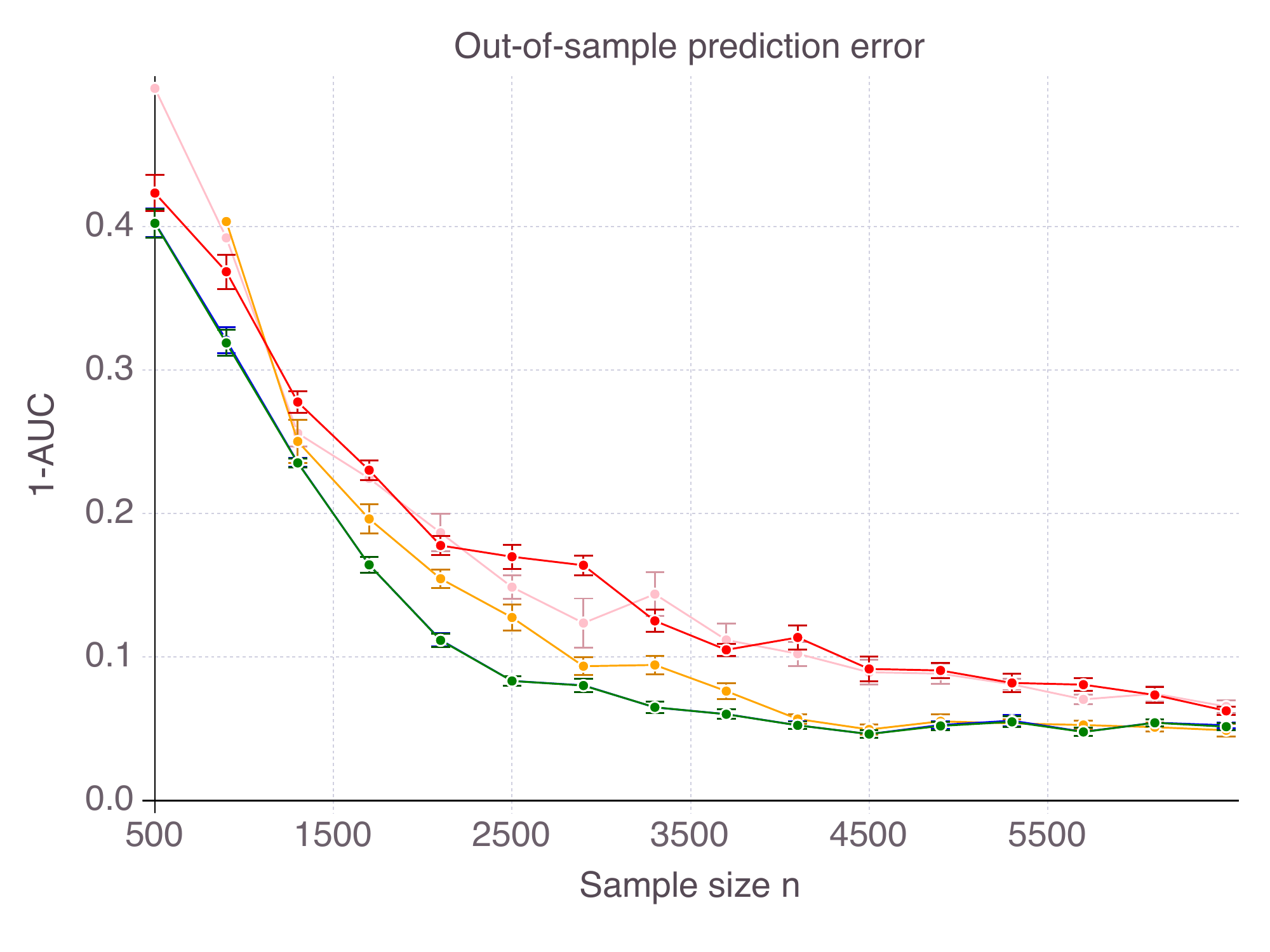}
	\caption{Low noise, low correlation}
\end{subfigure} %
~
\begin{subfigure}[t]{.45\linewidth}
	\centering
	\includegraphics[width=\linewidth]{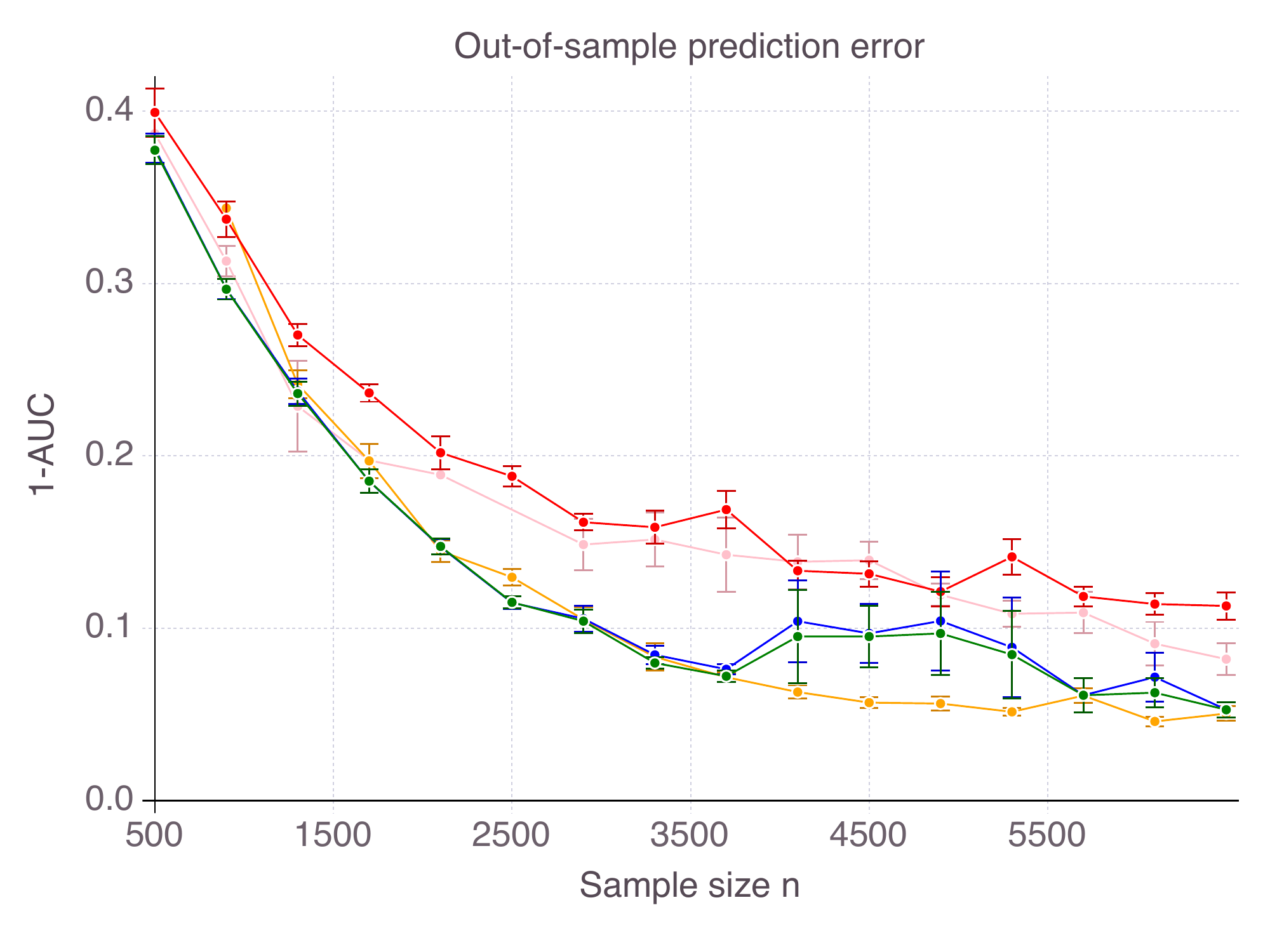}
	\caption{Low noise, high correlation}
\end{subfigure}

\begin{subfigure}[t]{.45\linewidth}
	\centering
	\includegraphics[width=\linewidth]{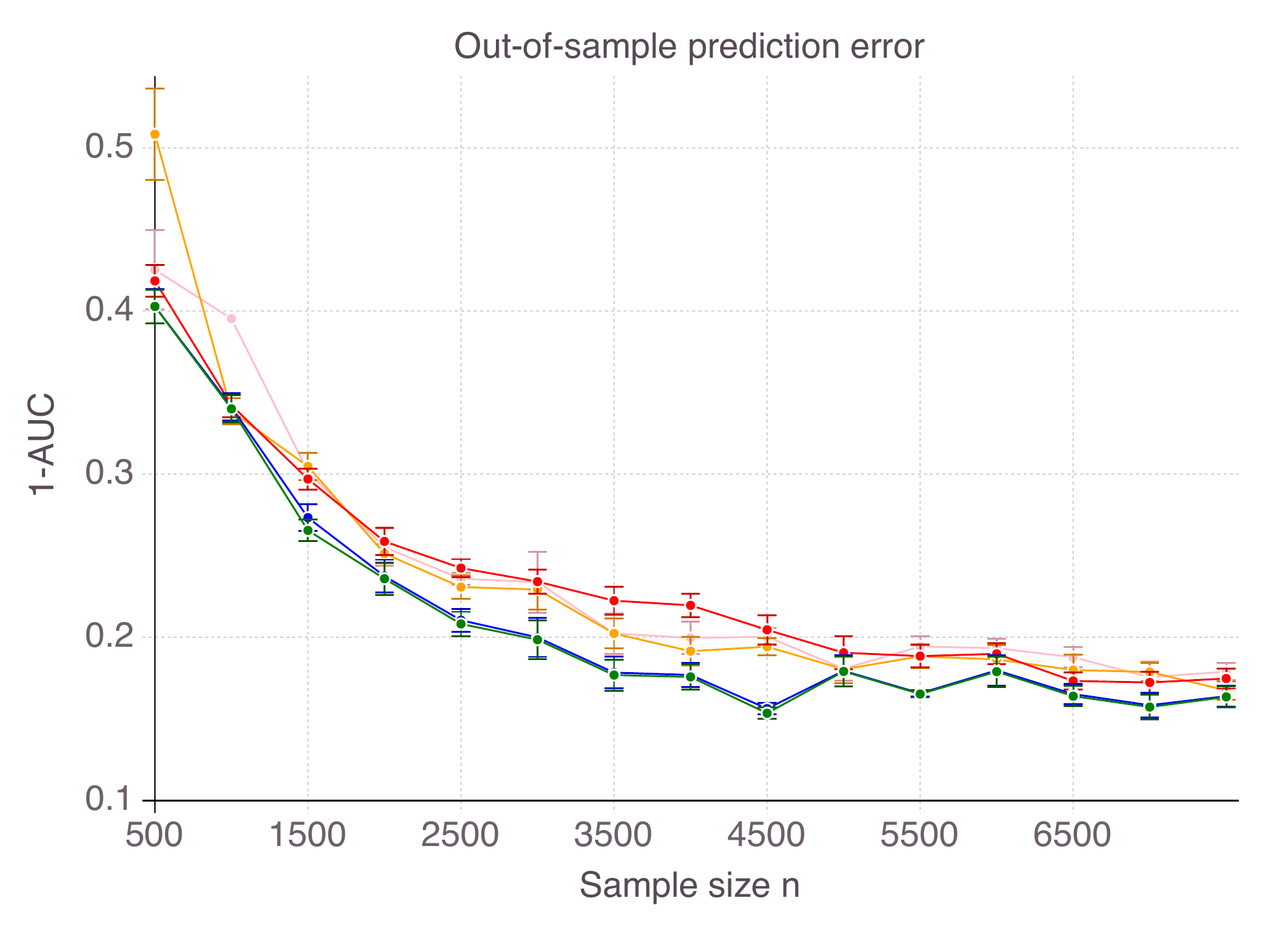}
	\caption{Medium noise, low correlation}
\end{subfigure} %
~
\begin{subfigure}[t]{.45\linewidth}
	\centering
	\includegraphics[width=\linewidth]{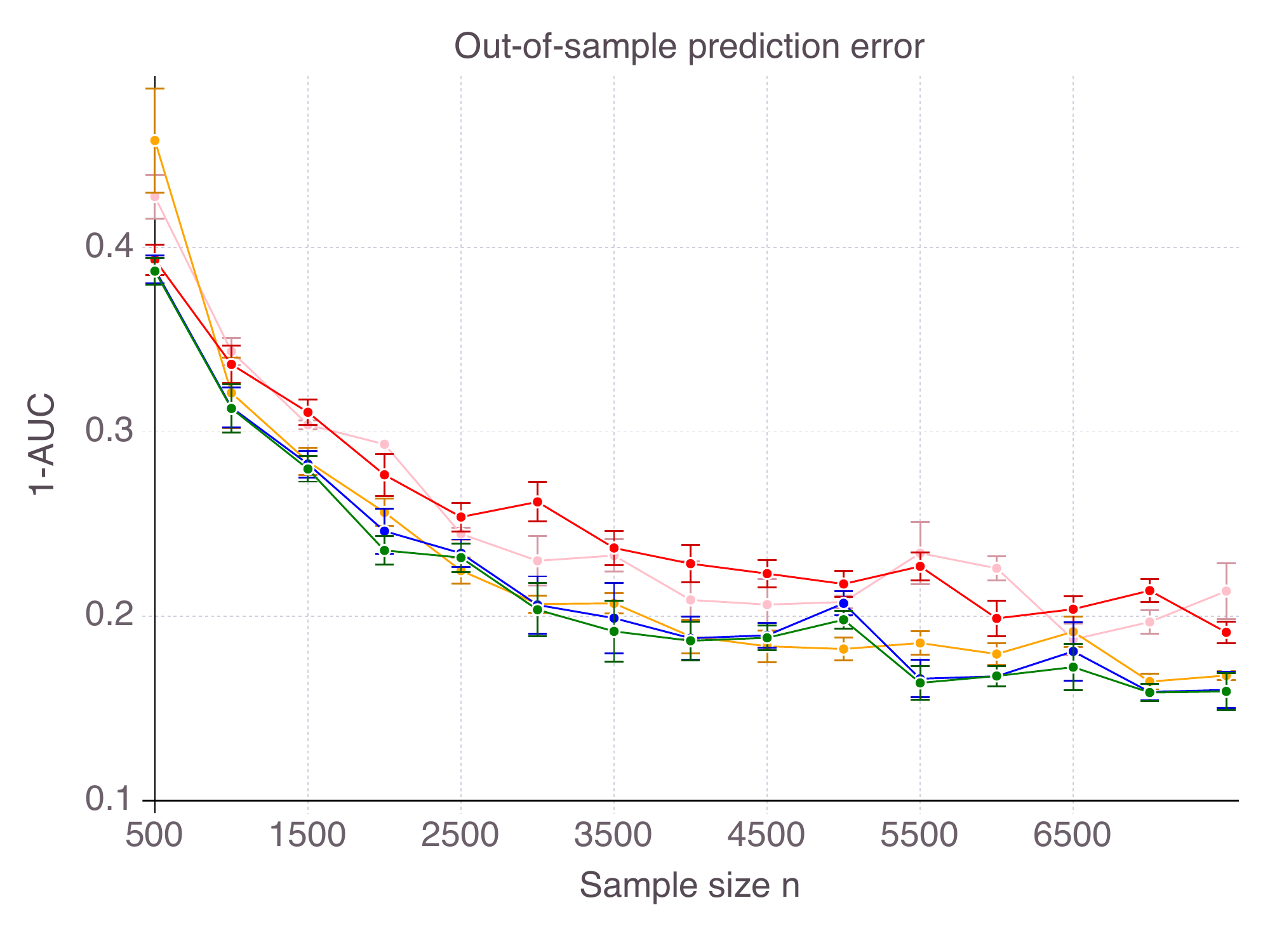}
	\caption{Medium noise, high correlation}
\end{subfigure}

\begin{subfigure}[t]{.45\linewidth}
	\centering
	\includegraphics[width=\linewidth]{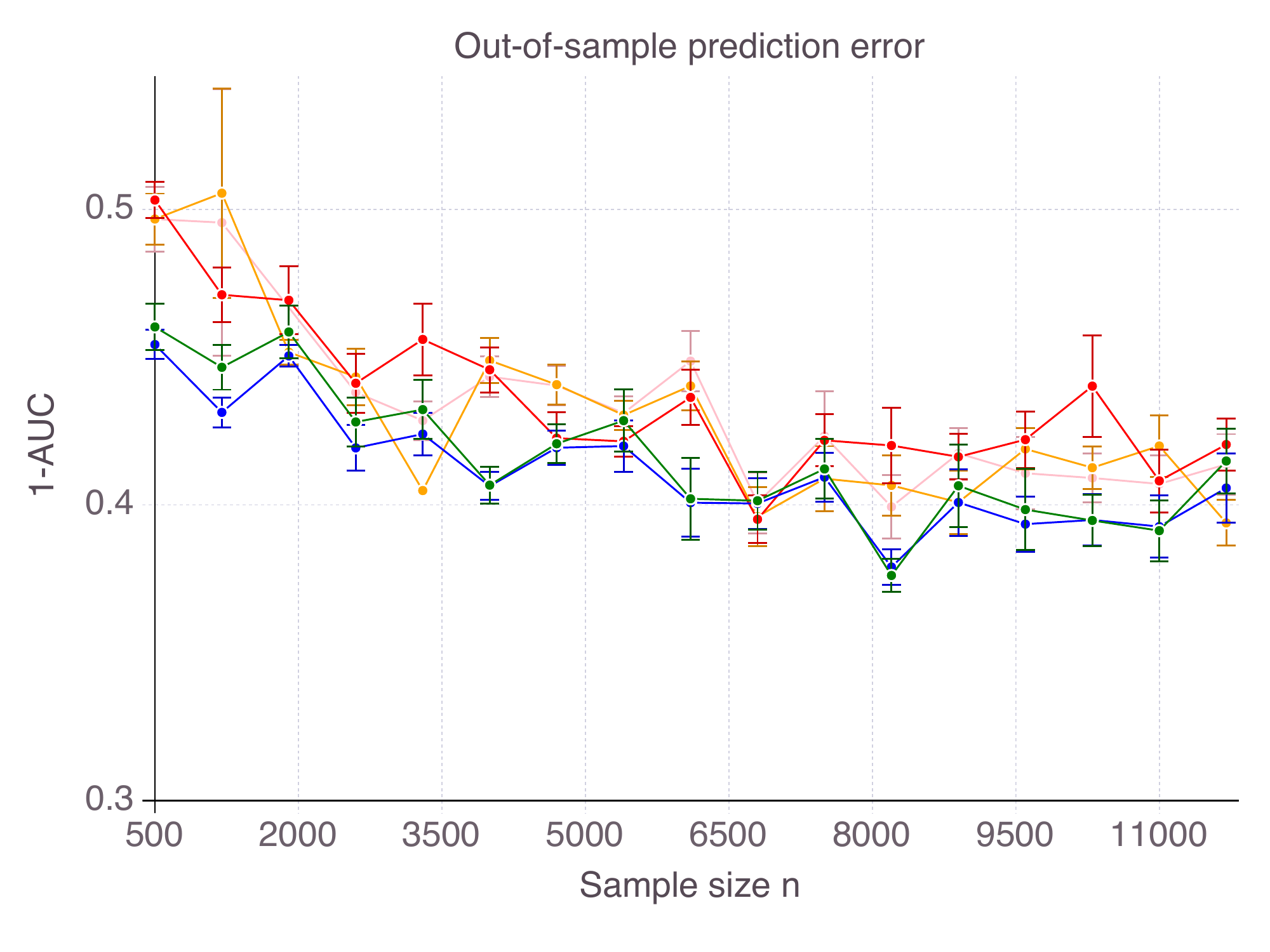}
	\caption{High noise, low correlation}
\end{subfigure} %
~
\begin{subfigure}[t]{.45\linewidth}
	\centering
	\includegraphics[width=\linewidth]{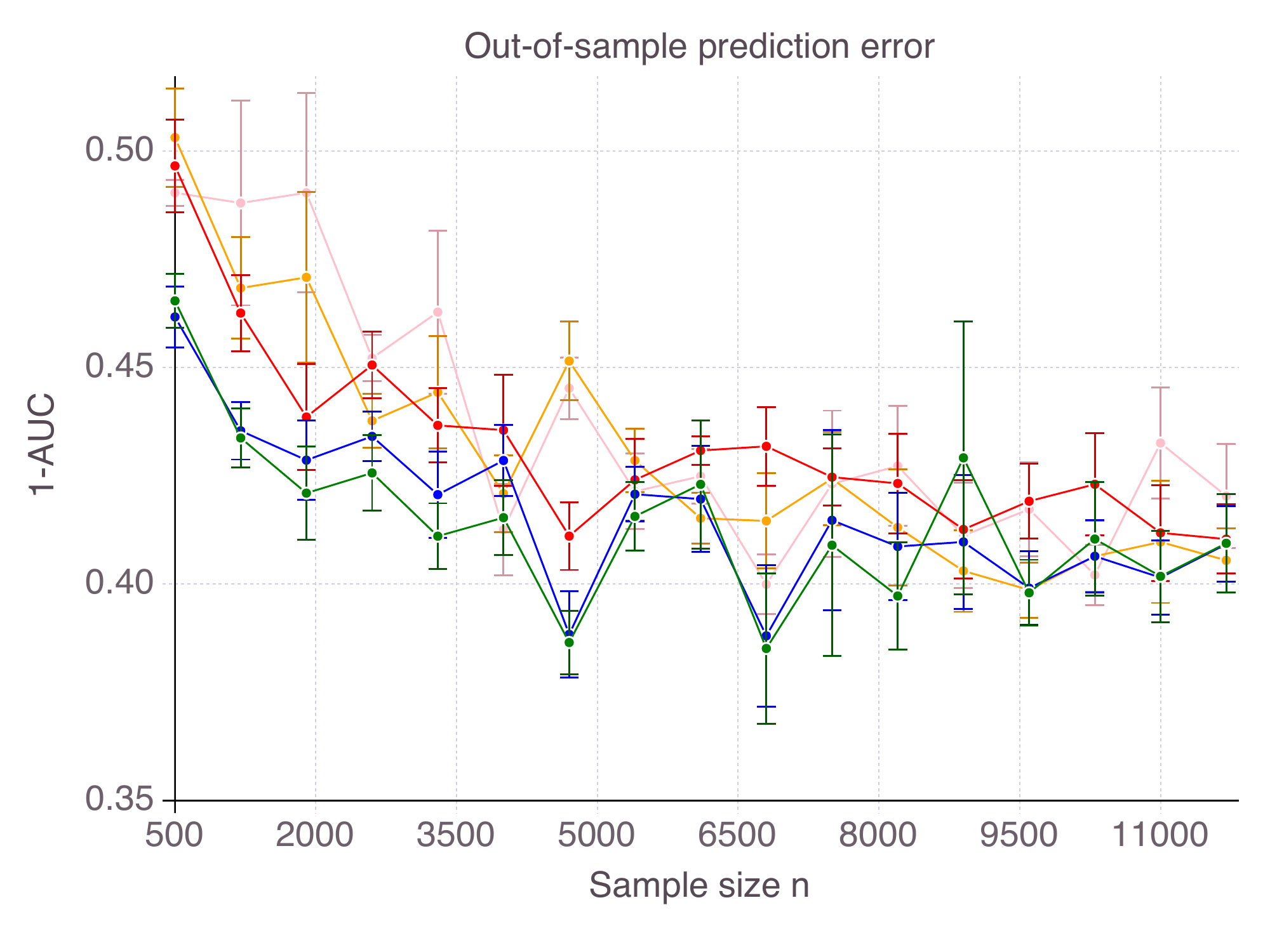}
	\caption{High noise, high correlation}
\end{subfigure}
\caption{Out-of-sample $1-AUC$ as $n$ increases, for the CIO (in green), SS (in blue with $T_{max}=200$) with Hinge loss, ENet (in red), MCP (in orange), SCAD (in pink) with logistic loss. We average results over $10$ data sets.}
\label{fig:ClassFixMSE}
\end{figure*}

Figure \ref{fig:ClassFixMSE} (p. \pageref{fig:ClassFixMSE}) represents the out-of-sample error $1-AUC$ for all five methods, as $n$ increases, for the six noise/correlation settings of interest. There is a clear connection between performance in terms of accuracy and in terms of predictive power, with CIO performing the best. Still, better predictive power does not necessarily imply that the features selected are more accurate. As we have seen for instance, MCP often demonstrates the highest accuracy, yet not the highest $AUC$.

\newpage
\bibliographystyle{plainnat}
\bibliography{biblio}
\end{document}